\documentclass[12pt]{article}
\usepackage{comment}
\usepackage{amsmath}
\usepackage{amsthm}
\usepackage{amssymb}
\usepackage{appendix}
\usepackage{mathrsfs}
\usepackage{enumerate}
\usepackage{mathtools}
\usepackage{multirow}
\usepackage{dsfont}
\usepackage{mdframed}
\usepackage{algorithm}
\usepackage[noend]{algpseudocode}
\usepackage{bm}
\usepackage{bbm}
\usepackage{natbib}
\setcitestyle{aysep={}} 
\usepackage{url}
\usepackage[dvipsnames]{xcolor}
\usepackage[colorlinks,linkcolor=Magenta,citecolor=Magenta,urlcolor = Magenta]{hyperref}
\hypersetup{colorlinks,allcolors=blue}
\usepackage[labelsep=period, font=sc, skip=0pt,justification=centering]{caption}
\usepackage{graphicx}
\usepackage{pdfpages}
\usepackage{setspace}
\usepackage{float}
\usepackage[top=1in, bottom=1in, left=1.0in, right=1.0in]{geometry}
\usepackage{makecell}
\usepackage{subcaption}
\usepackage[hang,flushmargin]{footmisc}
\usepackage{titling}
\usepackage{tikz}
\usepackage{booktabs}
\newcommand{\doubleoverline}[1]{\overline{\vphantom{#1}\overline{#1}}}

\usepackage{tikz}
\usetikzlibrary{positioning,calc}

\usepackage{titletoc}

\usepackage{etoc}

\usepackage{algorithm}
\usepackage{algpseudocode}
\usepackage{comment}
\usepackage{amsmath, amssymb}
\usepackage{setspace}
\usepackage{hyperref}

\usepackage[capitalize]{cleveref}
\crefname{section}{Section}{Sections}
\Crefname{section}{Section}{Sections}
\crefname{subsection}{Section}{Sections}
\Crefname{subsection}{Section}{Sections}
\crefname{subsubsection}{Section}{Sections}
\Crefname{subsubsection}{Section}{Sections}
\usetikzlibrary{shapes,positioning}

\tikzset{ell/.style={rectangle,draw,minimum height=0.65cm,minimum width=1cm,inner sep=0.25cm}}

\usepackage{graphicx}

\usepackage{titlesec}
\titlelabel{\thetitle.\ }

\newtheorem{theorem}{Theorem}[section]

\newtheorem{condition}{Condition}

\newtheorem{lemma}{Lemma}[section]

\newtheorem{remark}{Remark}[section]

\crefname{condition}{Condition}{Conditions}
\Crefname{condition}{Condition}{Conditions}

\newcommand{\be}{\begin{eqnarray*}}
\newcommand{\ee}{\end{eqnarray*}}

 \usepackage{amsmath,amssymb}
\usepackage[top=1in, bottom=1in, left=1.0in, right=1.0in]{geometry}
\usepackage{natbib}
\setcitestyle{aysep={}}
\usepackage{xcolor}

\usepackage{float}
\floatstyle{ruled}
\newfloat{algorithm}{tbp}{loa}
\floatname{algorithm}{Algorithm}
\newcommand{\algline}[2]{\par\makebox[2em][r]{{\scriptsize #1}:\ }\parbox[t]{\dimexpr\linewidth-2.5em}{#2}}
\usepackage{tikz}
\usetikzlibrary{arrows.meta,positioning}
\numberwithin{equation}{section}
 
\newcommand{\E}{\mathbb{E}}

\newcommand{\R}{\mathbb{R}}

\graphicspath{{../code/R/graphs/paper/}}%

\usepackage{stackengine}
\usepackage{blindtext}

\usepackage{authblk}
\usepackage{tikz}
\usetikzlibrary{arrows.meta,positioning}

\providecommand{\argmax}{\operatorname*{arg\,max}}

\newcommand{\safeincludegraphics}[2][]{%
  \IfFileExists{#2}{\includegraphics[#1]{#2}}{%
    \fbox{\begin{minipage}[c][0.26\textheight][c]{0.9\linewidth}\centering Missing figure: \texttt{\detokenize{#2}}\end{minipage}}%
  }%
}
\begin{document}
\title{Fast Online Inference on Semiparametric Models\thanks{First version: arXiv:2603.08614v1. We acknowledge helpful comments from participants of the Econometrics Seminars at Brown, Renmin and USP Universities, the 2026 Cowles Summer Conference in New Haven and the 2026 CIREQ-HEC Women in Econometrics Conference in Montreal.}}

\author{Xiaohong Chen\thanks{Dept. of Economics, Yale University.} \ \ \ Elie Tamer\thanks{Dept. of Economics, Harvard University.} \ \ \ Qingsong Yao\thanks{Dept. of Economics, Louisiana State University. }}
\maketitle

\begin{abstract}
\singlespacing
\noindent
This paper develops a framework for fast online  inference on semiparametric models with large sample sizes and possibly many covariates. The computational algorithm itself is the object of statistical study: after a globally consistent warm start in the first phase, the path of averaged online iterates generated in the second phase automatically delivers estimators with optimal convergence rates and valid confidence sets. Both phases require only a single pass over the data stream and are well suited to streaming data or to settings with storage/privacy constraints. For semiparametric monotone index models, the averaged trajectory of the second phase lead to  estimators that are automatically orthogonalized and satisfy the laws of the iterated logarithms, and policy functionals are updated along the same trajectory at negligible additional cost. The averaged trajectories satisfy functional central limit theorems, which yield fast online inference via random scaling and bypass the explicit variance estimation that complicates inference for semiparametric models. Applied to a fixed large sample, our online algorithm achieves substantial computational gains over corresponding offline procedures without sacrificing statistical performance. Monte Carlo experiments show adequate behavior. Our methods are applied to 19 million traffic-stop online records from the North Carolina State Patrol \citep{pierson2020large} and to the international trade data of \citet{helpman2008estimating} with over 300 regressors. We also compare our estimator to its parametric benchmark in both empirical illustrations. 
\end{abstract}

\noindent

\newpage

\section{Introduction}

An extensive literature in econometrics examines inference on flexible semiparametric models \citep{powellhandbook}. Traditionally, this literature has treated the estimator as given: one posits that an estimator exists and then derives its statistical properties---global consistency, convergence rates, limiting distributions, and consistent standard errors---while the computational machinery for actually obtaining that estimator is treated as a separate concern, with no bearing on the large-sample theory. Computation, however, has become a first-order issue in modern economic research. Data relevant to applied economists increasingly arrive as streams---the Stanford Open Policing Project has recorded hundreds of millions of traffic stops and is updated frequently \citep{pierson2020large}; scanner systems \citep{broda2010product,atkin2018retail,dellavigna2019uniform}, credit registries \citep{jimenez2014hazardous,agarwal2018banks}-, and digital platforms \citep{cavallo2017online} generate observations at a pace that makes repeated full-sample re-estimation impractical and---when records cannot be retained for privacy or memory reasons--- not feasible. Even in conventional fixed-sample settings, semiparametric models with many covariates and large sample sizes are costly to compute and  pose significant challenges for inference. This has led some practitioners to fall back on convenient parametric specifications---such as Logit or Probit links in binary response models---even when those assumptions are unwarranted and can lead to bias in policy evaluations.

In light of these challenges, we revisit inference for an important class of semiparametric models and take a fundamentally different approach: the computational algorithm itself becomes the object of statistical study. We propose a novel online algorithm designed for these models and use the sequence of iterates it generates as the statistical scaffold for deriving inference results, rather than treating computation and statistics as separate enterprises. After a quick initial warm start phase, the algorithm treats the current estimates of the parametric and nonparametric components as summary statistics, uses a small batch of newly arrived data to update both components jointly, and outputs the trajectory of the Polyak--Ruppert (PR) averaged iterates \citep{ruppert1988efficient,polyak1992acceleration} together with the averaged policy-functional iterates. Coupling computation with statistics pays off on both fronts. Theoretically, the PR averaged estimates converge to the truth almost surely at optimal rates and are automatically orthogonalized, and the trajectory of the PR averages satisfies a functional central limit theorem (FCLT). The FCLT enables us to construct fast online confidence sets via self-normalization (or ``random scaling'' of \citet{lee2022fast}), bypassing the complicated variance estimation and/or bootstrapping for inference on semiparametric models using full sample data. Practically, the automatic orthogonalization guarantees negligible first-order bias, which, together with the FCLT, yields fast and valid confidence sets even when the number of covariates is large. Each data point is processed only once, so the method is naturally suited to streaming data and to storage- or privacy-constrained environments; and the same trajectory also delivers almost surely consistent estimates of, and valid inference on, policy-relevant functionals---such as average marginal effects---at negligible additional cost.

The above described online principle applies to general semiparametric models. For a large class of semiparametric nonlinear models with many covariates, however, it is difficult to design a warm start phase that can quickly enter the stable local neighborhood of the global true parameter values regardless of the initialization of the algorithm, as well as difficult to establish almost sure convergence with optimal rates and trajectory FCLTs under lower level sufficient conditions. In this paper, we focus on a leading class of semiparametric nonlinear models, the following semiparameric monotone index model:
\[
  Y = F_0(Z_0) + \varepsilon, \qquad Z_0:=x_0 + X^{\top}\theta_0, \qquad \E(\varepsilon \mid x_0, X) = 0,
\]
where $F_0$ is the unknown and monotone link function, $\theta\in \R^p$ is unknown parameter of interest with a fixed but possibly large $p$ (e.g., $p=50, 100, 500$ in our Monte Carlo simulations, and $p>300$ in our trade application), the normalization on $x_0$ gives an interpretable index direction, and some of the covariates $X$ could be discrete (as in our applications and Monte Carlo experiments). An important parameter of interest is the average marginal effects (AME) parameter $\tau_0:=\mathbb{E}\left[\partial \mathbb{E}(Y|x_0, X)/\partial X\right]\equiv  \mathbb{E}\left[\nabla_z F_0\!\left(Z_0\right)\theta_0\right]$. 

 This monotone index class contains many workhorse models used in applied microeconomics, including binary choice, censored regression, and duration models \citep{powellhandbook}. Despite many theoretical works on identification, estimation and inference for $\theta_0,F_0,\tau_0$ in the literature, due to the computational difficulties of the existing procedures, there are very few empirical applications in economics allowing for unknown link function $F_0$ with many covariates and large sample sizes. Our paper will propose a novel, computationally fast online estimation and inference method for these semiparametric nonlinear models, in which the unknown monotone link function $F_0$ is approximated by a sequence of B-spline sieves\footnote{B-spline sieve is known to preserve monotonic shape. \citep{chen2007large}. Other shape-preserving sieves could also be used to approximate unknown monotone link function $F_0$; see Section \autoref{section7.1}.}. The semiparametric online problem is theoretically challenging for two reasons: the joint sieve nonlinear least-squares (NLS) criterion for $\theta_0,F_0$ is not globally convex; and the sieve dimension needs to grow slowly as data accumulate. These features place our problem outside the existing literature on online M-estimation of $\theta_0,F_0$ and inference on $\theta_0,\tau_0$ (see the related literature below for more discussions).

\paragraph{Contributions}
This paper makes three main contributions. The first is algorithmic. We propose a two-phase online learner for semiparametric index models. The warm-start phase uses a kernelized rank-based update, inspired by the maximum rank correlation of \citet{han1987non} and the smoothed maximum score of \citet{horowitz1992smoothed}, to obtain an almost surely consistent estimate of the index parameter $\theta_0$ from an arbitrary initialization. Our warm-start algorithm for $\theta$ is globally convex and stable, and hence converges to the true parameter $\theta_0$ regardless of the choice of initialization. This warm start in turn generates a consistent index $Z_0=x_0 + X^{\top}\theta_0$ estimate, on which a B-spline sieve least-squares update provides an almost sure consistent warm start for the nonparametric link component $F_0$. Initialized at these warm starts for $(\theta,F)$, the second phase performs joint local stochastic-gradient updates for $(\theta_0,F_0)$ via a sieve NLS criterion. To make sieve learning genuinely online, we introduce a re-embedding operator that maps previously learned sieve coefficients into the newly expanded sieve space; this preserves the fitted function exactly when the sieve dimension grows and allows the online path to continue without restarting the optimization.

The second contribution is theoretical. For the warm-start phase, we establish an almost-sure global consistency with rate for the online warm-start estimator of $\theta_0$, together with an almost-sure sup-norm convergence rate for online sieve estimator of $F_0$ with generated regressors. For the joint local refinement phase, we prove an asymptotic linear representation for the PR averages. The representation shows that the leading score for $\theta_0$ is automatically orthogonalized: the joint local update removes the first-order effect of estimating the unknown link $F_0$, without requiring a researcher to estimate a separate high-dimensional Riesz representer in order to construct an orthogonal score manually. The PR-averaged estimator obeys an almost-sure rate-optimal law of the iterated logarithm, and satisfies an FCLT. We further establish a corresponding limit theory for the online plug-in estimator of the AME $\tau_0$, so the same path supports inference on objects directly tied to policy evaluations.

The third contribution is practical. After a quick warm start, the procedure uses each new small fixed batch once, updates the parameter, sieve coefficients, and policy-functional running sums, and avoids repeated full-sample optimization. This is  valuable for streaming data, in environments in which storage is costly or restricted, and for conventional large offline samples in which repeated full-sample nonconvex optimization is slow. In our Monte Carlo experiments, the method remains accurate in dimensions as large as $p = 500$. In online vs offline timing comparisons with $p = 100$, online local refinement is roughly one to two orders of magnitude faster than full-sample sieve NLS while delivering comparable or smaller estimation error. Because the construction of confidence sets for $\theta_0,\tau_0$ uses only the stored trajectory of averaged updates via random scaling, it completely avoids the computational bottlenecks of conventional semiparametric inferences that require either additional nonparametric estimation of Riesz representers (to compute semiparametric standard errors) or additional numerical methods such as bootstraps.

\paragraph{Empirical Applications}
We illustrate the method in two empirical settings. The first uses the bilateral trade panel of \citet{helpman2008estimating}, a binary-choice application with more than 300 regressors once exporter, importer, and year fixed effects are included. A calibrated Probit experiment first verifies that the online semiparametric estimator recovers the parametric truth when the Probit link is correctly specified. We then apply our method to the observed trade outcomes without imposing a parametric link. The online semiparametric estimates agree closely with the parametric benchmarks for some geographic determinants of trade such as shared land border and landlock, but they differ substantially for the institutional and informational gravity variables---colonial ties, common language, shared religion, currency union, and free trade agreements. The estimated coefficient on colonial ties, for example, is less than half of its Probit counterpart. On the probability scale, the average marginal effect of a free trade agreement is smaller than the Probit benchmark by about three percentage points---a modest difference in absolute size, but an economically meaningful one relative to the baseline probability of positive trade. This illustrates how the shape of the link function—typically fixed by assumption in parametric models—can  impact estimated policy magnitudes.

The second application uses North Carolina traffic-stop records from the Stanford Open Policing Project \citep{pierson2020large}. After cleaning, the data contain more than 19 million stops ordered in time---a genuinely online administrative stream. The outcome is whether a stop results in a search. We process the post-warm-start records in their recorded order, each exactly once, and update the average marginal effects of race and gender along the online path. The warm start takes about two minutes using the first half of the 19 million data points; the online learning then processes the remaining 9.6 million post-June-2009 records in roughly 17 seconds---about half a million  records per second---for a total of 2.3 minutes. The final estimates are statistically distinguishable from zero and economically sizable relative to the low unconditional search rate. Most importantly, as in the trade application, the semiparametric estimates deviate substantially from the Probit and Logit benchmarks: the parametric models overstate the Black--Hispanic search-probability gap by more than a quarter and understate the White--Hispanic gap by roughly a half. Because all specifications are estimated on the same sample with the same covariates, these discrepancies are attributable to the shape of the link function alone. Consistent with the infra-marginality concerns emphasized in the policing literature \citep{simoiu2017problem,pierson2020large}, we interpret the estimated effects as descriptive rather than causal. The point of the exercise is to show that semiparametric inference can be implemented at the scale, and in the order, in which policy-relevant administrative data are generated.

\paragraph{Extensions}
We note that the Phase II algorithm and its properties remain valid with any other consistent warm starts. Indeed, our Monte Carlo studies in \autoref{sec6.3} show that the online Phase II algorithms perform virtually the same regardless whether our online Phase I algorithm or the global optimization algorithm SMCO of \citep{chen2026optimization} is used. The only difference is computational speed in the warm start phase: our online Phase I exploits the semiparametric monotone index model structure so is faster to find a good warm start than the generic global optimizer SMCO. Indeed, the logic of our Phase II algorithm, coupled with any consistent warm start algorithm such as SMCO, easily extends beyond the baseline monotone index model. In \autoref{conclusion}, we discuss sample-selection settings in which observability is itself an economic choice \citep{heckman1974,abhausmankhan,khan2024inference}; multi-index and multi-alternative choice models; and monotone neural-network approximations to the unknown link. These extensions share the same architecture: a warm-start phase, such as SMCO \citep{chen2026optimization}, first locates a stable basin, after which a local joint sieve online learner produces averaged paths for estimation and inference.

\paragraph{Related Literature}
Our paper contributes to two literatures. The first is the large literature on identification, offline estimation and inference for semiparametric index models in economics. Well-known estimation methods include, among others, maximum rank correlation, semiparametric least squares, efficient binary-response estimation, censored and sample-selection estimators, and rank-based procedures \citep{han1987non,powell1989semiparametric,hardle1993optimal,ichimura1993semiparametric,klein1993efficient,sherman1993limiting,neweyetal,ahn2018simple,fan2020rank,KHAN2025105901}. These works are designed for fixed samples and often require optimization or nuisance-estimation steps that are computationally costly in massive-data settings: each evaluation of the maximum rank correlation criterion alone entails $n(n-1)$ pairwise comparisons (where $n$ is the sample size), and the computational cost grows fast in the dimension of $\theta$ (i.e., the dimension of covariates $X$). To our knowledge, the present paper is among the first to give a joint online sieve-estimation and trajectory-inference theory for this class while retaining an interpretable index structure with many covariates.

The second is the vast literature on stochastic approximation (SA) in the tradition of \citet{robbins1951stochastic} and \citet{kushner-yin}, including abstract infinite-dimensional Hilbert-space online learning problems \citep{yin1990,chen2002asymptotic}. An important subclass of SA algorithms, stochastic gradient descent (SGD), is widely used in finite-dimensional optimization in machine learning and statistics \citep{ghadimi2013stochastic,toulis2017asymptotic,bottou2018optimization} and in nonparametric (kernel or sieve) density and conditional-mean estimation \citep{huang2013recursive,zhang2022sieve}. Recent papers develop inference based on PR averaged SGD iterates in parametric strongly convex M-estimation problems \citep{chenlee2020,lee2022fast,lee2025} and in parametric over-identified GMM problems \citep{chen2023sgmm,chen2025slim} for large data sets. Closest to our setting, \cite{han-jmlr2024} study online inference for $\theta_0$ in high-dimensional single-index models with streaming data. However, they assume that (i) the covariate vector $(x_0,X^{\top})$ satisfies a so-called linear expectation condition (i.e., for all parameter vector $(b_0,b^{\top})$, $\E[x_0 b_0 + X^{\top}b \mid x_0 + X^{\top}\theta_0]$ is linear in the true index $x_0 + X^{\top}\theta_0$) and is statistically full independent of the latent error term); and (ii) there is a loss function $\E[\ell (Y,x_0 + X^{\top}\theta)]$ that is convex in the index, has a unique solution in $\theta_0$ and does not depend on the unknown link function $F_0$. Under these and other assumptions, they study online SGD inference for $\theta_0$ without estimating the unknown link $F_0$. In addition, their methods cannot perform online inference on policy functionals that depend on both $\theta_0$ and the derivative of $F_0$ - like standard partial effects. Relative to this literature, our paper develops a joint sieve online estimation and trajectory-inference theory for semiparametric index models with finite- and infinite-dimensional unknowns, some discrete covariates, and policy effects.

\paragraph{Organization}
\autoref{section: model and data} introduces the model and reviews relevant offline estimators. \autoref{section:algorithm} presents the two-phase online algorithm, random-scaling inference, and online estimation of policy functionals. \autoref{sec:empirical} reports the two empirical illustrations.  \autoref{section3} establishes the statistical properties of the estimators.  \autoref{sec:MC} reports Monte Carlo experiments.  \autoref{conclusion} presents conclusions and extensions. All proofs are collected in the Appendix.

\paragraph{Notation}
For nonrandom sequences $\{a_N\}$ and $\{b_N\}$ with $b_N > 0$, we write
$a_N = o(b_N)$ if $\lim_{N \to \infty} a_N/b_N = 0$, and $a_N = O(b_N)$ if
$\overline{\lim}_{N \to \infty} |a_N|/b_N < \infty$.  We write $a_N \asymp b_N$ if both $a_N = O(b_N)$ and $b_N = O(|a_N|)$. Consider a sequence of random
variables $\{a_N(w)\}$ defined on the underlying probability space
$(A, \mathcal{A}, P_A)$. We say event $E\in\mathcal{A}$ occurs \textit{almost surely} if $P_A(A\setminus E) =0$.  For any nonrandom sequence $b_N > 0$, we write
$a_N = o(b_N) $  a.s.\  if  
$\lim_{N \to \infty} a_N /b_N = 0$ almost surely, and we write 
$a_N = O (b_N)$     a.s.\ if
$\overline{\lim}_{N \to \infty} |a_N|/b_N < \infty$  almost surely. Specifically, if  $a_N  = o(b_N) $  a.s.\ with $b_N \equiv 1$,  we write
$a_N \rightarrow_{\mathrm{a.s.}} 0$. We use $\rightarrow_d$ to denote
convergence in distribution; we use $\Rightarrow$ to
denote  weak convergence in the metric space that will be clear from the
context. For any scalar function $f(z)$ with domain $\mathcal{Z}$, $\|f\|_\infty \equiv \sup_{z \in \mathcal{Z}} |f(z)|$ denotes the sup norm. Finally, for  any real symmetric matrix $M$, $\lambda_{\max}(M)$ and
$\lambda_{\min}(M)$ denote its largest and smallest eigenvalues.

\section{Model and Offline Estimation: A Selective Review}\label{section: model and data}

\subsection{Semiparametric Monotone Index Models}
We first consider the standard semiparametric monotone index model
\begin{align}
    Y = F_0(x_0 + X^{\top}\theta_0) + \varepsilon, \quad \mathbb E(\varepsilon|x_0, X) = 0, \label{model}
\end{align}
where $Y$ is an observed response, $F_0(\cdot)$ is an unknown monotonically increasing function, $(x_0, X^{\top})^{\top}$ is a $(p+1)\times 1$ vector of regressors, $\theta_0$ is a $p\times 1$ unknown parameter, and $\varepsilon$ is an unobserved error term. The coefficient of $x_0$ is normalized to one for identification. Monotone index models nest binary choice, censored, hazard, and Box--Cox transformation models. Two examples illustrate the scope.

\paragraph{Example 1 (Binary economic choice).} In a binary random-utility model \citep{mcfadden2001economic}, $Y=\mathbf{1}\{V_1(X)-V_0(X)-u\ge 0\}$ with systematic utility difference $V_1(X)-V_0(X)=x_0+X^{\top}\theta_0$, so that
$\mathrm{Pr}(Y=1\mid x_0,X)=F_0(x_0+X^{\top}\theta_0)$, where $F_0$ is the distribution of $u$. Model \eqref{model} is therefore a distribution-free representation of an economic choice probability: it recovers the utility index and the link $F_0$ without imposing Logit or Probit. Semiparametric analysis of this model goes back to \citet{manski1975maximum,cosslett1987efficiency, han1987non, ichimura1993semiparametric,klein1993efficient}.

\paragraph{Example 2 (Duration).} In the proportional hazards model $\lambda(t\mid X)=\lambda_0(t)\exp(X^{\top}\theta_0)$ \citep{cox1972regression,kalbfleisch1980statistical,lancaster1990econometric}, suppose the outcome records whether a spell terminates before a fixed horizon $c$ (for example, whether an unemployment spell ends within a given period), $Y=\mathbf{1}\{T\le c\}$. Then
$
\mathbb{P}(Y=1\mid X) = 1-\exp(-\varLambda_0(c)\exp(X^{\top}\theta_0)) = F_0(X^{\top}\theta_0),
$ 
where $\varLambda_0(t)=\int_0^t \lambda_0(s)\,\mathrm{d}s$ and $F_0(z)=1-\exp(-\varLambda_0(c)e^{z})$ is strictly increasing, with shape governed by the unknown cumulative baseline hazard. This is exactly a monotone index binary choice model.

\subsection{Offline Estimators}
In the classical offline setting, a sample $\mathcal{D}_n=\{(x_{0,i},X_i,Y_i)\}_{i=1}^n$ of fixed size $n$ is collected once, and different strategies apply depending on the parameter of interest. In the following we write $Z_i(\theta)=x_{0,i}+X_i^{\top}\theta$.

When $\theta_0$ alone is of interest, the maximum rank correlation (MRC) estimator of \citet{han1987non} maximizes
$\widehat L_{\mathrm{MRC},n}(\theta) = \frac{1}{n(n-1)}\sum_{i\neq j}\boldsymbol{1}\{Z_i(\theta)>Z_j(\theta)\}\boldsymbol{1}\{Y_i>Y_j\}$
over $\theta\in\varTheta$. The MRC estimator requires no estimate of $F_0$, but its criterion is discontinuous in $\theta$ and is evaluated over all $n(n-1)$ pairs; even the smoothed version of \citet{horowitz1992smoothed} must be re-optimized over the full set of pairs at every iteration.

When $F_0$ is also of interest, one can adopt the method of sieves \citep{chen2007large}. Let $J=J_n$ be a sample-size-dependent sieve dimension, let $\Psi_J(\cdot)=(\psi_{J,1}(\cdot),\ldots,\psi_{J,J}(\cdot))^{\top}$ collect the basis functions, and define the sieve space
$\mathcal{S}_J=\{\sum_{j=1}^J b_j\psi_{J,j}(\cdot): b_1,\ldots,b_J\in\mathbb{R}\}$.
With $\theta_0$ known (or replaced by a preliminary estimate), $F_0$ is estimated by least squares of $Y_i$ on $\Psi_J(Z_i(\theta_0))$. When both components are of interest, the sieve nonlinear least squares (NLS) estimator solves
\begin{align}\label{sieve-nls}
    \left(\widehat\theta_n,\widehat\beta_n\right)
    \in
    \arg\min_{\theta\in\Theta,\ \beta\in\mathbb{R}^{J_n}}
    \frac{1}{2n}
    \sum_{i=1}^n
    \left(
    Y_i-\Psi_{J_n}(x_{0,i}+X_i^{\top}\theta)^{\top}\beta
    \right)^2,
\end{align}
with $\widehat F_{\mathrm{sieve\text{-}NLS},n}=\Psi_J(\cdot)^{\top}\widehat\beta_n$ \citep{ai2003efficient,chen2007large}.

All of these estimators are offline by design. Each criterion evaluation requires a complete pass through the data (through all $n(n-1)$ pairs for MRC); the NLS criterion \eqref{sieve-nls} is nonconvex in $(\theta,\beta)$, so reliable computation requires many such passes from multiple starting points; and the arrival of new observations forces re-estimation from scratch. These features make the offline toolkit impractical for large samples and unusable for streaming data, and they motivate the online algorithms of \autoref{section:algorithm}.

\subsection{Sieve NLS as a Structured Neural Network}

The sieve learner in \eqref{sieve-nls} can be interpreted as a single-hidden-layer neural network with economically meaningful restrictions. Spline bases admit representations through truncated power functions, so a sieve approximation of   $J$ basis functions can be written as
\begin{equation}
  F_J(x_0+X^{\top}\theta)
  =
  \sum_{j=1}^J \beta_j\psi_j(x_0+X^{\top}\theta)
  =
  \sum_{\ell=1}^{M_J} a_\ell
  \sigma_q(x_0+X^{\top}\theta-\kappa_\ell),
\end{equation}
where $M_J$ is the number of hidden units, $\sigma_q(t)=t_+^q$ is a ReLU-power activation, and the $\kappa_\ell$ are spline knots. The conditional mean then takes the network form
$\mathbb{E}(Y|x_0,X)\approx \sum_{\ell=1}^{M_J} a_\ell\,\sigma_q(b_{\ell,-1}+b_{\ell,0}x_0+X^{\top}b_\ell)$
subject to $(b_{\ell,0},b_\ell^{\top})^{\top}\propto (1,\theta^{\top})^{\top}$: all hidden units share a common index direction. The hidden layer learns nonlinear transformations of the scalar economic index $x_0+X^{\top}\theta$, while the output layer approximates the unknown monotone response curve. The restrictions are economically meaningful: the common first-layer direction is the interpretable index parameter $\theta_0$; monotonicity of $F_0$ can be imposed through monotone spline bases \citep{chen2007large}; and inference concerns the low-dimensional structural parameter rather than unrestricted network weights.

This interpretation also clarifies the role of sieve-dimension growth in online learning: increasing the sieve dimension enlarges the hidden layer, so previously learned weights must be \emph{re-embedded} into the expanded network in a way that preserves the learned shape of the response curve. The next section develops an online algorithm for \eqref{sieve-nls} built around exactly this recursive re-embedding. Richer architectures with multiple index directions (multinomial choice) or selection gates (sample selection) extend the same design and are developed in \autoref{extension}.

\section{The Online Learning Algorithms}\label{section:algorithm}

In the online environment, data arrive sequentially in small batches. Let $k=1,2,\cdots$ index a discrete time sequence. In period $k$ the researcher observes a batch of $B$ i.i.d.\ realizations of $(x_0,X,Y)$ from model \eqref{model}, where the batch size $B$ is a fixed integer. Define $W_{i,k}=(x_{0,i,k},X_{i,k},Y_{i,k})$ for $i=1,\ldots,B$ and $\mathcal{W}_k=\{W_{i,k}\}_{i=1}^B$. Throughout, $\{\mathcal{W}_k\}_{k=1}^{\infty}$ is assumed i.i.d.\ across $k$; the analysis can be extended to dependent data under additional appropriate conditions.

Our online algorithm consists of two phases, summarized in \autoref{fig:semiparametric_online_learning}. The warm-start Phase~I produces consistent estimators of $(\theta_0,F_0)$ from \emph{arbitrary} initializations (\autoref{sec2.1} and \autoref{sec2.2}). The refinement Phase~II locally and jointly updates both components to attain optimal convergence rates (\autoref{sec2.3}). The refinement trajectory is then the sole input to random-scaling inference (\autoref{sec2.4-rs}) and to online estimation and inference for policy functionals (\autoref{sec2.5}).

A single averaging device is used throughout. For any sequence of iterates $a_1,a_2,\cdots$, the Polyak--Ruppert (PR) average \citep{ruppert1988efficient,polyak1992acceleration} is defined as $\overline a_N=N^{-1}\sum_{k=1}^N a_k$, computed by the recursion $\overline a_N=\frac{N-1}{N}\overline a_{N-1}+\frac{1}{N}a_N$. Using this recursion, only the current average requires storage, so PR averaging is itself fully online. All estimators reported by our algorithms are PR averages.

\begin{figure}[tbp]
\centering
\begin{tikzpicture}[
    node distance=0.85cm and 0.9cm,
    box/.style={
        rectangle,
        rounded corners,
        draw=black,
        align=center,
        text width=3.2cm,
        minimum height=1.15cm,
        inner sep=5pt,
        font=\footnotesize
    },
    mainbox/.style={box, fill=blue!8},
    phasebox/.style={box, fill=orange!10},
    outputbox/.style={box, fill=green!10},
    inferbox/.style={box, fill=purple!10},
    arrow/.style={-{Stealth[length=2.4mm]}, thick}
]
\node[mainbox] (init) {
\textbf{Cold Start}\\
Arbitrary starting values\\
for $\theta$ and $F$
};
\node[phasebox, below=of init] (warm) {
\textbf{Warm Start}\\
Phase I learner/SMCO \citep{chen2026optimization}/\\
other methods
};
\node[phasebox, right=of warm] (joint) {
\textbf{Local Joint Optimization}\\
Joint NLS over $(\theta,F)$\\
for $\theta_0$ and $F_0$
};
\node[outputbox, right=1.0cm of joint, yshift=1.05cm] (est) {
\textbf{Rate-Optimal Estimators}\\
$\doubleoverline{\theta}$ and $\doubleoverline{F}$
};
\node[outputbox, right=1.0cm of joint, yshift=-1.05cm] (policy) {
\textbf{Policy Functionals: $\doubleoverline{\tau} = \tau(\doubleoverline{\theta}, \doubleoverline{F})$}
};
\node[inferbox, right=1.0cm of est, yshift=-1.05cm] (infer) {
\textbf{Online Inference}\\
Random scaling for $\theta_0$\\
and policy functionals $\tau_0$
};
\draw[arrow] (init) -- (warm);
\draw[arrow] (warm) -- (joint);
\draw[arrow] (joint.east) -- (est.west);
\draw[arrow] (joint.east) -- (policy.west);
\draw[arrow] (est.east) -- (infer.west);
\draw[arrow] (policy.east) -- (infer.west);
\end{tikzpicture}
\bigskip{}
\caption{Flow Chart of Semiparametric Online Learning}
\label{fig:semiparametric_online_learning}
\end{figure}

\subsection{Online Warm Start for $\theta_0$}\label{sec2.1}

Phase~I estimates $\theta_0$ without estimating $F_0$, through an online learner that is \emph{globally stable}: it converges almost surely to $\theta_0$ regardless of the starting point. Let $\widehat\theta_0$ denote an arbitrary initial guess and, for any $\theta\in\mathbb{R}^p$, define
$
Z_{i,k}(\theta) = x_{0,i,k} + X_{i,k}^{\top}\theta, \quad i = 1,\cdots, B.
$
For $k\geq 1$, the update from $\widehat\theta_{k-1}$ to $\widehat\theta_k$ is
\begin{equation}\label{first-step algorithm}
\widehat \theta_{k} = \widehat \theta_{k-1} + \frac{\gamma_k}{h_k\cdot B(B-1)}\sum_{i_1\neq i_2}^B \mathcal{K}\left(\frac{Z_{i_1,k}(\widehat\theta_{k-1}) - Z_{i_2,k}(\widehat\theta_{k-1})}{h_k}\right)\left(Y_{i_1,k} - Y_{i_2,k}\right)\left(X_{i_1,k} - X_{i_2,k}\right),
\end{equation}
where $\gamma_k>0$ is the learning rate, $\mathcal{K}$ is a kernel function, and $h_k$ is a bandwidth sequence.

Update \eqref{first-step algorithm} is related to the stochastic gradient of the   MRC criterion of \citet{han1987non} with one crucial modification: the ranking indicator $\boldsymbol{1}\{Y_{i_1,k}>Y_{i_2,k}\}$ is replaced by the response difference $Y_{i_1,k}-Y_{i_2,k}$. One of the core features of the algorithm is   global stability: it almost surely converges to $\theta_0$ from any initialization---which we establish formally in \autoref{theorem1} in \autoref{appendixB}. The PR average estimator is defined as
\begin{align}\label{PR average}
\overline{\theta}_N = \frac{1}{N}\sum_{k=1}^N\widehat\theta_{k}.
\end{align}
The pseudo code of the warm-start estimation of $\theta_0$ is given in Algorithm~\ref{algo:phase1} in the Appendix.

\subsection{Online Warm Start for $F_0$}\label{sec2.2}

We approximate $F_0$ by B-splines and estimate $F_0$ by the online sieve Least Squares. Throughout, we require that as the B-spline sieve dimension grows from $J$ to 
$J+1$, one interior knot is inserted at the midpoint of the largest knot interval---
preserving quasi-uniformity---while all existing knots are retained. The knot vectors are 
therefore nested, so the sieve spaces satisfy $\mathcal{S}_J \subseteq \mathcal{S}_{J+1}$ and 
the bases obey the exact refinement relation
\begin{equation}\label{eq:reembedding}
\Psi_J(z)=R_{J,J+1}\Psi_{J+1}(z), \qquad \forall z,
\end{equation}
where $R_{J,J+1}\in\mathbb{R}^{J\times(J+1)}$ is the transpose of the Boehm knot-insertion 
matrix \citep{boehm1980inserting}; we adopt the convention 
$R_{J,J}=\mathbb{I}_J$.

We suppose that a sequence of index estimates $\check\theta_0,\check\theta_1,\cdots$ is available: $\check\theta_{k-1}=\theta_0$ if $\theta_0$ were known, and otherwise any consistent online estimator, such as the PR average $\overline\theta_{k-1}$ of \autoref{sec2.1}. Write $\check Z_{i,k}=Z_{i,k}(\check\theta_{k-1})$.

Given an initial coefficient $\widehat\beta_0$ of dimension $J_0$ and a pre-specified increasing sequence of sieve dimensions $\{J_k\}_{k\ge1}$, the online sieve update is
\begin{align}\label{online_sieve}
\widehat{\beta}_{k} =
R_{J_{k-1}, J_k}^{\top}\widehat{\beta}_{k-1}
+ \frac{\eta_k}{B}\sum_{i=1}^B
\Big(
Y_{i,k}
-
\Psi_{J_k}(\check{Z}_{i,k})^{\top}
R_{J_{k-1}, J_k}^{\top}\widehat{\beta}_{k-1}
\Big)
\Psi_{J_k}(\check{Z}_{i,k}),
\end{align}
where $\eta_k>0$ is a learning rate. When $J_k=J_{k-1}$, \eqref{online_sieve} is the mini-batch SGD step for the least squares projection of $Y$ on $\mathcal{S}_{J_k}$. As in offline estimation, however, consistency requires the sieve dimension to grow, and enlarging a spline basis reconstructs the entire basis system: the coefficient vector for $\mathcal{S}_{J_{k-1}}$ is neither comparable to nor embedded in that for $\mathcal{S}_{J_k}$, so the update cannot simply be continued. Our solution is the \emph{re-embedding} of $\widehat\beta_{k-1}$: the previous coefficient is re-initialized as $R_{J_{k-1},J_k}^{\top}\widehat\beta_{k-1}$, which, since $\Psi_{J_{k-1}}(z)=R_{J_{k-1},J_k}\Psi_{J_k}(z)$, satisfies
\[
\Psi_{J_{k-1}}(z)^{\top}\widehat\beta_{k-1}
=
\Psi_{J_k}(z)^{\top} R_{J_{k-1},J_k}^{\top}\widehat\beta_{k-1}.
\]
The previously fitted function is thus preserved \emph{exactly} when embedded into the expanded sieve space, providing a stable starting point in the new space. The online estimator of $F_0$ and its PR average are
\begin{equation}\label{online estimator of F}
\widehat{F}_k(\cdot) = \Psi_{J_k}(\cdot)^{\top}\widehat{\beta}_k, \qquad 
\overline{F}_N(\cdot) = \frac{1}{N}\sum_{k=1}^N \widehat{F}_k(\cdot).
\end{equation}
It is convenient to average at the coefficient level: setting $\overline\beta_0=\widehat\beta_0$ and
\begin{align}
\overline{\beta}_{N}
=
\frac{1}{N}\widehat{\beta}_N
+
\frac{N-1}{N}
R_{J_{N-1}, J_N}^{\top}\overline{\beta}_{N-1}
\end{align}
yields the closed form
\begin{align}
\overline{\beta}_N
=
\frac{1}{N}\sum_{k=1}^N
R_{J_k, J_N}^{\top}\widehat{\beta}_k,
\qquad
R_{J_k, J_N}
=R_{J_k, J_{k+1}} R_{J_{k+1}, J_{k+2}} \cdots R_{J_{N-1}, J_N},
\end{align}
so that $\overline F_N(\cdot)=\Psi_{J_N}(\cdot)^{\top}\overline\beta_N$. The procedure is summarized in Algorithm~\ref{alg:global-online-F} in Appendix.

\subsection{Phase II: Joint Local Refinement of $\theta_0$ and $F_0$}\label{sec2.3}

Starting from the Phase~I warm starts, Phase~II jointly refines both components so that the resulting estimators attain optimal convergence rates. For sieve dimension $J$, define the sieve NLS loss
\begin{equation}\label{nls_score}
\mathcal{L}_J(\theta, \beta,W) = \frac{1}{2}\left(Y - \Psi_{J}\left(x_0 + X^{\top}\theta\right)^{\top}\beta\right)^2,
\end{equation}
write $\omega=(\theta^{\top},\beta^{\top})^{\top}$ (whose dimension grows with the sieve dimension), and let $\mathcal{L}_J(\omega)\equiv\mathbb{E}[\mathcal{L}_J(\omega,W_{i,k})]$ denote the population loss. As in \autoref{sec2.2}, the sieve dimension $J_k$ grows along the path, and each update starts from the re-embedded previous iterate $\mathcal{R}_{J_{k-1},J_k}^{\top}\widetilde\omega_{k-1}$, where
\begin{equation}
    \mathcal{R}_{J_1, J_2} = \begin{pmatrix}
        \mathbb{I}_p & 0 \\
        0 & R_{J_1, J_2}
    \end{pmatrix}
\end{equation}
extends the re-embedding operator to the joint parameter. The natural stochastic gradient update is then
\begin{align}\label{joint_update}
\widetilde\omega_{k} =\mathcal{R}_{J_{k-1}, J_k}^{\top}\widetilde\omega_{k-1} - \frac{\xi_k \cdot \widehat G_k  }{B}\sum_{i=1}^B \nabla_{\omega}\mathcal{L}_{J_k}\left(\mathcal{R}_{J_{k-1}, J_k}^{\top}\widetilde\omega_{k-1}, W_{i,k}\right),
\end{align}
where $\xi_k>0$ is the learning rate and $\widehat G_k$ is an (almost surely) positive definite conditioning matrix. For specific choices of the condition matrix in applications, see \autoref{sec:empirical}.

The critical complication is that $\mathcal{L}_J(\omega)$ is \emph{not} globally convex in $\omega$: unrestricted updates need not converge to the truth, which motivates confining \eqref{joint_update} to a neighborhood over which the population loss has nondegenerate curvature. Let $Z_0=x_0+X^{\top}\theta_0$ and define
\begin{equation}
    \omega_{J, 0} = \left(\theta^{\top}_0, \beta_{J,0}^{\top}\right)^{\top}, \ \ \beta_{J,0} = \left( \mathbb{E}\left[ \Psi_J( Z_0)\Psi_J(Z_0)^{\top}\right]\right)^{-1}\mathbb{E}\left[ \Psi_J( Z_0)F_0(Z_0)\right],
\end{equation}
where $\beta_{J,0}$ is the $L_2(P_{Z_0})$ projection coefficient.  We call $\omega_{J,0}$ the \emph{sieve pseudo-true parameter} at dimension $J$. It does not need to be the exact joint finite-$J$ population minimizer because the $\theta$ score at $\omega_{J,0}$ can contain a small approximation-bias term. We require
\begin{equation}\label{local positive eigen}
   \inf_{J}\lambda_{\mathrm{min}}\left(\frac{\partial^2 \mathcal{L}_J(\omega_{J,0}) }{\partial\omega\partial\omega^{\top}}\right) > 0,
\end{equation}   
so the population criterion has positive curvature at $\omega_{J,0}$ uniformly in $J$. As we formally show in \autoref{discussion_on_basin} in \autoref{appendixD}, the curvature extends to the neighborhood
\begin{equation}
    \Omega_{J} = \left\{\omega\in\mathbb{R}^{p+J}: \Vert \omega -  \omega_{J,0} \Vert_2 \leq C_{\Omega}J^{-2}\right\}
\end{equation}
for a suitable constant $C_{\Omega}>0$: the exponent $2$ arises from the Lipschitz constant of the Hessian matrix within the local basin, so $J^{-2}$ is the radius over which the Hessian of $\mathcal{L}_J$ moves by at most a small $\epsilon$ in operator norm. Since $\omega_{J_k,0}$ is unknown, we center a feasible neighborhood at the Phase I estimate $\widehat\omega_k = 
 \left(\check\theta_{k-1}^\top,
 \{R_{J_{k-1},J_k}^\top\widehat\beta_{k-1}\}^\top\right)^\top$:
\begin{equation}
    \widehat\Omega_{k} = \left\{\omega\in\mathbb{R}^{p+J_k}: \Vert \omega - \widehat\omega_{k} \Vert_2 \leq \frac{C_{\Omega}}{2}J_k^{-2} \right\}.
\end{equation}
When $\widehat\omega_k$ converges to $\omega_{J_k,0}$ faster than $J_k^{-2}$, then $\widehat\Omega_k\subseteq\Omega_{J_k}$ eventually almost surely, so nondegenerate curvature holds on the feasible set.  The feasible set is updated from the same incoming batch only after the center for that batch has been formed, and is retained for the next batch; hence the construction remains single-pass and $\widehat\Omega_k$ is measurable with respect to the past.  

Letting $\Pi_A(a)=\arg\min_{a'\in A}\Vert a'-a\Vert_2$ denote the (well-defined) Euclidean projection onto a compact convex set $A$, the Phase~II update is
\begin{align} \label{local joint update}
\widetilde\omega_{k} = \Pi_{\widehat\Omega_{k}}\left[\mathcal{R}_{J_{k-1}, J_k}^{\top}\widetilde\omega_{k-1} - \frac{\xi_k\cdot \widehat G_k }{B}\sum_{i=1}^B\nabla_{\omega}\mathcal{L}_{J_k}\left(\mathcal{R}_{J_{k-1}, J_k}^{\top}\widetilde\omega_{k-1}, W_{i,k}\right)\right].
\end{align}
The PR averages are computed along the way:
\begin{equation}\label{pr_theta_local}
    \doubleoverline{\theta}_N = \frac{1}{N}\sum_{k=1}^N \widetilde\theta_k,\qquad 
    \doubleoverline{\beta}_N = \frac{1}{N}\sum_{k=1}^N R_{J_k, J_N}^{\top}\widetilde\beta_k,
\qquad 
     \doubleoverline{F}_N(\cdot) = \Psi_{J_N}(\cdot)^{\top}\doubleoverline{\beta}_N.
\end{equation}
The pseudo code of Phase II procedure is summarized in Algorithm~\ref{alg:local-online} in Appendix.

\subsection{Random-Scaling Inference}\label{sec2.4-rs}

Inference on $\theta_0$ uses the random-scaling method of \citet{lee2022fast}, whose only input is the trajectory of Phase~II PR averages $\doubleoverline{\theta}_1,\doubleoverline{\theta}_2,\cdots$. Define the self-normalizing matrix
\begin{equation}\label{rs_variance}
V_N = \frac{1}{N^2}\sum_{k=1}^{N}k^2\left(\doubleoverline{\theta}_k - \doubleoverline{\theta}_N\right)\left(\doubleoverline{\theta}_k - \doubleoverline{\theta}_N\right)^{\top},
\end{equation}
which is updated recursively from running sums. A $95\%$ confidence interval for the $j$-th component of $\theta_0$ is
\begin{align}\label{rs_band}
\doubleoverline{\theta}_{N,j} \pm 6.747\cdot\sqrt{[V_N]_{jj}/N},
\end{align}
where $6.747$ is the two-sided $95\%$ critical value of the associated mixed-normal limit. No variance component needs to be estimated, so inference adds essentially no computational cost beyond estimation itself. Although random scaling was proposed for parametric online least squares, its justification requires only the validity of an FCLT for the averaged trajectory; the FCLTs established in \autoref{sec3.3} therefore extend it to the present semiparametric setting, including the policy functionals of \autoref{sec2.5}.

\subsection{Online Estimation and Inference for Policy Interventions}\label{sec2.5}

The joint trajectory also supports online estimation and inference for policy-relevant functionals of $(\theta_0,F_0)$. As a leading example, consider the \textit{average marginal effect}
\begin{equation}\label{average_marginal_effect}
    \tau_{0} =\mathbb{E}\left[\frac{\partial \mathbb{E}(Y|x_0, X)}{\partial X}\right]\equiv  \mathbb{E}\left[\nabla_z F_0\!\left(Z_0\right)\theta_0\right],
\end{equation}
whose $j$-th component measures the average change in the outcome induced by a one-unit shift in the $j$-th feature. Given the Phase~II iterates $\{(\widetilde\theta_k,\widetilde\beta_k)\}$, we propose the plug-in estimator and its PR average
\begin{equation}\label{online_estimator_of_average_me}
    \widetilde\tau_{k}  = \frac{1}{B}\sum_{i=1}^B \left(\nabla_z\Psi_{J_{k-1}}(Z_{i,k}(\widetilde\theta_{k-1}))\right)^{\top}\widetilde\beta_{k-1}\widetilde\theta_{k-1}, \ \ \ \doubleoverline{\tau}_N = \frac{1}{N}\sum_{k=1}^N \widetilde\tau_{k}.
\end{equation}
Similar to the online estimators of $\theta_0$ and $F_0$, the sequence $\doubleoverline{\tau}_1,\doubleoverline{\tau}_2,\cdots$ is also evaluated online, and we also show in \autoref{sec3.4} that it attains $1/\sqrt{N}$-consistency  and satisfies an FCLT. Consequently, the random-scaling inference  applies to it directly: we can simply replace $\doubleoverline{\theta}$ with $\doubleoverline{\tau}$ in (\ref{rs_variance}) and (\ref{rs_band}). We point out that the FCLT-based fast inference is particularly attractive because the marginal effect is of the main concern in most applications, and most importantly, as we show in the appendix, the asymptotic variance of $\doubleoverline{\tau}_N$ is complicated, making plug-in inference computationally demanding.

\section{Two Empirical Illustrations}
\label{sec:empirical}

\subsection{Application to a Synthetic Trade Data}

We first illustrate our online estimation algorithm  using the bilateral-trade panel synthesized from the data set of \citet*{helpman2008estimating} (hereafter HMR). \autoref{sec:emp:data} describes the data and estimation design. \autoref{sec:emp:synth} presents a specification check under a calibrated Probit data-generating process. \autoref{sec:emp:real} applies the estimator to the resampled panel outcomes.

\subsubsection{Data, Variables, and Empirical Design}
\label{sec:emp:data}

The original data set in \citet{helpman2008estimating} covers bilateral trade flows among $158$ countries over $1980$--$1989$, for a total of $n = 248{,}060$ country-pair-year observations. Positive trade is recorded for $44.6\%$ of observations. The outcome is
\begin{equation}
Y_{ij,t} \;=\; \mathbf{1}\{\text{country } i \text{ records positive exports to country } j \text{ in year } t\},
\end{equation}
which is modeled as a binary choice process. 
The normalized regressor is $x_0 = -\ln(\mathrm{distance}_{ij})$, the negative natural logarithm of physical distance between  countries $i$ and $j$. The remaining covariates are nine standard gravity variables---shared land border, island status, landlocked status, common legal origin, common language, colonial ties, currency union (CU), free trade agreement (FTA), and shared religion---together with $157$ exporter, $157$ importer, and $9$ year fixed effects\footnote{We drop 1 exporter dummy, 1 importer dummy, and 1 year dummy for identification.}. The island and landlock indicators take the value one if at least one of the two countries in the pair is, respectively, an island or landlocked, and zero otherwise. 

We focus on the estimated coefficients of the nine gravity variables and the finite-difference average marginal effects on the conditional probability, which are defined as
\begin{equation}
\label{eq:ame}
\mathrm{AME}_j \;=\; \mathbb{E}\!\left[F_0\!\left(Z_{0,-j} + \theta_{0,j}\right) - F_0(Z_{0,-j})\right], \qquad j \in \{\mathrm{CU}, \mathrm{FTA}\},
\end{equation}
where $Z_{0,-j}$ is the index evaluated at $x_j = 0$, $\theta_{0,j}$ is the associated coefficient, and $F_0$ is the link: the Gaussian CDF for Probit, the logistic CDF for Logit, and the estimated link for the semiparametric estimator.   We focus on the marginal impacts of two regressors: CU and FTA. These two marginal impacts are particularly
interesting because they quantify how belonging to the same currency union or the
same free-trade agreement affects the probability of trade between two countries. 

The empirical design comprises two streaming exercises.  In the first exercise, synthetic outcomes are repeatedly generated from a Probit data-generating process whose parameters are calibrated to the full-sample Probit fit. In this setting,  the true coefficient vector and marginal effect are both known, so this exercise serves as a robustness check: when Probit is the true data generating process, we show that online learning precisely recovers all the targets. In the second exercise, the estimator is applied to the panel outcomes directly. We randomly resample data from the original panel, and no parametric assumption is imposed on the link. In this case, we show that the semiparametric estimates differ from Probit or Logit for some targets, indicating potential model misspecification by Probit or Logit as well as gains from semiparametric modeling. 

\paragraph{Implementation Details.} \textit{Number of updates.} For both exercises, we consider a total of $N = 10^7$ 
updates, with a Phase~I global learner of $N_1 = 10^6$ updates and a Phase~II rate-optimal 
stage of $N - N_1 = 9 \times 10^6$ updates. We consider batch size $B = 25$.

\textit{Choices of sieves.} To simplify the implementation, we use degree $3$ (i.e., order $4$, or cubic) B-splines with uniformly spaced knots, and fix the sieve 
dimension at its initial value $J_0 = 50$ throughout updates. Since the true unknown link function $F_0$ is assumed to have approximation error rate $s$ with $s>7/2$\footnote{By approximation error rate $s$, we mean that $\Vert F_0 - g\Vert_{\infty}\leq CJ^{-s}$ for some $g\in\mathcal{S}_J$, see \autoref{appendixC} and \autoref{appendixD}.}, with cubic B-spline sieve dimension fixed at $J_0 = 50$ in this application, the approximation error rate 
is of order  $50^{-4}\approx 1.6\times 10^{-7}$. This  is negligible compared with the width of the confidence intervals reported below, which are of order $10^{-3}$.

\textit{Warm-start updates.} The online warm-start learner of $\theta$ performs kernel-based updates, initialized at the 
origin $\boldsymbol{0}_p$. We use the sixth-order Epanechnikov kernel and choose bandwidth 
$h_k = c_k\cdot k^{-\alpha_h}$ with $\alpha_h = 0.1$ and $c_k = 1.414\,\widehat\sigma_{Z,k}$, 
where $\widehat\sigma_{Z,k}$ is the batch-based sample standard deviation of 
$\widehat Z_{i,k} = x_{0,i,k}+ X_{i,k}^{\top}\widehat\theta_{k-1}$. We choose a constant 
learning rate $\gamma_k = 0.02$ for aggressive search in the first 
stage\footnote{A constant learning rate allows the algorithm to quickly locate a 
neighborhood of the true parameter, at the cost of nonvanishing noise in the iterates. 
Since Phase~I only supplies the warm start in the application, this aggressive search does not affect the 
local refinement and inference in Phase~II.} and start tracking the PR average after 
$N_0 = 5\times 10^5$ updates. The warm-start of sieve coefficients are constructed 
  by
\begin{equation}\label{Y_sieve}
   \widehat{\beta}_{N_1} = \left[\sum_{k=N_0+1}^{N_1}\sum_{i=1}^B
   \widetilde\Psi_{J_0}(\overline Z_{i,k})\widetilde\Psi_{J_0}(\overline Z_{i,k})^{\top}
   \right]^{-1}\left[\sum_{k=N_0+1}^{N_1}\sum_{i=1}^B
   \widetilde\Psi_{J_0}(\overline Z_{i,k})Y_{i,k}\right],
\end{equation}
where $J_0$ is the initial sieve dimension, 
$\overline{Z}_{i,k} = x_{0,i,k}+X_{i,k}^{\top}\overline{\theta}_{k-1}$, and 
$\widetilde\Psi_J(\cdot) = \Psi_J(\frac{2}{\pi}\arctan(\cdot))$. Note that 
$\frac{2}{\pi}\arctan(\cdot)$ maps $\mathbb{R}$ to $(-1,1)$, so spline basis functions 
with bounded support can be used to approximate a function with unbounded support.

\textit{Local refinements by conservative  learning rates.} The local refinement  starts from the last-step PR average of the first stage, i.e. $k=N_1 + 1$.   In this stage, specifying the projection space would require an additional choice on the tuning parameter $C_{\Omega}$. To avoid explicit construction of projection space (and choosing tuning parameter $C_{\Omega}$), we consider Phase II learning rate    $\xi_k = 2\cdot (k + 10^5)^{-0.55}$. In this case, $\xi_1 = 0.0036$, and  consequently the initial updates of second stage are conservative. This makes the second stage refinement   local even   without explicit projection. We also point out that this construction of learning rate only shifts its scale without  changing  its decaying rate.  

\textit{Choice of conditioning matrix $\hat{G}_k$.} To stabilize the updates, we  precondition the gradient with an inverse Hessian 
estimate that is updated recursively across Phase~II, based on the cumulative sample 
Hessian evaluated along the estimated trajectory\footnote{We recursively update the Hessian matrix, update the inverse Hessian matrix every 1000 iterations.  We also impose a small ridge term $10^{-4}\mathbb{I}$ to ensure invertibility.}. Since the parameter estimates are 
consistent, this matrix gain converges almost surely to the inverse population Hessian at 
the population truth; by standard results for averaged stochastic 
approximation with convergent matrix gains \citep{polyak1992acceleration}, the 
first-order asymptotic distribution of the PR average is invariant to such 
preconditioning; in particular, it neither  affects the asymptotic properties of the PR average estimators nor the validity of the random-scaling-based inference.

\begin{figure}[t!]

    \safeincludegraphics[
       width=1.00\linewidth,
      trim= 0cm 0cm 0cm 0cm,
       clip
   ]{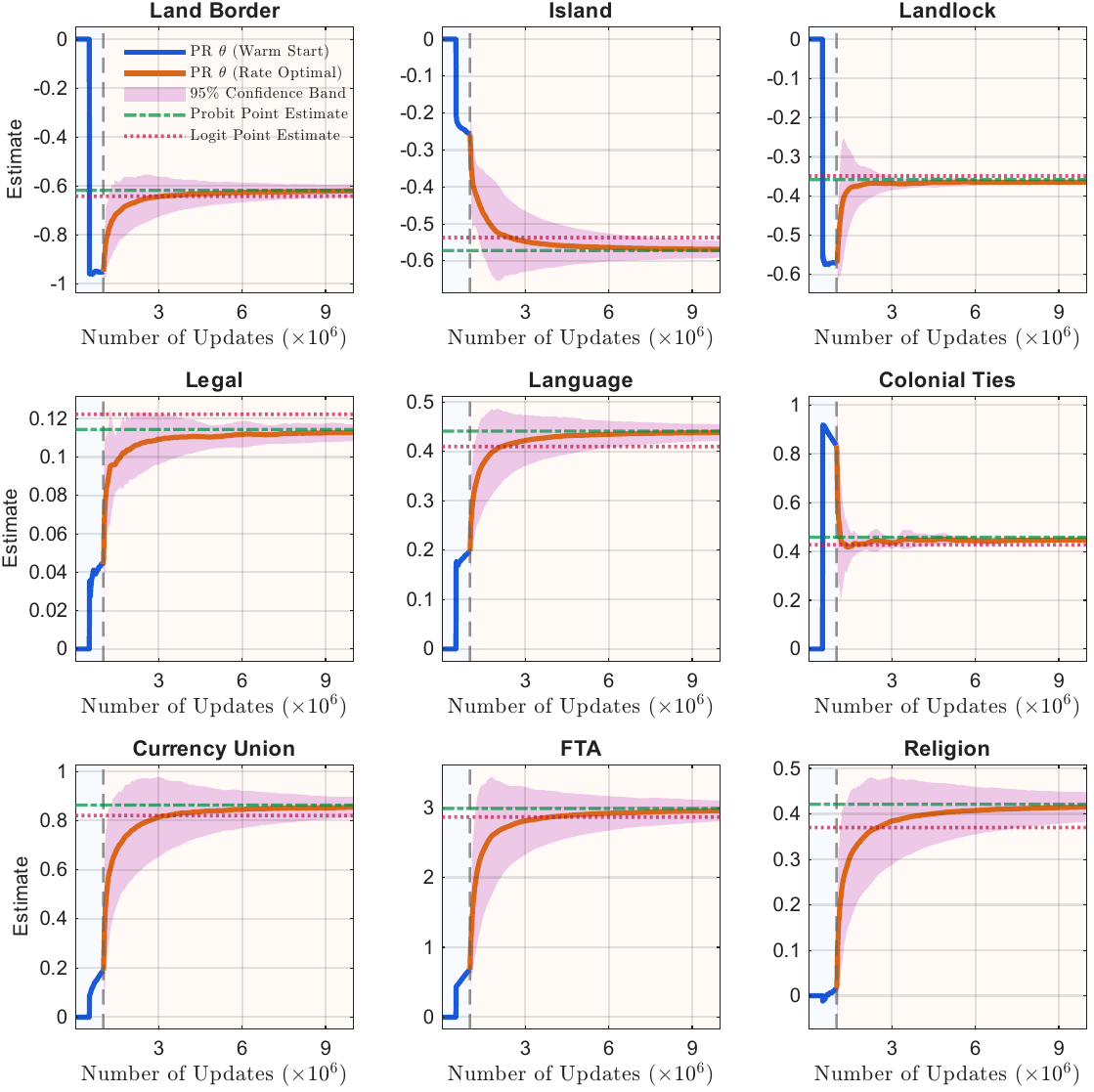}
    
  \vspace{0.4cm}
  
  \footnotesize{Note: The truth coefficients are fixed to the full-sample Probit estimators, so a valid semiparametric online estimation algorithm should recover the Probit benchmarks.}

  \vspace{0.6cm}
  \caption{Synthetic  Data: Parameters}\label{empfig1}
\end{figure}

\begin{figure}[t!]

    \safeincludegraphics[
       width=1.0\linewidth,
      trim=  0cm 0cm 0cm 0cm,
       clip
   ]{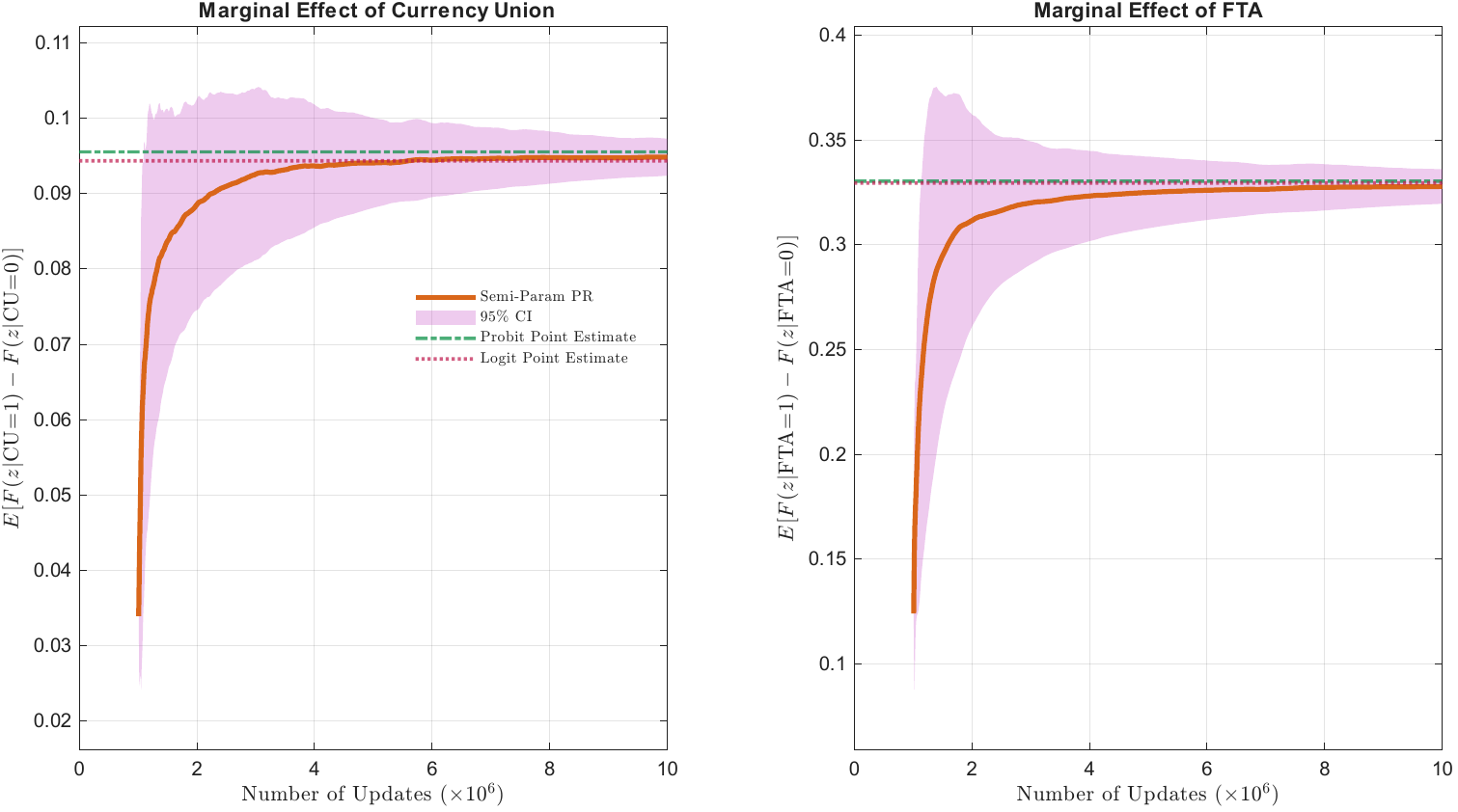}

   \vspace{0.4cm}

  \footnotesize{Note: The true AMEs are   given by (\ref{eq:ame}), where the link is Probit. A valid semiparametric online estimation algorithm should recover the Probit benchmarks.}

   \vspace{0.6cm}
    \caption{ Synthetic Data: Marginal Effects}
  \label{empfig2}
\end{figure}

\subsubsection{Semiparametric Online Learning under a Calibrated Probit DGP}
\label{sec:emp:synth}

Let $(\beta^*_{-1}, \beta_0^*, \beta^*)$ denote the Probit maximum-likelihood estimates on the original data set. At each update $k$, a batch of size $B = 25$ of covariate vectors $(x_{0,k}, X_k)$ is drawn with replacement from the empirical distribution, and the outcome is generated as
\begin{equation}
\label{eq:probit-dgp}
Y_{i,k} \;=\; \mathbf{1}\{\beta^*_{-1} + \beta_0^* x_{0,i,k} + X_{i,k}^\top \beta^* - u_{i,k} \geq 0\}, \qquad u_{i,k} \sim \mathcal{N}(0, 1),
\end{equation}
with $u_{i,k}$ independent across $i$ and $k$.   The true normalized coefficient is $\theta^* = \beta^*/\beta_0^*$, and a correctly specified estimator should recover it.

\autoref{empfig1} reports the PR average coefficient paths across both phases, together with $95\%$ random-scaling confidence bands and the Probit and Logit benchmarks. Phase~I brings the estimates close to their limits; Phase~II delivers smooth convergence with rapidly shrinking bands. Across all nine gravity coefficients, the semiparametric paths track the Probit benchmark, which is the truth, to within $3\%$ of the true value at the end of Phase~II. The agreement is uniformly tight: the semiparametric coefficient on FTA is $2.953$, against a Probit value of $2.988$; on currency union, $0.854$ against $0.864$; on shared land border, $-0.620$ against $-0.618$; and on shared religion, $0.415$ against $0.421$. \autoref{empfig2} reports the corresponding marginal-effect paths. The semiparametric AMEs ($0.0948$ for CU and $0.3278$ for FTA) coincide with the Probit benchmarks ($0.0955$ and $0.3303$) to within sampling noise. The above results together confirm that, when the parametric specification is correct given the covariate distribution, the online procedure recovers both the true coefficient vector and the true marginal effects.

\begin{figure}[t!]

    \safeincludegraphics[
       width=1.00\linewidth,
      trim= 0cm 0cm 0cm 0cm,
       clip
   ]{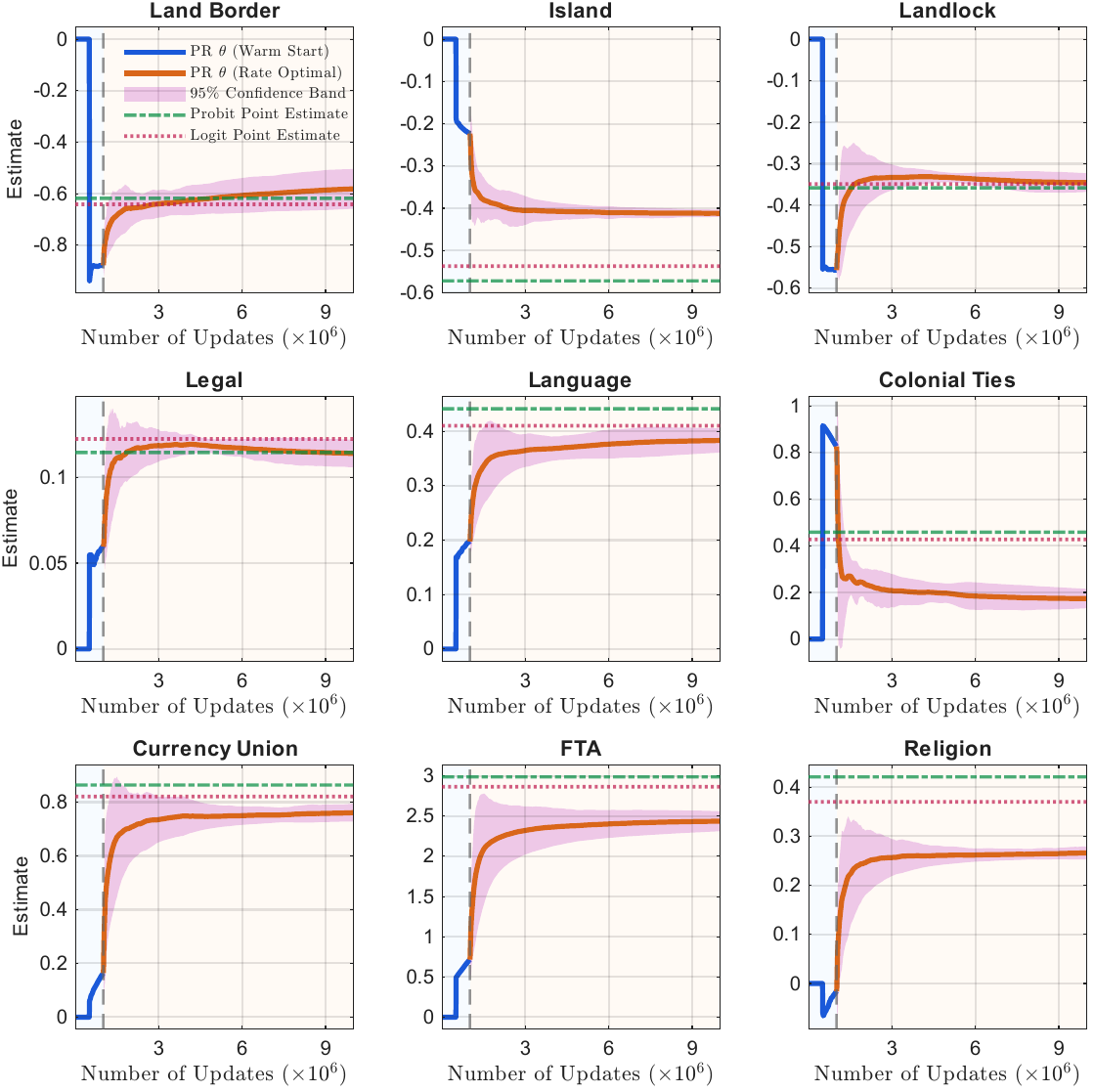}
    \caption{Real Data: Parameters}
  \label{empfig3}
\end{figure}

\begin{figure}[t!]

\centering
    \safeincludegraphics[
       width=1.0\linewidth,
      trim=  0cm 0cm 0cm 0cm,
       clip
   ]{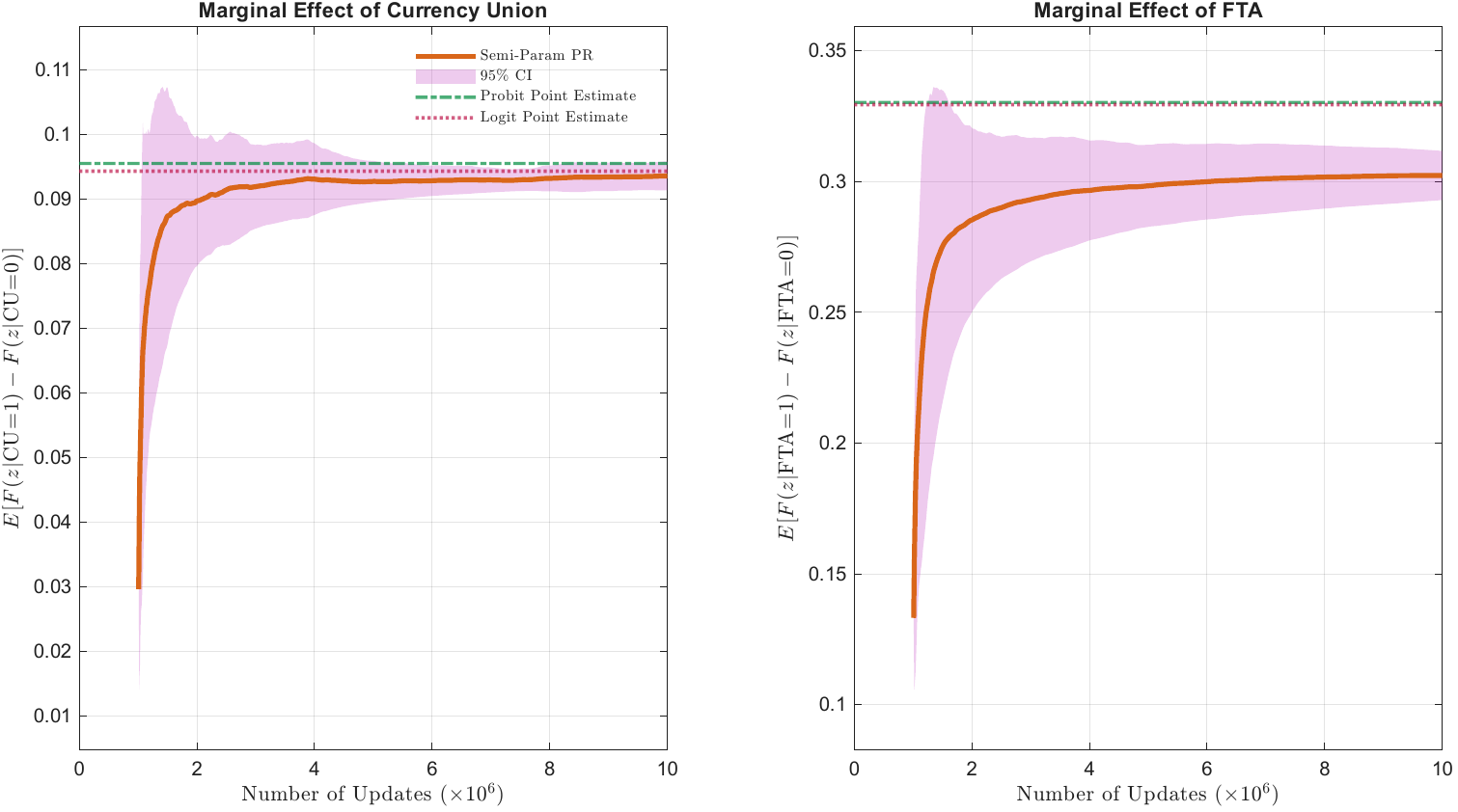}
    \caption{ Real Data: Marginal Effects}
  \label{empfig4}
\end{figure}
\subsubsection{Application to the HMR Panel}
\label{sec:emp:real}

In this section, we treat the 248,060 observation points as the sample space\footnote{In this case, the online confidence band should be interpreted as the quantification of the uncertainty from random subsampling conditioned on the original 248,060 data points. A narrower band indicates a more accurate estimate  of the full-sample counterpart.  }. At each update, a batch of size $B = 25$ is drawn with replacement from the joint empirical distribution of $(x_0, X, Y)$, yielding an i.i.d.\ stream with respect to the empirical measure. In this setting, no parametric assumption is imposed on the link.

\autoref{empfig3} reports the coefficient paths with $95\%$ random-scaling confidence bands. The convergence pattern mirrors that of~\autoref{empfig1}: Phase~I lifts the estimates away from zero, and Phase~II settles them at stable values with shrinking bands. The comparison with the parametric benchmarks, however, splits into two groups. For three of the nine coefficients---shared land border, landlocked status, and common legal origin---the semiparametric estimates agree closely with Probit and Logit, with deviations of less than $5\%$. For the remaining six---island status, common language, colonial ties, currency union, free trade agreement, and shared religion---the semiparametric estimates display sizeable departures from parametric ones. The largest proportional departure is on colonial ties, where the semiparametric estimate is $0.174$ against a Probit value of $0.458$---an attenuation of more than $60\%$. Sizeable departures also appear on shared religion ($0.266$ versus $0.421$, a $37\%$ attenuation), island status ($-0.412$ versus $-0.572$, a $28\%$ attenuation), and FTA ($2.437$ versus $2.988$, an $18\%$ attenuation). In all six cases, the semiparametric coefficient is smaller in magnitude than the Probit benchmark, with the Logit value lying between them.

\autoref{empfig4} reports the marginal-effect paths for currency union and FTA. For currency union, the three estimators agree closely on the probability scale: the semiparametric AME is $0.094$, compared with $0.096$ for Probit and $0.094$ for Logit. For FTA, however, the semiparametric AME of $0.302$ is significantly distinguishable from the Probit AME of $0.330$, a departure of roughly $3$ percentage points on the probability scale that lies well outside the bounds of sampling variability. While this gap may seem small in magnitude, it is economically meaningful: relative to the unconditional probability of a trade relationship of $44.6\%$, misspecification of this order can materially distort the implied policy conclusions.

Taken together, the two exercises demonstrate that the online semiparametric estimator is well-calibrated under correct parametric specification and is flexible enough to detect link misspecification when it arises. The second exercise reveals a systematic pattern: coefficient-level deviations from Probit are largely concentrated on the institutional and informational gravity variables---colonial ties, language, religion, currency union, FTA---rather than on the geographic ones---land border, landlock---although the island estimate by the Gaussian link is way off from the one estimated by our flexible sieve link as well. The probability-scale marginal effect of FTA inherits part of this distortion: a researcher relying on Probit to quantify the effect of FTA membership on the probability of trading would overstate the average partial derivative by roughly $3 $ percentage points, a discrepancy that is small in absolute terms but economically non-negligible at the trade-policy margins where such estimates are typically applied.

\subsection{Application to the Stream of Stanford Open Policing Data}
\label{sec:emp:nc}

This section illustrates our estimation approach using the North Carolina State Patrol  
records from the Stanford Open Policing Project \citep{pierson2020large}\footnote{The data are available at https://openpolicing.stanford.edu/data/.}, a
large administrative stream of traffic stops that is online in nature. The remainder of this section is arranged as follows: \autoref{sec:emp:nc:data}
describes the data and estimation design, and \autoref{sec:emp:nc:real} applies
the estimator to the stop outcomes directly, processing the records in their
recorded arrival/online order.

\subsubsection{Data, Variables, and Empirical Strategy}
\label{sec:emp:nc:data}

The Stanford Open Policing Project provides rich information on  nationwide traffic stops. We choose the data set from  North Carolina State Patrol
between December 1999 and December 2015. The original data contains 20,286,645 stop-level observations. After data cleaning\footnote{For data cleaning, we drop the observations with missing values and also stops in which the driver's race is  Asian or  Pacific islander/unknown/other. These account for 1.11\%, 1.20\%, and 0.79\% of full data set respectively. We further drop the observations 
recorded as checkpoints and as driving-while-impaired investigations, whose proportions are  1.18\% and 0.95\% respectively. We drop these small-proportion categories because they are too sparse to support separately estimated coefficients. },  we are left with
$n = 19{,}234{,}514$ data points, which essentially constitute a large online data stream.

The outcome of interest is whether a search is conducted following the stop by the officer, i.e.,
\begin{equation}
Y_{i} \;=\; \mathbf{1}\{\text{a search is conducted at stop } i\},
\end{equation}
which is modeled as a binary choice process. The regressors we consider include standardized  driver's age (mean zero and variance 1), indicator for female drivers (male driver as benchmark), indicators for White and Black drivers (Hispanic driver as benchmark), and the recorded reasons for stop including violations of speed limit, vehicle
regulatory, seat-belt, vehicle-equipment, investigation, safe-movement, and
other-motor-vehicle violations---with stop-light/sign violation as the omitted
category. This yields a total of $p=11$ regressors.
The normalized regressor $x_0$ is chosen as the negative standardized age; see the following rolling-window analysis for more discussion.

In this section, we mainly focus on the AME of  three demographic regressors, White, Black, and Female, on search probabilities. These
effects quantify how driver's demographics may shift the probability that a stop
results in a search, relative to  Hispanic, male, and stop-light violation baselines. The AME of these  three dummy regressors can be computed following \eqref{eq:ame}.  Also, following the previous section, when we evaluate the estimation results, we also benchmark against Probit and Logit link estimates based on the full sample.  However, we emphasize that our semiparametric estimation and the corresponding inference in this example are  online, so that each new recorded data point is processed only once.

\paragraph{Data Stability and  Estimation Sample} Traffic stops can be plausibly regarded as  independent across cases: individual stops are initiated by different officers encountering different 
drivers at dispersed locations across the state, and the search decision in one stop is not likely to affect the decision in another. However, what is far less 
plausible over  the 1999–2015 sample window is  distributional stability. The composition of 
stops and the propensity to search respond to slowly moving institutional forces including
changes in enforcement priorities and departmental policy, personnel turnover, budget 
cycles, and the documented statewide decline in search rates over the 2000s. As a result, the joint distribution of stop characteristics and search outcomes may drift  across the 
sample even while individual stops remain essentially independent draws from the 
prevailing regime.

\begin{figure}[h]
    \centering
    \safeincludegraphics[width=0.86\linewidth]{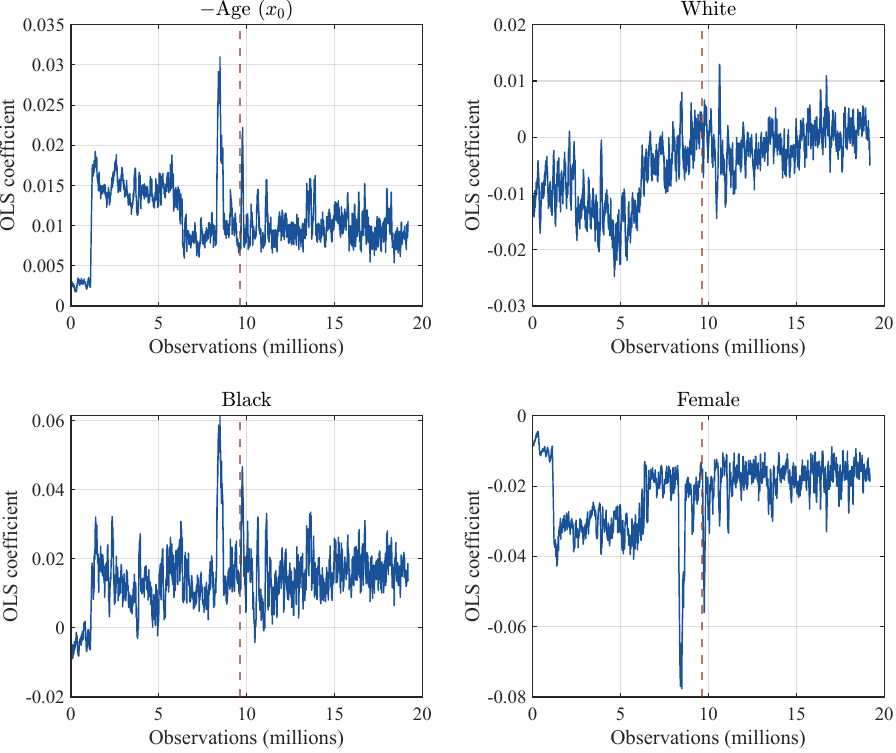}
    \caption{OLS Rolling-Window Estimation Results}
    \label{fig:rolling}
\end{figure}

To provide a more detailed diagnostic analysis for the potential structural instability associated with the original data, we  report in \autoref{fig:rolling}  the   rolling window least squares
estimation results for the North Carolina sample in recorded order.  Starting from the initial observation in 1999, we iteratively conduct OLS estimation based on regressors mentioned above including a constant term, using
a sequence of rolling windows each with $10^5$ observations. The window moves forward by $10^3$ observations each time, so that the adjacent windows overlap  heavily by 99\% observations. This allows us to   detect  structural shifts that are even smooth across time. In \autoref{fig:rolling}, we report the estimation results for four regressors: $-\text{age}$, white, black, and female.

\autoref{fig:rolling} clearly indicates that the first half of the sample exhibits pronounced structural instability. First of all, a  discrete structural change happens to coefficients of negative age, black and female at around 1.1 million-th data point. Such shift has strong persistence and lifts up the magnitude of the above three coefficients. Another surge occurs to the three coefficients at around 8.4 million-th point, again inflating the coefficient magnitude, although with weak persistence. The coefficient of white features smooth structural breaks: it drifts gradually downward to about -0.025  by the 4.6 million-th observation and then climbs back to fluctuate around zero. Given all the observations, if we use the data set including a large proportion of the first half, the identical distribution condition may be violated, and the corresponding estimators and the confidence intervals may lose their statistical properties. 

However, when we look at the second half of the data set, which starts from June~2009, we can 
clearly see that the rolling-window estimates behave differently: no coefficient path 
displays a sustained level shift. This result is  consistent with local heterogeneity 
in the composition of stops, rather than with regime change. 

In light of this evidence, and to align the estimation sample with the i.i.d.\ sampling 
framework we imposed before, we conduct online estimation and inference only on the 
second half of the sample. The first half 
serves solely as a training pool, supplying the kernel-based warm start and the bootstrap 
burn-in described below, which does not contribute to the reported estimators or confidence intervals. 
Note that in this case, the warm start is  computed under a mixture of regimes, so its population 
target may differ from the truth governing the post-2009 stream. This is immaterial 
for our procedure provided that the shift between the two regimes is moderate. All reported estimates and confidence intervals 
are driven exclusively by the post-June-2009 stream. The full-sample Probit/Logit estimators are also conducted based on the second half of the data. Finally, note that the rolling-window estimation results of the coefficient of $-\text{age}$ is strictly positive across all windows. This also supports the use of negative standardized age as $x_0$. 

\begin{figure}
    \centering
    \includegraphics[width=1.0\linewidth, trim=0cm 0cm 0cm 0cm,
        clip]{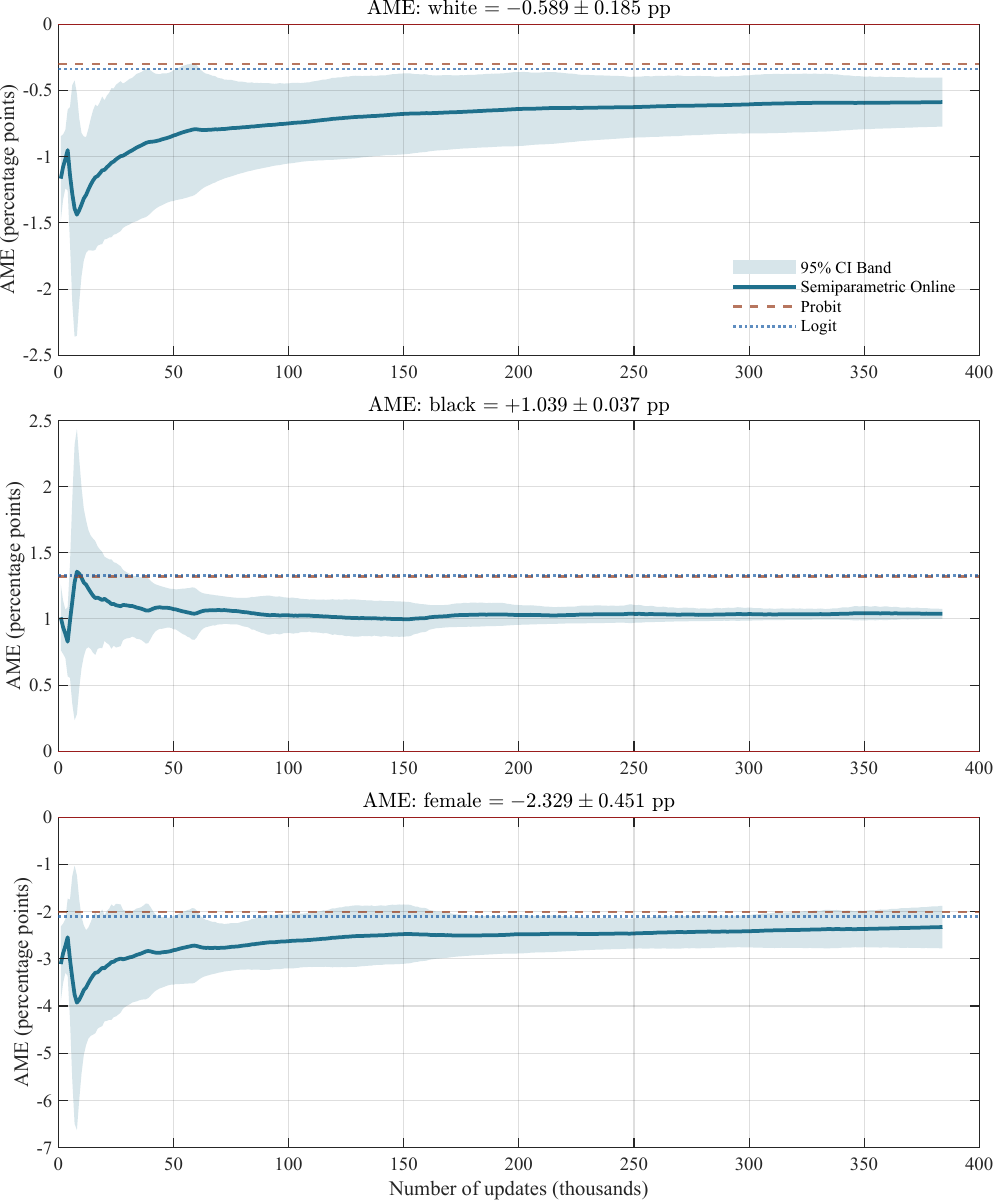}
    \vspace{0.3cm}

    \begin{minipage}{\linewidth}
    \raggedright\footnotesize
    Note: Warm start estimation takes around 2 minutes, while online
    local refinement takes 17 seconds. The applications were implemented
    in MATLAB R2024b and executed on a laptop with an Intel Core Ultra 9
    185H processor (16 cores, 2.5\,GHz) and 64\,GB of RAM.
    \end{minipage}

    \vspace{0.3cm}
    \caption{Semiparametric Online Estimation Results for Average
    Marginal Effects: Stanford Policing Data}
    \label{fig:stanford policing}
\end{figure}

\paragraph{Implementation of the Algorithm.} In this application, we also choose batch size $B = 25$. However, we slightly change the way of the warm-start construction. In particular, we fix the first
half of the records and iteratively draw random samples (of batch size 25) from this offline data set with replacement. We then apply the online learning algorithms to these randomly generated batches with $N = 3\times 10^6$, $N_1= 10^6$ and $N_0=5\times 10^5$.   After the construction of the warm start, we shift to the online estimation with the second half of the data, where the rate-optimal phase  proceeds  with the remaining $9.6$ million records strictly in their recorded order and each data point is processed only once. This roughly leads to 384 thousand updates. Given the smaller number of total updates, we choose a larger second phase learning rate $\xi_k = 5(k+10^5)^{-0.55}$.  Other choices of tuning parameters are identical to those in the previous section. 

\subsubsection{Online Estimation and Results for AME}
\label{sec:emp:nc:real}
The warm starts take roughly 2 minutes, while the subsequent online refinements take only 17 seconds -- processing roughly half a million of data points per second. 
\autoref{fig:stanford policing} reports the PR averages of the AME paths on the 
probability scale from Phase~II. The effects are economically meaningful against the low 
baseline unconditional search rate of $2.32\%$ (second half of the data): relative to a baseline (hispanic male) 
driver, black drivers face an average marginal effect of $+1.04$ percentage points, 
women $-2.33$ percentage points, and white drivers $-0.59$ percentage points; all three 
random-scaling 95\% confidence intervals exclude zero at the last update. Here we caution 
against a causal reading: differences in search rates alone do not directly identify 
discrimination. As emphasized in the policing literature 
\citep{simoiu2017problem,pierson2020large}, such benchmark comparisons are subject to 
infra-marginality and may reflect non-discriminatory factors. Consequently, we  should treat the 
estimated effects as descriptive.

The central finding of \autoref{fig:stanford policing} is that the semiparametric 
estimates differ systematically from the parametric benchmarks: for two of the three average 
marginal effects, the Probit and Logit point estimates lie outside the semiparametric 95\% 
confidence intervals. The parametric models place the black--hispanic search gap at 
roughly $1.32$ percentage points, overstating the semiparametric estimate of $1.04$ 
($\pm 0.037$) by over $25$ percent; for the white--hispanic gap the distortion runs the 
other way, with parametric estimates of about $-0.3$ percentage points amounting to 
roughly half of the semiparametric $-0.59$ ($\pm 0.19$).  

The above discussion implies that link misspecification   does 
not merely rescale the estimated disparities, it compresses one racial gap while 
inflating another.   Since 
all specifications are estimated based on the same post-June-2009 sample with the same 
covariates, these discrepancies should be attributed to the shape of the link function 
itself, the object that parametric practice fixes by assumption. This underscores the value of flexible link estimation for measuring disparities  and, at this scale of sample size, our online procedure delivers it at essentially no additional computational cost.

\section{Statistical Properties}\label{section3}

\subsection{Global Almost Sure Consistency of Phase I Algorithms}\label{theta_warm_start}

Define $\Delta\theta =\theta -\theta_0$ for any $\theta\in\mathbb{R}^p$. The following lemmas shows that our Phase I online algorithms generate paths that converge to the true values $(\theta_0,F_0)$ almost surely.

\begin{lemma}\label{F0_warm_start} Let $\overline{\theta}_N$ be the PR averages of the Phase I algorithm for $\theta_0$. If \autoref{condition1}--\ref{condition4} in \autoref{appendixB} hold, then: $\Vert \Delta\overline{\theta}_N\Vert_2\rightarrow_{a.s.} 0$, and see \autoref{theorem1}  in \autoref{appendixB} for its almost sure rate of convergence.
\end{lemma}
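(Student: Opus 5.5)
The plan is a Robbins--Siegmund argument on the raw iterates $\widehat\theta_k$ of \eqref{first-step algorithm}, followed by a Cesàro step for the PR average. Let $\mathcal{F}_{k-1}$ be the sigma-field generated by $\mathcal{W}_1,\dots,\mathcal{W}_{k-1}$, and write the update as $\widehat\theta_k=\widehat\theta_{k-1}+\gamma_k\{g_k(\widehat\theta_{k-1})+\zeta_k\}$, where $g_k(\theta)$ is the conditional mean of the increment divided by $\gamma_k$ and $\zeta_k$ is a martingale difference.

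The first step identifies the drift. By exchangeability of the pair $(i_1,i_2)$ and $\mathbb{E}(Y\mid x_0,X)=F_0(Z_0)$,
\[
g_k(\theta)=\mathbb{E}\Big[h_k^{-1}\mathcal{K}\big(\tfrac{Z_1(\theta)-Z_2(\theta)}{h_k}\big)\{F_0(Z_{0,1})-F_0(Z_{0,2})\}(X_1-X_2)\Big].
\]
As $h_k\to0$, $g_k(\theta)$ approaches $g(\theta)$, which is the same expectation evaluated on the set $\{Z_1(\theta)=Z_2(\theta)\}$ and weighted by the density of $Z_1(\theta)-Z_2(\theta)$ at zero. On that set, $Z_{0,1}-Z_{0,2}=-(X_1-X_2)^\top\Delta\theta$. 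Monotonicity of $F_0$ then gives
\[
\Delta\theta^\top g(\theta)=-\mathbb{E}\big[\delta(\theta)\{F_0(Z_{0,1})-F_0(Z_{0,2})\}(Z_{0,1}-Z_{0,2})\big]\le0,
\]
where $\delta(\theta)$ is the density weight. This is the global-stability property. I would use the conditions in \autoref{appendixB} (strictly increasing $F_0$, a continuously distributed $x_0$ with conditional density bounded below, and full rank of $X_1-X_2$) to strengthen it to $\Delta\theta^\top g(\theta)\le -c(\|\Delta\theta\|)$, with $c(\cdot)$ positive away from zero on compacts.

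The second step is the supermartingale inequality. Expanding gives
\[
\mathbb{E}[\|\Delta\widehat\theta_k\|^2\mid\mathcal F_{k-1}]\le\|\Delta\widehat\theta_{k-1}\|^2+2\gamma_k\Delta\widehat\theta_{k-1}^\top g_k(\widehat\theta_{k-1})+\gamma_k^2\,\mathbb{E}[\|g_k+\zeta_k\|^2\mid\mathcal F_{k-1}].
\]
The second-moment term is $O(h_k^{-2})$, since $\mathcal{K}$ is bounded and the conditions assume finite fourth moments of $(Y,X)$. It is also $O(h_k^{-1})$ using the density of the index difference, uniformly in $\theta$, because the right-hand side of the update is linear in $X$ and does not grow with $\theta$. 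The bias $\|g_k(\theta)-g(\theta)\|$ is $O(h_k^{r})$ locally, by the kernel order and smoothness. Under the step-size conditions $\sum\gamma_k=\infty$, $\sum\gamma_k^2h_k^{-1}<\infty$ and $\sum\gamma_kh_k^{r}(1+\|\Delta\widehat\theta_{k-1}\|)<\infty$, I control the bias cross term with $\|\Delta\|\le1+\|\Delta\|^2$, which keeps the coefficient summable. Robbins--Siegmund then yields two conclusions: $\|\Delta\widehat\theta_k\|$ converges almost surely to a finite limit, and $\sum_k\gamma_k c(\|\Delta\widehat\theta_{k-1}\|)<\infty$. Because $\sum\gamma_k=\infty$, $c$ cannot stay bounded away from zero along the whole tail, so the limit must be zero. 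Hence $\widehat\theta_k\to\theta_0$ almost surely. Finally, $\Delta\overline\theta_N=N^{-1}\sum_{k\le N}\Delta\widehat\theta_k\to0$ almost surely by Cesàro, which gives the statement. The rate in \autoref{theorem1} would come from a local linearization around $\theta_0$ in a separate argument.

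\textbf{Main obstacle.} The hardest step is the quantitative negativity of the drift globally, that is, $\Delta\theta^\top g_k(\theta)<0$ uniformly on annuli $\{\epsilon\le\|\Delta\theta\|\le M\}$ for all large $k$. This needs uniform-in-$\theta$ convergence of $g_k$ to $g$ and a lower bound on the conditional density of $Z_1(\theta)-Z_2(\theta)$ on these annuli. When $\|\Delta\theta\|$ is large that density can degenerate, because the $X$-part dominates and $X$ may be discrete. My first attempt is to use the continuous $x_0$: conditional on $X_1,X_2$, the variable $Z_1(\theta)-Z_2(\theta)$ has a density coming from $x_{0,1}-x_{0,2}$ that does not depend on $\theta$. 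This makes $\delta(\theta)$ comparable across $\theta$, and then strict monotonicity of $F_0$ gives a positive $c(\cdot)$.
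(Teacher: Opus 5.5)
Your proposal is correct and follows essentially the same route as the paper's proof of \autoref{theorem1}. The paper also runs a Robbins--Siegmund recursion for $\Vert\Delta\widehat\theta_k\Vert_2^2$, splitting the increment into three pieces: the limiting drift $\Phi(\theta)=-\int_0^1\mathbb H(\theta,\tau)d\tau\,\Delta\theta$ (your $g$, by the mean-value theorem), a smoothing bias that \autoref{lemma1}(2) bounds by $Ch_k^{s_{\mathcal K}}$ uniformly in $\theta$ (so your summability condition needs no random factor), and $O(h_k^{-1})$ martingale noise; it then concludes with the $\sum_k\gamma_k=\infty$ contradiction and a Ces\`aro step.

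The step you flag as the main obstacle is exactly \autoref{lemma2}. Under \autoref{condition3} it gives $\lambda_{\min}(\mathbb H(\theta,\tau))\geq c_{\mathbb H}\{(\overline z-\underline z)/(12(1+\Vert\Delta\theta\Vert_2+\Vert\theta_0\Vert_2))\wedge r_X\}^{2p+2}$, obtained by restricting $z$ to the middle of $[\underline z,\overline z]$ and $X_1,X_2$ to a ball shrinking like $(1+\Vert\Delta\theta\Vert_2)^{-1}$. This localization is still needed because, although the density of $x_{0,1}-x_{0,2}$ is $\theta$-free, it is evaluated at $-(X_1-X_2)^\top\theta$. Your concern about discrete $X$ does not arise, since \autoref{condition1}--\ref{condition4} assume all regressors are continuous.
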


\begin{lemma}\label{new_lemma-F} Let $\overline{F}_N$ be the PR averages of the Phase I algorithm for $F_0$. If  \autoref{condition5}--\ref{rate} in \autoref{appendixC} hold, then: $\Vert \overline{F}_N  -  F_0 \Vert_{\infty} \rightarrow_{a.s.}0$, and see \autoref{theorem6} in \autoref{appendixC} for its almost sure rate of convergence.
\end{lemma}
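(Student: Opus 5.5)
The plan is to treat \eqref{online_sieve} as a Robbins--Monro recursion whose target moves with $k$, and to split the error into a stochastic-approximation part and a sieve-bias part. For each $k$ let $\beta^*_{J_k}$ denote the population least-squares coefficient of $Y$ on $\Psi_{J_k}(Z_0)$ (equal to $\beta_{J_k,0}$ in the notation of \autoref{sec2.3}), and let $Q_{J}=\E[\Psi_J(Z_0)\Psi_J(Z_0)^{\top}]$. Then
\[
\Vert \overline F_N-F_0\Vert_\infty\le \Vert \Psi_{J_N}^{\top}(\overline\beta_N-\beta^*_{J_N})\Vert_\infty+\Vert \Psi_{J_N}^{\top}\beta^*_{J_N}-F_0\Vert_\infty .
\]
The second term is $O(J_N^{-s})$ by the B-spline approximation rate and the $L_\infty$-stability of the spline $L_2$ projection, assumed among \autoref{condition5}--\ref{rate}. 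For the first term, B-splines satisfy $\Vert\Psi_J^{\top}b\Vert_\infty\lesssim\sqrt{J}\,\Vert b\Vert_2$ when $\lambda_{\min}(Q_J)\gtrsim J^{-1}$, so it suffices to show $\sqrt{J_N}\,\Vert\overline\beta_N-\beta^*_{J_N}\Vert_2\to0$ a.s.

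Write $e_k=\widehat\beta_k-\beta^*_{J_k}$. Since the re-embedding is exact and $\mathcal S_{J_{k-1}}\subseteq\mathcal S_{J_k}$, we have
\[
e_k=(I-\eta_kQ_{J_k})\bigl(R^{\top}_{J_{k-1},J_k}\beta^*_{J_{k-1}}-\beta^*_{J_k}+R^{\top}e_{k-1}\bigr)+\eta_k\,\mathrm{m}_k+\eta_k\,b_k .
\]
Here $\mathrm{m}_k$ is a martingale difference, namely the batch score at $\beta^*_{J_k}$ minus its conditional mean, whose conditional variance is $O(J_k)$. The term $b_k$ collects the effect of using the generated index $\check Z_{i,k}$ instead of $Z_0$. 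By the mean value theorem, the Lipschitz property of $\Psi_J$ (derivative norm $O(J^{3/2})$), and \autoref{F0_warm_start} with its rate (\autoref{theorem1}), $\Vert b_k\Vert_2=O(J_k^{3/2}\Vert\Delta\check\theta_{k-1}\Vert_2)$ a.s. The drift term $R^{\top}\beta^*_{J_{k-1}}-\beta^*_{J_k}$ is the coefficient of the difference of two projections. It is bounded by $O(J_k^{1/2-s})$, and is zero whenever $J_k=J_{k-1}$. Since $J_k$ changes only sparsely, the accumulated drift is summable under the rate condition.

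Next I unroll the recursion using $\lambda_{\min}(Q_J)\ge c/J$. The contraction per step is $(1-c\eta_k/J_k)$, and this requires $\eta_k\lambda_{\max}(Q_J)<1$, which holds eventually. The martingale part is handled by a Robbins--Siegmund / almost-supermartingale argument applied to $\Vert e_k\Vert_2^2$: summability of $\eta_k^2 J_k$, together with $\sum\eta_k/J_k=\infty$, gives $\Vert e_k\Vert_2\to0$ a.s. To get the rate needed for the $\sqrt{J}$ factor, I would use a weighted version, controlling $J_k\Vert e_k\Vert_2^2$, or a martingale LIL / Kronecker lemma bound for the linear term $\sum_k \Phi_{k,N}\eta_k \mathrm m_k$. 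Passing from $\widehat\beta_k$ to the PR average $\overline\beta_N=N^{-1}\sum R^{\top}_{J_k,J_N}\widehat\beta_k$ is then a Cesàro step. Using exactness of the re-embedding, $\Psi_{J_N}^{\top}R^{\top}_{J_k,J_N}\widehat\beta_k=\widehat F_k$, so I can average directly at the function level: $\Vert\overline F_N-F_0\Vert_\infty\le N^{-1}\sum_k\Vert\widehat F_k-F_0\Vert_\infty\to0$ a.s. by Cesàro, once $\Vert\widehat F_k-F_0\Vert_\infty\to0$ a.s.

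The main obstacle is the interplay of the growing dimension with the vanishing curvature $\lambda_{\min}(Q_{J_k})\asymp J_k^{-1}$. The contraction is only $\eta_k/J_k$, while both the noise and the generated-regressor bias inflate with $J_k$. The rate conditions in \autoref{rate} must be used precisely: $\sum_k \eta_k/J_k=\infty$, $\sum_k\eta_k^2J_k^{2}<\infty$ (the variance of the $\sqrt J$-scaled error), and $J_k^{2}\Vert\Delta\check\theta_{k-1}\Vert_2\to0$ a.s. from \autoref{theorem1}. My first attempt would be a discrete Gronwall bound on $a_k=\sqrt{J_k}\Vert e_k\Vert_2$, treating the bias and drift as perturbations that vanish relative to the contraction.
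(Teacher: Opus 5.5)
Your outer architecture is sound, and for this consistency claim it is lighter than the paper's. Since $\overline F_N=N^{-1}\sum_{k\le N}\widehat F_k$ by definition, sup-norm consistency of the raw iterates plus Ces\`aro suffices. The paper instead proves \autoref{theorem6}: it derives an asymptotic linear representation of $\Delta\overline\beta_N$ by blockwise Abel summation, then applies a net/Bernstein sup-norm bound.

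The gap is in the step meant to deliver $\sqrt{J_k}\Vert e_k\Vert_2\to0$ a.s. You invoke Robbins--Siegmund with summable forcing: $\sum_k\eta_k^2J_k<\infty$ for $\Vert e_k\Vert_2\to0$, and $\sum_k\eta_k^2J_k^2<\infty$ for the $\sqrt J$-scaled error. Neither follows from \autoref{rate}. Take $\alpha_\theta=1/2$, $\alpha_\dagger=1/10$, $s=5$, $\alpha_\eta=0.52$. This lies in the strict region of \autoref{rate}, yet $2\alpha_\eta-\alpha_\dagger=0.94<1$, so already $\sum_k\eta_k^2J_k=\infty$. In that regime an almost-supermartingale argument only gives a preliminary bound, which need not even vanish. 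This is exactly what the paper's first step (\autoref{C1.part1}) yields: its exponent $(1-\alpha_\eta-\alpha_\dagger\underline{\alpha}_F)/2$ equals $0.03>0$ in the example.

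Your curvature bookkeeping also mixes conventions. \autoref{condition7}(ii) gives $\lambda_{\min}(\Gamma_J)\ge c>0$ uniformly, so the contraction is $1-c\eta_k$. A $J^{-1}$ curvature belongs to unnormalized B-splines, for which neither the $\sqrt J$ sup-norm factor nor the $O(J_k)$ noise would appear. So ``vanishing curvature'' is not the real obstacle, and correcting it does not restore summability.

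The missing idea is to use exponential forgetting plus concentration instead of summability.
\begin{enumerate}
\item Unroll with $A_j=(\mathbb{I}-\eta_j\Gamma_{J_j})R^{\top}_{J_{j-1},J_j}$. At the $J_N\to\infty$ expansion times the re-embedding factors need not be contractions in $\Vert\cdot\Vert_2$. Absorb them blockwise using $1-\alpha_\eta-\alpha_\dagger>0$, or work in the $\Gamma_J$-norm, where re-embedding is an exact isometry because $R_{J_{k-1},J_k}\Gamma_{J_k}R_{J_{k-1},J_k}^{\top}=\Gamma_{J_{k-1}}$.
\item The leading noise term $\sum_{k\le N}\eta_k(\prod_{j=k+1}^N A_j)B^{-1}\sum_i\varepsilon_{i,k}\Psi_{J_k}(Z_{0,i,k})$ then has variance of order $N^{-\alpha_\eta+\alpha_\dagger}$. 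Chebyshev tails of that order are not summable in $N$.
\item The a.s.\ statement therefore requires truncating $\varepsilon$ at a level $M_N$, a Bernstein/Freedman bound for the truncated part, Markov's inequality with the moment condition $\kappa>2+2/\alpha_\eta$ of \autoref{condition6}(i) for the remainder, and Borel--Cantelli. Your proposal never uses this higher moment, which is a symptom of the gap.
\item You must also show that the generated-regressor Gram perturbation is of smaller order; it multiplies $e_{k-1}$ by a factor of order $J_k^2\Vert\Delta\check\theta_{k-1}\Vert_2$.
\end{enumerate}
Together these give $\Vert\Delta\widehat\beta_N\Vert_2=O_{\rm a.s.}(N^{-(\alpha_\eta-\alpha_\dagger)/2}\sqrt{\log N})$, as in \autoref{C1.part2}. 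Since \autoref{rate} forces $\alpha_\eta>2\alpha_\dagger$, $\sqrt{J_N}\Vert\Delta\widehat\beta_N\Vert_2\to0$ a.s., and your Ces\`aro step then closes the argument.

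Finally, take the rate of $\check\theta_k$ from \autoref{new_theta_condition_for_sieve}, which is a hypothesis here. Do not take it from \autoref{F0_warm_start} and \autoref{theorem1}, since their \autoref{condition1}--\ref{condition4} are not assumed.
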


Two points are worth mentioning. First, the above convergence results are global in the sense that the results are valid regardless of the choice of the initial starting points. It is known to be very difficult to find a global solution to non-convex optimization problem over a high-dimensional parameter space \citep{chen2026optimization}. Consequently, the above global consistency result is not only theoretically important but also practically crucial for semiparametric index models with many covariates. Second, the global convergence is almost surely so is path-wise. This guarantees the validity of the warm starts for almost all paths of data, which is crucial for online estimation with single passage of data.

\subsection{Rate-Optimal Statistical Properties of Phase II Algorithms}

\subsubsection{Automatic Orthogonalization and Rate-Optimality}\label{sec3.3}
This section formally studies the asymptotic properties of the local online learner $\doubleoverline{\omega}_N$.   Recall that $Z_0 = x_0 + X^{\top}\theta_0$.  
Define $\mu_0(\theta, z) = \mathbb{E}(X|x_0 +X^{\top}\theta =z)$, 
\begin{equation}
    \mathbb{M}_{\theta}  = \mathbb{E}\left[\left(\nabla_z F_0(Z_0)\right)^2(X- \mu_0(\theta_0, Z_0))(X- \mu_0(\theta_0, Z_0))^{\top}\right],
\end{equation}
\begin{equation}
    P_{J} = \left(\mathbb{E}\left[\Psi_J(Z_0)\nabla_z F_0(Z_0) X^{\top}\right]\right)\left(\mathbb{E}\left[\left(\nabla_z F_0(Z_0)\right)^2XX^{\top}\right]\right)^{-1},
\end{equation}
and 
\begin{equation}
     \mathbb{M}_{\beta, J}  = \mathbb{E}\left[\left(\Psi_J(Z_0)- P_J\nabla_z F_0(Z_0)X\right)\left(\Psi_J(Z_0)- P_J\nabla_z F_0(Z_0)X\right)^{\top}\right].
\end{equation}
We consider the learning rate $\xi_k$ and the sieve dimension $J_k$ to be chosen as \begin{equation}
    \xi_k = \xi_0k^{-\alpha_{\xi}}, \quad J_k =[J_0k^{\alpha_{\dagger}}], 
\end{equation} where $\xi_0,  \alpha_{\xi}$ and $\alpha_{\dagger}$ are all positive constants, and $J_0$ is a positive integer. Here $\frac{1}{2}<\alpha_{\xi}<1$ controls the decreasing rate of the learning rate, while $\alpha_{\dagger}$ controls the increasing speed of sieve dimension $J_k$ with the number of updates $k$. Detailed conditions on $\alpha_{\xi}$ and $\alpha_{\dagger}$ can be found in \autoref{appendixD}. 

Recall that $\Delta\theta = \theta - \theta_0$ for any $\theta\in\mathbb{R}^p$. Similarly, for any $\beta\in \mathbb{R}^J$, define $\Delta\beta = \beta - \beta_{J,0}$, where $\beta_{J,0}$ is the pseudo true sieve coefficient at dimension $J$. We have the following lemma that provides the asymptotic linear representation of the PR average estimators, with almost sure convergence rates for remainder terms.

\begin{lemma}\label{theorem_NLS_update}
Let  \autoref{condition5}--\ref{condition7} in  \autoref{appendixC} and
\autoref{cond11}--\ref{new_rate}  in \autoref{appendixD} hold.  Then: 
    \begin{align}
        \Delta\doubleoverline{\theta}_N & = \frac{1}{N}\sum_{k=1}^N \frac{1}{B}\sum_{i=1}^B \varepsilon_{i,k} \mathbb{M}_{\theta}^{-1} \nabla_z F_0(Z_{0,i,k})\left(X_{i,k} - \mu_0(\theta_0, Z_{0,i,k})\right) + o\left(N^{-\frac{1}{2}}\right), \qquad  \rm{a.s.}
    \end{align}
    and 
    \begin{align}
     \Delta\doubleoverline{\beta}_N  & =   
     \frac{1}{N}\sum_{k=1}^{N}R_{J_k, J_N}^{\top}\mathbb{M}^{-1}_{\beta, J_k}\frac{1}{B}\sum_{i=1}^B \varepsilon_{i,k}\left(\Psi_{J_k}\left(Z_{0,i,k}\right) - P_{J_k}\nabla_{z}F_0(Z_{0,i,k})X_{i,k}\right)\nonumber \\
     & + \frac{1}{N}\sum_{k=1}^{N}\left(R_{J_k, J_N}^{\top}\beta_{J_k,0}- \beta_{J_N, 0}\right) +o\left(N^{-\frac{1}{2}}\right), \qquad  \rm{a.s.} 
     \end{align}

\end{lemma}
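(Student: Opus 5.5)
The plan is to run the Polyak--Juditsky averaging argument for the projected recursion \eqref{local joint update}, written in the growing coordinates $\omega\in\mathbb{R}^{p+J_k}$ and centred at the sieve pseudo-true value $\omega_{J_k,0}$.

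\textbf{Step 1: the projection is eventually inactive.} By \autoref{F0_warm_start}, \autoref{new_lemma-F} and the rate results they cite (\autoref{theorem1}, \autoref{theorem6}), the centre $\widehat\omega_k$ converges to $\omega_{J_k,0}$ faster than $J_k^{-2}$ almost surely. Hence $\widehat\Omega_k\subseteq\Omega_{J_k}$ eventually, and by \autoref{discussion_on_basin} the Hessian of $\mathcal{L}_{J_k}$ is uniformly positive definite and Lipschitz on this set.

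I will then derive an almost sure rate for the unaveraged iterates. Using a Robbins--Siegmund supermartingale argument on $\Vert\widetilde\omega_k-\omega_{J_k,0}\Vert_2^2$, I aim for
\[
\Vert\widetilde\omega_k-\omega_{J_k,0}\Vert_2^2=O\!\left(\xi_k\log k\right)\quad\text{a.s.}
\]
This bound has to absorb three things: the re-embedding drift $\Vert R^{\top}_{J_{k-1},J_k}\beta_{J_{k-1},0}-\beta_{J_k,0}\Vert$, which is small because $J_k$ grows slowly; the sieve bias in the $\theta$-score; and the growing dimension of the noise variance, which enters through $\mathbb{E}\Vert\Psi_J\Vert^2\lesssim 1$ for B-splines. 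The rate implies the iterate eventually lies strictly inside $\widehat\Omega_k$, so the projection is the identity from some random $k_0$ on. The finitely many earlier steps contribute $O(N^{-1})$ to the average.

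\textbf{Step 2: linearize and average.} Write the batch gradient at $\omega_{J_k,0}$ as the sum of three pieces: the martingale difference $-\frac1B\sum_i\varepsilon_{i,k}\nabla_\omega m_{J_k}(\omega_{J_k,0},W_{i,k})$, where $m_J(\omega)=\Psi_J(x_0+X^{\top}\theta)^{\top}\beta$; a centred empirical-process term; and a population bias term of order $J_k^{-s}$. Taylor expansion gives
\[
\nabla\mathcal{L}_{J_k}(\omega)=\nabla\mathcal{L}_{J_k}(\omega_{J_k,0})+H_{J_k}(\omega-\omega_{J_k,0})+r_k,\qquad \Vert r_k\Vert\lesssim J_k^{2}\Vert\omega-\omega_{J_k,0}\Vert^2 .
\]
The Hessian is $H_J=\mathbb{E}[\nabla m_J\nabla m_J^{\top}]$ up to a term vanishing with the approximation error, with $\nabla_\theta m=\nabla_zF_0(Z_0)X$ and $\nabla_\beta m=\Psi_J(Z_0)$.

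I will rearrange the recursion as
\[
H_{J_k}\Delta\widetilde\omega_{k-1}=\frac{\widehat G_k^{-1}\bigl(\mathcal{R}^{\top}\Delta\widetilde\omega_{k-1}-\Delta\widetilde\omega_k\bigr)}{\xi_k}-\text{noise}_k-\text{bias}_k-r_k+\text{drift}_k .
\]
Next, premultiply by $H_{J_k}^{-1}$, map every term to dimension $J_N$ with $\mathcal{R}_{J_k,J_N}^{\top}$, and average. This produces:
\begin{itemize}
\item the leading average $\frac1N\sum_k\mathcal{R}^{\top}H_{J_k}^{-1}\cdot\text{noise}_k$;
\item the telescoping term, which after Abel summation is $O\bigl(N^{-1}\sum_k\xi_k^{-1}\,k^{-1}\Vert\Delta\widetilde\omega_k\Vert\bigr)+O\bigl(\Vert\Delta\widetilde\omega_N\Vert/(N\xi_N)\bigr)$, plus terms from $\widehat G_k$ varying and $J_k$ changing;
\item the quadratic remainder $\frac1N\sum_kJ_k^2\xi_k\log k$;
\item the re-embedding drift, which is what generates $\frac1N\sum_k\bigl(R_{J_k,J_N}^{\top}\beta_{J_k,0}-\beta_{J_N,0}\bigr)$ in the $\beta$ expansion.
\end{itemize}
The rate conditions on $(\alpha_\xi,\alpha_\dagger,s)$ in \autoref{appendixD} are what make every term except the leading one and the drift $o(N^{-1/2})$ almost surely. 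The centred empirical-process piece is handled by a martingale LIL or strong-approximation bound together with the Lipschitz property in $\omega$.

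\textbf{Step 3: block inversion of $H_J$ (automatic orthogonalization).} Partition $H_J$ into $\theta$ and $\beta$ blocks.
\begin{itemize}
\item The $\theta$-row of $H_J^{-1}$ is $S_{\theta,J}^{-1}\bigl(\mathbb{I},-\mathbb{E}[F_0'X\Psi_J^{\top}](\mathbb{E}\Psi_J\Psi_J^{\top})^{-1}\bigr)$. Applied to the score, it yields $S_{\theta,J}^{-1}\varepsilon F_0'(Z_0)\bigl(X-\widehat\Pi_J[F_0'X\mid Z_0]/F_0'\bigr)$, where $S_{\theta,J}$ is the Schur complement and $\widehat\Pi_J$ is the $L_2$ projection onto the sieve space.
\item Sieve approximation of $z\mapsto F_0'(z)\mu_0(\theta_0,z)$ gives $S_{\theta,J}\to\mathbb{M}_\theta$, and the projected score converges to $\varepsilon F_0'(X-\mu_0)$. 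The replacement error is a martingale average with variance $\to0$, hence $o(N^{-1/2})$ a.s. by a martingale LIL.
\item The $\beta$-row is obtained symmetrically: the Schur complement of the $\theta$ block is exactly $\mathbb{M}_{\beta,J}$ with projector $P_J$, which gives the displayed $\beta$ expansion.
\end{itemize}

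\textbf{Main obstacle.} I expect the hardest part to be Step 2 for the $\beta$ block. The dimension grows and $H_J^{-1}$ is only bounded in operator norm via \eqref{local positive eigen}, so the telescoping and re-embedding terms must be controlled uniformly in $J$. The rate of the unaveraged iterate must also be sharp enough that $J_k^2\xi_k\log k$ averages to $o(N^{-1/2})$.

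My first attempt would be a weighted Lyapunov function $\Delta\omega^{\top}H_{J_k}\Delta\omega$. I would then bound $\Vert H_{J_k}-R_{J_{k-1},J_k}^{\top}H_{J_{k-1}}R_{J_{k-1},J_k}\Vert$ by the knot-insertion geometry, so that dimension changes cost only $O(k^{-1})$ per step.
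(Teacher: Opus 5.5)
\textbf{Main gap: the sieve-expansion jumps.} The gap is in how you treat the jumps $\zeta_{7,k}:=\mathcal R^{\top}_{J_{k-1},J_k}\omega_{J_{k-1},0}-\omega_{J_k,0}$, and you also misidentify their role. Write $\delta_m:=\Vert\beta_{m,0}-R^{\top}_{m-1,m}\beta_{m-1,0}\Vert_2$, and let $\mathcal J(m)$ be the first update at which the sieve dimension reaches $m$.

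The deterministic term $\frac1N\sum_k(R^{\top}_{J_k,J_N}\beta_{J_k,0}-\beta_{J_N,0})$ is not produced by the recursion drift. It is pure algebra from centring the PR average: $\Delta\doubleoverline\beta_N=\frac1N\sum_kR^{\top}_{J_k,J_N}(\widetilde\beta_k-\beta_{J_k,0})+\frac1N\sum_k(R^{\top}_{J_k,J_N}\beta_{J_k,0}-\beta_{J_N,0})$. The drift is an \emph{additional} remainder, and it is not small.

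Write the centred update as $\Delta\widetilde\omega_k=\mathcal R^{\top}\Delta\widetilde\omega_{k-1}+\zeta_{7,k}-\xi_k\widehat G_k\overline g_k$ and solve for the Hessian term. The jump then enters your normal equation as $\xi_k^{-1}\widehat G_k^{-1}\zeta_{7,k}$, not as an $O(1)$ ``drift$_k$''. So it contributes a term of size $N^{-1}\sum_{m\le J_N}\xi_{\mathcal J(m)}^{-1}\delta_m$ to the average. The same jumps also enter every Abel-summation boundary term at an expansion time.

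With only the pointwise bound $\delta_m\le Cm^{-s}$, this term is $O(N^{\alpha_\xi+(1-s)\alpha_\dagger-1})$. That is $o(N^{-1/2})$ only when $\alpha_\xi<\frac12+(s-1)\alpha_\dagger$, which is strictly stronger than the bound $\alpha_\xi<\frac12+(s-\frac12)\alpha_\dagger$ in \autoref{new_rate}. For example, $s=10$, $s_\mu\ge 9$, $\alpha_\dagger=0.03$ and $\alpha_\xi\in[0.77,0.785)$ satisfy \autoref{new_rate} but fail the pointwise requirement.

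The missing idea is the nested-projection (multiresolution) structure of the jumps. The detail functions $\mathbb P_mF_0-\mathbb P_{m-1}F_0$ are $L_2$-orthogonal, so $\sum_{m>M}\delta_m^2\le CM^{-2s}$. A dyadic Cauchy--Schwarz argument then recovers the missing factor $M^{1/2}$ (\autoref{lemma_telescoping}). The paper makes this usable in three steps. It splits $\Delta\widetilde\omega_k=D_k+T_k$ into a regular part and a deterministic checkpoint transient. It Abel-sums only $D_k$, within constant-dimension blocks. It bounds $N^{-1}\sum_k\Vert T_k\Vert$ directly (\autoref{lemma_transient}).

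\textbf{Two further steps fail as written, though both can be repaired.}

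First, the raw rate $\Vert\Delta\widetilde\omega_k\Vert_2^2=O(\xi_k\log k)$ is wrong on two counts. You invoke the uniform curvature bound \eqref{local positive eigen} (equivalently \autoref{condition7}(ii)). Its $\beta\beta$ block is $\Gamma_J$, so it forces $\mathbb E\Vert\Psi_J(Z_0)\Vert_2^2=\operatorname{tr}\Gamma_J\asymp J$. Your bound $\mathbb E\Vert\Psi_J\Vert_2^2\lesssim 1$ holds only for the unnormalized B-spline basis, whose Gram matrix has $\lambda_{\min}\asymp J^{-1}$. Hence the noise contributes order $\xi_kJ_k$. In addition, the jumps leave a floor of order $k^{-2s\alpha_\dagger}$, which can dominate $\xi_k$, as it does in the example above. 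The achievable rate is $O(k^{-(\alpha_\xi-\alpha_\dagger)}\,\mathrm{polylog}\,k+k^{-2s\alpha_\dagger})$.

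Second, Abel-summing with $\widehat G_k^{-1}$ inside the telescoping term produces increments $\xi_k^{-1}(\widehat G_{k+1}^{-1}-\widehat G_k^{-1})\Delta\widetilde\omega_k$. \autoref{cond_gain} controls only $\Vert\widehat G_k-G_{J_k,0}\Vert=O(k^{-\alpha_G})$, not increments. These terms would therefore need $\alpha_G>\frac12+\frac{\alpha_\xi+\alpha_\dagger}{2}$, far beyond what is assumed. Instead, linearize around the deterministic, block-constant $G_{J,0}$, so the gain cancels exactly from the score in the normal equation. Then treat $(\widehat G_k-G_{J_k,0})\overline g_k$ as a predictable perturbation of the score.

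Your Step 3 (block inversion leading to $\mathbb M_\theta$, $\mathbb M_{\beta,J}$ and $P_J$) matches the paper.
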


\autoref{theorem_NLS_update} is a key result of the paper. Two features are worth emphasizing. 
First, the influence functions in the leading terms of both $\Delta\doubleoverline{\theta}_N$ and $\Delta\doubleoverline{\beta}_N$ are \emph{automatically orthogonalized}. That is, orthogonality emerges from the joint optimization over $(\theta,\beta)$, without requiring any explicit orthogonalization step. This is important both conceptually and computationally. In particular, constructing an orthogonal score for $\theta$ would require estimating $\mu_0(\theta_0,z)$, a vector of $p$ unknown functions, which becomes increasingly costly as $p$ grows. The joint estimation procedure therefore avoids a potentially high-dimensional nuisance estimation step.

The second feature is that  the  sieve coefficient $\Delta\doubleoverline{\beta}_N$ displays a clear online learning structure. For example, $\Delta\doubleoverline{\beta}_N$ contains the bias term $
\frac{1}{N}\sum_{k=1}^{N}\left(R_{J_k,J_N}^{\top}\beta_{J_k,0}-\beta_{J_N,0}\right),
$
which  reflects the evolution of the re-embedded sieve target as the dimension $J_k$ increases. Moreover, the presence of the transformation $R_{J_k,J_N}^{\top}$ in the leading term of $\Delta\doubleoverline{\beta}_N$ has a useful implication for function estimation. For the pointwise estimator $\doubleoverline{F}_N(z)=\Psi_{J_N}(z)^{\top}\doubleoverline{\beta}_N$, the leading stochastic term can be rewritten as 
\[
\frac{1}{N}\sum_{k=1}^{N}\Psi_{J_k}(z)^{\top}\mathbb{M}^{-1}_{\beta,J_k}\frac{1}{B}\sum_{i=1}^B \varepsilon_{i,k}
\left(\Psi_{J_k}(Z_{0,i,k}) - P_{J_k}\nabla_z F_0(Z_{0,i,k})X_{i,k}\right),
\]
where we use the fact that $\Psi_{J_N}(z)^{\top}R_{J_k,J_N}^{\top}=\Psi_{J_k}(z)^{\top}$. This shows that each update contributes to the variance through the \emph{local sieve representation} at level $J_k$, rather than the final sieve $J_N$. This highlights that the variance accumulation is naturally aligned with the evolving sieve complexity.

\begin{remark}
    Note that when $\sigma^2(x_0, X) = \sigma^2_0$, where $\sigma_0^2$ is a positive constant, the online estimator $\doubleoverline{\theta}_N$ attains semiparametric efficiency \citep{ai2003efficient}. However, when there is heteroscedasticity, semiparametric efficiency no longer holds. In this case, we can reweight the update to attain efficiency. 
\end{remark}

\autoref{theorem_NLS_update} yields a root-$N$ law for $\doubleoverline{\theta}_N$ and the explicit moving-sieve upper bound for $\doubleoverline{F}_N$ stated next. Define 
\[
\mathbb{V}_{\theta} := \frac{1}{B}\mathbb{E}\left[\varepsilon^2(\nabla_z F_0(Z_0))^2 (X - \mu_0(\theta_0, Z_0))(X - \mu_0(\theta_0, Z_0))^{\top}\right],~~~\Omega_{\theta}:=\mathbb{M}_{\theta}^{-1}\mathbb{V}_{\theta}\mathbb{M}_{\theta}^{-1}.
\]
\begin{theorem}\label{corollary1}
    Let all conditions in \autoref{theorem_NLS_update} hold and
    $\Omega_\theta$ be positive definite. Then we have:
    \begin{enumerate}
        \item Almost surely, for every
        fixed $a\in\mathbb R^p\setminus\{0\}$,
        \[
        \limsup_{N\to\infty}\sqrt{\frac{N}{2\log\log N}}
        \frac{a^\top\Delta\doubleoverline\theta_N}
             {\sqrt{a^\top\Omega_\theta a}}=1,
        \qquad
        \liminf_{N\to\infty}\sqrt{\frac{N}{2\log\log N}}
        \frac{a^\top\Delta\doubleoverline\theta_N}
             {\sqrt{a^\top\Omega_\theta a}}=-1.
        \]
        \item Almost surely,
        \begin{align}\label{phaseII_uniform_link_rate}
        \Vert\doubleoverline F_N-F_0\Vert_{\infty}
        =O\!\left(
        J_N^{1/2}N^{-1/2}(\log N)^{1/2}
        +\frac1N\sum_{k=1}^N J_k^{-s}
        \right),
        \end{align}
     where $J_k^{-s}$, $s>0$, denotes the sup-norm approximation error rate of using $J_k$ dimensional sieve for $F_0$ (see \autoref{appendixD}). When $J_k\asymp k^{\alpha_\dagger}$, the approximation error 
        term is $O(N^{-s\alpha_\dagger})$ if
        $s\alpha_\dagger<1$, $O(N^{-1}\log N)$ if
        $s\alpha_\dagger=1$, and $O(N^{-1})$ if
        $s\alpha_\dagger>1$.
    \end{enumerate} 
\end{theorem}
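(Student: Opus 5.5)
Both parts follow from the linear representations in \autoref{theorem_NLS_update}; the remainders there are $o(N^{-1/2})$ almost surely, which is negligible at both the LIL scale and the sup-norm scale.

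\emph{Part 1.} Fix $a\neq 0$ and set
\[
\zeta_k := \frac{1}{B}\sum_{i=1}^B \varepsilon_{i,k}\, a^\top\mathbb{M}_\theta^{-1}\nabla_z F_0(Z_{0,i,k})\bigl(X_{i,k}-\mu_0(\theta_0,Z_{0,i,k})\bigr).
\]
The batches are i.i.d.\ across $k$ and $\mathbb{E}(\varepsilon\mid x_0,X)=0$, so the $\zeta_k$ are i.i.d.\ with mean zero. Their variance is $a^\top\mathbb{M}_\theta^{-1}\mathbb{V}_\theta\mathbb{M}_\theta^{-1}a=a^\top\Omega_\theta a$, which is positive and finite by the moment conditions of \autoref{appendixC}--\autoref{appendixD}.

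The Hartman--Wintner LIL then gives
\[
\limsup_{N\to\infty}\frac{\sum_{k=1}^N\zeta_k}{\sqrt{2N\log\log N\, a^\top\Omega_\theta a}}=1 \quad\text{a.s.},
\]
and the $\liminf$ equals $-1$. Dividing by $N$ gives the normalization in the statement. The remainder satisfies $\sqrt{N/(2\log\log N)}\cdot o(N^{-1/2})\to 0$ a.s., so it does not affect the limits. "Almost surely, for every fixed $a$" is to be read as one null set per $a$. If a simultaneous statement is wanted, apply the LIL on a countable dense set of directions and use continuity of both sides in $a$ (Strassen's functional LIL also gives this directly).

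\emph{Part 2.} Decompose
\[
\doubleoverline F_N-F_0=\Psi_{J_N}^\top\Delta\doubleoverline\beta_N+\bigl(\Psi_{J_N}^\top\beta_{J_N,0}-F_0\bigr).
\]

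\textbf{Bias terms.} The second piece is $O(J_N^{-s})$, since the $L_2$ projection is sup-norm stable for B-splines (Lebesgue constant bounded). The deterministic bias in $\Delta\doubleoverline\beta_N$ uses $\Psi_{J_N}(z)^\top R_{J_k,J_N}^\top=\Psi_{J_k}(z)^\top$, giving
\[
\frac1N\sum_{k=1}^N\bigl(\Psi_{J_k}^\top\beta_{J_k,0}-\Psi_{J_N}^\top\beta_{J_N,0}\bigr),
\]
whose sup norm is at most $\frac1N\sum_k(CJ_k^{-s}+CJ_N^{-s})$. Because $J_N^{-s}\le\frac1N\sum_k J_k^{-s}$, the whole bias is $O\bigl(\frac1N\sum_k J_k^{-s}\bigr)$. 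The three cases for $J_k\asymp k^{\alpha_\dagger}$ are elementary bounds on $\sum_k k^{-s\alpha_\dagger}$.

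\textbf{Stochastic term.} The remainder $\Psi_{J_N}^\top o(N^{-1/2})$ is bounded because B-splines are partitions of unity, so $\sup_z\|\Psi_J(z)\|_1\le 1$. To use this I need the $o(N^{-1/2})$ of \autoref{theorem_NLS_update} in $\ell_1$, or in $\ell_2$ with a $\sqrt{J_N}$ loss; I would check that the appendix delivers a $\sqrt{J_N\log N/N}$-compatible bound. The leading stochastic term is
\[
S_N(z)=\frac1N\sum_{k=1}^N \Psi_{J_k}(z)^\top\mathbb{M}^{-1}_{\beta,J_k}\,\xi_k,
\]
a martingale in $k$ for each fixed $z$, where $\xi_k$ denotes the batch score $\frac1B\sum_i\varepsilon_{i,k}(\Psi_{J_k}(Z_{0,i,k})-P_{J_k}\nabla_zF_0(Z_{0,i,k})X_{i,k})$. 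Its increments are bounded by $CJ_k^{1/2}$ in the relevant sense and have conditional variance $\lesssim J_k$, using uniformly bounded eigenvalues of $\mathbb{M}_{\beta,J}$ and B-spline localization.

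I would bound $\sup_z|S_N(z)|$ in four steps:
\begin{enumerate}
\item Truncate $\varepsilon$ at level $N^{c}$, using the moment condition and Borel--Cantelli to control the discarded part.
\item Discretize $z$ on a grid of polynomially many points in $N$. The Lipschitz constant of $\Psi_J$ is polynomial in $J$, so the discretization error is negligible.
\item Apply Freedman's inequality at each grid point with deviation level $\sqrt{J_N\log N/N}$, and take a union bound.
\item Apply Borel--Cantelli along a geometric subsequence $N=2^m$, then control the within-block maxima with a maximal inequality.
\end{enumerate}
This yields $O(J_N^{1/2}N^{-1/2}(\log N)^{1/2})$ almost surely.

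\textbf{Main obstacle.} The hardest part is this uniform almost-sure martingale bound. The difficulty is that the summands change dimension with $k$, and Freedman's inequality needs a range bound of order $J_N^{1/2}N^{c}$ after truncation that is still compatible with the variance term. That requires $\alpha_\dagger$ and the moments of $\varepsilon$ to satisfy the growth restrictions in \autoref{appendixD}, and I would use those conditions exactly at this step.
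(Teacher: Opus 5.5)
Your proposal is correct and follows the paper's route. Part~1 is the LIL for the i.i.d.\ batch influence terms of \autoref{theorem_NLS_update}, plus negligibility of the a.s.\ $o(N^{-1/2})$ remainder. The paper invokes the multivariate cluster-set LIL, which covers all $a$ on one null set. Your dense-set argument also works, provided you note that the coordinate LILs make $\sqrt{N/\log\log N}\,\Vert\Delta\doubleoverline\theta_N\Vert_2$ a.s.\ bounded; that boundedness is what makes the dependence on $a$ equicontinuous in $N$. Part~2 is the same bias/stochastic split, with the sup-norm martingale bound coming from the truncation--net--exponential-inequality--Borel--Cantelli argument. The paper only cites that argument here and actually carries it out in the proof of \autoref{theorem6}.

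One quantity in your calibration of that step is off. For fixed $z$, the scalar increment $\Psi_{J_k}(z)^\top\mathbb M_{\beta,J_k}^{-1}\xi_k$ (with your batch score $\xi_k$) can be of size $J_k|\varepsilon|(1+\Vert X\Vert_2)$, not $J_k^{1/2}$: already $\Psi_{J_k}(z)^\top\mathbb M_{\beta,J_k}^{-1}\Psi_{J_k}(z)\asymp J_k$, and only the coefficient vector $\mathbb M_{\beta,J_k}^{-1}\xi_k$ is of size $J_k^{1/2}$. So to stay in the sub-Gaussian regime at deviation $\sqrt{J_N\log N/N}$, you must truncate at $M_N\asymp\sqrt{N/(J_N\log N)}$, not anywhere near $N^{1/2}$. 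With that choice, Markov plus Borel--Cantelli controls the discarded part under $\kappa>2+2(1+\alpha_\dagger)/(1-\alpha_\dagger)$, using joint moments of $|\varepsilon|(1+\Vert X\Vert_2)$. That is what \autoref{cond11}(ii) provides; the joint moments are needed because of the $P_{J_k}\nabla_zF_0\,X$ part of the score.

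Your worry about the remainder is unnecessary. Since $\sup_z\Vert\Psi_{J_N}(z)\Vert_2\le CJ_N^{1/2}$, the $\ell_2$ remainder becomes $o(J_N^{1/2}N^{-1/2})$, already inside the target rate. By contrast, the partition-of-unity bound $\Vert\Psi_J(z)\Vert_1\le1$ holds only for unnormalized B-splines, not for the $J^{1/2}$-scaled basis that \autoref{condition7} forces.
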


\subsubsection{FCLTs for $\theta$ and Policy Functionals}\label{sec3.4}

\autoref{theorem_NLS_update} also gives the following FCLT of $\doubleoverline{\theta}_N$. 

\begin{theorem}\label{corollary2}
    Let all conditions in \autoref{theorem_NLS_update} hold. Then:
  \begin{align}          \left(N\Omega_{\theta}\right)^{-\frac{1}{2}} [Nr]\Delta\doubleoverline{\theta}_{[Nr]}\Longrightarrow \mathbb{W}_p(r), \ r\in[0,1],
        \end{align} 
        where $\mathbb{W}_p(r)$ is a $p$-dimensional standard Wiener process. Consequently,
      \begin{align}           \Omega_{\theta}^{-\frac{1}{2}}  N^{\frac{1}{2}}\Delta\doubleoverline{\theta}_{N}\rightarrow_d \mathcal{N}(0,\mathbb{I}_p).
        \end{align} 
\end{theorem}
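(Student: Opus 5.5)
The plan is to derive the FCLT directly from the almost-sure asymptotic linear representation for $\Delta\doubleoverline{\theta}_N$ in \autoref{theorem_NLS_update}. Multiplying that representation by $[Nr]$ gives
\[
[Nr]\Delta\doubleoverline{\theta}_{[Nr]} = \sum_{k=1}^{[Nr]} u_k + [Nr]\, o\!\left([Nr]^{-1/2}\right),
\]
where $u_k := B^{-1}\sum_{i=1}^B \varepsilon_{i,k}\mathbb{M}_{\theta}^{-1}\nabla_z F_0(Z_{0,i,k})(X_{i,k}-\mu_0(\theta_0,Z_{0,i,k}))$. The proof then splits into two steps: show that the partial-sum process $N^{-1/2}\sum_{k\le [Nr]}u_k$ satisfies an FCLT with covariance $\Omega_\theta$, and show that the remainder is uniformly negligible on $[0,1]$ after the $N^{-1/2}$ scaling.

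\textbf{Step 1 (leading term).} Since the batches $\mathcal{W}_k$ are i.i.d., the $u_k$ are i.i.d. vectors. They have mean zero because $\E(\varepsilon\mid x_0,X)=0$ and the remaining factor is a function of $(x_0,X)$. Their covariance is $\mathbb{M}_\theta^{-1}\mathbb{V}_\theta\mathbb{M}_\theta^{-1}=\Omega_\theta$: the factor $1/B$ in $\mathbb{V}_\theta$ comes from averaging $B$ independent draws, and the within-batch cross terms vanish by independence and the zero conditional mean. Finite second moments follow from the moment and boundedness conditions already imposed in \autoref{condition5}--\ref{condition7} and \autoref{cond11}--\ref{new_rate}. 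Donsker's invariance principle for i.i.d. vectors, applied to each $a^\top u_k$ together with the Cramér--Wold device for finite-dimensional distributions, then gives $N^{-1/2}\sum_{k\le[Nr]}u_k\Rightarrow \Omega_\theta^{1/2}\mathbb{W}_p(r)$ in $D[0,1]^p$. Premultiplying by $\Omega_\theta^{-1/2}$, which exists because $\Omega_\theta$ is positive definite as assumed in \autoref{corollary1}, yields the standard Wiener limit. Alternatively, one can apply the multivariate Donsker theorem directly.

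\textbf{Step 2 (remainder).} Write the remainder as $\rho_n := n\Delta\doubleoverline{\theta}_n-\sum_{k\le n}u_k$. By \autoref{theorem_NLS_update}, $\rho_n = o(n^{1/2})$ almost surely. We need $N^{-1/2}\sup_{r\in[0,1]}\Vert\rho_{[Nr]}\Vert\to 0$ almost surely. Fix a path in the probability-one set and $\epsilon>0$. There is $n_0$ with $\Vert\rho_n\Vert\le \epsilon n^{1/2}$ for all $n\ge n_0$. Then
\[
\sup_{n\le N}\Vert\rho_n\Vert \le \max_{n<n_0}\Vert\rho_n\Vert + \epsilon N^{1/2},
\]
and the first term is a fixed finite number on this path, so dividing by $N^{1/2}$ and letting $N\to\infty$ gives a limit of at most $\epsilon$. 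The point $r=0$, and small $r$ with $[Nr]=0$, are handled trivially because both sides vanish there. Almost-sure uniform convergence to zero implies convergence in probability in the Skorokhod (indeed the uniform) metric. Slutsky's lemma for weak convergence in metric spaces then transfers the Step 1 limit to $(N\Omega_\theta)^{-1/2}[Nr]\Delta\doubleoverline{\theta}_{[Nr]}$.

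\textbf{Conclusion and main obstacle.} Evaluating at $r=1$ by the continuous mapping theorem gives the marginal CLT $\Omega_\theta^{-1/2}N^{1/2}\Delta\doubleoverline{\theta}_N\to_d\mathcal{N}(0,\mathbb{I}_p)$. The only step that is not purely routine is Step 2, specifically upgrading the pointwise-in-$N$ almost-sure $o(N^{-1/2})$ remainder to a bound that is uniform over the whole trajectory $r\in[0,1]$. The elementary pathwise argument above handles this, because the remainder statement in \autoref{theorem_NLS_update} is almost sure rather than merely in probability. The remaining thing to check is that the early iterates before $n_0$ are finite on the path, which holds because the projection onto $\widehat\Omega_k$ keeps every iterate bounded.
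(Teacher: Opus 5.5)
Your proposal is correct and follows essentially the same route as the paper: an i.i.d.\ multivariate invariance principle for the leading influence term, combined with a pathwise $\epsilon$--$n_0$ argument that upgrades the almost-sure $o(N^{-1/2})$ remainder from \autoref{theorem_NLS_update} to uniform negligibility over $r\in[0,1]$. One small point: the finiteness of the early remainders $\rho_n$, $n<n_0$, is automatic because each is a finite random vector for fixed $n$, so you do not need to appeal to the projection.
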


Based on the results of \autoref{theorem_NLS_update}, we can also develop an FCLT result for the average marginal effect $\doubleoverline{\tau}_{N} $ proposed in \autoref{sec2.5}. To formally state the theorem, define   \begin{align*}
    \Upsilon_{i,k}   &  =  \varepsilon_{i,k} \mathbb{E}[\nabla_{zz} F_0(Z_0)\theta_0X^{\top}] \mathbb{M}_{\theta}^{-1} \nabla_z F_0(Z_{0,i,k})\left(X_{i,k} - \mu_0(\theta_0, Z_{0,i,k})\right)  \\
    & +  \varepsilon_{i,k}\mathbb{E}[\nabla_z \Psi_{J_k}(Z_0)^{\top}]\mathbb{M}^{-1}_{\beta, J_k}\left(\Psi_{J_k}\left(Z_{0,i,k}\right) - P_{J_k}\nabla_{z}F_0(Z_{0,i,k})X_{i,k}\right)\theta_0\\
    & +  \varepsilon_{i,k}\mathbb{E}[\nabla_z F_0(Z_0)]\mathbb{M}_{\theta}^{-1} \nabla_z F_0(Z_{0,i,k})\left(X_{i,k} - \mu_0(\theta_0, Z_{0,i,k})\right)\\
     & +  \left(\nabla_z F_0(Z_{0,i,k}) - \mathbb{E}[\nabla_z F_0(Z_0)]\right)\theta_0.
\end{align*} In \autoref{appendixE}, we show that $\frac{1}{N}\sum_{k=1}^N\frac{1}{B}\sum_{i=1}^B\Upsilon_{i,k}$ is the influence function of $\Delta\doubleoverline{\tau}_N$. There we also impose   a mild condition  that $\mathbb{E}[\Upsilon_{k}\Upsilon_{k}^{\top}]\rightarrow V_{\Upsilon}$ for some positive definite matrix $V_{\Upsilon}$. We have the following theorem. 

\begin{theorem}\label{FCLT_average_marginal_effect}
    Let all conditions in \autoref{theorem_NLS_update} and \autoref{cond_ame} in \autoref{appendixE} hold. Then 
    \begin{align}
        (NV_{\Upsilon})^{-\frac{1}{2}}[Nr]\Delta\doubleoverline{\tau}_{[Nr]}   \Longrightarrow \mathbb{W}_p(r), \ \ r\in[0, 1].
    \end{align}
\end{theorem}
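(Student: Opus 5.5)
The plan is to reduce the FCLT for $\doubleoverline{\tau}_N$ to a martingale FCLT for partial sums of $\Upsilon_{i,k}$. This requires first showing that, uniformly over $n\le N$, the partial sums $n\Delta\doubleoverline{\tau}_n$ coincide with $\sum_{k\le n}B^{-1}\sum_i\Upsilon_{i,k}$ up to an error of order $o(N^{1/2})$ a.s.

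\textbf{Step 1: decompose $\widetilde\tau_k-\tau_0$.} Write it as
\[
\widetilde\tau_k-\tau_0
= \Big(\widetilde\tau_k-\E_{k-1}\widetilde\tau_k\Big)
+\Big(\E_{k-1}\widetilde\tau_k-\tau(\widetilde\theta_{k-1},\widetilde\beta_{k-1})\Big)
+\Big(\tau(\widetilde\theta_{k-1},\widetilde\beta_{k-1})-\tau_0\Big).
\]
Here $\tau(\theta,\beta)=\E[\nabla_z\Psi_J(Z(\theta))^{\top}\beta\,\theta]$, and the middle term vanishes because the batch is independent of the past. Call the first term $m_k$; it is a martingale difference. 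Since the iterates $\widetilde\omega_{k-1}$ converge a.s. and $\nabla_z\Psi_J^{\top}\beta_{J,0}\to\nabla_zF_0$, $m_k$ equals $B^{-1}\sum_i(\nabla_zF_0(Z_{0,i,k})-\E\nabla_zF_0(Z_0))\theta_0$ plus a martingale difference whose conditional variance tends to zero a.s. The conditional variance bound uses the sup-norm rate of \autoref{corollary1}(ii) applied to derivatives, together with $J_k^{3/2}$-type inverse inequalities for B-spline derivatives. Its partial sums are then $o(N^{1/2})$ a.s. by a martingale strong law (Kronecker lemma with $\sum_k k^{-1}\E_{k-1}\|\cdot\|^2<\infty$), which gives the fourth line of $\Upsilon$.

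\textbf{Step 2: linearize the drift term.} Expand $\tau(\widetilde\theta_{k-1},\widetilde\beta_{k-1})-\tau_0$ around $(\theta_0,\beta_{J,0})$. The derivative in $\theta$ contributes $\E[\nabla_{zz}F_0(Z_0)\theta_0X^{\top}]\Delta\widetilde\theta_{k-1}+\E[\nabla_zF_0(Z_0)]\Delta\widetilde\theta_{k-1}$. The derivative in $\beta$ contributes $\E[\nabla_z\Psi_{J}(Z_0)^{\top}]\Delta\widetilde\beta_{k-1}\theta_0$. The sieve bias $\tau(\theta_0,\beta_{J,0})-\tau_0$ is $O(J_k^{-(s-1)})$, using \autoref{cond_ame}. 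The quadratic remainders are bounded by $\|\Delta\widetilde\omega_{k-1}\|^2$ times powers of $J_k$, which are controlled by the a.s. rates of the (non-averaged) Phase II iterates established in proving \autoref{theorem_NLS_update}. All of these must be $o(k^{-1/2})$ on average, and this is exactly what the restrictions on $\alpha_\xi,\alpha_\dagger,s$ in \autoref{appendixD} and \autoref{cond_ame} should deliver.

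\textbf{Step 3: plug in the averaged iterates.} Summing over $k\le n$ converts the linear terms into averaged quantities. These are handled by the partial-sum representations of \autoref{theorem_NLS_update}, which must hold uniformly in $n\le N$. For the $\theta$ part this gives the first and third lines of $\Upsilon$. For the $\beta$ part I would not use the averaged $\doubleoverline{\beta}$ directly, since the functional weight $\E[\nabla_z\Psi_{J_k}]$ changes with $k$. Instead I would redo the PR-averaging argument with the $k$-dependent weight, using the standard recursion $\Delta\widetilde\omega_k\approx\Delta\widetilde\omega_{k-1}-\xi_kH_k\Delta\widetilde\omega_{k-1}+\xi_k\cdot$noise. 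The telescoping (Polyak--Juditsky) argument then yields the second line of $\Upsilon$, with $\mathbb{M}_{\beta,J_k}^{-1}$ arising from the block-inverse Hessian. The error terms come from telescoping, from weight changes between $J_{k-1}$ and $J_k$ (handled through the re-embedding identity $\Psi_{J_N}^{\top}R^{\top}=\Psi_{J_k}^{\top}$), and from bias. I must verify that these are $o(N^{1/2})$ uniformly in $n$. I expect this weighted-averaging step, controlling the derivative functional with growing $J_k$ uniformly along the path, to be the main obstacle.

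\textbf{Step 4: martingale FCLT.} With $S_n=\sum_{k\le n}B^{-1}\sum_i\Upsilon_{i,k}$, we have $\sup_{r}\|[Nr]\Delta\doubleoverline{\tau}_{[Nr]}-S_{[Nr]}\|=o(N^{1/2})$ a.s. The $\Upsilon_{k}$ form a martingale difference array: the $\varepsilon$-terms satisfy $\E(\varepsilon\mid x_0,X)=0$, and the last term is centered i.i.d. The quadratic variation satisfies $N^{-1}\sum_{k\le [Nr]}\E[\Upsilon_k\Upsilon_k^{\top}]\to rV_\Upsilon$ by the Cesàro average of $\E[\Upsilon_k\Upsilon_k^\top]\to V_\Upsilon$. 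A Lindeberg condition follows from bounded moments of $\varepsilon$ and bounded $\|\Psi_{J_k}\|$ weights, using \autoref{cond_ame}. The multivariate martingale FCLT (e.g. Hall--Heyde, or the version used for \autoref{corollary2}) gives $(NV_\Upsilon)^{-1/2}S_{[Nr]}\Rightarrow\mathbb W_p(r)$. By Slutsky in the Skorokhod space, the same limit holds for $(NV_\Upsilon)^{-1/2}[Nr]\Delta\doubleoverline{\tau}_{[Nr]}$.
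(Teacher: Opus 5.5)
Your overall architecture matches the paper's: a martingale part plus a linearized drift, then the PR representations of \autoref{theorem_NLS_update}, then an FCLT with the uniform-equivalence argument from the proof of \autoref{corollary2}. What is missing is the analytic device that makes the AME drift terms small enough, namely integration by parts against the density $f_Z$ of $Z_0$, which is what \autoref{cond_ame}(i)--(ii) supply. The gap is sharpest in your Step 2. The bound $\tau(\theta_0,\beta_{J,0})-\tau_0=O(J_k^{-(s-1)})$ is just the derivative-approximation rate of \autoref{condition5}. It controls the running-average bias only at order $N^{-(s-1)\alpha_\dagger}$, and this is not $o(N^{-1/2})$ under the theorem's hypotheses. \autoref{new_rate} forces $\alpha_\dagger<1/12$, and its only bias-type requirement is $q_\star\alpha_\dagger>1/2$. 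For instance, $s=5$, $\alpha_\dagger=0.07$, $\alpha_\xi=0.75$, $s_\mu=s_f=3$ satisfy every stated condition, yet $(s-1)\alpha_\dagger=0.28$. The paper instead uses
\[
\mathbb E\big[\nabla_z\{\mathbb P_J(F_0)-F_0\}(Z_0)\big]
=-\mathbb E\big[\{\mathbb P_J(F_0)-F_0\}(Z_0)\{\ell(Z_0)-\mathbb P_J(\ell)(Z_0)\}\big]=O(J^{-s-s_f}),
\]
with $\ell=\nabla_zf_Z/f_Z$ and orthogonality of the projection residual to $\mathcal S_J$. This is exactly where $(s+s_f)\alpha_\dagger>1/2$ enters. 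The same bound is needed again for the moving-target bias $\frac1N\sum_k(R_{J_k,J_N}^{\top}\beta_{J_k,0}-\beta_{J_N,0})$ in the $\doubleoverline\beta$ representation, once that term is hit by the derivative functional.

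Second, Step 3 misdiagnoses the obstacle. The $k$-dependence of the weight is harmless. By nestedness, $\mathbb E[\nabla_z\Psi_{J_k}(Z_0)]^{\top}v=\mathbb E[\nabla_z\Psi_{J_N}(Z_0)]^{\top}R_{J_k,J_N}^{\top}v$. Hence $\frac1N\sum_k\mathbb E[\nabla_z\Psi_{J_{k-1}}(Z_0)]^{\top}\Delta\widetilde\beta_{k-1}$ equals $\frac{N-1}{N}\mathbb E[\nabla_z\Psi_{J_{N-1}}(Z_0)]^{\top}\Delta\doubleoverline\beta_{N-1}$ plus sieve-bias terms, and the paper simply plugs in the $\doubleoverline\beta$ representation rather than redoing Polyak--Juditsky. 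What actually matters is the norm of this functional. The remainder in the $\doubleoverline\beta$ representation is $o(N^{-1/2})$ only in Euclidean norm on a vector of growing dimension, and the naive bound $\Vert\mathbb E[\nabla_z\Psi_J(Z_0)]\Vert_2\lesssim J^{3/2}$ would destroy it. The missing fact, again by integration by parts, is
\[
\sup_J\Vert\mathbb E[\nabla_z\Psi_J(Z_0)]\Vert_2=\sup_J\Vert\mathbb E[\Psi_J(Z_0)\ell(Z_0)]\Vert_2\le\sup_J\lambda_{\max}(\Gamma_J)^{1/2}\{\mathbb E\ell^2(Z_0)\}^{1/2}<\infty.
\]
Your Lindeberg step in Step 4 needs the same fact. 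The weights there are not bounded: $\Vert\Psi_{J_k}\Vert_\infty$ and $\Vert P_{J_k}\Vert_F$ grow like $J_k^{1/2}$. So even with the uniform bound, the third line of $\Upsilon_{i,k}$ is only $O(|\varepsilon_{i,k}|J_k^{1/2}(1+\Vert X_{i,k}\Vert_2))$. Lyapunov then requires a $(2+\delta)$-moment with $\delta>2\alpha_\dagger/(1-\alpha_\dagger)$, which is available from $\kappa$ in \autoref{cond11}(ii). A smaller point: the conditional-variance bounds in Step 1 involve the raw iterates $\widetilde\omega_{k-1}$. They must therefore come from \eqref{refined_D_rate} and \autoref{lemma_transient}, not from the averaged sup-norm rate of \autoref{corollary1}(ii).
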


\begin{remark}
  By using FCLT-based random-scaling inference, we avoid estimating complicated asymptotic variance matrix $V_{\tau}$, which is also empirically attractive. 
    We also point that, to save space and illustrate the main idea, we only report the FCLT for AME of continuous regressors; for formal results of FCLT of AME for binary regressors as we used in applications, see \autoref{ame_binary}.
\end{remark}

\section{Monte Carlo Experiments}\label{sec:MC}

This section presents Monte Carlo studies to evaluate the performances of our online learning algorithm. 
 The simulation data is generated according to the following binary-choice DGP:
\begin{align}\label{monte_carlo_model}
Y = \boldsymbol{1}\!\left(x_{0} + X^{\top}\theta_0 - 0.5\Vert \theta_0 \Vert_2 \cdot u \ge 0\right),
\end{align}
where $Y$ is the observed response, $(x_0, X^{\top})^{\top}$ is the regressor vector. Unless otherwise specified,  we consider the following setups. $x_0\sim \mathcal{N}(0,1)$, $x_j\sim \mathcal{N}(0,1)$ or $x_j\sim \frac{1}{\sqrt{5}}(\frac{1}{2}\mathcal{N}(2,1) + \frac{1}{2}\mathcal{N}(-2,1))$ for $1\leq j \leq 0.4p$, and $x_j \sim \text{Rademacher}$ for $0.4p + 1\leq j \leq p$. 
The latent shock $u$ is independent of $(x_0,X)$. We consider two shock distributions: (i) $u\sim \text{Cauchy}$ and (ii) a skewed-normal--type distribution generated as $u=(v_1+|v_2|)/\sqrt{2}, \  v_1,v_2 \sim \text{i.i.d. }\mathcal{N}(0,1).$ We specify $\theta_0$ as $\theta_0 = (\theta_{10}^{\top},-\theta_{20}^{\top},\theta_{30}^{\top},-\theta_{40}^{\top})^{\top}$, where $\theta_{10},\theta_{20}\in \mathbb{R}^{0.2p}$,  $\theta_{30},\theta_{40}\in \mathbb{R}^{0.3p}$, and  their arguments decrease linearly from 1 to 0.  
We consider two dimensions $p\in\{50, 100\}$, two batch sizes $B\in\{25,50\}$, and fixing the total number of online updates at $N=  10^7$. Unless otherwise specified,  for all Monte Carlo results reported below, the details of the implementations of the online algorithm are identical to that in \autoref{sec:empirical}. 

In \autoref{sec6.1} and \autoref{sec6.2} below, the reported simulation results on our online estimators are averaged over 200 independent Monte Carlo replications. Since it is very time-consuming to estimate a high-dimensional monotone index model using any offline procedure with a large sample size, the simulation results in \autoref{sec6.3} below are averaged over 20 independent Monte Carlo replications only. 

\begin{table}[t]
\centering
\caption{Simulation Results (Normal + Rademacher Design)}
\label{table:normal_parameter}

\footnotesize
\renewcommand{\arraystretch}{1.12}
\setlength{\tabcolsep}{4pt}

\begin{tabular}{l c *{8}{c}}
\toprule
Statistic & Target
& \multicolumn{4}{c}{Cauchy Errors} 
& \multicolumn{4}{c}{Skewed Normal Errors} \\
\cmidrule(lr){3-6} \cmidrule(lr){7-10}

& 
& \multicolumn{2}{c}{$p=50$} & \multicolumn{2}{c}{$p=100$}
& \multicolumn{2}{c}{$p=50$} & \multicolumn{2}{c}{$p=100$} \\
\cmidrule(lr){3-4} \cmidrule(lr){5-6}
\cmidrule(lr){7-8} \cmidrule(lr){9-10}

& 
& $B=25$ & $B=50$ & $B=25$ & $B=50$
& $B=25$ & $B=50$ & $B=25$ & $B=50$ \\

\midrule

\multirow{2}{*}{Bias}
& $\theta_0(1)$
& 0.0001 & 0.0000 & 0.0003 & 0.0000
& 0.0000 & 0.0000 & 0.0001 & 0.0000 \\
& Avg
& 0.0000 & 0.0000 & 0.0001 & 0.0000
& 0.0000 & 0.0000 & 0.0000 & 0.0000 \\[12pt]

\multirow{2}{*}{RMSE}
& $\theta_0(1)$
& 0.0007 & 0.0004 & 0.0014 & 0.0006
& 0.0004 & 0.0003 & 0.0006 & 0.0003 \\
& Avg
& 0.0005 & 0.0003 & 0.0010 & 0.0005
& 0.0003 & 0.0002 & 0.0005 & 0.0003 \\[12pt]

\multirow{2}{*}{CR}
& $\theta_0(1)$
& 0.9400 & 0.9450 & 0.9250 & 0.9200
& 0.9550 & 0.9200 & 0.9550 & 0.9550 \\
& Avg
& 0.9461 & 0.9424 & 0.9482 & 0.9403
& 0.9454 & 0.9407 & 0.9499 & 0.9412 \\[12pt]

\multirow{2}{*}{CI Length}
& $\theta_0(1)$
& 0.0034 & 0.0020 & 0.0060 & 0.0030
& 0.0019 & 0.0013 & 0.0030 & 0.0018 \\
& Avg
& 0.0028 & 0.0017 & 0.0050 & 0.0025
& 0.0016 & 0.0010 & 0.0025 & 0.0015 \\

\bottomrule
\end{tabular}

\vspace{4pt}
\footnotesize
\parbox{\linewidth}{
\noindent
\textit{Note:}  Total number of updates is $N = 10^7$, and the effective sample size is  $n = B\times N$. This also applies to \autoref{table:mixture_parameter} and \autoref{table_ave_combined}.  Results are Monte Carlo averages over 200 independent simulation replications. 
All online estimators/algorithms start from zero. The nominal coverage rate is 0.95. 
CR is constructed via random scaling. 
$\theta_0(1)$ refers to the first component of $\theta_0$, whose value is 1. 
Avg reports the average of each summary statistic across all parameters.
}
\normalsize

\end{table}

\begin{table}[t]
\centering
\caption{Simulation Results (Mixture Normal + Rademacher Features)}
\label{table:mixture_parameter}

\footnotesize
\renewcommand{\arraystretch}{1.12}
\setlength{\tabcolsep}{4pt}

\begin{tabular}{l c *{8}{c}}
\toprule
Statistic & Target
& \multicolumn{4}{c}{Cauchy Errors} 
& \multicolumn{4}{c}{Skewed Normal Errors} \\
\cmidrule(lr){3-6} \cmidrule(lr){7-10}

& 
& \multicolumn{2}{c}{$p=50$} & \multicolumn{2}{c}{$p=100$}
& \multicolumn{2}{c}{$p=50$} & \multicolumn{2}{c}{$p=100$} \\
\cmidrule(lr){3-4} \cmidrule(lr){5-6}
\cmidrule(lr){7-8} \cmidrule(lr){9-10}

& 
& $B=25$ & $B=50$ & $B=25$ & $B=50$
& $B=25$ & $B=50$ & $B=25$ & $B=50$ \\

\midrule

\multirow{2}{*}{Bias}
& $\theta_0(1)$
& 0.0000 & 0.0000 & 0.0003 & 0.0000
& 0.0000 & 0.0000 & 0.0000 & 0.0001 \\
& Avg
& 0.0000 & 0.0000 & 0.0002 & 0.0000
& 0.0000 & 0.0000 & 0.0000 & 0.0000 \\[12pt]

\multirow{2}{*}{RMSE}
& $\theta_0(1)$
& 0.0007 & 0.0004 & 0.0013 & 0.0006
& 0.0004 & 0.0003 & 0.0006 & 0.0004 \\
& Avg
& 0.0006 & 0.0003 & 0.0010 & 0.0005
& 0.0003 & 0.0002 & 0.0005 & 0.0003 \\[12pt]

\multirow{2}{*}{CR}
& $\theta_0(1)$
& 0.9550 & 0.9350 & 0.9500 & 0.9350
& 0.9550 & 0.9350 & 0.9600 & 0.9300 \\
& Avg
& 0.9483 & 0.9388 & 0.9515 & 0.9458
& 0.9481 & 0.9379 & 0.9529 & 0.9406 \\[12pt]

\multirow{2}{*}{CI Length}
& $\theta_0(1)$
& 0.0036 & 0.0021 & 0.0063 & 0.0030
& 0.0020 & 0.0013 & 0.0030 & 0.0019 \\
& Avg
& 0.0028 & 0.0017 & 0.0051 & 0.0025
& 0.0016 & 0.0010 & 0.0025 & 0.0015 \\

\bottomrule
\end{tabular}

\end{table}

\subsection{Online Estimation and Inference of $\theta_0$}\label{sec6.1}

This section focuses on Phase II's estimation and inference of parameter $\theta_0$. We report the bias, root mean squared error (RMSE), confidence interval coverage rate (CR), and average confidence interval length in \autoref{table:normal_parameter} and \autoref{table:mixture_parameter}.    The results demonstrate that the procedure achieves negligible bias, small RMSE, and reliable inference uniformly across all configurations, with coverage rates tightly centered around the nominal $0.95$ level.
One clear pattern is the effect of the mini-batch size $B$. Increasing $B$ from 25 to 50 leads to uniform improvements in precision--- the RMSE declines, and confidence interval  becomes noticeably shorter. This is consistent with the effective sample size interpretation: larger batches reduce the variance of the stochastic updates and stabilize the online trajectory.

The distribution of the shock $u$ affects the level of difficulty but not the qualitative conclusions. Heavy-tailed Cauchy errors lead to larger RMSEs and longer confidence intervals relative to skewed-normal designs, reflecting the presence of extreme realizations. Nevertheless, the estimator remains stable, and coverage stays close to nominal even in these challenging settings.

Increasing the dimension from $p=50$ to $p=100$ results in only modest increases in RMSE and interval length, indicating that the procedure scales well with dimensionality. Importantly, inference remains well-calibrated across dimensions, with no systematic deterioration in coverage.

Finally, the results are robust to the distribution of regressors. Whether regressors are Gaussian, mixture-normal, or discrete, the same qualitative patterns emerge: accurate estimation, stable inference, and systematic gains from larger batch sizes.

\begin{table}[tbp]
\centering
\caption{Average Marginal Effects: Normal and Rademacher Regressors}
\label{table_ave_combined}

\renewcommand{\arraystretch}{1.25}
\footnotesize
\setlength{\tabcolsep}{4pt}

\begin{tabular}{l c *{8}{c}}
\toprule
& 
& \multicolumn{4}{c}{Cauchy Errors}
& \multicolumn{4}{c}{Skewed Normal Errors} \\
\cmidrule(lr){3-6} \cmidrule(lr){7-10}

& 
& \multicolumn{2}{c}{$p = 50$} & \multicolumn{2}{c}{$p = 100$}
& \multicolumn{2}{c}{$p = 50$} & \multicolumn{2}{c}{$p = 100$} \\
\cmidrule(lr){3-4} \cmidrule(lr){5-6}
\cmidrule(lr){7-8} \cmidrule(lr){9-10}

&
& $B = 25$ & $B = 50$
& $B = 25$ & $B = 50$
& $B = 25$ & $B = 50$
& $B = 25$ & $B = 50$ \\

\midrule

\multirow{2}{*}{Bias}
& $\tau_0(1)$  
& 0.0000 & 0.0000 & 0.0000 & 0.0000
& 0.0000 & 0.0000 & 0.0000 & 0.0000 \\
& Avg        
& 0.0000 & 0.0000 & 0.0000 & 0.0000
& 0.0000 & 0.0000 & 0.0000 & 0.0000 \\[12pt]

\multirow{2}{*}{RMSE}
& $\tau_0(1)$  
& 0.0000 & 0.0000 & 0.0000 & 0.0000
& 0.0000 & 0.0000 & 0.0000 & 0.0000 \\
& Avg       
& 0.0000 & 0.0000 & 0.0000 & 0.0000
& 0.0000 & 0.0000 & 0.0000 & 0.0000 \\[12pt]

\multirow{2}{*}{CR}
& $\tau_0(1)$   
& 0.9550 & 0.9300 & 0.9500 & 0.9350
& 0.9550 & 0.9300 & 0.9500 & 0.9350 \\
& Avg     
& 0.9486 & 0.9393 & 0.9514 & 0.9428
& 0.9486 & 0.9393 & 0.9514 & 0.9428 \\[12pt]

\multirow{2}{*}{CI Length}
& $\tau_0(1)$  
& 0.0002 & 0.0001 & 0.0002 & 0.0001
& 0.0002 & 0.0001 & 0.0002 & 0.0001 \\
& Avg        
& 0.0002 & 0.0001 & 0.0002 & 0.0001
& 0.0002 & 0.0001 & 0.0002 & 0.0001 \\

\bottomrule
\end{tabular}

\vspace{4pt}
\footnotesize
\parbox{\linewidth}{
\noindent
\textit{Note:} Results are Monte Carlo averages over 200 simulation replications. 
$\tau_0(1)$ denotes the first component of the average marginal effect. 
Avg reports averages across all components.
}
\normalsize

\end{table}

\subsection{Estimation and Inference of Marginal Effects}\label{sec6.2}

We now examine the finite-sample performance of the online estimators for average marginal effects defined in (\ref{average_marginal_effect}). To save space, we only report the simulation results for Normal and Rademacher regressors in \autoref{table_ave_combined}.   We can see that the bias and RMSE of $\doubleoverline{\tau}_N$ are uniformly small and close to 0. Moreover, the coverage rate of the random-scaling confidence interval is close to the nominal $0.95$ level uniformly across  dimensions, batch sizes, and shock distributions, confirming the validity of online inference for the average marginal effect.

In \autoref{appendixA}, we also report additional results on estimation and inference for point-wise marginal effect.

\subsection{Computation Efficiency:  Insensitivity to Choice of Warm Start}\label{sec6.3}

 Although our algorithm is designed for online estimation, it can also be applied to offline estimation with a fixed sample. In particular, we can draw random subsample from the fixed data set with replacement, and process these subsamples using our algorithm as if they were online. This section provides simulation results that demonstrate significant computational gains of our online Phase II estimation over the offline estimation using the same sieve NLS criterion function and the same warm start.
 
As an illustration, consider the DGP \eqref{monte_carlo_model} with $u\sim \mathrm{Cauchy}$ and $p = 100$. We consider two sample sizes: $n = 5\times 10^5$ and $n = 10^6$, and correspondingly, two sieve dimensions $J_n=30$ and $J_n=40$. Two estimation methods are compared: full-sample estimation using MATLAB optimization package \texttt{lsqnonlin} and local online NLS with multiple passes. Precisely, the full-sample estimator solves the joint sieve NLS criterion via \texttt{lsqnonlin} using the trust-region-reflective 
algorithm with an analytical sparse Jacobian for the joint $(\theta, \beta)$ block; the solver is capped at 100 iterations and 300 function evaluations, and the function, step, and first-order optimality tolerances are all set to $10^{-6}$. For the local online sieve NLS, we randomly reshuffle the full sample, and sequentially draw data points from the reshuffled data with fixed batch size $B$ and feed them to the online algorithm. A single pass is defined as one complete traversal of the dataset, consisting of $n/B$ sequential updates with $B=25$; \textit{multiple passes} feature multiple data reshuffling and complete traversal. In the simulation, we consider 10 passes, with PR averaging over the last 5 passes only.

\begin{table}[!h]
\centering
\caption{Full-sample versus online estimation, Online Warm Start}
\footnotesize
\label{tab:fullsample-vs-online-large}
\begin{tabular}{lcccccc}
\hline\hline
& \multicolumn{3}{c}{$n = 5 \times 10^5$, $J_n = 30$} & \multicolumn{3}{c}{$n = 10^6$, $J_n = 40$} \\
\cline{2-4} \cline{5-7}
Estimator & Bias & RMSE & Time (min) & Bias & RMSE & Time (min) \\
\hline
Online Phase I (warm start)          & 0.1864 & 0.2180 & 0.1256 & 0.1868 & 0.2180 & 0.1349 \\
\\
Full-sample NLS (\texttt{lsqnonlin}) & 0.0064 & 0.0193 & 9.5290  & 0.0043 & 0.0140 & 24.623 \\
\\
Online Phase II (multi-pass)         & 0.0022 & 0.0144 & 0.1217 & 0.0017 & 0.0109 & 0.2676 \\
\hline
\multicolumn{7}{l}{\textit{Speedup ratio (Full-sample time / Online time)}} \\
  &  &  & 78.3  &  &  & 92.0  \\
\hline\hline
\multicolumn{7}{p{0.95\textwidth}}{\footnotesize \textit{Notes.} Results are averaged 
over 20 Monte Carlo replications with $p = 100$ for $\theta_0$, $J_n$ is sieve order for $F_0$. The online Phase I updates with $N_0 = 1\times 10^5$ and $N_1 = 2\times 10^5$ (see \autoref{sec:emp:data} for definitions of $N_0$ and $N_1$). Both methods are warm-started 
from the same Phase~I output, so reported times exclude the shared Phase~I cost. Full-sample estimator solves 
the joint sieve NLS criterion via \texttt{lsqnonlin}. The online 
estimator uses local joint sieve NLS refinement, multi-pass uses 10 passes through the data, with PR averaging 
over the last 5 passes only.}
\end{tabular}
\end{table}

\begin{table}[!h]
\centering
\caption{Full-sample versus online estimation, SMCO warm start}
\footnotesize
\label{tab:fullsample-vs-online-smco}
\begin{tabular}{lcccccc}
\hline\hline
& \multicolumn{3}{c}{$n = 5 \times 10^5$, $J_n = 30$} & \multicolumn{3}{c}{$n = 10^6$, $J_n = 40$} \\
\cline{2-4} \cline{5-7}
Estimator & Bias & RMSE & Time (min) & Bias & RMSE & Time (min) \\
\hline
SMCO Phase I (warm start)            & 0.1907 & 0.2324 & 3.7990  & 0.2021 & 0.2470 & 3.4410  \\
\\
Full-sample NLS (\texttt{lsqnonlin}) & 0.0033 & 0.0192 & 9.6050  & 0.0031 & 0.0138 & 23.897 \\
\\
Online Phase II (multi-pass)         & 0.0023 & 0.0146 & 0.1440  & 0.0021 & 0.0104 & 0.2620  \\
\hline
\multicolumn{7}{l}{\textit{Speedup ratio (Full-sample time / Online time)}} \\
  &  &  & 66.5 &  &  & 91.3 \\
\hline\hline
\multicolumn{7}{p{0.95\textwidth}}{\footnotesize \textit{Notes.} Results are averaged 
over 20 Monte Carlo replications with $p = 100$ for $\theta_0$, $J_n$ is sieve order for $F_0$. Both methods are warm-started 
from the same SMCO Phase~I output, so reported times exclude the shared Phase~I cost. 
SMCO Phase~I uses a subsample of size $10^4$ to maximize the negative sieve NLS criteria jointly over $(\theta, \beta)$ on the 
search box $[-2,2]^{p} \times [-3,3]^{J_n +1}$. SMCO is run with 10 Sobol-initialized 
starts; each start uses \texttt{iter\_max}~=~100 refinement iterations preceded by 
\texttt{iter\_boost}~=~200 boost iterations, forward partial differences, a fixed 
$5\%$ bounds buffer, the run-max accumulator turned on, and convergence tolerance 
$10^{-8}$. The best objective value across the 10 starts is taken as the warm start.}
\end{tabular}
\end{table}

Two warm-start algorithms are considered: the online learner proposed in \autoref{sec2.1} and \ref{sec2.2}, and a generic warm start SMCO \citep{chen2026optimization}.   Simulation results are reported in \autoref{tab:fullsample-vs-online-large} and \autoref{tab:fullsample-vs-online-smco}. Regardless of the choice of warm start, an attractive pattern emerges: initiating from the same warm start, full-sample estimation takes around 10 minutes to solve the optimization problem when sample size is $5\times 10^5$ and 24 minutes when sample size is $1\times 10^6$, whereas our online Phase II requires only about 8 and 16 seconds.  The computational advantage is more compelling for larger sample sizes. More importantly, the bias and RMSE of package-based full-sample optimization are slightly larger than those of online estimators. This suggests that the full-sample objective function may be \textit{under-optimized}  given the tuning parameters chosen for \texttt{lsqnonlin} package. To close this gap, additional time will be necessary, making full-sample estimation even less favorable.  

\section{Conclusion and Extension}\label{conclusion}

This paper develops a two-phase online estimation and inference algorithm for semiparametric monotone index models with many covariates (some of which are discrete).
The online warm-start phase uses a globally stable update rule that consistently learns the finite-dimensional index parameter from arbitrary initialization, and also learn a sup-norm consistent link function via online sieve least squares using the warm start index as a consistent generated regressor. It then performs locally joint sieve nonlinear least squares updates in the second online phase. The PR averages of the Phase II updates attain the optimal convergence rates, are automatically orthogonalized and satisfy the FCLT. 
A key byproduct of the procedure is a sequence of parameter updates---learning trajectories---that can be used for online inference via random scaling with essentially no additional nonparametric estimation burden. The same trajectories also support online estimation and inference for policy-relevant functionals that depend on both the parametric and nonparametric components. 
 Monte Carlo experiments show good finite-sample performance with coverage rates close to nominal. Two empirical illustrations demonstrate feasibility in a high-dimensional setting while leaving the link function unspecified. The procedure is ideal for streaming environments where continuously re-estimating on the full sample (or even storing it) is infeasible, but is also a computationally fast alternative to full-sample offline sieve M estimation and inference methods. Our method delivers a practical semiparametric toolkit for real-time estimation.

\subsection{Extensions with Generic Warm Starts}\label{extension}

The framework naturally invites many extensions, and we discuss a few here. \autoref{section7.1} shows that our online estimation and inference framework can be extended to the case where the unknown function $F_0$ is approximated using deep neural network (DNN).  \autoref{sec7.1} and \ref{sec7.2} show that our framework can be extended to various semiparametric models. In particular, \autoref{sec7.1} sketches an online sample-selection model in which both the selection and  outcome equations are learned semiparametrically. \autoref{sec7.2} outlines an extension to multi-alternative choice models, connecting the framework to interpretable multi-index neural architectures. In the extensions, the Phase I update can be replaced with any online updates that deliver consistent warm starts, such as SMCO \citep{chen2026optimization}. 

\subsubsection{Approximating Unknown Link Using Monotonic DNN}\label{section7.1}

This section considers the extension where the monotonic link function $F_0$ is approximated using DNN instead of linear sieves. We follow \citet*{sartor2025advancing}'s construction so that the DNN-based estimator of $F_0$ is monotonic, which preserves the shape. Then we replace the spline-based approximation of $F_0$ with the DNN-based estimator in the NLS criterion, and use Phase II update to jointly estimate the DNN weights and unknown parameter $\theta_0$, based on which the average marginal effects can be estimated and online inference using random scaling can be conducted.

\begin{figure}
    
   \includegraphics[width=1\linewidth, trim={10 10 10 10}, clip]{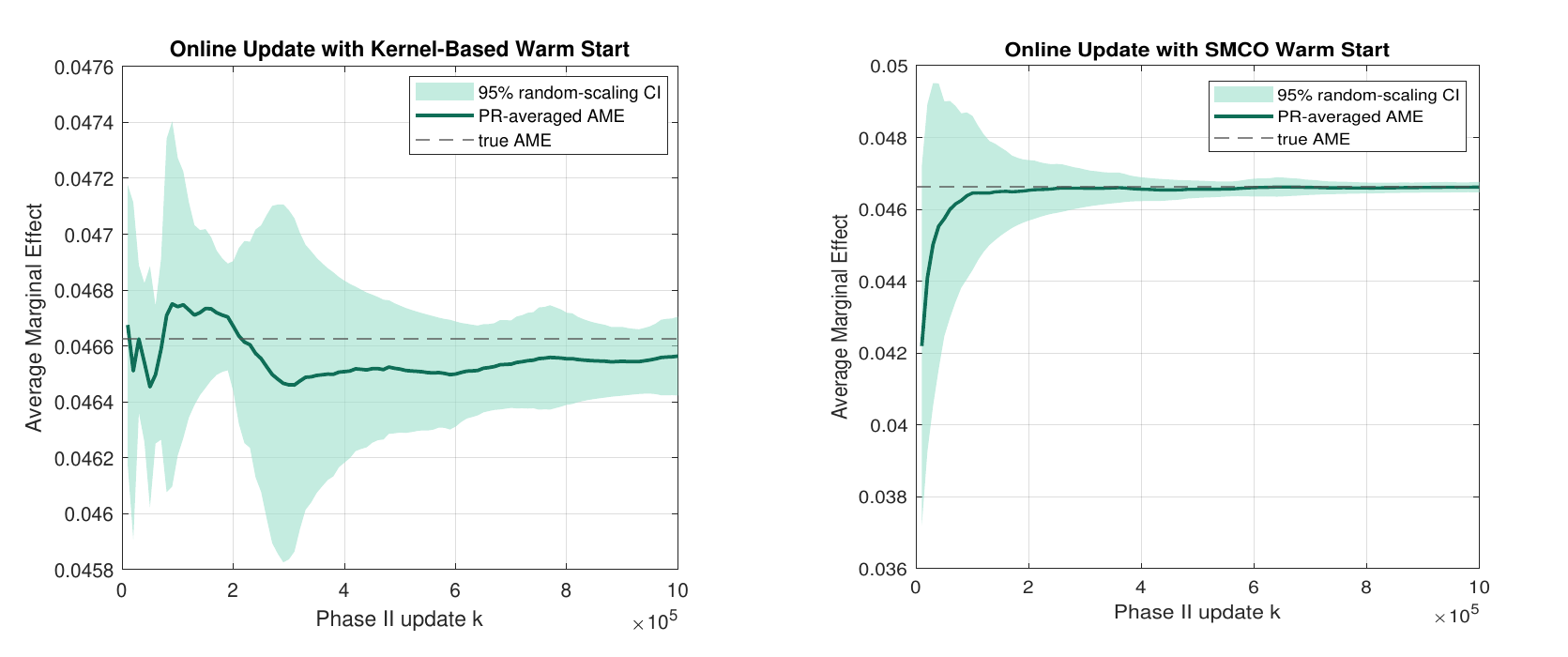}

    \caption{The PR Average Trajectory of Average Marginal Effect from One Monte Carlo Replicate, $p=500$, DNN sieve}
    \label{fig:dnn-link}
\end{figure}

\begin{table}[t!]
    \centering
    \caption{Online Phase II AME estimator under different warm starts, $p=500$, Monotone DNN sieve.}
    \label{tab:warmstart}
    \begin{tabular}{lcccc}
        \hline\hline
        Warm start & Bias & RMSE & Coverage & Avg.\ CI length \\
        \hline
        Online Kernel & 0.000006 & 0.000077 & 0.94 & 0.000371 \\
        SMCO         & 0.000021 & 0.000109 & 0.97 & 0.000701 \\
        \hline\hline
    \end{tabular}

    \begin{minipage}{0.9\linewidth}
    \footnotesize
    \vspace{0.5em}
    \footnotesize{\textit{Note:} Note: Reported online Phase II results are averaged across 200 Monte Carlo replications, on the bias, root mean squared error (RMSE), empirical coverage rate, the average length of the 95\% confidence interval for the average marginal effect (AME). Confidence intervals are constructed using random-scaling inference. ``Online Kernel'' and ``SMCO'' denote the two Phase I warm-start procedures used to initialize the online Phase II updates. The table shows that online Phase II random-scaling inference attains coverage close to the nominal level regardless of which warm-start procedure is used. The table is not to compare the two warm starts against each other since their effective warm-start sample sizes differ and thus their bias and RMSE are not directly comparable.} 
    \end{minipage}
\end{table}

The DGP follows (\ref{monte_carlo_model}) in \autoref{sec:MC}, except that here we consider a higher dimension $p=500$. $x_0, x_1, \cdots, x_{40}$ are standard normal, and $x_{41},\cdots, x_{100}$ are Rademacher. Their corresponding coefficients are constructed similarly as in $p=100$ case of \autoref{sec:MC}. The remaining 400 regressors are all standard normally distributed, and each of their coefficients is equal to $0.05$ so that the total contribution of the additional 400 regressors to index variance is equal to 1. A deep neural network (DNN) with depth of 4 and width of 16 for each layer is used to approximate unknown monotone link function $F_0$.\footnote{Under \citet{sartor2025advancing}'s construction, 4 is the minimum depth of a DNN to guarantee its shape-preserving.}
The estimation and inference target is the average marginal effect. For illustration, we focus on the average marginal effect of the first regressor, which is given by $\mathbb{E}(\nabla_z F(Z_0))$ due to the fact that $\theta_{0,1} = 1$.

We consider two methods to construct the warm starts of $\theta_0$ and DNN weights. The first method resembles our spline-based two-phase online learning algorithm: we use the kernel-based online algorithm to search for a warm start of $\theta_0$, then use the estimated $\theta_0$ to construct the index estimator, based on which the weights can be updated based on the NLS criterion. Finally, we jointly refine the weights and $\theta$ based on NLS criterion. For this method, we choose $N_0 = 1\times 10^5$ and $N_1 = 2\times 10^5$.  The second method uses SMCO  \citep{chen2026optimization} to search for joint warm starts of $\theta_0$ and DNN weights, where the warm-start sample size is chosen as $10^4$. For both methods, we choose batch size $B = 25$, and consider $10^6$ joint updates. Finally, 200 independent Monte Carlo replicates are performed.

\autoref{tab:warmstart} reports the simulation results across the 200 Monte Carlo replicates, and \autoref{fig:dnn-link} plots the PR average path of the AME and its random-scaling-based confidence band for one replicate. We can see that the confidence band constructed from the random scaling approach has good coverage rate close to the normal one, regardless of the choices of the warm start.

\subsubsection{Sample Selection}\label{sec7.1}

Let \(D\in\{0,1\}\) indicate whether the outcome is observed, and suppose that the
researcher observes \((D,DY^*,x_{10},X_1,x_{20}, X_2)\).  A semiparametric online sample-selection model is
\begin{align}
  D &= 1\{S^*\ge 0\}, \quad
  \mathrm{Pr}(D=1|x_{10},X_1)=F_{10}(x_{10}+X_1^{\top}\theta_{10}), \label{eq:selection}\\
  Y^* &= F_{20}(x_{20}+X_2^{\top}\theta_{20})+\varepsilon, \label{eq:outcome}\\
  \E[\varepsilon\mid x_{10}, X_1, x_{20}, X_2,D=1] &= \lambda_0\{F_{10}(x_{10}+X_1^{\top}\theta_{10})\}. \label{eq:control}
\end{align}
Here \(F_{10}\) is an unknown monotone selection link, \(F_{20}\) is the unknown outcome link,
and \(\lambda_0\) is a control function capturing selection on unobservables.  The
selected conditional mean is therefore
\[
  \E[Y^*\mid x_{10},X_1, x_{20}, X_2, D=1]
  =
  F_{20}(x_{20}+X_2^{\top}\theta_{20})+\lambda_0\{F_{10}(x_{10}+X_1^{\top}\theta_{10})\}.
\]
This nests the no-selection-bias case by setting \(\lambda_0\equiv0\).  It also separates
the economics of outcome production, through \(F_{20}(x_{20}+X_2^{\top}\theta_{20})\), from the economics
of observability or participation, through \(F_{10}(x_{10}+X_1^{\top}\theta_{10})\). For identification we need the usual economic exclusion or support condition: some component of \((x_{10},X_1)\) should move selection without entering the outcome index except
through the control function.

The sample-selection model has a direct choice interpretation.  The selection
indicator \(D\) may represent labor-force participation, market entry, survey response,
purchase, attention, or consideration.  The outcome equation is observed only after that
economic choice.  The control function \(\lambda_0\{F_{10}(x_{10}+X_1^{\top}\theta_{10})\}\) captures the
unobserved factors that link the first-stage selection choice with the selected outcome.
Thus selection correction is not merely a missing-data device.  It is an online two-stage
model of economic choice and realized outcomes.

\paragraph{Online Algorithm}

At batch \(k\), with observations \(W_{i,k}=(D_{i,k},D_{i,k}Y^*_{i,k},x_{10,i,k},X_{1,i,k},x_{20,i,k}, X_{2,i,k})\),
we propose to use the following three online learners.

\textbf{Step 1: selection learner}.
Estimate \((\theta_{10},F_{10})\) from the binary outcome \(D\).  One can use the same
two-phase monotone-index learners as in the baseline paper.

\textbf{Step 2: selected outcome learner}.
For selected observations, define the generated control of $W_{i,k}$ as 
$
  \widehat q_{i,k-1}=\widehat F_{10, k-1}(x_{10,i,k}+X_{1,i,k}^{\top}\widehat\theta_{1,k-1})
$, where $\widehat F_{10, k-1}$ and $\widehat\theta_{1,k-1}$ are online estimators of $F_{10}$ and $\theta_{10}$ in period $k-1$ from the first step algorithm. Then estimate \((\theta_{20},F_{20},\lambda_0)\)
from the selected outcome loss
\[
  \mathcal{L}^{y}_{J,L}(\theta,\beta,\delta;W_{i,k})
  =
  D_{i,k}
  \Big[
    Y^*_{i,k}
    -\Psi_J(x_{20,i,k}+X_{2,i,k}^{\top}\theta_2)^{\top}\beta
    -\Psi_L(\widehat q_{i,k-1})^{\top}\delta
  \Big]^2.
\]
The online stochastic-gradient update is
\[
  \omega_k^y
  =
  \mathcal{R}_{k}^{\top}\omega_{k-1}^y
  -
  \frac{\xi_k}{B}\sum_{i=1}^B
  \nabla_{\omega^y}\mathcal{L}^{y}_{J_k,L_k}
  (\mathcal{R}_{k}^{\top}\omega_{k-1}^y;W_{i,k}),
\]
where \(\omega^y=(\theta^{\top},\beta^{\top},\delta^{\top})^{\top}\), and \(\mathcal{R}_k\) embeds the old coefficient
vector into the enlarged sieve spaces for \(F_0\) and \(\lambda_0\), which is similarly defined as that in the baseline algorithm.

\textbf{Step 3: random-scaling inference}.
The trajectory of PR averages can be used for random-scaling inference for
\(\theta_{10}\), \(\theta_{20}\), average marginal effects, and selection-corrected means.  

\subsubsection{Extension to Multi-Alternative Choice}\label{sec7.2}

Let an agent choose \(l\in\{0,1,\ldots,L\}\) from utilities
\[
  U_l = V_l(x_{0l}, X_l;\theta_j)+u_l,
  \qquad
  V_l(x_{0l}+X_l;\theta_l)=x_{0l}+X_l^{\top}\theta_{l0}.
\]
Let $X$ collects all features vectors from the $L$ utilities, the observable object is the vector of choice probabilities
\[
  P_l(X)=\mathrm{Pr}\{U_l\ge U_{l^{\prime}}\ \hbox{for all }0\leq l^{\prime}\leq L\mid X\},
  \qquad \sum_{l=0}^{L} P_l(X)=1.
\]
A direct extension of the monotone index model is the vector monotone-index model
\[
  P_l(X)=G_{l,0}(V_0,\ldots,V_L),
  \qquad l=0,\ldots,L,
\]
where for each $l$, \(G_{l,0}:\R^{L+1}\to\Delta^L\) is an unknown vector-valued link.  The loss is
 quasi-likelihood:
\[
  \mathcal{L}(\theta,\beta;W)
  =
  -\sum_{l=0}^L 1\{Y=l\}\log \left[\Psi_J(x_{00}+X_0^{\top}\theta_0, \cdots, x_{L0}+X_L^{\top}\theta_0)^{\top}\beta_l \right].
\]
To preserve the economic restrictions of random utility, one can represent
\(G_{0l}\) as the gradient of a convex surplus or choice-potential function:
\[
  G_{l,0}(v)=\frac{\partial \mathbf{U}_L(v)}{\partial v_j},
  \qquad
  \mathbf{U}_L(v)
  =
  \sum_{r=1}^R a_r\phi(c_r+b_r'v),
  \qquad a_r\ge0.
\]
Convexity of $\mathbf{U}_L$ helps encode the shape restrictions associated with stochastic
choice, while the gradient automatically gives nonnegative own-utility responses after
normalization.  The online update is again a single-hidden-layer neural network, now
with the network output constrained to be a choice-probability vector.

A further extension that allows for attention or consideration can be written as
\[
  A_l = 1\{V_{l}(x_{0l}+X_l^{\top}\theta_{l0})+u_l\ge0\},
  \qquad
  Y=\argmax_{l:A_l=1}\{x_{0l}+X_l^{\top}\theta_{0l}+u_l\}.
\]
The first layer learns which alternatives enter the consideration set; the second layer
learns choice probabilities conditional on consideration.  This connects the sample
selection model above with McFadden-style economic choice: observability, attention,
and final choice are all economic decisions that can be learned online with interpretable
semiparametric neural-index components.


\bibliography{overall_1_combined}

\newpage 

\appendix

\crefname{section}{Appendix}{Appendices}
\Crefname{section}{Appendix}{Appendices}
\crefname{subsection}{Appendix}{Appendices}
\Crefname{subsection}{Appendix}{Appendices}
\crefname{subsubsection}{Appendix}{Appendices}
\Crefname{subsubsection}{Appendix}{Appendices}

\begin{center}
    \textbf{\LARGE Online Appendix}
\end{center}

Throughout this Online Appendix,  unless otherwise specified,  $C$  denotes a generic positive constant whose value may change even within a line. When we append ``a.s.'' to an inequality, we mean that the inequality holds eventually almost surely; that is, for almost all paths of the data, the inequality holds for all sufficiently large indices. For any vector $A = (a_1, \ldots, a_n)^{\top}$, $\|A\|_2 = \sqrt{\sum_{i=1}^n a_i^2}$ denotes its Euclidean norm. For any matrix $A = (a_{ij})_{m \times n}$, $\|A\|_F = \sqrt{\sum_{i=1}^m \sum_{j=1}^n a_{ij}^2}$ denotes its Frobenius norm.  When $f(z)$ is a vector of functions, $\|f\|_\infty = \sup_{z\in\mathcal{Z}}\|f(z)\|_2 $.

The following provides a clear table of contents as the guidelines of the appendix.

\section*{Appendix Table of Contents}
\startcontents[appendix]
\printcontents[appendix]{}{1}{\setcounter{tocdepth}{2}}
\bigskip 

\section{More Simulation Results for Point-Wise Marginal Effects}\label{appendixA}

\begin{table}[tbp]
\centering
\caption{Pointwise Marginal Effects: Normal and Rademacher Regressors}
\label{pwme_combined}

\renewcommand{\arraystretch}{1.25}
\footnotesize
\setlength{\tabcolsep}{4pt}

\begin{tabular}{l *{8}{c}}
\toprule
& \multicolumn{4}{c}{Cauchy Errors} & \multicolumn{4}{c}{Skewed Normal Errors} \\
\cmidrule(lr){2-5} \cmidrule(lr){6-9}

& \multicolumn{2}{c}{$p = 50$} & \multicolumn{2}{c}{$p = 100$}
& \multicolumn{2}{c}{$p = 50$} & \multicolumn{2}{c}{$p = 100$} \\
\cmidrule(lr){2-3} \cmidrule(lr){4-5}
\cmidrule(lr){6-7} \cmidrule(lr){8-9}

& $B=25$ & $B=50$ & $B=25$ & $B=50$
& $B=25$ & $B=50$ & $B=25$ & $B=50$ \\

\midrule

Bias
& 0.0004 & 0.0002 & 0.0004 & 0.0003
& 0.0004 & 0.0001 & 0.0003 & 0.0003 \\[12pt]

RMSE
& 0.0076 & 0.0053 & 0.0098 & 0.0063
& 0.0067 & 0.0045 & 0.0083 & 0.0055 \\[12pt]

CR
& 0.9437 & 0.9373 & 0.9453 & 0.9403
& 0.9425 & 0.9380 & 0.9413 & 0.9425 \\[12pt]

CI Length
& 0.0303 & 0.0206 & 0.0379 & 0.0244
& 0.0258 & 0.0170 & 0.0321 & 0.0207 \\

\bottomrule
\end{tabular}

\vspace{4pt}
\footnotesize
\parbox{\linewidth}{
\noindent
\textit{Note:} Results are Monte Carlo averages over 200 simulation replications. 
All statistics for pointwise marginal effects are averaged across the evaluation grid points. 
CR denotes coverage rate, and CI Length is the average confidence interval length.
}
\normalsize

\end{table}

\begin{figure}[t!]
    \centering
    \safeincludegraphics[width=0.9\linewidth]{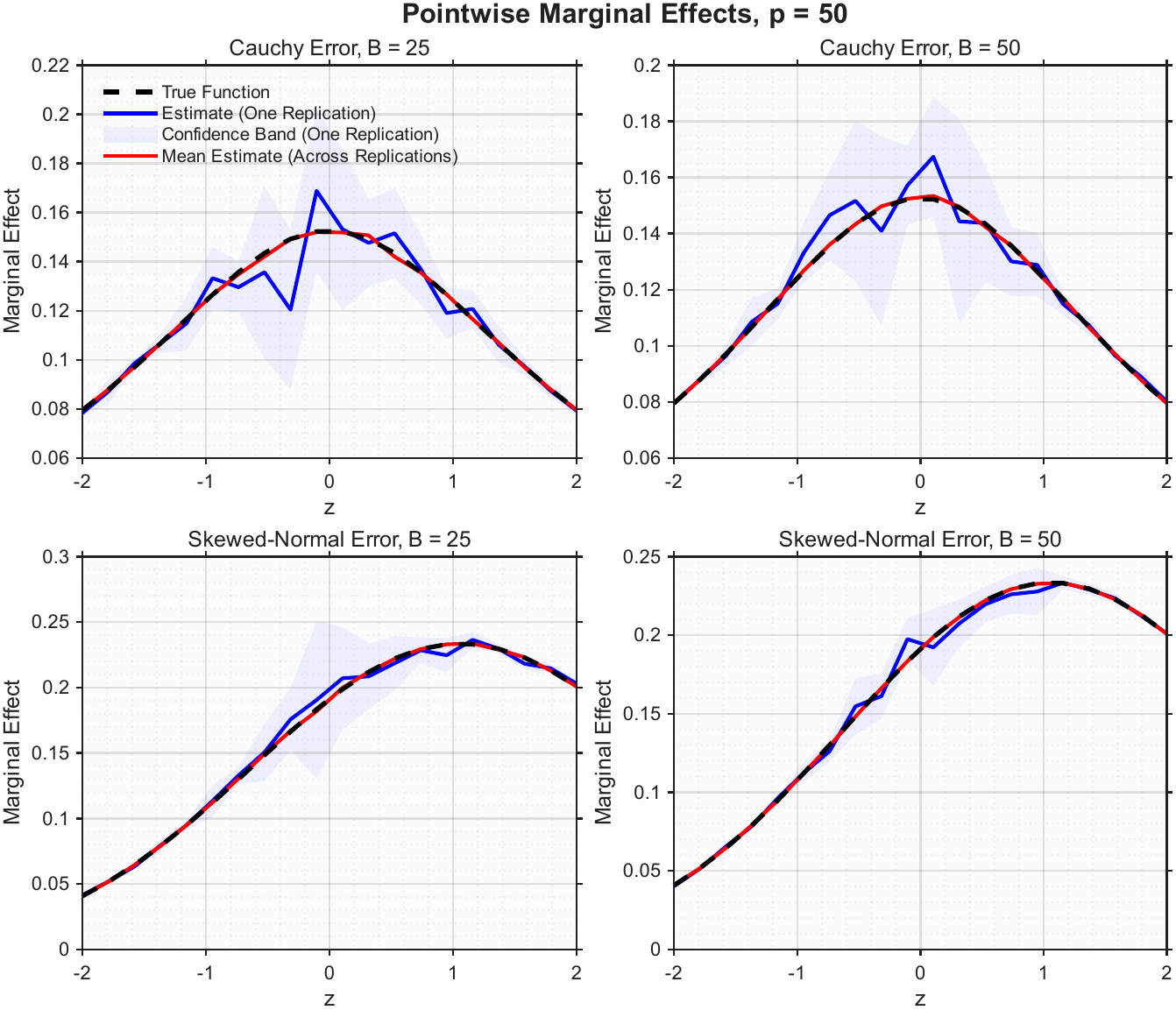}
    \caption{Simulation Results of Point-Wise Marginal Effect: $p=50$ Case}
    \label{fig:pwme2}
\end{figure}

\begin{figure}[t!]
    \centering
    \safeincludegraphics[width=0.9\linewidth]{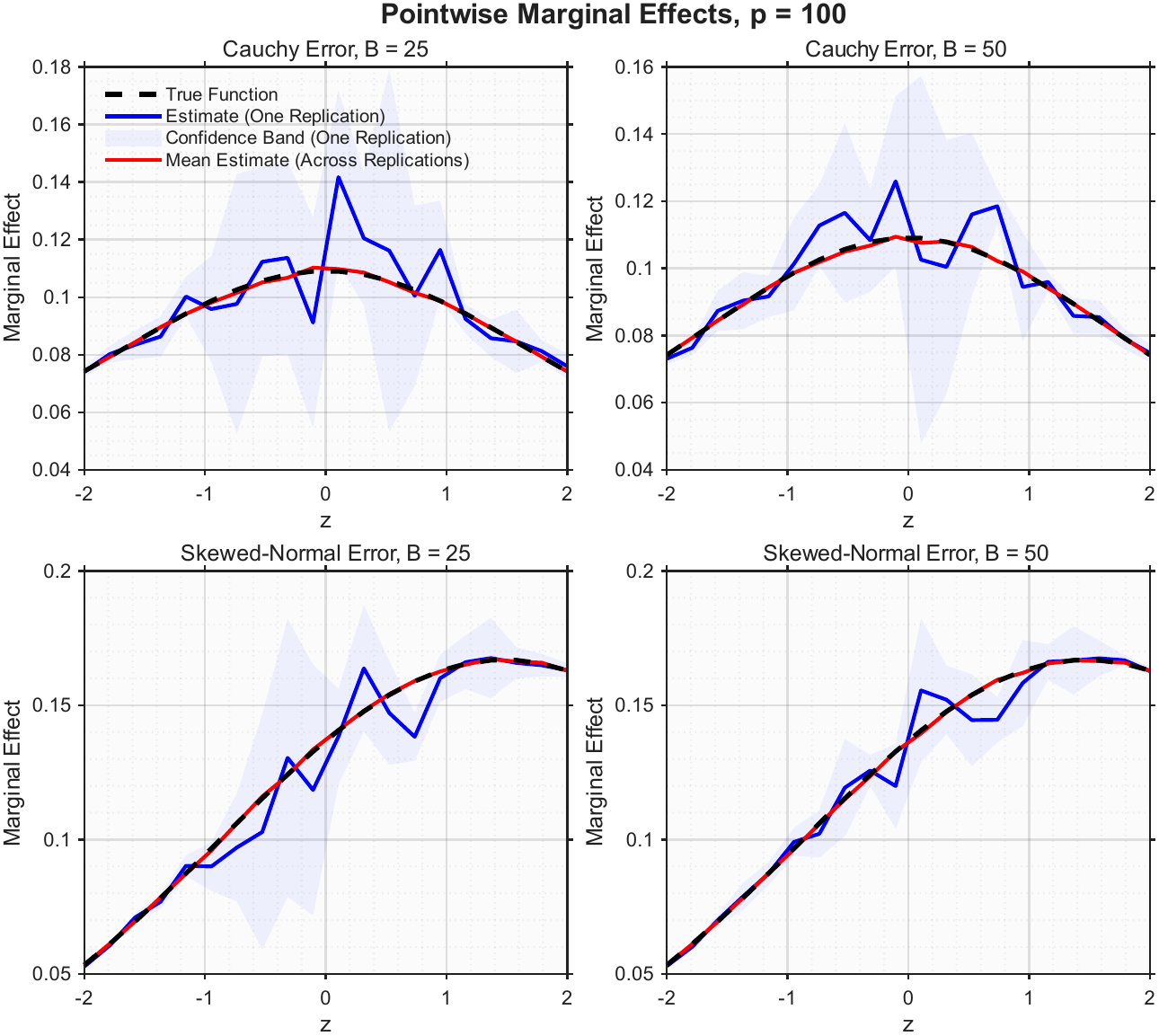}
    \caption{Simulation Results of Point-Wise Marginal Effect: $p=100$ Case}
    \label{fig:pwme1}
\end{figure}

  We can also consider point-wise marginal effect $
\tau_0(x_0, X) \;=\; \nabla_z F_0(z)\theta_0(1) = \nabla_z F_0(z)$, which captures the effect of marginal change of $x_1$ when index value $Z_0$ is equal to $z$.   We consider the same simulation design as in the main text, and consider  $z\in[-2,2]$, which is discretized into 20 grid points.  \autoref{pwme_combined} reports the corresponding statistics for the pointwise estimator.  The pointwise estimator follows a non-standard convergence rate that depends on the rate at which the sieve dimension grows. However, since in the simulation we fix the sieve dimension, we use standard FCLT to provide confidence intervals. The point-wise estimator exhibits modestly larger bias and RMSE than the average estimator, reflecting the slower nonparametric rate, but coverage remains close to the nominal level across all configurations. 

\autoref{fig:pwme2} and \autoref{fig:pwme1} display the estimated point-wise marginal effect trajectory across the grid of values of the index $z$, together with the $95\%$ random-scaling confidence band from a single representative replication and the average estimate across replications. The semiparametric estimator tracks the true point-wise effect closely throughout the support of $z$, with confidence bands that are tight in the interior of the support and slightly wider near the boundaries, consistent with standard nonparametric behavior.

Across both functionals, the simulation evidence confirms that the online procedure delivers reliable estimation and valid inference for policy-relevant marginal effects with no additional nonparametric estimation burden beyond the trajectories already produced by the joint updating algorithm.

\section{Phase I Kernel-Based Warm Start of
\texorpdfstring{$\theta_0$}{theta0}}\label{appendixB}

\subsection{Pseudo Code}
The pseudo code for constructing the warm-start estimator of $\theta_0$ is
provided in Algorithm~\ref{algo:phase1}.

\begin{algorithm}
\caption{Online Warm-Start Learner for $\theta_0$ (Phase I)}
\label{algo:phase1}
\textbf{Require:} streaming batches $\mathcal{W}_k$, $k = 1, \ldots, N$; initial guess $\widehat\theta_0 \in \mathbb{R}^p$; kernel $\mathcal{K}$; learning rates $\{\gamma_k\}$; bandwidths $\{h_k\}$; batch size $B$.\quad
\textbf{Ensure:} PR average $\overline{\theta}_N$.
\algline{1}{Initialize $\widehat\theta_0$ as supplied and set
$\overline{\theta}_0 \gets \widehat\theta_0$}
\algline{2}{\textbf{for} $k = 1, 2, \ldots, N$ \textbf{do}}
\algline{3}{\quad Receive batch $\mathcal{W}_k$ and compute $Z_{i,k}(\widehat\theta_{k-1})$, $i=1,\ldots,B$}
\algline{4}{\quad $\widehat\theta_k \gets \widehat\theta_{k-1} +  \dfrac{ \gamma_k }{h_k \cdot B(B-1)}
            \displaystyle\sum_{i_1 \neq i_2}^B
            \mathcal{K}\!\left(\dfrac{Z_{i_1,k}(\widehat\theta_{k-1}) - Z_{i_2,k}(\widehat\theta_{k-1})}{h_k}\right)
            (Y_{i_1,k} - Y_{i_2,k})(X_{i_1,k} - X_{i_2,k})$
\hfill $\triangleright$ online update \eqref{first-step algorithm}}
\algline{5}{\quad $\overline{\theta}_k \gets \dfrac{k-1}{k}\overline{\theta}_{k-1}
            + \dfrac{1}{k}\widehat\theta_k$ \hfill $\triangleright$ PR average}
\algline{6}{\textbf{return} $\overline{\theta}_N$}
\end{algorithm}

\subsection{Sufficient Conditions and Discussions}

\begin{condition}\label{condition1}
Let $s_{\mathcal K}\in\mathbb N$ and let the batch size satisfy $B\geq2$.  (i) The link
$F_0$ is bounded and has uniformly bounded derivatives through order $s_{\mathcal K}$;
(ii) The pair $(x_0,X)$ has a joint density $f(x_0,X)$ with partial derivatives
through order $s_{\mathcal K}$ in $x_0$; (iii) There are nonnegative envelopes
$g_0,\ldots,g_{s_{\mathcal K}}$ such that, for $0\leq j\leq s_{\mathcal K}$,
\[
 \sup_{x_0\in\mathbb R}\left|\frac{\partial^j f(x_0,X)}{\partial x_0^j}\right|
 \leq g_j(X),\qquad
 \int(1+\Vert X\Vert_2^4)g_j(X)dX<\infty,
\]
and $\mathbb E\Vert X\Vert_2^4<\infty$;
(iv) $\sup_{x_0,X}\mathbb E(\varepsilon^4\mid x_0,X)<\infty$, and
$\sigma^2(x_0,X)=\mathbb E(\varepsilon^2\mid x_0,X)$ has a uniformly bounded
derivative in $x_0$;  (v) The matrix $\mathcal V_{\mathcal K,0}$ defined below
is positive definite.
\end{condition}

\begin{condition}\label{condition2}
The kernel $\mathcal K$ is even and satisfies: (i)
$\int\mathcal K(t)dt=1$, $\int t^j\mathcal K(t)dt=0$ for
$1\leq j\leq s_{\mathcal K}-1$, and $\int|\mathcal K(t)t^j|dt<\infty$ for every
$j\geq s_{\mathcal K}$; (ii) $\int\mathcal K^4(t)|t|^jdt<\infty$ for every $j\geq0$;
and (iii) $\int(\nabla_t\mathcal K(t))^2|t|^jdt<\infty$ for every $j\geq0$.

\end{condition}

\begin{condition}\label{condition3}
   There exist $-\infty<\underline{z}<\overline{z}<\infty$ and $r_X>0$ such that for any $z\in [\underline{z},\overline{z}]$ and $X$ with $\Vert X\Vert_2 \leq r_X$, there holds $\nabla_z F_0(z)\geq \underline{c}_F$ and $f(z, X)\geq \underline{c}_f$, where $\underline{c}_F$ and $\underline{c}_f$ are two positive constants.
\end{condition}

\begin{condition}\label{condition4} $\gamma_k = \gamma_0 k ^{-\alpha_{\gamma}},  h_k = h_0 k ^{-\alpha_{h}}$, where $\gamma_0,\alpha_{\gamma}, h_0,\alpha_h>0$ and $\alpha_{\gamma}<1$. Moreover,     
$s_{\mathcal K}\alpha_h>1/2$, and $2\alpha_{\gamma} - 3\alpha_h >1$.
\end{condition}

\begin{remark}
    \autoref{condition1} and \autoref{condition2} impose regular conditions on data generating process and kernel functions. \autoref{condition3} guarantees global stability of the algorithm.   Note that for \autoref{condition3},  $X$ can be normalized to have zero expectation, in which case such condition  requires that $X$ has nonvanishing density around its expectation. Moreover, such condition also requires that $x_0$ has sufficiently large support, which is required for point identification. Finally, \autoref{condition3} requires that all regressors are continuous. For the case where there are discrete regressors, see \autoref{discrete_regressors}.  \autoref{condition4} regulates the bandwidth and learning rate. Specifically, there holds  $\sum_{k=1}^{\infty}\gamma_k = \infty$, $\sum_{k=1}^{\infty}\gamma_k^2h_k^{-3}<\infty$ and  $\sum_{k=1}^{\infty}h_k^{2s_{\mathcal K}}<\infty$,
\end{remark}

\subsection{Main Results: \autoref{theorem1}}
This section describes the global convergence properties of the  online estimator $\widehat\theta_N$    and its PR average $\overline\theta_N$  proposed in Section 3.1 of main text.   We will formally show that both $\widehat\theta_N$ and  $\overline\theta_N$ almost surely converge to $\theta_0$  regardless of the choice of the starting point, so for almost every path of the data stream, the algorithm locates a neighborhood of $\theta_0$ with arbitrary precision. 

To ease our exposition, in this section we continue with the assumption that $x_0$ and $X$ are all continuous with   joint density  $f(x_0, X)$. Our results can also be extended to  the  case where there are discrete regressors, see \autoref{discrete_regressors}.

Define 
  \begin{align*}
  \mathbb{H}_0 =  & \int \nabla_z F_0(z)f(z - X_1^{\top}\theta_0, X_1)f(z-X_2^{\top}\theta_0, X_2) \left(X_1 - X_2\right)\left(X_1 - X_2\right)^{\top}dzdX_1dX_2,
  \end{align*}
  and
      \begin{align*}
        \mathcal{V}_{\mathcal{K},0} = & \int \mathcal{K}^2(t)dt \int \left(\sigma^2(z - X_1^{\top}\theta_0, X_1) + \sigma^2(z - X_2^{\top}\theta_0, X_2)\right) \times  \\
        & \ \ \ \ \ \ \ \ \ \ \ \ \ \ \ \ \ \ \ \ \ \ \ \ (X_1-X_2)(X_1-X_2)^{\top} f(z- X_1^{\top} \theta_0, X_1)f(z- X_2^{\top}\theta_0, X_2)dzdX_1dX_2,
    \end{align*}   
    where $\sigma^2(x_0, X) = \mathbb{E}(\varepsilon^2|x_0, X)$ is the conditional variance. Further, let learning rate $\gamma_k$ and bandwidth parameter $h_k$ be chosen as  \begin{equation}
    \gamma_k = \gamma_0 k ^{-\alpha_{\gamma}}, \quad   h_k = h_0 k ^{-\alpha_{h}}, 
    \end{equation}
     with $\gamma_0,h_0$, $\alpha_{h}$, and $ \alpha_{\gamma} $   chosen as in \autoref{condition4}. 
   For any $\theta\in\mathbb{R}^p$, define $\Delta\theta \equiv \theta - \theta_0$. The following theorem states the theoretical properties of the online estimator $\widehat\theta_N$ and $\overline\theta_N$, which also proves the first part of Lemma 5.1 in them main text.

 \begin{theorem}\label{theorem1}
      Let \autoref{condition1}--\ref{condition4} hold. Then  $\widehat\theta_N \rightarrow_{a.s.} \theta_0$ and  $\overline\theta_N \rightarrow_{a.s.} \theta_0$. Further,
for every fixed coordinate $j=1,\ldots,p$,we have that 
\begin{align}
 & \limsup_{N\to\infty}e_j^{\top}\left( \frac{4N^{1+\alpha_h}\log(\log (N))\,
\mathcal{V}_{\mathcal{K},0}}
{B(B-1)h_0(1+\alpha_h)}\right)^{-1/2}
\mathbb H_0N\Delta\overline\theta_N=1,
 \qquad \text{a.s.}\\
 &\liminf_{N\to\infty}e_j^{\top}\left( \frac{4N^{1+\alpha_h}\log(\log (N))\,
\mathcal{V}_{\mathcal{K},0}}
{B(B-1)h_0(1+\alpha_h)}\right)^{-1/2}
\mathbb H_0N\Delta\overline\theta_N=-1,
 \qquad \mathrm{a.s.}
\end{align}
 Finally, there holds 
\begin{equation}
\left( \frac{2N^{1+\alpha_h}\mathcal{V}_{\mathcal{K},0}}{B(B-1)h_0(1+\alpha_{h})}\right)^{-\frac{1}{2}}\mathbb{H}_0N\Delta\overline\theta_N\rightarrow_{d} \mathcal N(0, \mathbb I_p).
   \end{equation} 
  \end{theorem}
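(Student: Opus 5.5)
The plan is to treat \eqref{first-step algorithm} as a Robbins--Monro recursion $\widehat\theta_k=\widehat\theta_{k-1}+\gamma_k\{\mathbb{G}_k(\widehat\theta_{k-1})+\zeta_k\}$. Here $\mathbb{G}_k(\theta)$ is the conditional mean of the kernelized pairwise increment given $\widehat\theta_{k-1}=\theta$, and $\zeta_k$ is a martingale difference with respect to $\mathcal F_{k-1}=\sigma(\mathcal W_1,\ldots,\mathcal W_{k-1})$.

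\textbf{Drift.} Condition on $X_{i_1},X_{i_2}$ and change variables from $x_{0,i}$ to $z_i=x_{0,i}+X_i^\top\theta$. Then substitute $z_{i_1}=z_{i_2}+h_kt$ and Taylor expand in $h_k$ to order $s_{\mathcal K}$, using Conditions \ref{condition1}--\ref{condition2}. This should give $\mathbb{G}_k(\theta)=g(\theta)+O(h_k^{s_{\mathcal K}})$, with the remainder uniform in $\theta$ thanks to the envelopes $g_j$. The limit drift is
\[
g(\theta)=\int\bigl[F_0(z+\Delta_1)-F_0(z+\Delta_2)\bigr](X_1-X_2)f(z,X_1)f(z,X_2)\,dz\,dX_1dX_2,
\]
where $\Delta_i=-X_i^\top\Delta\theta$. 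The key structural fact is global stability. By monotonicity of $F_0$, the integrand of $\Delta\theta^\top g(\theta)$ equals $[F_0(z-X_1^\top\Delta\theta)-F_0(z-X_2^\top\Delta\theta)](X_1-X_2)^\top\Delta\theta$, which is $\le 0$ pointwise. Condition \ref{condition3} makes it strictly negative on a set of positive measure whenever $\Delta\theta\neq0$. A mean value argument on the compact region in Condition \ref{condition3} then yields a coercivity bound $\Delta\theta^\top g(\theta)\le -c\,\min(\|\Delta\theta\|_2^2,\|\Delta\theta\|_2)$. Differentiating at $\theta_0$ gives $\nabla g(\theta_0)=-\mathbb H_0$, with $\mathbb H_0$ positive definite.

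\textbf{Noise and consistency.} Because of the $h_k^{-1}$ scaling, $\mathbb E(\|\zeta_k\|^2\mid\mathcal F_{k-1})\le C h_k^{-1}(1+\|\Delta\widehat\theta_{k-1}\|^2)$; the gradient-of-kernel moments enter only for higher-order terms. Moreover, $h_k\,\mathbb E(\zeta_k\zeta_k^\top\mid\mathcal F_{k-1})\to 2\mathcal V_{\mathcal K,0}/(B(B-1))$ when $\widehat\theta_{k-1}\to\theta_0$, where the factor comes from the U-statistic structure over pairs. I will apply Robbins--Siegmund to $V_k=\|\Delta\widehat\theta_k\|_2^2$:
\[
\mathbb E(V_k\mid\mathcal F_{k-1})\le V_{k-1}(1+C\gamma_k^2h_k^{-1})-2c\gamma_k\min(V_{k-1},V_{k-1}^{1/2})+C\gamma_k h_k^{s_{\mathcal K}}(1+V_{k-1}^{1/2})+C\gamma_k^2h_k^{-1}.
\]
The summability $\sum\gamma_k^2h_k^{-3}<\infty$, $\sum\gamma_k=\infty$ and $\sum\gamma_kh_k^{s_{\mathcal K}}$-type conditions from Condition \ref{condition4} then give $V_k\to$ a finite limit and $\sum_k\gamma_k\min(V_{k-1},V_{k-1}^{1/2})<\infty$. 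Hence $\widehat\theta_k\to\theta_0$ a.s., and Ces\`aro averaging gives $\overline\theta_N\to\theta_0$.

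\textbf{Linearization and limit laws.} Near $\theta_0$, write $g(\widehat\theta_{k-1})=-\mathbb H_0\Delta\widehat\theta_{k-1}+O(\|\Delta\widehat\theta_{k-1}\|^2)$. Following Polyak--Juditsky, invert the recursion:
\[
\mathbb H_0\Delta\widehat\theta_{k-1}=\gamma_k^{-1}(\Delta\widehat\theta_{k-1}-\Delta\widehat\theta_k)+\zeta_k+O(h_k^{s_{\mathcal K}}+\|\Delta\widehat\theta_{k-1}\|^2).
\]
Summing over $k$ gives $\mathbb H_0N\Delta\overline\theta_N=\sum_{k\le N}\zeta_k+R_N$. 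The partial-summation telescoping term, the bias $\sum h_k^{s_{\mathcal K}}=o(N^{(1+\alpha_h)/2})$ (using $s_{\mathcal K}\alpha_h>1/2$), and the quadratic term must all be shown to be $o\bigl((N^{1+\alpha_h}\log\log N)^{1/2}\bigr)$ a.s. The quadratic term needs a preliminary a.s.\ rate $\|\Delta\widehat\theta_k\|^2=O(\gamma_kh_k^{-1}\log k)$, which I would obtain from a standard a.s.\ rate lemma for SA with the above variance.

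The martingale $\sum\zeta_k$ has conditional variance $s_N^2\sim\sum_k 2\mathcal V_{\mathcal K,0}k^{\alpha_h}/(B(B-1)h_0)=2N^{1+\alpha_h}\mathcal V_{\mathcal K,0}/(B(B-1)h_0(1+\alpha_h))$. The martingale CLT, with a Lyapunov condition from $\int\mathcal K^4<\infty$ and fourth moments, gives the normal limit. Stout's martingale LIL, applied coordinatewise via the Cram\'er--Wold device, gives the $\pm1$ limsup/liminf with normalization $2s_N^2\log\log s_N^2$, which matches the stated constant $4N^{1+\alpha_h}\log\log N\cdots$.

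\textbf{Main obstacle.} The hardest step is the global coercivity of $g$ together with uniform control of the kernel bias for all $\theta$, including large $\|\Delta\theta\|$ where the second-order expansion fails. Closely tied to it is verifying that the a.s.\ remainders are negligible at the LIL scale. To handle coercivity, I would first prove strict negativity of $\Delta\theta^\top g(\theta)$ on spheres, then use a compactness and continuity argument on the restricted region of Condition \ref{condition3}.
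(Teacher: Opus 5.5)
Your architecture matches the paper's: uniform $O(h_k^{s_{\mathcal K}})$ kernel bias, global stability from monotonicity of $F_0$, Robbins--Siegmund for consistency, Polyak--Juditsky inversion, then LIL/CLT for the score sum. Two steps need fixing, and the second is a genuine gap.

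\textbf{The coercivity bound.} The claim $\Delta\theta^{\top}g(\theta)\le -c\min(\Vert\Delta\theta\Vert_2^2,\Vert\Delta\theta\Vert_2)$ is false. The densities in $g$ are $f(z-X_i^{\top}\theta,X_i)$, the joint density of $(Z(\theta),X_i)$, not $f(z,X_i)$. With this $\theta$-dependence restored, the density of $Z_1(\theta)-Z_2(\theta)$ at zero, which weights every pair in $g$, decays as $\Vert\theta\Vert_2\to\infty$. For example, take $p=1$, $\theta_0=0$, independent $\mathcal N(0,1)$ regressors and $F_0$ a strictly increasing c.d.f. Then $-\Delta\theta\, g(\theta)\asymp\vert\Delta\theta\vert^{-1}$ for large $\vert\Delta\theta\vert$, so it does not grow linearly. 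For the same reason, the region of \autoref{condition3} on which the integrand can be bounded below must shrink like $\Vert\Delta\theta\Vert_2^{-1}$. This is why \autoref{lemma2} only gives $\lambda_{\min}\{\mathbb H(\theta,\tau)\}\gtrsim(1+\Vert\Delta\theta\Vert_2+\Vert\theta_0\Vert_2)^{-(2p+2)}$.

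The fix is cheap. Replace your $\min$ by any $\psi$ that is bounded away from zero on compact subsets of $(0,\infty)$. Robbins--Siegmund still gives $V_k\to V_\infty$ and $\sum_k\gamma_k\psi(V_{k-1})<\infty$, and $V_\infty>0$ then contradicts $\sum_k\gamma_k=\infty$. This is the paper's argument, and your ``strict negativity on spheres plus compactness'' fallback produces exactly such a $\psi$.

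\textbf{The sharp raw rate.} The real gap is the a.s.\ rate $\Vert\Delta\widehat\theta_k\Vert_2^2=O(\gamma_kh_k^{-1}\log k)$, which you attribute to a ``standard'' SA lemma. This is where most of the paper's proof lives.

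Weighted Robbins--Siegmund gives only $\Vert\Delta\widehat\theta_N\Vert_2^2=o(N^{-(2\alpha_\gamma-\alpha_h-1)}\log^rN)$ with $r>1$ (\autoref{lemma3}, \autoref{lemma5}). That is too weak in two places:
\begin{itemize}
\item It leaves $\gamma_N^{-1}\Vert\Delta\widehat\theta_N\Vert_2=o(N^{(1+\alpha_h)/2}\log^{r/2}N)$, which is not negligible on the LIL scale $(N^{1+\alpha_h}\log\log N)^{1/2}$.
\item It bounds $\sum_{k\le N}\Vert\Delta\widehat\theta_k\Vert_2^2$ by something negligible only when $\alpha_\gamma>(3+\alpha_h)/4$, which \autoref{condition4} does not imply.
\end{itemize}
Generic SA rate or LIL theorems do not close this. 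The noise is state-dependent, and the rescaled noise $h_k^{1/2}\zeta_k$ has fourth moment of order $h_k^{-1}\to\infty$ (\autoref{lemma1}(5)).

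The paper (\autoref{theorem2}) splits the linearized noise into two parts:
\begin{itemize}
\item Its value at $\theta_0$, which is independent across $k$ with deterministic weights, so a Lindeberg-type LIL applies (\autoref{lemma6}).
\item A state-dependent part with conditional variance $\lesssim\gamma_k^2h_k^{-3}\Vert\Delta\widehat\theta_{k-1}\Vert_2^2$ (\autoref{lemma1}(4)).
\end{itemize}
Alternating \autoref{lemma5} and \autoref{lemma7} then improves the rate exponent by $2\alpha_\gamma-3\alpha_h-1>0$ per round until it reaches $\alpha_\gamma-\alpha_h$. That is the real role of $\sum_k\gamma_k^2h_k^{-3}<\infty$. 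You invoke it only for consistency, where $\sum_k\gamma_k^2h_k^{-1}<\infty$ suffices.

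Once this rate is established, applying a martingale LIL/CLT directly to $\sum_k\zeta_k$ is a legitimate alternative to the paper's split into $\sum_k\{\Phi_k(\theta_0,\mathcal W_k)-\Phi_k(\theta_0)\}$ plus a negligible $M_N$. Identifying $s_N^2$ then needs only consistency and continuity of $\theta\mapsto\mathcal V_{\mathcal K}(\theta)$. You must, however, use a martingale LIL that allows unbounded increments, since Stout's classical version assumes bounded increments relative to $s_k/\sqrt{\log\log s_k^2}$.
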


 \autoref{theorem1} first states the a.s. convergence of $\widehat\theta_N$ and $\overline\theta_N$. Hence for almost \textit{all paths} of the data stream, our algorithm will lead to consistent estimators for the unknown parameter $\theta_0$ as long as the number of updates is sufficiently large, regardless of the choice of the initial guess $\widehat\theta_0$.
Leveraging such global stability property, we can quickly locate a small neighborhood around the true parameter.   
\autoref{theorem1} also shows that \[\Vert \Delta\overline\theta_N\Vert_2 = O(\sqrt{N^{-1 +\alpha_h}\log(\log(N))}), \qquad \mathrm{a.s.}\] 
\autoref{theorem1} is the main result for the warm start estimator of $\theta_0$. In the following, we further obtain the convergence rate of the non-averaged estimator $\widehat\theta_N$.  In particular, we will show that $\Vert \Delta\widehat\theta_N\Vert_2 = O(\sqrt{N^{-\alpha_{\gamma} +\alpha_h}\log(\log ( N))}, \ \mathrm{a.s.}$. Since $\alpha_{\gamma}<1$, the  PR average  $\overline\theta_N$ converges at faster rate than $\widehat\theta_N$. As a result, we   recommend using the PR average $\overline\theta_N$ to construct the projection space when one needs to be constructed.  
The proof of \autoref{theorem1} is lengthy and hence postponed to \autoref{appendixF}.

\section{Phase I Sieve LS Warm Start Estimator of $F_0$}\label{appendixC}
\subsection{Pseudo Code}
The pseudo code is provided in Algorithm \autoref{alg:global-online-F}.
\begin{algorithm}
\caption{Online Warm-Start Learner for $F_0$ (Phase I)}
\label{alg:global-online-F}
\textbf{Require:} streaming batches $\mathcal{W}_k$, $k = 1, \ldots, N$; sieve dimensions $\{J_k\}$; learning rates $\{\eta_k\}$; initial coefficient $\widehat{\beta}_0 \in \mathbb{R}^{J_0}$; index estimates $\{\check{\theta}_{k-1}\}$.
\algline{1}{$\overline{\beta}_0 \gets \widehat{\beta}_0$}
\algline{2}{\textbf{for} $k = 1, 2, \ldots, N$ \textbf{do}}
\algline{3}{\quad Receive batch $\mathcal{W}_k$; \ $\check{Z}_{i,k} \gets Z_{i,k}(\check{\theta}_{k-1})$, $i = 1, \ldots, B$}
\algline{4}{\quad \textbf{Update sieve coefficient} \hfill $\triangleright$ embed via $R_{J_{k-1}, J_k}^{\top}$, then online update \eqref{online_sieve}
    \[
        \widehat{\beta}_k \gets R_{J_{k-1}, J_k}^{\top}\widehat{\beta}_{k-1}
        + \frac{\eta_k}{B}\sum_{i=1}^B
        \Big(Y_{i,k} - \Psi_{J_k}(\check{Z}_{i,k})^{\top} R_{J_{k-1}, J_k}^{\top}\widehat{\beta}_{k-1}\Big)
        \Psi_{J_k}(\check{Z}_{i,k})
    \]}
\algline{5}{\quad \textbf{Update PR-averaged coefficient} \hfill $\triangleright$ recursive averaging in common sieve space
    \[
        \overline{\beta}_k \gets \frac{1}{k}\widehat{\beta}_k
        + \frac{k-1}{k} R_{J_{k-1}, J_k}^{\top}\,\overline{\beta}_{k-1}
    \]}
\algline{6}{\textbf{return} $\overline{F}_N(\cdot) = \Psi_{J_N}(\cdot)^{\top}\overline{\beta}_N$}
\end{algorithm}

\subsection{Sufficient Conditions and Discussions}

Following \citet{chen2015optimal}, we define the $L_2$-projection of function $F_0$ onto $\mathcal{S}_J$   as
\begin{align}\label{L2-projection}
\mathbb{P}_J(F_0)(z) = \Psi_J(z)^{\top} \Gamma_{J}^{-1}\mathbb{E}\left[ \Psi_J( Z_0)F_0(Z_0)\right]\equiv \Psi_J(z)^{\top}\beta_{J,0},
\end{align}
where recall that $\mathcal{S}_J=\{\sum_{j=1}^J b_j\psi_{J,j}(\cdot): b_1,\ldots,b_J\in\mathbb{R}\}$ and $\Gamma_J = \mathbb{E}[\Psi_J(Z_0)\Psi_J(Z_0)^{\top}]$. By definition, $\beta_{J,0}$ is the $L^2(P_{Z_0})$ projection coefficient. Then 
for any function $g\in\mathcal{S}_J$, we have that $\Vert  \mathbb{P}_J(F_0) - F_0\Vert_{\infty} = \Vert  \mathbb{P}_J(F_0 - g) - (F_0- g)\Vert_{\infty},$ so the sup-norm sieve approximation error of $L_2$-projection can be bounded by 
\begin{equation}
\Vert  \mathbb{P}_J(F_0) - F_0\Vert_{\infty}\leq \inf_{g\in\mathcal{S}_J}\Vert F_0 - g\Vert _{\infty}\left( 1+\frac{\Vert \mathbb{P}_J(F_0 - g)\Vert_{\infty}}{\Vert F_0 - g\Vert _{\infty}}\right).
\end{equation}
One implicit condition we impose is that 
$
 \sup_J \sup_{g\in \mathcal{S}_J: 0<\Vert F_0 - g \Vert _{\infty}<\infty} \Vert F_0 - g\Vert _{\infty}^{-1}\Vert \mathbb{P}_J(F_0 - g)\Vert_{\infty}  <\infty,
$ which can be guaranteed when the operator norm of $\mathbb{P}_J$ is uniformly bounded for all $J$. This condition holds, by \citet{chen2015optimal},  for compactly supported wavelets and splines. We also impose sup-norm approximation error rate as 
$
\inf_{g\in\mathcal{S}_J}\Vert F_0 - g\Vert _{\infty}\leq C_{F_0}J^{-s}
$
where $s>0$ is approximation error rate of $F_0$. Finally, we impose that the derivative function also guarantees uniform approximation error of order $1-s$. The above is summarized in the following condition. 

\begin{condition}\label{condition5} There is some positive integer $s$ and constant $C_{F_0}$ such that for all $J$,  $\Vert F_0(\cdot) - \mathbb{P}_J(F_0)(\cdot)\Vert _{\infty}\leq C_{F_0}J^{ -s}$ and  $\Vert \nabla_z [F_0(\cdot) - \mathbb{P}_J(F_0)(\cdot)]\Vert _{\infty}\leq C_{F_0}J^{ 1 -s}$.  
\end{condition}

\begin{remark}\label{rem:unbounded-support-transform}
\autoref{condition5} does not impose any restriction on the support of $Z_0\equiv x_0 + X^{\top}\theta_0$. 
However, when spline basis functions are used, the uniform approximation rates in \autoref{condition5}(ii)  
are typically established for functions defined on compact domains. 
To accommodate unbounded support, we can simply apply a transformation that maps 
$\mathbb{R}$ into a bounded interval \citep{bierens2014consistency}. For instance, the transformation 
$\widetilde Z_0 = \arctan(Z_0)$ ensures $\widetilde Z_0 \in (-\pi/2, \pi/2)$ regardless of the support of $Z_0$.
Let $\{\psi_j(\cdot)\}_{j \geq 1}$ be spline basis functions defined on $(-\pi/2, \pi/2)$. 
We can approximate the transformed function $
\widetilde F_0(u) := F_0(\tan(u)), \quad u \in (-\pi/2, \pi/2),
$ 
using $\{\psi_j(u)\}_{j\geq 1}$. Equivalently, this corresponds to using the sieve 
$\{\psi_j(\arctan(\cdot))\}$ to approximate $F_0(\cdot)$ on $\mathbb{R}$.
Assume in addition that $\widetilde F_0$ extends to the closed transformed
interval with the smoothness and boundary behavior required in
\autoref{condition5}. Under this reparameterization, \autoref{condition5} can be imposed on $\widetilde F_0$, 
allowing the framework to accommodate unbounded support of $Z_0$.
\end{remark}
Recall that $\mu_0(\theta_0, z) = \mathbb{E}(X|Z_0 = z)$.  We choose 
\[
 \eta_k=\eta_0k^{-\alpha_\eta},
 \qquad J_k=[J_0k^{\alpha_\dagger}],
\]
where $\eta_0,\alpha_\eta,\alpha_\dagger>0$ and $J_0$ is a positive integer.
We further impose the following conditions on the data-generating process,
the sieve, and the predictable first-stage estimator of $\theta_0$.

\begin{condition} \label{condition6}
(i) $\mathbb{E}\Vert X\Vert_2^4<\infty$, $\max\{\Vert F_0\Vert_{\infty}, \Vert \nabla_z F_0\Vert_{\infty} ,  \Vert \nabla_{zz} F_0\Vert_{\infty}\}<\infty $, $\sup_{x_0,X}\mathbb{E}(\varepsilon^2|x_0,X)<\infty$ and
$\mathbb E|\varepsilon|^{\kappa} <\infty$ for some $\kappa> 2 +  \max\{\frac{2}{\alpha_{\eta}}, \frac{2(1+\alpha_{\dagger})}{1-\alpha_{\dagger}}\}$; (ii) There holds
$\sup_z\Vert (\nabla_z F_0(z))\mu_0(\theta_0,z)\Vert_2<\infty$, and moreover,
$
\sup_z\left\Vert
\mathbb P_J\!\left[(\nabla_zF_0(\cdot))\mu_0(\theta_0,\cdot)\right](z)
-(\nabla_zF_0(z))\mu_0(\theta_0,z)
\right\Vert_2\leq C_\mu J^{-s_\mu}
$
for some $C_{\mu},s_{\mu}>0$ and all positive integers $J$, with the projection
applied componentwise.
\end{condition}

\begin{remark}
   \autoref{condition6}  regulates the tail behaviors of $X$ and $\varepsilon$. The moment condition on $\varepsilon$ is used to build almost sure convergence and sup-norm rate for the sieve estimator, and for showing the FCLT of the online marginal effect estimator in \autoref{FCLT_average_marginal_effect}.
    We mention that if we only pursue the rate of convergence in probability for online sieve estimator,  then $\kappa > 2 + \frac{2\alpha_{\dagger}}{1-\alpha_{\dagger}}$ alone as in \citet{chen2015optimal} will suffice (note that under optimal sieve rate $1/(2s+1)$, the condition reduces to $2+1/s$, which is exactly the dimension/smoothness condition in \citet{chen2015optimal}). When we further pursue optimal a.s. convergence rates of sieve estimator, we additionally need  $\kappa > 2   + \max\{\frac{2}{\alpha_{\eta}}, \frac{2(1+\alpha_{\dagger})}{1-\alpha_{\dagger}}\}$. 
\end{remark}

\begin{condition}\label{condition7} (i) $\Vert \Psi_J\Vert_{\infty}\leq CJ^{1/2}$ and $\Vert \nabla_z\Psi_J\Vert_{\infty}\leq CJ^{3/2}$; (ii)  $0<\inf_{J}\lambda_{\mathrm{min}}(\Gamma_J)\leq \sup_{J}\lambda_{\text{max}}(\Gamma_J)<\infty$; (iii) Let $\mathcal{Z}_0$ be the support of $Z_0$. For each positive integer $L$, there exists $z_1,\cdots, z_L\in\mathcal{Z}_0$ such that $\sup_{z\in\mathcal{Z}_0}\min_{1\leq l\leq L}\Vert \Psi_J(z)- \Psi_J(z_l) \Vert  \leq C_{\Psi}J^{3/2}L^{-1}$ for all $J$, where $C_{\Psi}>0$ is a  constant.  
    
\end{condition}

\begin{remark}
     \autoref{condition7} (i) and (ii) are natural for normalized B-spline basis. Note that if $\mathcal{Z}_0$ is bounded, the \autoref{condition7} (iii) automatically holds given \autoref{condition7} (i). Otherwise, when $\mathcal{Z}_0$ is unbounded, we consider sieve functions of form   
$\{\psi_j(\arctan(\cdot))\}$ discussed in
\autoref{rem:unbounded-support-transform}.
\end{remark}

\begin{condition}\label{new_theta_condition_for_sieve}
Let $\mathcal F_k=\sigma(\mathcal W_1,\ldots,\mathcal W_k)$. For specified constants $0<\alpha_\theta\leq1/2$ and $c_\theta\geq0$,
$\Vert\Delta\check\theta_k\Vert_2=
O_{\mathrm{a.s.}}(k^{-\alpha_\theta}(\log k)^{c_\theta})$.
For each $k$, $\check\theta_k$ is $\mathcal{F}_k$-measurable.
\end{condition}

\begin{condition}\label{rate}
The exponents lie in the strict interior
\[
 \frac12<\alpha_\eta,\qquad
 \max\{1-2s\alpha_\dagger,
        1+5\alpha_\dagger-2\alpha_\theta\}
 <\alpha_\eta<
 \min\{2\alpha_\theta,1-3\alpha_\dagger,
        (2s+1)\alpha_\dagger\}.
\]
\end{condition}

Throughout the rest of the Appendix, to avid complicated discussion on $s\alpha_{\dagger}$,  we always assume that $s\alpha_{\dagger}<1$ and $\alpha_{\theta}+s_{\mu}\alpha_{\dagger}<1$. We note that violations of either assumptions will not affect any of the following results. 

\subsection{Main Results: \autoref{theorem6}}

This section develops the sup-norm almost-sure convergence rate of the warm-start estimator $\overline{F}_N$ defined in Section 3.2 of main text.  The argument also applies to generated-regressor online sieve estimation with predictable first-stage estimators satisfying \autoref{new_theta_condition_for_sieve}.

 Recall that $Z_0 = x_0 + X^{\top}\theta_0$ and $\Gamma_J = \mathbb{E}\left[\Psi_J(Z_{0})\Psi_J(Z_{0})^{\top}\right]$.    
For any $N\geq 1$, define $\Delta\overline{\beta}_{N} = \overline{\beta}_{N} - \beta_{J_N, 0}$,
where $\beta_{J_N,0}$ is the pseudo-true sieve coefficient and $\eta_k,J_k$
are as defined above.
\autoref{new_lemma-F} is implied by the following results.  
 
\begin{theorem}\label{theorem6}
    Let \autoref{condition5}--\ref{rate} hold.  Assume also that
    $\widehat\beta_0$ is $\mathcal F_0$-measurable and
    $\mathbb E\Vert\widehat\beta_0\Vert_2^2<\infty$.  Then
\begin{align*}
\Delta\overline{\beta}_N & = \frac{1}{N}\sum_{k=1}^N R_{J_k, J_N}^{\top}\Gamma_{J_k}^{-1}\frac{1}{B}\sum_{i=1}^B \varepsilon_{i,k}\Psi_{J_k}(Z_{0,i,k})- \frac{1}{N}\sum_{k=1}^N R_{J_k, J_N}^{\top}\Gamma_{J_k}^{-1}\mathbb{E}[(\nabla_z F_0(Z_0))\Psi_{J_k}(Z_0)X^{\top}]\Delta\check\theta_{k-1} \\
 & + \frac{1}{N}\sum_{k=1}^N\left(R_{J_k, J_N}^{\top} \beta_{J_k, 0} - \beta_{J_N,0}\right) + r_N, \qquad \Vert r_N\Vert_2=o_{\rm a.s.}(N^{-1/2})
\end{align*}
   Moreover,
    defining
    \[
      A_{\mu,N}:=\frac1N\sum_{k=1}^N
      J_k^{-s_\mu}\Vert\Delta\check\theta_{k-1}\Vert_2,
    \]
    we have
    \begin{align}\label{eq:C1-supnorm-corrected}
    \left\Vert \overline{F}_N-F_0 \right\Vert_{\infty}
    =O_{\mathrm{a.s.}}\!\left(
       \sqrt{\frac{J_N\log N}{N}}
       +\frac1N\sum_{k=1}^N J_k^{-s}
       +\left\Vert\frac1N\sum_{k=1}^N
          \Delta\check\theta_{k-1}\right\Vert_2
       +A_{\mu,N}\right).
    \end{align}
    where $s>0$ is the approximation error rate of $F_0$. 
    In particular, with $J_k\asymp k^{\alpha_\dagger}$, $ 
    \frac1N\sum_{k=1}^N J_k^{-s}
    =
      O(N^{-s\alpha_\dagger}) $ 
    and  $A_{\mu,N}=
      O_{\mathrm{a.s.}}(N^{-\alpha_\theta-s_\mu\alpha_\dagger}
          (\log N)^{c_\theta})$. 
\end{theorem}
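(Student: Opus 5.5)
Write $\widehat\beta_k^{\flat}:=R_{J_{k-1},J_k}^{\top}\widehat\beta_{k-1}$ for the re-embedded iterate and $\Delta\widehat\beta_k:=\widehat\beta_k-\beta_{J_k,0}$. The plan is a standard stochastic-approximation linearization, adapted to the moving sieve. Start from \eqref{online_sieve} and substitute $Y_{i,k}=F_0(Z_{0,i,k})+\varepsilon_{i,k}$. Taylor-expand $\Psi_{J_k}(\check Z_{i,k})$ and $F_0$ around $Z_{0,i,k}$, using $\check Z_{i,k}-Z_{0,i,k}=X_{i,k}^{\top}\Delta\check\theta_{k-1}$. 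This gives the recursion
\begin{align*}
\Delta\widehat\beta_k &= (\mathbb I-\eta_k\Gamma_{J_k})\big(\widehat\beta_k^{\flat}-\beta_{J_k,0}\big)+\eta_k\Big[\tfrac1B\textstyle\sum_i\varepsilon_{i,k}\Psi_{J_k}(Z_{0,i,k})-\mathbb E[\nabla_zF_0(Z_0)\Psi_{J_k}(Z_0)X^{\top}]\Delta\check\theta_{k-1}\Big]\\
&\quad+\eta_k\big(\xi_k^{\mathrm{mart}}+b_k^{\mathrm{approx}}+q_k\big).
\end{align*}
In the bracket, the $\theta$-term comes from $\Psi_{J_k}(Z_0)^{\top}\beta_{J_k,0}\approx F_0$. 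The term $\xi_k^{\mathrm{mart}}$ collects the centered fluctuations of $\Psi\Psi^{\top}$ and of the cross term, which are martingale differences given $\mathcal F_{k-1}$, because $\check\theta_{k-1}$ is predictable by \autoref{new_theta_condition_for_sieve}. The term $b_k^{\mathrm{approx}}$ is the projection residual $\mathbb E[\Psi_{J_k}(Z_0)(F_0-\mathbb P_{J_k}F_0)(Z_0)]=0$ together with its derivative analogue. The quadratic remainder $q_k$ is of order $J_k^{3/2}\Vert\Delta\check\theta_{k-1}\Vert_2^2$, by \autoref{condition7}(i).

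Next, write $\widehat\beta_k^{\flat}-\beta_{J_k,0}=R^{\top}_{J_{k-1},J_k}\Delta\widehat\beta_{k-1}+(R^{\top}_{J_{k-1},J_k}\beta_{J_{k-1},0}-\beta_{J_k,0})$. Iterating the recursion through the re-embeddings produces exactly the bias term $\frac1N\sum_k(R_{J_k,J_N}^{\top}\beta_{J_k,0}-\beta_{J_N,0})$ after averaging. For averaging I would use the Polyak--Juditsky device. Rewrite the recursion as $\Gamma_{J_k}\Delta\widehat\beta_k^{\flat}=\eta_k^{-1}(\Delta\widehat\beta_k^{\flat}-\Delta\widehat\beta_k)+(\text{leading terms})$. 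Multiply by $\Gamma_{J_k}^{-1}$, apply $R_{J_k,J_N}^{\top}$, and sum over $k$. Summation by parts on $\eta_k^{-1}$ then yields the first two displayed terms of the expansion, plus boundary and telescoping terms of order $N^{-1}\sum_k|\eta_k^{-1}-\eta_{k-1}^{-1}|\,\Vert\Delta\widehat\beta_k\Vert_2+(N\eta_N)^{-1}\Vert\Delta\widehat\beta_N\Vert_2$.

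The main obstacle is to show that all of these leftovers are $o_{\rm a.s.}(N^{-1/2})$. I would first obtain an a.s. rate for the last iterate, $\Vert\Delta\widehat\beta_k\Vert_2=O_{\rm a.s.}(\sqrt{\eta_kJ_k\log k}+k^{-\alpha_\theta}(\log k)^{c_\theta}+J_k^{-s})$. This would come from a Robbins--Siegmund argument on $\Vert\Delta\widehat\beta_k\Vert_2^2$, using the uniform eigenvalue bounds of \autoref{condition7}(ii) (so that $\mathbb I-\eta_k\Gamma_{J_k}$ contracts), together with a truncation of $\varepsilon$ at level $k^{1/\kappa}$-type thresholds controlled by \autoref{condition6}(i). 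Two points need care: the operators $R_{J_{k-1},J_k}$ must be non-expansive in a suitable $\Gamma$-weighted norm, which I would check via the function-space identity $\Psi^{\top}R^{\top}b=\Psi^{\top}b$; and the mismatch $\Gamma_{J_k}^{-1}R^{\top}$ versus $R^{\top}\Gamma_{J_{k-1}}^{-1}$ must be handled. The exponent window in \autoref{rate} should be exactly what makes the $J_k^{3/2}\Vert\Delta\check\theta\Vert^2$, $(N\eta_N)^{-1}$, and martingale-product terms negligible.

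Finally, the sup-norm bound follows by applying $\Psi_{J_N}(z)^{\top}$ and using $\Psi_{J_N}(z)^{\top}R_{J_k,J_N}^{\top}=\Psi_{J_k}(z)$. The stochastic term $\frac1N\sum_k\Psi_{J_k}(z)^{\top}\Gamma_{J_k}^{-1}\frac1B\sum_i\varepsilon_{i,k}\Psi_{J_k}$ is bounded uniformly over $z$ by a truncated Freedman/Bernstein inequality on a grid of size polynomial in $N$ (\autoref{condition7}(iii)), combined with Borel--Cantelli; this gives $\sqrt{J_N\log N/N}$. In the $\theta$-term, $\Psi^{\top}\Gamma^{-1}\mathbb E[\Psi\nabla F_0X^{\top}]=\mathbb P_{J_k}[\nabla F_0\mu_0](z)$, which equals $\nabla F_0(z)\mu_0(z)+O(J_k^{-s_\mu})$ by \autoref{condition6}(ii); this yields the terms $\Vert\frac1N\sum\Delta\check\theta_{k-1}\Vert_2$ and $A_{\mu,N}$. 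The bias term contributes $\frac1N\sum_k\Vert\mathbb P_{J_k}F_0-\mathbb P_{J_N}F_0\Vert_\infty\lesssim\frac1N\sum_kJ_k^{-s}$ by \autoref{condition5}. The stated rate evaluations are then elementary sums.
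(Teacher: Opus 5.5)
\textbf{Where your proposal matches the paper.} Your architecture is the paper's. You linearize around $\beta_{J_k,0}$, prove a sharp last-iterate rate, and average by Polyak--Juditsky summation by parts in the common coordinates $R_{J_k,J_N}^{\top}$. Your $\Gamma_{J_k}^{-1}R^{\top}$ versus $R^{\top}\Gamma_{J_{k-1}}^{-1}$ mismatch terms at the expansion times $\mathcal J(m)$ play the role of the paper's block-boundary terms. You finish the sup-norm bound via $\Psi_{J_N}(z)^{\top}R_{J_k,J_N}^{\top}=\Psi_{J_k}(z)^{\top}$, a polynomial grid, truncated Bernstein and Borel--Cantelli, as the paper does.

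\textbf{The gap: the last-iterate rate.} The target $\sqrt{\eta_kJ_k\log k}$ is correct, but Robbins--Siegmund on $\Vert\Delta\widehat\beta_k\Vert_2^2$ cannot produce it.
\begin{itemize}
\item With contraction $1-c\eta_k$ and forcing $C\eta_k^2J_k$, the weight $k^{\alpha_\eta-\alpha_\dagger}$ leaves a non-summable forcing of order $\eta_k$.
\item Essentially the largest admissible weight is $k^{2\alpha_\eta-\alpha_\dagger-1}\log^{-r}k$, which only gives $\Vert\Delta\widehat\beta_k\Vert_2=O_{\rm a.s.}(k^{(1+\alpha_\dagger-2\alpha_\eta)/2}\log^{r/2}k)$.
\item Truncating $\varepsilon$ does not help inside an argument that uses only conditional second moments.
\end{itemize}
This crude rate breaks your averaging step. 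It makes $(N\eta_N)^{-1}\Vert\Delta\widehat\beta_N\Vert_2$ and $N^{-1}\sum_kk^{\alpha_\eta-1}\Vert\Delta\widehat\beta_k\Vert_2$ of order $N^{-1/2+\alpha_\dagger/2}$. The expansion-time terms $N^{-1}\sum_{m\le J_N}\eta_{\mathcal J(m)}^{-1}\Vert\Delta\widehat\beta_{\mathcal J(m)-1}\Vert_2$ are $o(N^{-1/2})$ only under the sharp rate; this is where $\alpha_\eta<1-3\alpha_\dagger$ enters.

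\textbf{How the paper gets the sharp rate.} It uses Robbins--Siegmund only for a preliminary rate, then proceeds as follows.
\begin{itemize}
\item It unrolls $\Delta\widehat\beta_N$ through the deterministic maps $A_j=(\mathbb I-\eta_j\Gamma_{J_j})R_{J_{j-1},J_j}^{\top}$. The dominant term $\sum_k\frac{\eta_k}{B}\sum_i\varepsilon_{i,k}(\prod_{j>k}A_j)\Psi_{J_k}(Z_{0,i,k})$ is then a sum of independent vectors.
\item It truncates $\varepsilon$ at $M_N\asymp(N^{\alpha_\eta}/\log N)^{1/2}$, applies a matrix Bernstein inequality with variance proxy $N^{\alpha_\dagger-\alpha_\eta}$, and uses Borel--Cantelli with $\kappa>2+2/\alpha_\eta$.
\item It bootstraps the self-referential term $(\Gamma_{J_k}-\widehat\Gamma_{J_k,k})R^{\top}\Delta\widehat\beta_{k-1}$: a current rate $k^{-\alpha_\beta}$ makes its contribution $O(k^{-\alpha_\beta-\delta})$ with $\delta>0$, so it never sets the rate.
\end{itemize}
You need this second stage. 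Your $\Gamma$-weighted norm genuinely simplifies it. Since $R_{J_{k-1},J_k}\Gamma_{J_k}R_{J_{k-1},J_k}^{\top}=\Gamma_{J_{k-1}}$, re-embedding is an exact isometry and every $A_j$ is a $(1-c\eta_j)$-contraction. This removes the $C^{J_N}$ expansion factor and replaces the paper's blockwise Robbins--Siegmund over $[\mathcal J(m-1),\mathcal J(m))$.

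\textbf{A smaller misclassification.} The Gram perturbation from the generated regressor is not a martingale difference. The conditional mean $\mathbb E_{k-1}\check\Gamma_{J_k,k}-\Gamma_{J_k}$ has norm of order $J_k^2\Vert\Delta\check\theta_{k-1}\Vert_2$. So $\eta_k(\widehat\Gamma_{J_k,k}-\check\Gamma_{J_k,k})R^{\top}\Delta\widehat\beta_{k-1}$ must be bounded directly, using $\Vert\widehat\Gamma_{J_k,k}-\check\Gamma_{J_k,k}\Vert\lesssim J_k^2\Vert\Delta\check\theta_{k-1}\Vert_2B^{-1}\sum_i\Vert X_{i,k}\Vert_2$. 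After averaging it is $O(N^{2\alpha_\dagger-\alpha_\theta-(\alpha_\eta-\alpha_\dagger)/2})$ up to logarithms. This is $o(N^{-1/2})$ precisely because $\alpha_\eta>1+5\alpha_\dagger-2\alpha_\theta$ in \autoref{rate}.

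\textbf{Where the bias term comes from.} The term $\frac1N\sum_k(R_{J_k,J_N}^{\top}\beta_{J_k,0}-\beta_{J_N,0})$ arises from recentring the re-embedded iterates in the definition of $\overline\beta_N$, not from iterating the recursion. The target jumps inside your recursion only add $N^{-1}\sum_{m\le J_N}\eta_{\mathcal J(m)}^{-1}m^{-s}=o(N^{-1/2})$ to the summation by parts.
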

\autoref{theorem6} is the key result of online sieve estimation with possibly generated regressor.  It provides the first-order expansion of the average online estimator $\overline{\beta}_N$, which contains four terms. The first term is comparable to the variance term of the full-sample sieve estimator (e.g., \citet{chen2015optimal}), but since in the $k$-th round of update we only use $J_k$ sieve functions, the effective variance component in the $k$-th round is $\varepsilon_{i,k}R_{J_{k}, J_N}^{\top}\Gamma_{J_k}^{-1}\Psi_{J_k}(Z_{0,i,k})$ instead of $\varepsilon_{i,k}\Gamma_{J_{N}}^{-1}\Psi_{J_N}(Z_{0,i,k})$. This highlights the online learning structure of the algorithm. The second term describes the first-order impacts of using the generated regressor $\check Z_{i,k}$ for sieve estimation. The third term describes the impacts of changing pseudo true sieve coefficients and re-embedding when the sieve dimension    increases. The last term collects all the high-order remainders.

When we consider the sup-norm estimation error of $F_0$, apart from the last two terms capturing the average
index error and the projection error $A_{\mu,N}$,     the stochastic and approximation components match the familiar
sieve benchmark under a compatible tuning choice.  The usual minimax choice
$\alpha_\dagger=1/(2s+1)$ is admissible here when it lies in the strict
region of \autoref{rate}.  With the fastest permitted first stage,
$\alpha_\theta=1/2$, this requires
\[
 s>\frac72,
 \qquad
 \max\{1/2,5\alpha_\dagger\}<\alpha_\eta<1-3\alpha_\dagger.
\]
Under these restrictions, we can  choose $J_N\asymp N^{1/(2s+1)}$, then
$
 J_N^{\frac{1}{2}}N^{-\frac{1}{2}}\log^{\frac{1}{2}}(N)+ J_N^{-s} \asymp N^{-\frac{s}{2s+1}}\log^{\frac{1}{2}}(N),
$ 
and this error rate matches the usual minimax benchmark up to poly-log  terms \citep{stone1982optimal,belloni2015some,chen2015optimal}. 

\subsection{Proof of \autoref{theorem6}}

\subsubsection{Additional Lemma}
We first state a useful lemma. 
\begin{lemma}\label{as converence of sums}
    Let  nonnegative random sequence $Q_0,Q_1,\cdots$ satisfy $\mathbb{E}Q_{0}^2<\infty$ and 
    \[
  Q_{N}  \leq (1 - CN^{-\alpha_1}) Q_{N-1}  + CN^{\alpha_2}T_N
    \]
    where $\frac{1}{2}<\alpha_1<1$ and $\alpha_2\in \mathbb{R}$, and $T_N\geq 0$ satisfies  $\mathbb{E}_{N-1}T_N^2\leq C_T<\infty$   for all $N$ with $\mathbb{E}_{N-1}$ being the expectation conditioned on $Q_0, T_1,\cdots, T_{N-1}$.  Then \[
      Q_N  = O(N^{\alpha_2 + \alpha_1}), \qquad \text{a.s}
    \]
    Under the same conditions, if instead   
    \[
  Q_{N}  \leq  Q_{N-1}  + CN^{\alpha_2}T_N
    \]
    then for any $r>1$, 
    \[
    Q_N=\begin{cases}
      O_{\mathrm{a.s.}}(N^{1+\alpha_2}),&\alpha_2>-1,\\
      O_{\mathrm{a.s.}}(\log^{r}( N)),&\alpha_2=-1,\\
      O_{\mathrm{a.s.}}(1),&\alpha_2<-1.
    \end{cases}
    \]
\end{lemma}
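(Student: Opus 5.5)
The plan is to compare $Q_N$ with a linear recursion, split $T_N$ into its conditional mean and a martingale difference, and handle the two pieces separately. Throughout, write $\mathcal G_{N-1}=\sigma(Q_0,T_1,\ldots,T_{N-1})$ and decompose
\[
T_N=m_N+\zeta_N,\qquad m_N:=\mathbb E_{N-1}T_N,\qquad \zeta_N:=T_N-m_N .
\]
By Jensen's inequality, $0\le m_N\le C_T^{1/2}$. The sequence $\zeta_N$ is a martingale difference sequence with $\mathbb E_{N-1}\zeta_N^2\le C_T$. Also, $\mathbb E Q_0^2<\infty$ gives $Q_0<\infty$ a.s.

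\textbf{First claim: reduction to a linear recursion.} Fix $N_0$ with $CN^{-\alpha_1}<1$ for all $N\ge N_0$. For $N\ge N_0$, define $D_N=(1-CN^{-\alpha_1})D_{N-1}+CN^{\alpha_2}T_N$ with $D_{N_0}=Q_{N_0}$. Because the coefficient $1-CN^{-\alpha_1}$ is positive, induction gives $Q_N\le D_N$ for all $N\ge N_0$. By linearity, $D_N=D^{(0)}_N+D^{(m)}_N+D^{(\zeta)}_N$, where each piece solves the same recursion driven respectively by $0$ (started at $Q_{N_0}$), by $m_N$, and by $\zeta_N$ (the last two started at $0$).

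\textbf{First claim: the deterministic-type pieces.} The piece $D^{(0)}_N=Q_{N_0}\prod_{j=N_0+1}^{N}(1-Cj^{-\alpha_1})$ tends to $0$ a.s. Since $m_N\le C_T^{1/2}$, $D^{(m)}_N$ is bounded by the deterministic recursion $d_N=(1-CN^{-\alpha_1})d_{N-1}+C'N^{\alpha_2}$. To show $d_N=O(N^{\alpha_1+\alpha_2})$, set $e_N=d_N/N^{\alpha_1+\alpha_2}$. Using $((N-1)/N)^{\alpha_1+\alpha_2}=1+O(1/N)$ and $1/N=o(N^{-\alpha_1})$ (because $\alpha_1<1$), we get $e_N\le(1-cN^{-\alpha_1})e_{N-1}+C''N^{-\alpha_1}$ for large $N$ and some $c>0$. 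This forces $\limsup e_N\le C''/c$.

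\textbf{First claim: the martingale piece (main step).} Set $V_N=D^{(\zeta)}_N/N^{\alpha_1+\alpha_2}$. The same normalization gives
\[
V_N=\rho_NV_{N-1}+CN^{-\alpha_1}\zeta_N,\qquad 0\le \rho_N\le 1-cN^{-\alpha_1}\ \text{ for large } N.
\]
Squaring and conditioning, the cross term vanishes, so
\[
\mathbb E_{N-1}V_N^2\le(1-cN^{-\alpha_1})V_{N-1}^2+C^2C_TN^{-2\alpha_1}.
\]
Since $\alpha_1>1/2$, we have $\sum_N N^{-2\alpha_1}<\infty$. The Robbins--Siegmund theorem then shows that $V_N^2$ converges a.s. and that $\sum_N N^{-\alpha_1}V_{N-1}^2<\infty$ a.s. Because $\sum_N N^{-\alpha_1}=\infty$, the a.s. limit of $V_N^2$ must be $0$. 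Hence $D^{(\zeta)}_N=o(N^{\alpha_1+\alpha_2})$ a.s. Combining the three pieces yields $Q_N=O(N^{\alpha_1+\alpha_2})$ a.s., with no $\delta$-loss: the loss is avoided because the noise is normalized by the deterministic rate before the supermartingale argument is applied. This step is the crux, and the condition $\alpha_1>1/2$ is used exactly here.

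\textbf{Second claim.} Iterating gives $Q_N\le Q_0+C\sum_{k\le N}k^{\alpha_2}m_k+C\sum_{k\le N}k^{\alpha_2}\zeta_k$.
\begin{itemize}
\item \emph{Mean part.} Since $m_k\le C_T^{1/2}$, this sum is bounded by $C\sum_{k\le N}k^{\alpha_2}$. That is $O(N^{1+\alpha_2})$ if $\alpha_2>-1$, $O(\log N)\subseteq O(\log^rN)$ if $\alpha_2=-1$, and $O(1)$ if $\alpha_2<-1$.
\item \emph{Martingale part, $\alpha_2\le-1$.} Here $\sum_k k^{2\alpha_2}C_T<\infty$, so $\sum_k k^{\alpha_2}\zeta_k$ is an $L^2$-bounded martingale and converges a.s., hence is $O(1)$.
\item \emph{Martingale part, $\alpha_2>-1$.} The series $\sum_k \zeta_k/k$ converges a.s., being an $L^2$-bounded martingale since $\sum_k k^{-2}<\infty$. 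Apply Kronecker's lemma with the increasing weights $b_k=k^{1+\alpha_2}\uparrow\infty$ to the terms $k^{\alpha_2}\zeta_k$, noting that $k^{\alpha_2}\zeta_k/b_k=\zeta_k/k$. This gives $\sum_{k\le N}k^{\alpha_2}\zeta_k=o(N^{1+\alpha_2})$ a.s.
\end{itemize}
Adding the finite $Q_0$ gives the three stated cases.
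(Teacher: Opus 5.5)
Your proof is correct. For the first claim it follows the paper's argument step for step. For the second claim it handles the martingale part by a different, more elementary route.

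\textbf{First claim.} You do what the paper does: dominate $Q_N$ by the linear recursion, split the forcing into $\mathbb{E}_{N-1}T_N$ and a martingale difference, and bound the mean part deterministically. The paper uses its product--exponential lemma (Lemma~\ref{lemma4}) for that last step; you use the normalized recursion $e_N\le(1-cN^{-\alpha_1})e_{N-1}+C''N^{-\alpha_1}$. You then normalize the martingale part by $N^{\alpha_1+\alpha_2}$ before applying Robbins--Siegmund, as the paper does.

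One point to tighten: you only assert that $D^{(0)}_N\to0$. When $\alpha_1+\alpha_2<0$, which is how the lemma is mostly used later in the paper, you need $D^{(0)}_N=O(N^{\alpha_1+\alpha_2})$. This follows at once from
\[
\prod_{j=N_0+1}^{N}(1-Cj^{-\alpha_1})\le\exp\Bigl(-C\sum_{j=N_0+1}^{N}j^{-\alpha_1}\Bigr)\le \exp\bigl(-cN^{1-\alpha_1}\bigr)
\]
for large $N$, which is $o(N^{-M})$ for every $M$.

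\textbf{Second claim.} Here your route differs from the paper's. The paper controls $\sum_{k\le N}k^{\alpha_2}(T_k-\mathbb{E}_{k-1}T_k)$ with a dyadic martingale maximal inequality and Borel--Cantelli. That gives the sharper fluctuation bound $O_{\mathrm{a.s.}}\bigl(N^{1/2+\alpha_2}(\log N)^{r_0/2}\bigr)$ for $\alpha_2>-1/2$, and $O(1)$ for $\alpha_2<-1/2$.

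You instead use two tools:
\begin{itemize}
\item almost-sure convergence of $L^2$-bounded martingales, which gives the case $\alpha_2\le-1$ directly;
\item Kronecker's lemma applied to $\sum_k\zeta_k/k$ with weights $k^{1+\alpha_2}$, which gives only $o(N^{1+\alpha_2})$ for the martingale part when $\alpha_2>-1$.
\end{itemize}
That bound is weaker than the paper's intermediate one. It is still all the statement needs, because the conditional-mean part is already of order $N^{1+\alpha_2}$, $\log N$, or $1$ in the three cases. Your argument is shorter and avoids the blocking and maximal-inequality machinery. The paper's version buys the $\sqrt{N}$-type fluctuation rate, which would matter only if one needed a bound finer than the deterministic order.
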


\begin{proof}
Assume that $1-CN^{-\alpha_1}\in[0,1]$; otherwise we start from some sufficiently large $k_0$.  Define 
$\varepsilon_{T,N}=T_N-\mathbb E_{N-1}T_N$ and let $\overline Q_N$ be defined as $\overline{Q}_N = (1-CN^{-\alpha_1})\overline{Q}_{N-1} +CN^{\alpha_2}T_N$ with $\overline Q_0=Q_0$.  Recursion gives $0\leq Q_N\leq\overline Q_N$.  Decompose
$\overline Q_N=Q_{1,N}+Q_{2,N}$, where
\begin{align*}
 Q_{1,N}&=(1-CN^{-\alpha_1})Q_{1,N-1}
          +CN^{\alpha_2}\mathbb E_{N-1}T_N,\\
 Q_{2,N}&=(1-CN^{-\alpha_1})Q_{2,N-1}
          +CN^{\alpha_2}\varepsilon_{T,N},
\end{align*}
with $Q_{1,0}=\mathbb EQ_0$ and $Q_{2,0}=Q_0-\mathbb EQ_0$.
Let $r=\alpha_1+\alpha_2$.  Since
$\mathbb E_{N-1}T_N\leq C_T^{1/2}$, iteration gives
\[
 |Q_{1,N}|
 \leq C e^{-CN^{1-\alpha_1}}
 +C\sum_{k=1}^N
   e^{-C(N^{1-\alpha_1}-k^{1-\alpha_1})}k^{\alpha_2}.
\]
By \autoref{lemma4}, we have that $|Q_{1,N}| = O(N^{\alpha_1 +\alpha_2}) = O(N^r)$. 
For the martingale part, note that 
\[
 N^{-r}Q_{2,N}=(1-CN^{-\alpha_1})\left(\frac{N-1}{N}\right)^r(N-1)^{-r}Q_{2,N-1}+CN^{-\alpha_1}\varepsilon_{T,N},
\]
Because $\alpha_1<1$, the $N^{-\alpha_1}$ contraction dominates the
$O(N^{-1})$   drift, so  
\[
 \mathbb E_{N-1}[N^{-r}Q_{2,N}]^2
 \leq(1-CN^{-\alpha_1})[(N-1)^{-r}Q_{2,N-1}]^2+CN^{-2\alpha_1}.
\]
Now $\sum_NN^{-2\alpha_1}<\infty$ and
$\sum_NN^{-\alpha_1}=\infty$.  Robbins--Siegmund implies that $(N^{-r}Q_{2,N})^2$
converges and that $\sum_NN^{-\alpha_1}[N^{-r}Q_{2,N}]^2<\infty$; hence its limit
is zero.  Therefore $Q_{2,N}=o_{\rm a.s.}(N^r)$.

For the recursion without contraction, use the same domination and write
\[
 Q_{1,N}=Q_{1,0}+C\sum_{k=1}^Nk^{\alpha_2}
                 \mathbb E_{k-1}T_k,
 \qquad
 Q_{2,N}=Q_{2,0}+C\sum_{k=1}^Nk^{\alpha_2}\varepsilon_{T,k}.
\]
The deterministic part has the exact three orders
\[
 Q_{1,N}=\begin{cases}
 O(N^{1+\alpha_2}),&\alpha_2>-1,\\
 O(\log N),&\alpha_2=-1,\\
 O(1),&\alpha_2<-1.
 \end{cases}
\]
The martingale quadratic variation is bounded by
$C\sum_{k\leq N}k^{2\alpha_2}$.  A dyadic martingale maximal inequality and
Borel--Cantelli give, for every fixed $r_0>1$,
\[
 |Q_{2,N}|=\begin{cases}
 O_{\rm a.s.}\{N^{1/2+\alpha_2}(\log N)^{r_0/2}\},
       &\alpha_2>-1/2,\\
 O_{\rm a.s.}\{(\log N)^{(r_0+1)/2}\},&\alpha_2=-1/2,\\
 O_{\rm a.s.}(1),&\alpha_2<-1/2,
 \end{cases}
\]
where in the last case the martingale converges in $L_2$ and almost surely.
   Combining $Q_N\leq Q_{1,N}+|Q_{2,N}|$ proves the
second assertion, including the logarithmic boundary $\alpha_2=-1$.
\end{proof}

The following proof of \autoref{theorem6} will be divided into three steps. 
    
\subsubsection{ Obtaining Initial Convergence Rate for $\Vert \Delta\widehat\beta_k\Vert_2$}\label{C1.part1}

\vspace{0.2cm}
\textit{Summary:}
\textit{This part will develop the following result:}  
\[
\Vert \Delta\widehat\beta_k \Vert_2 = O\left(k^{\frac{1-\alpha_{\eta} - \alpha_{\dagger}\underline{\alpha}_{F}}{2}}\log^{\frac{r}{2}}(k)\right), \qquad \text{a.s.}
\]
\textit{where $\underline{\alpha}_F = \min\{2s-1, 2\alpha_{\theta}/\alpha_{\dagger} - 3, \alpha_{\eta}/\alpha_{\dagger}-1\}>0$ and  $r>1+2c_{\theta}$}.
\vspace{0.5cm}

To derive an initial a.s. convergence rate for $\Vert \Delta\widehat\beta_k\Vert_2$, we first  provide a bound for $\mathbb E_{k-1}\Vert\Delta\widehat{\beta}_{k}\Vert^2_2$ based on $ \Vert\Delta\widehat{\beta}_{k-1}\Vert_2^2$. Such bound will allow us to use the convergence result in \citet{robbins1971convergence}.
Note that according to the definition of $R_{J_1, J_2}$, we have that 
\[
\widehat\beta_k = R_{J_{k-1}, J_k}^{\top}\widehat{\beta}_{k-1}
+ \frac{\eta_k}{B}\sum_{i=1}^B
\Big(
Y_{i,k}
-
\Psi_{J_k}(\check{Z}_{i,k})^{\top}
R_{J_{k-1}, J_k}^{\top}\widehat{\beta}_{k-1}
\Big)
\Psi_{J_k}(\check{Z}_{i,k}),
\]
for all $k$. So 
\begin{align*}
    \widehat\beta_k - \beta_{J_k, 0} &  = R_{J_{k-1},J_k}^{\top}\left(\widehat\beta_{k-1} - \beta_{J_{k-1},0}\right) + \left(R_{J_{k-1},J_k}^{\top}\beta_{J_{k-1}, 0} - \beta_{J_k, 0}\right) \\
    & + \frac{\eta_k}{B}\sum_{i=1}^B \varepsilon_{i,k}\Psi_{J_k}\left(\check{Z}_{i,k}\right) + \frac{\eta_k}{B}\sum_{i=1}^B\left(F_0\left(Z_{0,i,k}\right) - F_0\left(\check{Z}_{i,k}\right)\right)\Psi_{J_{k}} \left(\check{Z}_{i,k}\right)\\
    & + \frac{\eta_k}{B}\sum_{i=1}^B \left( F_0(\check{Z}_{i,k}) - \mathbb{P}_{J_{k}}(F_0)(\check{Z}_{i,k})\right)\Psi_{J_k} \left(\check{Z}_{i,k}\right) \\
    & + \frac{\eta_k}{B}\sum_{i=1}^B  \Psi_{J_k} \left(\check{Z}_{i,k}\right)\Psi_{J_k} \left(\check{Z}_{i,k}\right)^{\top}\left(\beta_{J_k,0} -R _{J_{k-1},J_k}^{\top}\beta_{J_{k-1}, 0} \right) \\
    & + \frac{\eta_k}{B}\sum_{i=1}^B  \Psi_{J_k} \left(\check{Z}_{i,k}\right)\Psi_{J_k} \left(\check{Z}_{i,k}\right)^{\top}R _{J_{k-1},J_k}^{\top}\left(\beta_{J_{k-1}, 0} - \widehat\beta_{k-1} \right)
\end{align*}
Define $\check{\Gamma}_{J, k} =\frac{1}{B}\sum_{i=1}^B \Psi_{J}(\check{Z}_{i,k})\Psi_{J}(\check{Z}_{i,k})^{\top}$, we have that 
\begin{align*}
    \Delta\widehat\beta_k  & = \left(\mathbb{I}_{J_k} - \eta_k \check{\Gamma}_{J_k, k}\right)R^{\top}_{J_{k-1},J_k}\Delta\widehat\beta_{k-1} + \left(\mathbb{I}_{J_k} - \eta_k \check{\Gamma}_{J_k, k}\right)\left(R_{J_{k-1},J_k}^{\top}\beta_{J_{k-1}, 0} - \beta_{J_k, 0}\right)\\
    &  + \frac{\eta_k}{B}\sum_{i=1}^B\varepsilon_{i,k} \Psi_{J_{k}} \left(Z_{0,i,k}\right) + \frac{\eta_k}{B}\sum_{i=1}^B \left( F_0({Z}_{0,i,k}) - \mathbb{P}_{J_{k}}(F_0)(Z_{0,i,k})\right)\Psi_{J_{k}} \left( Z_{0,i,k}\right)\\
   & +  \eta_k \left[\underset{\zeta_{1,k}}{\underbrace{\frac{1}{B}\sum_{i=1}^B\left(F_0\left(Z_{0,i,k}\right) - F_0\left(\check{Z}_{i,k}\right)\right)\Psi_{J_{k}} \left(\check{Z}_{i,k}\right)}} + \underset{\zeta_{2,k}}{\underbrace{ \frac{1}{B}\sum_{i=1}^B \varepsilon_{i,k} (\Psi_{J_{k}} \left(\check{Z}_{i,k}\right) - \Psi_{J_{k}}\left(Z_{0,i,k}\right) )}}\right.\\
   & \left.  +  \underset{ \zeta_{3,k}}{\underbrace{\frac{1}{B}\sum_{i=1}^B \left(\left( F_0(\check{Z}_{i,k}) - \mathbb{P}_{J_{k}}(F_0)(\check{Z}_{i,k})\right)\Psi_{J_{k}} \left(\check{Z}_{i,k}\right) -\left( F_0(Z_{0,i,k}) - \mathbb{P}_{J_{k}}(F_0)(Z_{0,i,k})\right)\Psi_{J_{k}} \left( Z_{0,i,k}\right)\right) }}\right].
\end{align*}
Further define  $\widehat{\Gamma}_{J, k} = \frac{1}{B}\sum_{i=1}^B\Psi_J(Z_{0,i,k})\Psi_J(Z_{0,i,k})^{\top}$. Since $\Vert \Psi_J\Vert_{\infty}\leq C_{\Psi}J^{\frac{1}{2}}$ and $\Vert \nabla_z \Psi_J\Vert_{\infty}\leq C_{\Psi}J^{\frac{3}{2}}$, we have that 
\[
\left\Vert \widehat{\Gamma}_{J_{k}, k} - \check{\Gamma}_{J_{k}, k}\right\Vert_F \leq \frac{CJ_{k}^2 }{B}\sum_{i=1}^B\Vert X_{i,k}\Vert_2 \Vert \Delta\check{\theta}_{k-1}\Vert_2,  
\]
\[
\Vert \zeta_{1,k}\Vert_2  \leq \frac{CJ_{k}^{\frac{1}{2}}}{B}\sum_{i=1}^B\left|F_0\left(Z_{0,i,k}\right) - F_0\left(\check{Z}_{i,k} \right)\right|\leq CJ_{k}^{\frac{1}{2}}\frac{1}{B}\sum_{i=1}^B \Vert X_{i,k}\Vert_2 \Vert \Delta\check{\theta}_{k-1} \Vert_2, 
\]
\[
\Vert \zeta_{2,k}\Vert_2 \leq  \frac{CJ_{k}^{\frac{3}{2}}}{B}\sum_{i=1}^B |\varepsilon_{i,k}|\Vert X_{i,k}\Vert_2  \Vert \Delta\check{\theta}_{k-1} \Vert_2. 
\]
Moreover, since
\begin{align*}
\zeta_{3,k} & = \frac{1}{B}\sum_{i=1}^B \int_{0}^1
\left.\left(\nabla_z [F_0(z) - \mathbb{P}_{J_k}(F_0)(z)]\Psi_{J_k}(z)\right.\right.\\
& \ \ \ \ \ \ \ \ \ \ \ \ \ \ \ \left. \left. + [F_0(z) - \mathbb{P}_{J_k}(F_0)(z)]\nabla_z\Psi_{J_k}(z)\right)\right|_{z = Z_{0,i,k}+\tau(\check Z_{i,k} - Z_{0,i,k})}
d\tau\,(\check Z_{i,k}-Z_{0,i,k}),
\end{align*}
and \autoref{condition5} requires that  $\Vert \nabla_z [F_0(z) - \mathbb{P}_J(F_0)(z)]\Vert _{\infty}\leq C_{F_0}J^{ 1 -s}$, 
we have that 
\[
\Vert \zeta_{3,k}\Vert_2 \leq    \frac{ CJ_{k}^{\frac{3}{2}-s }}{B}\sum_{i=1}^B \Vert X_{i,k}\Vert_2 \Vert \Delta\check{\theta}_{k-1} \Vert_2. 
\]
Now we obtain several useful results.  First
note that
\begin{align*}
 &  \mathbb{E}_{k-1}\left[\left(\mathbb I_{J_{k}} - \eta_k \widehat{\Gamma}_{J_{k}, k} + \eta_k \left( \widehat{\Gamma}_{J_{k}, k} - \check{\Gamma}_{J_{k}, k}\right)\right)^{\top}\left(\mathbb I_{J_{k}} - \eta_k \widehat{\Gamma}_{J_{k}, k} + \eta_k \left( \widehat{\Gamma}_{J_{k}, k} - \check{\Gamma}_{J_{k}, k}\right)\right) \right] \\
 & = \mathbb I_{J_{k}} - 2\eta_k \Gamma_{J_{k}} + 2\eta_{k} \mathbb{E}_{k-1} \left( \widehat{\Gamma}_{J_{k}, k} - \check{\Gamma}_{J_{k}, k}\right) + \eta_k^2\mathbb{E}_{k-1} \left(\check{\Gamma}_{J_{k}, k}^2\right).
\end{align*}
We also show that 
\[
\Vert R_{J_1, J_2}^{\top}\Vert_{\mathrm{op}}\leq
\sqrt{\frac{\sup_J\overline\lambda(\Gamma_J)}
           {\inf_J\underline\lambda(\Gamma_J)}}
\] holds for any $J_1<J_2$. This is because, note that $\Psi_{J_1}(Z_0) = R_{J_1, J_2}\Psi_{J_2}(Z_0)$. Then $\Gamma_{J_1} = \mathbb{E}[\Psi_{J_1}(Z_0)\Psi_{J_1}(Z_0)^{\top}] = R_{J_1, J_2}\mathbb{E}[\Psi_{J_2}(Z_0)\Psi_{J_2}(Z_0)^{\top}]R_{J_1, J_2}^{\top} = R_{J_1, J_2}\Gamma_{J_2}R_{J_1, J_2}^{\top}$. For any vector $a$, due  to the lower and upper boundedness of the eigenvalues of $\Gamma_J$ with respect to $J$,  we have that 
\[
\sup_{J}\lambda_{\text{max}}(\Gamma_J)\Vert a\Vert_2^2 \geq  a^{\top}\Gamma_{J_1}a = (R_{J_1, J_2}^{\top} a)^{\top}\Gamma_{J_2}(R_{J_1, J_2}^{\top} a)\geq \inf_{J}\lambda_{\mathrm{min}}(\Gamma_J)\Vert R_{J_1, J_2}^{\top} a \Vert_2^2 
\]
So 
\[
\Vert R_{J_1, J_2}^{\top} a \Vert_2^2 \leq \frac{\sup_{J}\lambda_{\text{max}}(\Gamma_J)}{\inf_{J}\lambda_{\mathrm{min}}(\Gamma_J)}\Vert a\Vert_2^2
\]
This proves the uniform boundedness of $\Vert R_{J_1, J_2}\Vert_{\mathrm{op}}$. 

Given the above results, due to the lower and upper boundedness of the eigenvalues of $\Gamma_J$ with respect to $J$ and $\mathbb{E}\Vert X\Vert_2<\infty$,   for $k$ sufficiently large we have that 
\begin{align*}
& \mathbb{E}_{k-1} \left\Vert \left(\mathbb I_{J_{k}} - \eta_k \check{\Gamma}_{J_{k}, k}\right)R_{J_{k-1},J_k}^{\top}\Delta\widehat{\beta}_{k-1}\right\Vert_2 ^2\\
& \leq  \Vert \mathbb I_{J_{k}} - 2\eta_k  \Gamma_{J_{k}} \Vert_{\mathrm{op}} \Vert R_{J_{k-1},J_k}^{\top}\Delta\widehat{\beta}_{k-1}\Vert_2^2  +  2\eta_{k}  \left(\mathbb{E}_{k-1} \Vert \widehat{\Gamma}_{J_{k}, k} - \check{\Gamma}_{J_{k}, k}\Vert_F\right)\Vert R_{J_{k-1},J_k}^{\top}\Delta\widehat{\beta}_{k-1}\Vert_2^2 \\
& + \eta_k^2\left(\mathbb{E}_{k-1} \Vert \check{\Gamma}_{J_{k}, k}\Vert_F^2\right)   \Vert R_{J_{k-1},J_k}^{\top}\Delta\widehat{\beta}_{k-1}\Vert_2^2  \\
& \leq \left(1 - C \eta_k + C\eta_k J_{k}^2 \Vert \Delta\check{\theta}_{k-1} \Vert_2 + C  \eta_k^2 J_{k}^{2} + C\boldsymbol{1}_k\right) \Vert \Delta\widehat{\beta}_{k-1}\Vert_2^2,
\end{align*}
where $\boldsymbol{1}_k = \boldsymbol{1}(J_k > J_{k-1})$. When
$\alpha_{\theta}>2\alpha_{\dagger}$ and
$\alpha_{\eta}>2\alpha_{\dagger}$, as implied by the strict rate region,
we have that \begin{align*}
& \mathbb{E}_{k-1} \left\Vert \left(\mathbb I_{J_{k}} - \eta_k \check{\Gamma}_{J_{k}, k}\right)R_{J_{k-1},J_k}^{\top}\Delta\widehat{\beta}_{k-1}\right\Vert_2 ^2  \leq \left(1 - C\eta_k + C\boldsymbol{1}_k\right)\Vert \Delta\widehat{\beta}_{k-1}\Vert_2^2, \qquad \text{a.s.}
\end{align*}
Moreover,
\begin{align*}
& \mathbb{E}_{k-1} \left\Vert  \frac{\eta_k}{B}\sum_{i=1}^B \left( F_0(Z_{0,i,k}) - \mathbb{P}_{J_{k}}(F_0)(Z_{0,i,k})\right)\Psi_{J_{k}} \left( Z_{0,i,k}\right)+ \frac{\eta_k}{B}\sum_{i=1}^B\varepsilon_{i,k} \Psi_{J_{k}} \left(Z_{0,i,k}\right) \right \Vert_2^2\leq C\eta_k^2 J_{k} ,
\end{align*}
and, because $\mathbb{E}\Vert X_{i,k}\Vert_2^2<\infty$,   
\[
\mathbb{E}_{k-1}\left\Vert \eta_k (\zeta_{1,k} + \zeta_{2,k}+ \zeta_{3,k})\right\Vert^2_2 \leq C\eta_k^2 J^{3}_{k-1}\Vert \Delta\check{\theta}_{k-1}\Vert _2^2.
\]
Note that  for any constants $a,b$, we have that $|ab|\leq \frac{a^2}{2c} + \frac{c b^2}{2}$ for arbitrary $c>0$. Using this result, we have that  
\begin{align*}
   & \left| \mathbb{E}_{k-1}\left[2\left(\left(\mathbb I_{J_{k}} - \eta_k \check{\Gamma}_{J_{k}, k}\right)R_{J_{k-1},J_k}^{\top}\Delta\widehat{\beta}_{k-1}\right)^{\top} \times \right. \right.\\
    & \ \ \ \ \ \ \ \ \left.\left.\frac{\eta_k}{B}\sum_{i=1}^B \left( F_0(Z_{0,i,k}) - \mathbb{P}_{J_{k}}(F_0)(Z_{0,i,k})\right)\Psi_{J_{k}} \left( Z_{0,i,k}\right)+ \frac{\eta_k}{B}\sum_{i=1}^B\varepsilon_{i,k} \Psi_{J_{k}} \left(Z_{0,i,k}\right) \right]\right|\\
    & \leq c \eta_k \left(1 - C \eta_k  + C\boldsymbol{1}_k\right) \Vert \Delta\widehat{\beta}_{k-1}\Vert^2_2  + c^{-1}C\eta_k J^{ -2s + 1 }_{k},
\end{align*}
where $c$ is a positive constant and can be arbitrarily chosen. We also have that, because, $\mathbb{E}\Vert X_{i,k}\Vert_2^2<\infty$,   
\begin{align*}
   & \left| \mathbb{E}_{k-1}\left[2\left(\left(\mathbb I_{J_{k}} - \eta_k \check{\Gamma}_{J_{k}, k}\right)R_{J_{k-1},J_k}^{\top}\Delta\widehat{\beta}_{k-1}\right)^{\top} \eta_k \left(\zeta_{1,k} + \zeta_{2,k}+ \zeta_{3,k}\right)\right]\right|\\
    & \leq c \eta_k \left(1 - C  \eta_k + C \boldsymbol{1}_k\right)  \Vert \Delta\widehat{\beta}_{k-1} \Vert_2^2  + c^{-1}C\eta_k J^{3}_{k}\Vert \Delta\check{\theta}_{k-1}\Vert_2^2  ,
\end{align*}
and 
\begin{align*}
& \left|\mathbb{E}_{k-1}\left[ \left(\frac{\eta_k}{B}\sum_{i=1}^B \left( F_0({Z}_{0,i,k}) - \mathbb{P}_{J_{k}}(F_0)(Z_{0,i,k})\right)\Psi_{J_{k}} \left( Z_{0,i,k}\right)+ \frac{\eta_k}{B}\sum_{i=1}^B\varepsilon_{i,k} \Psi_{J_{k}} \left(Z_{0,i,k}\right)\right)^{\top}\right.\right.\\
& \times \left.\left. \eta_k \left(\zeta_{1,k} + \zeta_{2,k}+\zeta_{3,k}\right)\right]\right| 
 \leq C\eta_k^2 \left(J_{k-1}  + J^{3}_{k-1}\Vert \Delta\check{\theta}_{k-1}\Vert^2 \right).
\end{align*}
Since $\eta_k\rightarrow 0$ according to our choice, then  we can choose $c$ sufficiently small such that for $k$ large, $(1+ 2c\eta_k )\left(1 - C  \eta_k \right)
   \leq 1-C'\eta_k$,
where $0<c<C/2$ is fixed and $C'>0$.
Finally, we show that  
 \[\Vert \beta_{J_k, 0} -R_{J_{k-1},J_k}^{\top}\beta_{J_{k-1},0}\Vert_2 \leq C\boldsymbol{1}_k J_k^{- s},\]where $C$ does not depend on $J_k$. 
To show this, note that 
\begin{align*}
& \left|\Psi_{J_k}(Z_0)^{\top}(\beta_{J_k,0} - R_{J_{k-1},J_k}^{\top}\beta_{J_{k-1},0})\right| \\
& \leq  \left|\Psi_{J_k}(Z_0)^{\top}\beta_{J_k, 0} - F_0(Z_0)\right| + \left|F_0(Z_0) - \Psi_{J_{k-1}}(Z_0)^{\top}\beta_{J_{k-1}, 0}\right|\leq CJ_k^{-s}\end{align*} according to \autoref{condition5} and the fact that $\sup_kJ_k/J_{k-1}<\infty$. Moreover, using \autoref{condition7}, 
\begin{align*}
 \mathbb{E}\vert\Psi_{J_k}(Z_0)^{\top}(\beta_{J_k,0} - R_{J_{k-1},J_k}^{\top}\beta_{J_{k-1},0})\vert^2 &= (\beta_{J_k,0} - R_{J_{k-1},J_k}^{\top}\beta_{J_{k-1},0})^{\top}\Gamma_{J_k}(\beta_{J_k,0} - R_{J_{k-1},J_k}^{\top}\beta_{J_{k-1},0})\\
&\geq C\Vert \beta_{J_k,0} - R_{J_{k-1},J_k}^{\top}\beta_{J_{k-1},0} \Vert _2^2.
\end{align*}
This shows the desired result.

So together we have that 
\begin{align*}
   \mathbb E_{k-1}  \Vert \Delta\widehat{\beta}_{k}\Vert^2_2 & \leq    \left(1  - C\eta_k + C\boldsymbol{1}_k\right)  \Vert \Delta\widehat{\beta}_{k - 1}  \Vert_2^2  + C \left( \eta_kJ_k^{1-2s} + \eta_k J_k^3 \Vert \Delta\check\theta_{k-1}\Vert_2^2 + \eta_k^2 J_k  + \boldsymbol{1}_k J_k^{-2s}\right).
\end{align*}
Now define
\[
b(k)=k^{-\alpha_{\eta} -(2s-1)\alpha_{\dagger}}
+k^{-\alpha_{\eta}+3\alpha_{\dagger}-2\alpha_{\theta}}
 \log^{2c_\theta}(k)
+k^{-2\alpha_{\eta}+\alpha_{\dagger}}.
\]
The logarithm in the middle term is required by
 \autoref{new_theta_condition_for_sieve}.  The preceding inequality
then becomes
\begin{align*}
   \mathbb E_{k-1}  \Vert \Delta\widehat{\beta}_{k}\Vert^2_2 & \leq    \left(1  - Ck^{-\alpha_{\eta}} + C\boldsymbol{1}_k\right)  \Vert \Delta\widehat{\beta}_{k - 1}  \Vert_2^2   + C \left(  b(k) + \boldsymbol{1}_k k^{-2s\alpha_{\dagger}}\right), \qquad \text{a.s}
\end{align*}
Now using the above dynamics, we can analyze the behavior of $\Delta\widehat{\beta}_k$. Note that Theorem 1 of \citet{robbins1971convergence} can not be directly applied here due to non-contraction of $ \mathbb E_{k-1}  \Vert \Delta\widehat{\beta}_{k}\Vert^2_2$  when sieve dimension changes. To deal with this issue, for any $m\geq J_0$, we define $\mathcal{J}(m) = \min\{k: J_k \geq m\}$, which is the starting period of sieve dimension $m$. Then we look at dynamics between $\mathbb{E}_{\mathcal{J}(m-1)-1}\Vert \Delta\widehat\beta_{\mathcal{J}(m)-1}\Vert_2^2$ and $\Vert \Delta\widehat\beta_{\mathcal{J}(m-1)-1}\Vert_2^2$. Obviously, 
\begin{align*}
    \mathbb{E}_{\mathcal{J}(m-1)}\Vert \Delta\widehat\beta_{\mathcal{J}(m)-1}\Vert_2^2 & \leq \left(1  - C(\mathcal{J}(m)-1)^{-\alpha_{\eta}}\right)  \mathbb{E}_{\mathcal{J}(m-1)}\Vert \Delta\widehat{\beta}_{\mathcal{J}(m) - 2}  \Vert_2^2 + Cb(\mathcal{J}(m)-1) \qquad \text{a.s}\\
    & \leq \left(1  - C(\mathcal{J}(m)-1)^{-\alpha_{\eta}}\right) \left(1  - C(\mathcal{J}(m)-2)^{-\alpha_{\eta}}\right) \mathbb{E}_{\mathcal{J}(m-1)}\Vert \Delta\widehat{\beta}_{\mathcal{J}(m) - 3}  \Vert_2^2\\
    & + C\left(b(\mathcal{J}(m)-1) + \left(1  - C(\mathcal{J}(m)-1)^{-\alpha_{\eta}}\right)b(\mathcal{J}(m)-2)\right), \qquad \text{a.s}\\
    &  \leq \left[\prod_{j=\mathcal{J}(m-1)+1}^{\mathcal{J}(m)-1}(1- Cj^{-\alpha_{\eta}})\right]\Vert \Delta\widehat\beta_{\mathcal{J}(m-1)}\Vert_2^2 \\
& \quad + C \left[\prod_{j=\mathcal{J}(m-1)+1}^{\mathcal{J}(m)-1}(1- Cj^{-\alpha_{\eta}})\right]\sum_{j=\mathcal{J}(m-1)+1}^{\mathcal{J}(m)-1}b(j)\left[\prod_{l=\mathcal{J}(m-1)+1}^{j}(1- Cl^{-\alpha_{\eta}})\right]^{-1}.
\end{align*}
Using \autoref{lemma4}, when $\frac{1}{2}<\alpha_{\eta}<1$,  we have that for each $m$,  \[  \left[\prod_{j=\mathcal{J}(m-1) + 1}^{\mathcal{J}(m)-1}(1- Cj^{-\alpha_{\eta}})\right]\asymp  \exp\left(-\frac{C}{1-\alpha_{\eta}}\left(\mathcal{J}(m)^{1-\alpha_{\eta}} - \mathcal{J}(m-1)^{1-\alpha_{\eta}}\right)\right).\] 
We can also show that
\begin{align*}
& \sum_{j=\mathcal{J}(m-1)+1}^{\mathcal{J}(m) - 1}b(j)\left[\prod_{l=\mathcal{J}(m-1)+1}^{j}(1- Cl^{-\alpha_{\eta}})\right]^{-1}\\
& \leq C\,\mathcal{J}(m)^{\alpha_{\eta}}\,b(\mathcal{J}(m))
\exp\left(\tfrac{C}{1-\alpha_{\eta}}\left(\mathcal{J}(m)^{1-\alpha_{\eta}}
-\mathcal{J}(m-1)^{1-\alpha_{\eta}}\right)\right).
\end{align*}
To see this, write $Q_j = \prod_{l=1}^{j}(1-Cl^{-\alpha_{\eta}})$, so that the summand equals
$b(j)Q_{\mathcal{J}(m-1)}\,Q_{j}^{-1}$.
Extending the (positive) summation range and applying
\autoref{lemma4}(ii) and then \autoref{lemma4}(i),
\begin{align*}
\sum_{j=\mathcal{J}(m-1)+1}^{\mathcal{J}(m)-1} b(j)\,Q_{\mathcal{J}(m-1)}\,Q_{j}^{-1}
&\leq Q_{\mathcal{J}(m-1)}\sum_{j=1}^{\mathcal{J}(m)-1} b(j)\,Q_{j}^{-1}\\
&\leq C\, Q_{\mathcal{J}(m-1)}\cdot
\mathcal{J}(m)^{\alpha_{\eta}}\,b(\mathcal{J}(m))
\exp\left(\tfrac{C}{1-\alpha_{\eta}}\mathcal{J}(m)^{1-\alpha_{\eta}}\right)\\
&\leq C\,\mathcal{J}(m)^{\alpha_{\eta}}\,b(\mathcal{J}(m))
\exp\left(\tfrac{C}{1-\alpha_{\eta}}\left(\mathcal{J}(m)^{1-\alpha_{\eta}}
-\mathcal{J}(m-1)^{1-\alpha_{\eta}}\right)\right),
\end{align*}
where the last step uses \autoref{lemma4}(i) to bound
$Q_{\mathcal{J}(m-1)}\leq C\exp(-\tfrac{C}{1-\alpha_{\eta}}\mathcal{J}(m-1)^{1-\alpha_{\eta}})$.
By the mean value theorem and $\mathcal{J}(m)\asymp m^{1/\alpha_{\dagger}}$, when $1-\alpha_{\eta} - \alpha_{\dagger}>0$, we have that 
\[
\mathcal{J}(m)^{1-\alpha_{\eta}} - \mathcal{J}(m-1)^{1-\alpha_{\eta}}
\asymp \mathcal{J}(m)^{-\alpha_{\eta}}\left(\mathcal{J}(m)-\mathcal{J}(m-1)\right)
\asymp m^{-\frac{\alpha_{\eta}}{\alpha_{\dagger}}}\cdot m^{\frac{1-\alpha_{\dagger}}{\alpha_{\dagger}}}
= m^{\frac{1-\alpha_{\eta}-\alpha_{\dagger}}{\alpha_{\dagger}}},
\]
which yields the upper bound. This leads to that 
\begin{align*}
\mathbb{E}_{\mathcal{J}(m-1)}\Vert \Delta\widehat\beta_{\mathcal{J}(m)-1}\Vert_2^2 & \leq C\exp\left(-C_{\beta}m^{\frac{1-\alpha_{\eta}-\alpha_{\dagger}}{\alpha_{\dagger}}}\right) \Vert \Delta\widehat\beta_{\mathcal{J}(m-1)}\Vert_2^2 \\
& + C\left(m^{1-2s}
+m^{3-\frac{2\alpha_{\theta}}{\alpha_{\dagger}}}
 \log^{2c_\theta}(m)
+m^{1-\frac{\alpha_{\eta}}{\alpha_{\dagger}}}\right), \ \  \text{a.s.}
\end{align*}
 and because the changing pseudo-true sieve coefficients only has impacts of magnitude $m^{-2s}$ at sieve dimension $m$, which is effectively dominated by the $m^{1-2s}$ term,  we eventually have that 
 \begin{align*}
\mathbb{E}_{\mathcal{J}(m-1)-1}\Vert \Delta\widehat\beta_{\mathcal{J}(m)-1}\Vert_2^2 & \leq C\exp\left(-C_{\beta}m^{\frac{1-\alpha_{\eta}-\alpha_{\dagger}}{\alpha_{\dagger}}}\right) \Vert \Delta\widehat\beta_{\mathcal{J}(m-1)-1}\Vert_2^2 \\
& + C\left(m^{1-2s}
+m^{3-\frac{2\alpha_{\theta}}{\alpha_{\dagger}}}
 \log^{2c_\theta}(m)
+m^{1-\frac{\alpha_{\eta}}{\alpha_{\dagger}}}\right), \qquad  \text{a.s.}
\end{align*}
Define $\underline{\alpha}_F = \min\{2s-1, 2\alpha_{\theta}/\alpha_{\dagger} - 3, \alpha_{\eta}/\alpha_{\dagger}-1\}>0$, then 
\[
\mathbb{E}_{\mathcal{J}(m-1)-1}\Vert \Delta\widehat\beta_{\mathcal{J}(m)-1}\Vert_2^2\leq C\exp\left(-C_{\beta}m^{\frac{1-\alpha_{\eta}-\alpha_{\dagger}}{\alpha_{\dagger}}}\right) \Vert \Delta\widehat\beta_{\mathcal{J}(m-1)-1}\Vert_2^2 +Cm^{-\underline{\alpha}_F}\log^{2c_{\theta}}(m), \qquad \text{a.s.}
\]
Fix
$r>2c_\theta+1$.  Then
\begin{align*}
    & \mathbb{E}_{\mathcal{J}(m-1)-1}\left[m^{-1+\underline{\alpha}_F}\log^{-r}(m)\Vert \Delta\widehat\beta_{\mathcal{J}(m)-1}\Vert_2^2\right] \\
& \leq \frac{Cm^{-1+\underline{\alpha}_F}\log^{-r}(m)}{(m-1)^{-1+\underline{\alpha}_F}\log^{-r}(m-1)}\exp\left(-C_{\beta}m^{\frac{1-\alpha_{\eta}-\alpha_{\dagger}}{\alpha_{\dagger}}}\right) \left[(m-1)^{-1+\underline{\alpha}_F}\log^{-r}(m-1)\Vert \Delta\widehat\beta_{\mathcal{J}(m-1)-1}\Vert_2^2\right]\\
& + Cm^{-1}\log^{-r+2c_\theta}(m), \ \  \text{a.s.}\\
    & \leq C\exp\left(-C_{\beta}m^{\frac{1-\alpha_{\eta}-\alpha_{\dagger}}{\alpha_{\dagger}}}\right) \left[(m-1)^{-1+\underline{\alpha}_F}\log^{-r}(m-1)\Vert \Delta\widehat\beta_{\mathcal{J}(m-1)-1}\Vert_2^2\right] + Cm^{-1}\log^{-r+2c_\theta}(m), \ \  \text{a.s.}
\end{align*}
Since $r>2c_\theta+1$ makes
$\sum_{m\geq2}m^{-1}\log^{-r+2c_\theta}(m)<\infty$, Theorem~1 of
\citet{robbins1971convergence} shows that
$m^{-1+\underline{\alpha}_F}\log^{-r}(m)\Vert
\Delta\widehat\beta_{\mathcal{J}(m)-1}\Vert_2^2$ converges almost surely to a
finite random variable.  Thus
\[
\Vert \Delta\widehat\beta_{\mathcal{J}(m)-1}\Vert_2^2 = O\left(m^{1 - \underline{\alpha}_F}\log^{r}(m)\right), \qquad \text{a.s.}
\] 
Then we have that 
\begin{align*}
   \mathbb E_{k-1}  \Vert \Delta\widehat{\beta}_{k}\Vert^2_2 & \leq    \left(1  - Ck^{-\alpha_{\eta}} \right)   \Vert \Delta\widehat{\beta}_{k - 1}  \Vert_2^2 + + C k^{-\alpha_{\eta} - \alpha_{\dagger}\underline{\alpha}_{F}}\log^{2c_{\theta}}(k)\\
   & + C\boldsymbol{1}_k k^{\alpha_{\dagger}(1 - \underline{\alpha}_F)} \log^{r}(k)  + C\boldsymbol{1}_kk^{-2\alpha_{\dagger}s}, \qquad \text{a.s}
\end{align*}
where recall that $r>2c_{\theta}+1$. Note that   $\frac{k^{\alpha_{\eta} + \alpha_{\dagger}\underline{\alpha}_{F} - 1}\log^{-r}(k)}{(k-1)^{\alpha_{\eta} + \alpha_{\dagger}\underline{\alpha}_{F} - 1}\log^{-r}(k-1)} \leq 1 + Ck^{-1}$. Then
\begin{align*}
  & \mathbb E_{k-1} \left[k^{\alpha_{\eta} + \alpha_{\dagger}\underline{\alpha}_{F} - 1}\log^{-r}(k) \Vert \Delta\widehat{\beta}_{k}\Vert^2_2\right]\\
  &\leq    \left(1  - Ck^{-\alpha_{\eta}}\right)\left(1+Ck^{-1}\right)  \left[(k-1)^{\alpha_{\eta} + \alpha_{\dagger}\underline{\alpha}_{F} - 1}\log^{-r}(k-1)\Vert \Delta\widehat{\beta}_{k - 1}  \Vert_2^2\right]\\
   &  + C k^{-1}\log^{-r+2c_{\theta}}(k) + C\boldsymbol{1}_k k^{\alpha_{\eta} +\alpha_{\dagger} -1 }  + C\boldsymbol{1}_k k^{-2\alpha_{\dagger}s +\alpha_{\eta} +\alpha_{\dagger}\underline{\alpha}_F-1}\log^{-r}(k) , \qquad \text{a.s.}
\end{align*}
$\sum_{k=1}^{\infty}k^{-1}\log^{-r+2c_{\theta}}(k)<\infty$ for any $r>1+2c_{\theta}$. Moreover, $\sum_k \boldsymbol{1}_k k^{\alpha_{\eta}+\alpha_{\dagger}-1} = \sum_m m^{\frac{\alpha_{\eta}+\alpha_{\dagger}-1}{\alpha_{\dagger}}}<\infty$ because $(\alpha_{\eta}+\alpha_{\dagger}-1)<-2\alpha_{\dagger}$. Also, since $\underline{\alpha}_F\leq 2s-1$,  $-2\alpha_{\dagger}s +\alpha_{\eta} + \alpha_{\dagger}\underline{\alpha}_F  - 1\leq \alpha_{\eta} -\alpha_{\dagger}-1$, so   $\sum_k \boldsymbol{1}_k k^{-2\alpha_{\dagger}s +\alpha_{\eta} +\alpha_{\dagger}\underline{\alpha}_F-1}\log^{-r}(k)<\infty$.  This, by the  Theorem 1 of \citet{robbins1971convergence},  shows that $k^{\alpha_{\eta} + \alpha_{\dagger}\underline{\alpha}_{F} - 1}\log^{-r}(k) \Vert \Delta\widehat{\beta}_{k}\Vert^2_2$ converges almost surely to a finite random variable   and we prove that 
\[
\Vert \Delta\widehat\beta_k \Vert_2 = O\left(k^{\frac{1-\alpha_{\eta} - \alpha_{\dagger}\underline{\alpha}_{F}}{2}}\log^{\frac{r}{2}}(k)\right), \qquad \text{a.s}
\]

\subsubsection{ Refine the Convergence Rate of $\Vert \Delta\widehat\beta_k\Vert_2$}\label{C1.part2}

\vspace{0.2cm}
\textit{Summary: This part shows that}
\[
\Vert \Delta\widehat{\beta}_N\Vert_2  = O\left( N^{-\frac{\alpha_{\eta} - \alpha_{\dagger}}{2}}\log^{\frac{1}{2}}(N)\right), \qquad  \text{a.s.}
\]

 \vspace{0.2cm}
Recall that in the previous step,  we have derived the following dynamics of $\Delta\widehat\beta_{k}$:  \begin{align*}
    \Delta\widehat\beta_k  & = \left(\mathbb{I}_{J_k} - \eta_k \check{\Gamma}_{J_k, k}\right)R^{\top}_{J_{k-1},J_k}\Delta\widehat\beta_{k-1} + \left(\mathbb{I}_{J_k} - \eta_k \check{\Gamma}_{J_k, k}\right)\left(R_{J_{k-1},J_k}^{\top}\beta_{J_{k-1}, 0} - \beta_{J_k, 0}\right)\\
    &  + \frac{\eta_k}{B}\sum_{i=1}^B\varepsilon_{i,k} \Psi_{J_{k}} \left(Z_{0,i,k}\right) + \frac{\eta_k}{B}\sum_{i=1}^B \left( F_0({Z}_{0,i,k}) - \mathbb{P}_{J_{k}}(F_0)(Z_{0,i,k})\right)\Psi_{J_{k}} \left( Z_{0,i,k}\right)\\
   & +  \eta_k \left[\underset{\zeta_{1,k}}{\underbrace{\frac{1}{B}\sum_{i=1}^B\left(F_0\left(Z_{0,i,k}\right) - F_0\left(\check{Z}_{i,k}\right)\right)\Psi_{J_{k}} \left(\check{Z}_{i,k}\right)}} + \underset{\zeta_{2,k}}{\underbrace{ \frac{1}{B}\sum_{i=1}^B \varepsilon_{i,k} (\Psi_{J_{k}} \left(\check{Z}_{i,k}\right) - \Psi_{J_{k}}\left(Z_{0,i,k}\right) )}}\right.\\
   & \left.  +  \underset{ \zeta_{3,k}}{\underbrace{\frac{1}{B}\sum_{i=1}^B \left(\left( F_0(\check{Z}_{i,k}) - \mathbb{P}_{J_{k}}(F_0)(\check{Z}_{i,k})\right)\Psi_{J_{k}} \left(\check{Z}_{i,k}\right) -\left( F_0(Z_{0,i,k}) - \mathbb{P}_{J_{k}}(F_0)(Z_{0,i,k})\right)\Psi_{J_{k}} \left( Z_{0,i,k}\right)\right) }}\right].
\end{align*}
For notational simplicity, further denote 
\[
A_ k = \left(\mathbb{I}_{J_k} - \eta_k  {\Gamma}_{J_k}\right)R^{\top}_{J_{k-1},J_k}
\]
\[
\zeta_{4,k} =   \frac{1}{B}\sum_{i=1}^B\varepsilon_{i,k} \Psi_{J_{k}} \left(Z_{0,i,k}\right) + \frac{1}{B}\sum_{i=1}^B \left( F_0({Z}_{0,i,k}) - \mathbb{P}_{J_{k}}(F_0)(Z_{0,i,k})\right)\Psi_{J_{k}} \left( Z_{0,i,k}\right),
\]
\[
\zeta_{5,k} = \left(\Gamma_{J_k} - \check \Gamma_{J_k, k}\right)R_{J_{k-1}, J_k}^{\top}\Delta\widehat\beta_{k-1},
\]
and 
\[
\zeta_{6,k} = \left(\mathbb{I}_{J_k} - \eta_k \check{\Gamma}_{J_k, k}\right)\left(R_{J_{k-1},J_k}^{\top}\beta_{J_{k-1}, 0} - \beta_{J_k, 0}\right).
\]
We then have that 
\[
\Delta\widehat\beta_k  = A_k \Delta\widehat\beta_{k-1} + \eta_k\sum_{l=1}^5 \zeta_{l,k} + \zeta_{6,k}. 
\]
This leads to \begin{align*}
\Delta\widehat\beta_N
&=
\left(\prod_{k=1}^N A_k\right)\Delta\widehat\beta_0
+
\sum_{k=1}^N
\left(\prod_{j=k+1}^N A_j\right)
\left(
\eta_k\sum_{l=1}^5 \zeta_{l,k}+\zeta_{6,k}
\right),
\end{align*}
where $
\prod_{k=1}^N A_k = A_N A_{N-1}\cdots A_1$, 
$\prod_{j=k+1}^N A_j = A_N A_{N-1}\cdots A_{k+1}
$, and $\prod_{j=N+1}^N A_j = \mathbb I_{J_N}.$
Now we analyze the a.s. order of the above terms one by one. 

For the first term $\left(\prod_{j=1}^N A_j\right)\Delta\widehat\beta_0$, we have that $\Vert A_k \Vert_{\mathrm{op}}\leq 1- C\eta_k$ when $J_{k} = J_{k-1}$ and $\Vert A_k \Vert_{\mathrm{op}}\leq C$ when $J_k >J_{k-1}$. So 
\[
\left\Vert \left(\prod_{k=1}^N A_k\right)\Delta\widehat\beta_0 \right\Vert_{2}\leq \exp\left(-C\sum_{k=1}^N \eta_{k} + CJ_N\right)\Vert \Delta\widehat\beta_0\Vert _2 \leq \exp\left(-CN^{1-\alpha_{\eta}}\right)\Vert \Delta\widehat\beta_0\Vert _2
\]
when $1-\alpha_{\eta}>\alpha_{\dagger}$, which decreases exponentially and is thus negligible. In the following we only need to focus on $\zeta$-terms.   

For $\sum_{k=1}^N
\left(\prod_{j=k+1}^N A_j\right)
\eta_k\zeta_{1,k},
$ again using \autoref{lemma4}, when $1-\alpha_{\eta}>\alpha_{\dagger}$, we have that 
\begin{align*}
& \left\Vert \sum_{k=1}^N
\left(\prod_{j=k+1}^N A_j\right)
\eta_k\zeta_{1,k} \right\Vert_2\\
& \leq C\sum_{k=1}^N \exp\left( - C\sum_{j=k+1}^N\eta_j + C\left(J_N - J_k\right)\right)\eta_k J_{k}^{\frac{1}{2}}\frac{1}{B}\sum_{i=1}^B \Vert X_{i,k}\Vert_2 \Vert \Delta\check{\theta}_{k-1} \Vert_2 \\
& \leq C\exp\left(-CN^{1-\alpha_{\eta}}\right)\sum_{k=1}^N\exp\left(Ck^{1-\alpha_{\eta}}\right)k^{\frac{\alpha_{\dagger}}{2}-\alpha_{\eta}- \alpha_{\theta}}\log^{c_\theta}(k)\frac{1}{B}\sum_{i=1}^B \Vert X_{i,k}\Vert_2, \qquad \text{a.s}\\
& = O\left( N^{\frac{\alpha_{\dagger}}{2}-\alpha_{\theta}}\log^{c_\theta}(N)\right) \qquad  \text{a.s.}
\end{align*}
The last line order is confirmed by \autoref{as converence of sums}.

To show the rate of $\left\Vert \sum_{k=1}^N
\left(\prod_{j=k+1}^N A_j\right)
\eta_k\zeta_{2,k} \right\Vert_2 $, we denote   $Q_N =\sum_{k=1}^N
\left(\prod_{j=k+1}^N A_j\right)
\eta_k\zeta_{2,k}$.  Then  
\[
\mathbb E_{N-1}\Vert Q_{N}\Vert_2^2
\leq \left( 1- CN^{-\alpha_{\eta}} + C\boldsymbol{1}_k\right)
\Vert Q_{N-1}\Vert_2^2
+ CN^{-2\alpha_{\eta}+3\alpha_{\dagger} - 2\alpha_{\theta}}
  \log^{2c_\theta}(N), \qquad \text{a.s.}
  \]
Here the cross term vanishes conditionally because $\mathbb{E}_{k-1}\varepsilon
_{i,k}=0$.  The innovation exponent is
$2\alpha_\eta+2\alpha_\theta-3\alpha_\dagger$ because $\mathbb{E}\Vert X\Vert_2^2<\infty$.  Applying the proof of \autoref{C1.part1}, we easily have that for any $r>1+2c_{\theta}$, 
\[
\left\Vert \sum_{k=1}^N
\left(\prod_{j=k+1}^N A_j\right)
\eta_k\zeta_{2,k} \right\Vert_2
=O \!\left(
 N^{-\{\alpha_\theta+\alpha_\eta-(3\alpha_\dagger+1)/2\}}\log^{\frac{r}{2}}(N)
\right), \qquad \text{a.s.}
\]
Following the proof of $\zeta_{1,k}$ term, we can also verify that 
\begin{align*}
\left\Vert \sum_{k=1}^N
\left(\prod_{j=k+1}^N A_j\right)
\eta_k\zeta_{3,k} \right\Vert_2   \leq
CN^{(\frac{3}{2}-s)\alpha_{\dagger} - \alpha_{\theta}}
\log^{c_\theta}(N), \qquad \text{a.s}
\end{align*}

We next control  $\zeta_{5,k}$,  postponing the discussion of $\zeta_{4,k}$ to the end.    We show that this term does not determine the convergence rate of $\Vert \Delta\widehat\beta_k\Vert_2$. To show this,   we will show that for any $\alpha_{\beta}>0$, if $\Vert \Delta\widehat\beta_{k}\Vert_2  = O_{\text{a.s.}}(k^{-\alpha_{\beta}}\text{PolyLog})$, then \[\left\Vert\sum_{k=1}^N
\left(\prod_{j=k+1}^N A_j\right)
\eta_k\zeta_{5,k}\right\Vert_2 = O\left(N^{-\alpha_{\beta}-\delta}\text{PolyLog}\right), \qquad \text{a.s.}
\] for some $\delta>0$. To show this, first let \[Q_N = \sum_{k=1}^N
\left(\prod_{j=k+1}^N A_j\right)
\eta_k \left(\Gamma_{J_k} - \widehat  \Gamma_{J_k, k}\right)R_{J_{k-1}, J_k}^{\top}\Delta\widehat\beta_{k-1}\boldsymbol{1}\left(\Vert \Delta\widehat\beta_{k-1}\Vert_2\leq k^{-\alpha_{\beta}}\text{PolyLog}\right).\] 
Then \[\text{Var}(Q_N)\leq  CN^{-\alpha_{\eta} + 2\alpha_{\dagger} - 2\alpha_{\beta}}\text{PolyLog}.\]For $N$ sufficiently large, the largest increment of the above partial sum is bounded by $CN^{-\alpha_{\eta} + \alpha_{\dagger} -\alpha_{\beta} }\text{PolyLog}$. By \citet{freedman1975tail,troppfreedmans}, we have that 
\[
\Vert Q_N\Vert_2 = O\left(N^{\frac{-\alpha_{\eta}+2\alpha_{\dagger}}{2}-\alpha_{\beta}}\text{PolyLog} + N^{-\alpha_{\eta}+\alpha_{\dagger}-\alpha_{\beta}}\text{PolyLog}\right), \qquad \text{a.s.} 
\]
This shows that $\Vert Q_N\Vert_2 = O_{a.s.}(N^{\frac{-\alpha_{\eta}+2\alpha_{\dagger}}{2}-\alpha_{\beta}}\text{PolyLog})$. And by appropriately choosing the poly-log terms, we can guarantee that the difference between $Q_N$ and \[\sum_{k=1}^N
\left(\prod_{j=k+1}^N A_j\right)
\eta_k \left(\Gamma_{J_k} - \widehat  \Gamma_{J_k, k}\right)R_{J_{k-1}, J_k}^{\top}\Delta\widehat\beta_{k-1} \] decays almost surely exponentially quickly. This shows that \[
\left\Vert \sum_{k=1}^N
\left(\prod_{j=k+1}^N A_j\right)
\eta_k \left(\Gamma_{J_k} - \widehat  \Gamma_{J_k, k}\right)R_{J_{k-1}, J_k}^{\top}\Delta\widehat\beta_{k-1}\right\Vert_2 = O\left(N^{-\frac{\alpha_{\eta}-2\alpha_{\dagger}}{2}-\alpha_{\beta}}\text{PolyLog}\right), \qquad \text{a.s.}
\]
On the other side, let $Q_N = \sum_{k=1}^N
\left(\prod_{j=k+1}^N A_j\right)
\eta_k \left(\check\Gamma_{J_k,k} - \widehat  \Gamma_{J_k, k}\right)R_{J_{k-1}, J_k}^{\top}\Delta\widehat\beta_{k-1}$, it is direct to verify that 
\begin{align*}
    \Vert Q_N\Vert_2  \leq (1- CN^{-\alpha_{\eta}} + C\boldsymbol{1}_N)\Vert Q_{N-1}\Vert_2 + C\eta_N J_N^2 \Vert \Delta\check\theta_{N-1}\Vert_2\Vert \Delta\widehat\beta_{N-1}\Vert_2  \sum_{i=1}^B \Vert X_{i,k}\Vert_2
\end{align*}  By \autoref{as converence of sums}, we have 
\[
\Vert Q_N\Vert_2  =O\left(N^{2\alpha_{\dagger}-\alpha_{\theta}-\alpha_{\beta}}\text{PolyLog}\right), \qquad \text{a.s.}  
\]
Define $\delta = \min\{\frac{\alpha_{\eta}-2\alpha_{\dagger}}{2},  \alpha_{\theta} -2\alpha_{\dagger}\}>0$ under \autoref{rate}, we obviously have that  
\[\left\Vert\sum_{k=1}^N
\left(\prod_{j=k+1}^N A_j\right)
\eta_k\zeta_{5,k}\right\Vert_2 = O\left(N^{-\alpha_{\beta}-\delta}\text{PolyLog}\right), \qquad \text{a.s.}
\]
This shows the result. 

For $\sum_{k=1}^N
\left(\prod_{j=k+1}^N A_j\right)
\zeta_{6,k}$, we obviously have that 
\[
\left\Vert\sum_{k=1}^N
\left(\prod_{j=k+1}^N A_j\right)
\zeta_{6,k}\right\Vert_2 \leq C\exp\left(-CN^{1-\alpha_{\eta}}\right)\sum_{k=1}^N\exp\left( Ck^{1-\alpha_{\eta}}\right)\boldsymbol{1}_k J_k^{-s}\leq CN^{-s\alpha_{\dagger}}, 
\]
which is effectively the bias of sieve approximation.

We finally look at $\zeta_{4,k}$. We first look at the order of   the second term  \[ \frac{\eta_k}{B}\sum_{i=1}^B \left( F_0({Z}_{0,i,k}) - \mathbb{P}_{J_{k}}(F_0)(Z_{0,i,k})\right)\Psi_{J_{k}} \left( Z_{0,i,k}\right)\]  Denote the weighted sum as $Q_N$, then 
\[
\mathbb{E}_{N-1}\Vert Q_N\Vert_{2}^2
\leq (1-C\eta_N+C\boldsymbol{1}_N)\Vert Q_{N-1}\Vert_{2}^2
+ C\eta_N^2 J_N^{1 - 2s}\leq (1-CN^{-\alpha_{\eta}}+C\boldsymbol{1}_N)\Vert Q_{N-1}\Vert_{2}^2
+ CN^{-2\alpha_{\eta}+(1-2s)\alpha_{\dagger}}.
\]
Here $ 2\alpha_\eta+(2s-1)\alpha_\dagger>1$ under
\autoref{rate}.  Thus  
\[
 \Vert Q_N\Vert_2
 =O \!\left(
 N^{-\alpha_\eta - (s-\frac12)\alpha_\dagger+ \frac12}\text{PolyLog}\right), \qquad \text{a.s.}
\]

It remains to analyze the order of the following sum
\[
\sum_{k=1}^N\left(\prod_{j=k+1}^N A_j\right)
\frac{\eta_k}{B}\sum_{i=1}^B \varepsilon_{i,k}\Psi_{J_k}(Z_{0,i,k}) = \sum_{k=1}^N
\frac{1}{B}\sum_{i=1}^B \varepsilon_{i,k}\eta_k\left(\prod_{j=k+1}^N A_j\right)\Psi_{J_k}(Z_{0,i,k}).
\]
which is of the main focus here in this proof section. We will adopt the proof idea following \citet{chen2015optimal}. Note that in the following proofs, since $B$ is a fixed batch size, we sometimes absorb it into constant without explicit notification. 

First of all, 
it's direct to verify that when $1-\alpha_{\eta}>\alpha_{\dagger}$,  
\[
\left\Vert\eta_k\left(\prod_{j=k+1}^N A_j\right)\Psi_{J_k}(Z_{0,i,k}) \right\Vert_2 \leq C J_k^{\frac{1}{2}}\eta_k \exp\left(-C(N^{1-\alpha_{\eta}} - k^{1-\alpha_{\eta}})\right) 
\]
so 
\[
\max_{1\leq i\leq B}\max_{1\leq k\leq N}\left\Vert\eta_k\left(\prod_{j=k+1}^N A_j\right)\Psi_{J_k}(Z_{0,i,k}) \right\Vert_2 \leq CN^{-\alpha_{\eta} + \frac{\alpha_{\dagger}}{2}}.\] To analyze the behavior of $\sum_{k=1}^N
\frac{1}{B}\sum_{i=1}^B \varepsilon_{i,k}\eta_k\left(\prod_{j=k+1}^N A_j\right)\Psi_{J_k}(Z_{0,i,k})$, let $M_N$  be a positive number to be chosen later, and $\boldsymbol{1}_{i,k}^{M_N}$ as the indicator of whether $|\varepsilon_{i,k}| \leq M_N$ occurs and $\boldsymbol{1}_{i,k}^{M_N, c} = 1- \boldsymbol{1}_{i,k}^{M_N}$. Following \citet{chen2015optimal}, we define \begin{align}\label{truncated e}
\varepsilon_{1,N,i,k } = \varepsilon_{i,k}\boldsymbol{1}_{i,k}^{M_N}   - \mathbb E(\varepsilon_{i,k}\boldsymbol{1}_{i,k}^{M_N}|Z_{0,i,k}), \ \  \varepsilon_{2,N,i, k } = \varepsilon_{i,k} - \varepsilon_{1,N,i,k}.
\end{align}
We obviously have that   $|\varepsilon_{1,N,i,k}|\leq 2M_N$. Also, we have that, under \autoref{condition6} about the conditional variance of $\varepsilon$, there holds 
\begin{align*} \mathbb{E}\left(\left.\varepsilon_{1,N,i,k}^2\right|Z_{0,i,k}\right)& = \mathbb{E}\left(\left.\varepsilon_{i,k}^2 \boldsymbol{1}_{i,k}^{M_N} + \left(\mathbb E(\varepsilon_{i,k}\boldsymbol{1}_{i,k}^{M_N}|Z_{0,i,k})\right)^2 -2\varepsilon_{i,k}\boldsymbol{1}_{i,k}^{M_N} \mathbb E(\varepsilon_{i,k}\boldsymbol{1}_{i,k}^{M_N}|Z_{0,i,k})\right|Z_{0,i,k}\right)\\
& \leq \mathbb{E}\left(\left.\varepsilon_{i,k}^2 \boldsymbol{1}_{i,k}^{M_N}\right|Z_{0,i,k}\right) - \left(\mathbb E(\varepsilon_{i,k}\boldsymbol{1}_{i,k}^{M_N}|Z_{0,i,k})\right)^2 \leq \sup_{z} \mathbb{E}\left(\left.\varepsilon^2 \right|Z_{0} = z\right)<\infty,
\end{align*}and \begin{align*}\mathbb{E}\left|\varepsilon_{2,N,i,k}\right|& \leq \mathbb{E}\left(\left|\varepsilon_{i,k}\right|\boldsymbol{1}_{i,k}^{M_N,c}\right) + \mathbb{E}\left|\mathbb{E}\left(\left.\varepsilon_{i,k} \boldsymbol{1}_{i,k}^{M_N,c}\right|Z_{0,i,k}\right)\right|\\
& \leq 2\mathbb{E}\left(|\varepsilon_{i,k}|\boldsymbol{1}_{i,k}^{M_N,c}\right)\leq 2\mathbb{E}(|\varepsilon_{i,k}|^{\kappa}\boldsymbol{1}_{i,k}^{M_N,c})M_N^{1-\kappa}\leq CM_N^{1-\kappa}.\end{align*} Note that 
\begin{align*}
    \sum_{k=1}^N
\frac{1}{B}\sum_{i=1}^B \varepsilon_{i,k}\eta_k\left(\prod_{j=k+1}^N A_j\right)\Psi_{J_k}(Z_{0,i,k}) & = \sum_{k=1}^N
\frac{1}{B}\sum_{i=1}^B \varepsilon_{1,N,i,k}\eta_k\left(\prod_{j=k+1}^N A_j\right)\Psi_{J_k}(Z_{0,i,k})\\
    & + \sum_{k=1}^N
\frac{1}{B}\sum_{i=1}^B \varepsilon_{2, N,i,k}\eta_k\left(\prod_{j=k+1}^N A_j\right)\Psi_{J_k}(Z_{0,i,k}).
\end{align*} For any $\varrho>0$, Markov's inequality together \autoref{lemma4} lead to 
\begin{align*}
   & P\left(\left\Vert  \sum_{k=1}^N
\frac{1}{B}\sum_{i=1}^B \varepsilon_{2, N,i,k}\eta_k\left(\prod_{j=k+1}^N A_j\right)\Psi_{J_k}(Z_{0,i,k})  \right\Vert_2 >\varrho \right)\\
  & \leq \varrho^{-1}\sum_{k=1}^N \eta_{k}\exp\left(-C(N^{1-\alpha_{\eta}} - k^{1-\alpha_{\eta}})\right)J_k^{\frac{1}{2}}\mathbb E|\varepsilon_{2,N,i,k}|  \leq C\varrho^{-1} N^{\frac{\alpha_{\dagger}}{2}} M_N^{1-\kappa}.
\end{align*}On the other side, noting that
\begin{align*}
&  \sum_{k=1}^N\sum_{i=1}^B \mathbb{E} \left[\varepsilon_{1,N,i,k}^2\left\Vert \eta_k\left(\prod_{j=k+1}^N A_j\right)\Psi_{J_k}(Z_{0,i,k})\right\Vert_2^2 \right] \\
& = \sum_{k=1}^N\sum_{i=1}^B \mathbb{E} \left[\mathbb{E}(\varepsilon_{1,N,i,k}^2|Z_{0,i,k})\left\Vert \eta_k\left(\prod_{j=k+1}^N A_j\right)\Psi_{J_k}(Z_{0,i,k})\right\Vert_2^2 \right]\\  
& \leq C\sum_{k=1}^N\sum_{i=1}^B J_k\eta_k^2\exp\left(-C(N^{1-\alpha_{\eta}}-k^{1-\alpha_{\eta}})\right)\leq C N^{-\alpha_{\eta} + \alpha_{\dagger}}.
\end{align*}
Then by using the results from \citet{tropp2012user}, we have that 
\begin{align*}
  P\left(\left\Vert   \sum_{k=1}^N
\frac{1}{B}\sum_{i=1}^B \varepsilon_{1, N,i,k}\eta_k\left(\prod_{j=k+1}^N A_j\right)\Psi_{J_k}(Z_{0,i,k})  \right\Vert_2 >\varrho \right)  & \leq C\exp\left(C_1\log(N) -C_2\varrho^2N^{\alpha_{\eta} - \alpha_{\dagger}}\right)\\
  & + C\exp\left(C_3\log(N) -  C_4\varrho M_N^{-1} N^{\alpha_{\eta} - \frac{\alpha_{\dagger}}{2}}\right)
\end{align*}
 As a result, if we choose $\varrho_N = \sqrt{(C_1+2)N^{\alpha_{\dagger} - \alpha_{\eta}}\log(N)/C_2}$, we have that
\[
 \sum_{N=1}^{\infty}C\exp\left(C_1\log(N)
 -C_2\varrho_N^2N^{\alpha_{\eta} - \alpha_{\dagger}}\right)
 \leq C\sum_{N=1}^{\infty}N^{-2}<\infty.
\]
Furthermore, if we choose $M_N = \sqrt{C_4^2 (C_1+2)N^{\alpha_{\eta}}/C_2(C_3+2)^2\log(N)}$, we have that 
\[
\sum_{N=1}^{\infty}  C\exp\left(C_3\log(N) -  C_4\varrho_N M_N^{-1} N^{\alpha_{\eta} - \frac{\alpha_{\dagger}}{2}}\right)\leq C\sum_{N=1}^{\infty}N^{-2} < \infty,
\]
and finally when $\frac{\alpha_{\eta}}{2} - \frac{\alpha_{\eta}(\kappa - 1)}{2}<-1$, or equivalently, $\kappa > 2 + 2/\alpha_{\eta} $, which is imposed directly by \autoref{condition6}, we have that 
\[
\sum_{N=1}^{\infty} \varrho_N^{-1} N^{\frac{\alpha_{\dagger}}{2}}M_N^{1-\kappa} \leq \sum_{N=1}^{\infty}\frac{CN^{\frac{\alpha_{\eta}}{2}}\log^{-\frac{1}{2}}(N)}{N^{\frac{\alpha_{\eta} (\kappa - 1)}{2}}\log^{-\frac{\kappa-1}{2}}(N)}< \infty
\]
This shows that  
\[
\left\Vert  \sum_{k=1}^N
\frac{1}{B}\sum_{i=1}^B \varepsilon_{i,k}\eta_k\left(\prod_{j=k+1}^N A_j\right)\Psi_{J_k}(Z_{0,i,k})  \right\Vert_2   =O\left( N^{-\frac{\alpha_{\eta} - \alpha_{\dagger}}{2}}\log^{\frac{1}{2}}(N)\right), \qquad \text{a.s.}
\]
As a result, when $\alpha_{\eta}<2\alpha_{\theta}, 2\alpha_{\theta} + \alpha_{\eta} > 1 + 2\alpha_{\dagger}, \alpha_{\eta} +2s\alpha_{\dagger}>1, \alpha_{\eta}   < (2s+1)\alpha_{\dagger}$ all hold as implied by \autoref{rate}, we have that \[\frac{\alpha_{\eta}-\alpha_{\dagger}}{2}<\min\{\alpha_{\theta}-\frac{\alpha_{\dagger}}{2}, \frac{2\alpha_{\theta} +2\alpha_{\eta} -3\alpha_{\dagger} -1}{2}, \frac{2\alpha_{\eta}+(2s-1)\alpha_{\dagger}-1}{2},s\alpha_{\dagger} \}.\] Then 
\[
\Vert \Delta\widehat{\beta}_N\Vert_2  = O\left( N^{-\frac{\alpha_{\eta} - \alpha_{\dagger}}{2}}\log^{\frac{1}{2}}(N)\right), \qquad \text{a.s.}
\]
This shows the refined rate.

\subsubsection{ Obtain the Sharp Convergence Rate of PR Average Estimator}\label{C1.part3}

Note that the PR average of $\beta$ is given by 
\[
\Delta\overline{\beta}_N = \frac{1}{N}\sum_{k=1}^N R_{J_k, J_N}^{\top}\Delta\widehat\beta_{k} + \frac{1}{N}\sum_{k=1}^N\left(R_{J_k, J_N}^{\top} \beta_{J_k, 0} - \beta_{J_N,0}\right).
\]
So  \[
\Delta\widehat\beta_k  = A_k \Delta\widehat\beta_{k-1} + \eta_k\sum_{l=1}^5 \zeta_{l,k} + \zeta_{6,k}. 
\]
Because $\alpha_\dagger<1$, the integer sequence
$J_k=[J_0k^{\alpha_\dagger}]$ satisfies $J_k-J_{k-1}\in\{0,1\}$ for $k$ large.  Fix a deterministic
$m_0\geq J_0$ so large that such property holds from the block indexed by
$m_0$ onward and every such constant-dimension block has length at least two (because $\mathcal{J}(m+1) - \mathcal{J}(m)\rightarrow \infty$).
For $N$ with $J_N\geq m_0$, define the exact  remainder
$
 \mathcal R_N^{(0)}
 :=\frac1N\sum_{k<\mathcal J(m_0)}
 R_{J_k,J_N}^{\top}\Delta\widehat\beta_k.
$
The prefix contains a fixed number of terms and the nested re-embedding
operators are uniformly bounded, so
$\Vert\mathcal R_N^{(0)}\Vert_2=O_{\rm a.s.}(N^{-1})$.

For any $m_0\leq m\leq J_N$, we have that $A_k = \mathbb{I}_m - \eta_k \Gamma_m $ and $\zeta_{6,k} = 0$ when $\mathcal{J}(m)+1\leq k\leq \mathcal{J}(m+1)-1$ because the sieve dimension does not change across $k-1$ to $k$ at these $k$.  
Then 
\[
\eta_k\Gamma_m \Delta\widehat\beta_{k-1} = \Delta\widehat\beta_{k-1} - \Delta\widehat\beta_k  + \eta_k\sum_{l=1}^5 \zeta_{l,k}, \ \ \mathcal{J}(m)+1\leq k\leq \mathcal{J}(m+1)-1. 
\]
So \[
  \Delta\widehat\beta_{k-1} = \Gamma_m^{-1}\eta_k^{-1} (\Delta\widehat\beta_{k-1} - \Delta\widehat\beta_k )  + \Gamma_m^{-1}\sum_{l=1}^5 \zeta_{l,k}, \ \ \mathcal{J}(m)+1\leq k\leq \mathcal{J}(m+1)-1, 
\]
which, by Abel transformation,  leads to
\begin{align*}
\sum_{k=\mathcal{J}(m)}^{\mathcal{J}(m+1)-2}\Delta\widehat\beta_{k} &= \Gamma_m^{-1}\sum_{k=\mathcal{J}(m)+1}^{\mathcal{J}(m+1)-2}\Delta\widehat\beta_k\left( \eta_{k+1}^{-1} -  \eta_k^{-1}\right) + \Gamma_m^{-1}\Delta\widehat\beta_{\mathcal{J}(m)}\eta_{\mathcal{J}(m)+1}^{-1} \\
& - \Gamma_m^{-1}\Delta\widehat\beta_{\mathcal{J}(m+1)-1}\eta_{\mathcal{J}(m+1)-1}^{-1}+ \Gamma_m^{-1}\sum_{k=\mathcal{J}(m)+1}^{\mathcal{J}(m+1)-1}\sum_{l=1}^5 \zeta_{l,k}.
\end{align*}
We then have the following exact expansion
\begin{align*}
    \frac{1}{N}\sum_{k=1}^N R_{J_k, J_N}^{\top}\Delta\widehat\beta_{k} &  = \mathcal R_N^{(0)}+\frac{1}{N}\sum_{m = m_0}^{J_N - 1} R_{m, J_N}^{\top} \sum_{k = \mathcal{J}(m)}^{\mathcal{J}(m+1)-2}\Delta\widehat\beta_{k} + \frac{1}{N}\sum_{k=\mathcal{J}(J_N)}^{N-1}\Delta\widehat\beta_k \\
    & + \frac{1}{N}\sum_{m= m_0}^{J_N-1}R_{m,J_N}^{\top}\Delta\widehat\beta_{\mathcal{J}(m+1)-1} + \frac{1}{N}\Delta\widehat\beta_N. 
\end{align*}
For the prefix and the terms on the second line of the equality, the results
from \autoref{C1.part2} give
\begin{align*}
    &  \left\Vert \mathcal R_N^{(0)}+\frac{1}{N}\sum_{m= m_0}^{J_N-1}R_{m,J_N}^{\top}\Delta\widehat\beta_{\mathcal{J}(m+1)-1} + \frac{1}{N}\Delta\widehat\beta_N\right\Vert_2  \\
    & \leq \frac{C(\omega)}{N}+\frac{C}{N}\sum_{m=m_0}^{J_N-1} \Vert \Delta\widehat\beta_{\mathcal{J}(m+1)-1}\Vert_2 + \frac{C}{N}\Vert \Delta\widehat\beta_N \Vert_2 \\
    & \leq \frac{C(\omega)}{N}+\frac{C}{N}\sum_{m=m_0}^{J_N} m^{-\frac{\alpha_{\eta}- \alpha_{\dagger}}{2\alpha_{\dagger}}}\log^{\frac{1}{2}}(m) +  \frac{C}{N} J_N^{-\frac{\alpha_{\eta}- \alpha_{\dagger}}{2\alpha_{\dagger}}}\log^{\frac{1}{2}}(N)  = o\left(N^{-\frac{1}{2}}\right), \qquad \text{a.s.}
\end{align*}
when $1 + \alpha_{\eta}>3\alpha_{\dagger}$. So we only need to look at \[\frac{1}{N}\sum_{m = m_0}^{J_N - 1} R_{m, J_N}^{\top} \sum_{k = \mathcal{J}(m)}^{\mathcal{J}(m+1)-2}\Delta\widehat\beta_{k} + \frac{1}{N}\sum_{k=\mathcal{J}(J_N)}^{N-1}\Delta\widehat\beta_k.\]
Define 
$\iota_N=\boldsymbol 1\{\mathcal J(J_N)<N\}$.  This indicator is needed because the
terminal block is empty when $N$ itself is an expansion checkpoint.  Then
\begin{align*}
    &\frac{1}{N}\sum_{m = m_0}^{J_N - 1} R_{m, J_N}^{\top} \sum_{k = \mathcal{J}(m)}^{\mathcal{J}(m+1)-2}\Delta\widehat\beta_{k} + \frac{1}{N}\sum_{k=\mathcal{J}(J_N)}^{N-1}\Delta\widehat\beta_k\\
    & = \frac{1}{N}\sum_{m = m_0}^{J_N - 1} R_{m, J_N}^{\top}\Gamma_m^{-1}\sum_{k=\mathcal{J}(m)+1}^{\mathcal{J}(m+1)-2}\Delta\widehat\beta_k\left( \eta_{k+1}^{-1} -  \eta_k^{-1}\right) + \frac{\iota_N}{N}\Gamma_{J_N}^{-1}\sum_{k=a_N+1}^{N-1}\Delta\widehat\beta_k\left( \eta_{k+1}^{-1} -  \eta_k^{-1}\right) \\
    &+\frac{1}{N}\sum_{m = m_0}^{J_N - 1} R_{m, J_N}^{\top}\Gamma_m^{-1}\sum_{k=\mathcal{J}(m)+1}^{\mathcal{J}(m+1)-1}\sum_{l=1}^5 \zeta_{l,k} + \frac{\iota_N}{N}\Gamma_{J_N}^{-1}\sum_{k=a_N+1}^{N}\sum_{l=1}^5 \zeta_{l,k}\\
    &+ \frac{1}{N}\sum_{m = m_0}^{J_N - 1} R_{m, J_N}^{\top}\left(\Gamma_m^{-1}\Delta\widehat\beta_{\mathcal{J}(m)}\eta_{\mathcal{J}(m)+1}^{-1} - \Gamma_m^{-1}\Delta\widehat\beta_{\mathcal{J}(m+1)-1}\eta_{\mathcal{J}(m+1)-1}^{-1}\right)\\
    & + \frac{\iota_N}{N}\left(\Gamma_{J_N}^{-1}\Delta\widehat\beta_{\mathcal J(J_N)}\eta_{\mathcal J(J_N)+1}^{-1} - \Gamma_{J_N}^{-1}\Delta\widehat\beta_{N}\eta_{N}^{-1}\right).
\end{align*}
Since in \autoref{C1.part2} we have shown that $\Vert \Delta\widehat{\beta}_N\Vert_2  = O_{\rm a.s.}( N^{-\frac{\alpha_{\eta} - \alpha_{\dagger}}{2}}\log^{\frac{1}{2}}(N))$, for $N$ sufficiently large, the first line on RHS is bounded by
\begin{align*}
    \frac{C}{N}\sum_{k=1}^N k^{\alpha_{\eta}-1}\Vert \Delta\widehat\beta_k \Vert_2\leq\frac{C}{N}\sum_{k=1}^N k^{\alpha_{\eta}-1 -\frac{\alpha_{\eta} - \alpha_{\dagger}}{2}}\log^{\frac{1}{2}}(k))\leq CN^{\frac{\alpha_{\eta}+\alpha_{\dagger}-2}{2}}\log^{\frac{1}{2}}(N)), \qquad \text{a.s.}
\end{align*}
The last two lines are together bounded by 
\[
\frac{C}{N}\sum_{m=m_0}^{J_N} m^{\frac{\alpha_{\eta}}{\alpha_{\dagger}} - \frac{\alpha_{\eta}- \alpha_{\dagger}}{2\alpha_{\dagger}}}\log^{\frac{1}{2}}(m) + \frac{C}{N}  N^{\alpha_{\eta} - \frac{\alpha_{\eta}- \alpha_{\dagger}}{2}}\log^{\frac{1}{2}}(N)\leq CN^{\frac{\alpha_{\eta}+3\alpha_{\dagger} -2}{2}}\log^{\frac{1}{2}}(N), \qquad \text{a.s.}
\]
When  $\alpha_{\eta}+3\alpha_{\dagger}<1$, the above two terms are both of order $o(N^{-1/2})$. So we only need to look at the summations of $\zeta$-terms. Recall from the proof of \autoref{C1.part1} and \autoref{C1.part2}, we have $\Vert \zeta_{1,k}\Vert_2 \leq CJ_k^{\frac{1}{2}}\sum_{i=1}^B\Vert X_{i,k}\Vert_2\Vert \Delta\check\theta_{k-1}\Vert_2$, $\Vert \zeta_{2,k}\Vert_2 \leq CJ_k^{\frac{3}{2}}\sum_{i=1}^B|\varepsilon_{i,k}|\Vert X_{i,k}\Vert_2\Vert \Delta\check\theta_{k-1}\Vert_2$, $\Vert \zeta_{3,k}\Vert_2 \leq CJ_k^{\frac{3}{2}-s}\sum_{i=1}^B\Vert X_{i,k}\Vert_2\Vert \Delta\check\theta_{k-1}\Vert_2$, $\Vert \zeta_{4,k}\Vert_2 \leq CJ_k^{\frac{1}{2}}(\sum_{i=1}^B|\varepsilon
_{i,k}|+J_k^{-s})$, and $\Vert \zeta_{5,k}\Vert_2 \leq CJ_k\Vert\Delta\widehat\beta_{k-1}\Vert_2$. When $s>1$, applying  \autoref{as converence of sums}
\begin{align*}
    & \left\Vert \frac{1}{N}\sum_{m = m_0}^{J_N - 1} R_{m, J_N}^{\top}\Gamma_m^{-1}\sum_{k=\mathcal{J}(m)+1}^{\mathcal{J}(m+1)-1}\sum_{l=1}^5 \zeta_{l,k} + \frac{1}{N}\Gamma_{J_N}^{-1}\sum_{k=\mathcal{J}(J_N)+1}^{N}\sum_{l=1}^5 \zeta_{l,k}- \frac{1}{N}\sum_{k=1}^NR_{J_k, J_N}^{\top}\Gamma_{J_k}^{-1} \sum_{l=1}^5 \zeta_{l,k}  \right\Vert_2\\
    &\leq \frac{C(\omega)}{N}+\frac{C}{N}\sum_{m=m_0}^{J_N}\sum_{l=1}^5 \Vert \zeta_{l,\mathcal{J}(m) }\Vert_2 \leq \frac{C(\omega)}{N}+\frac{C}{N}\sum_{m=m_0}^{J_N} \left(m^{\frac{1}{2}}\sum_{i=1}^B\Vert X_{i,\mathcal{J}(m)}\Vert_2 \Vert \Delta\check\theta_{\mathcal{J}(m)-1}\Vert_2\right) \\
    & + \frac{C}{N}\sum_{m=J_0}^{J_N} \left(m^{\frac{3}{2}}\sum_{i=1}^B\Vert X_{i,\mathcal{J}(m)}\Vert_2 |\varepsilon_{i,\mathcal{J}(m)}|\Vert\Delta\check\theta_{\mathcal{J}(m)-1}\Vert_2\right) +\frac{C}{N}\sum_{m=J_0}^{J_N}\left(m^{\frac{1}{2}}\sum_{i=1}^B |\varepsilon_{i,\mathcal{J}(m)}| +m^{\frac{1}{2}-s}\right)\\
    & + \frac{C}{N}\sum_{m=J_0}^{J_N}m\Vert \Delta\widehat\beta_{\mathcal{J}(m)-1}\Vert_2 = O\left(N^{\frac{5\alpha_{\dagger}}{2} - \alpha_{\theta}-1} + N^{\frac{3\alpha_{\dagger}}{2}-1} + N^{\frac{5\alpha_{\dagger}}{2}-\frac{\alpha_{\eta}}{2} -1}\right), \qquad \text{a.s.} 
\end{align*}
up to some poly-log terms. When $1+2\alpha_{\theta}>5\alpha_{\dagger}$, $3\alpha_{\dagger}<1$, and $1+\alpha_{\eta}>5\alpha_{\dagger}$, we have that the above terms are all $o(N^{-\frac{1}{2}})$ a.s. 
 Then we only need to look at $\frac{1}{N}\sum_{k=1}^NR_{J_k, J_N}^{\top}\Gamma_{J_k}^{-1} \sum_{l=1}^5 \zeta_{l,k}$. We look at each $\zeta_{l,k}$ separately. 

For $l=1$, put $d_{i,k}=Z_{0,i,k}-\check Z_{i,k}=-X_{i,k}^{\top}\Delta\check\theta_{k-1}$.  Taylor's theorem, applied once to $F_0$ and once to $\Psi_{J_k}$, gives the  expansion
\[
 \bigl(F_0(Z_{0,i,k})-F_0(\check Z_{i,k})\bigr)
 \Psi_{J_k}(\check Z_{i,k})
 =\nabla_zF_0(Z_{0,i,k})d_{i,k}\Psi_{J_k}(Z_{0,i,k})
   +\mathcal R_{i,k},
 \qquad
 \Vert\mathcal R_{i,k}\Vert_2\leq CJ_k^{3/2}d_{i,k}^2.
\]
The remainder bound follows from the bounded first two derivatives of
$F_0$ and the sieve derivative envelopes.  Hence
\begin{align*}
    \frac{1}{N}\sum_{k=1}^NR_{J_k, J_N}^{\top}\Gamma_{J_k}^{-1}  \zeta_{1 ,k} & = \frac{1}{N}\sum_{k=1}^NR_{J_k, J_N}^{\top}\Gamma_{J_k}^{-1} \frac{1}{B}\sum_{i=1}^B\left(F_0\left(Z_{0,i,k}\right) - F_0\left(\check{Z}_{i,k}\right)\right)\Psi_{J_{k}} \left(\check{Z}_{i,k}\right)\\
    & = -\frac{1}{N}\sum_{k=1}^NR_{J_k, J_N}^{\top}\Gamma_{J_k}^{-1} \frac{1}{B}\sum_{i=1}^B\left(\nabla_z F_0\left(Z_{0,i,k}\right)\right)\Psi_{J_k}(Z_{0,i,k})X_{i,k}^{\top}\Delta\check\theta_{k-1}\\
    & + O\left(\frac{1}{N}\sum_{k=1}^N\sum_{i=1}^BJ_k^{\frac{3}{2}}\Vert X_{i,k}\Vert_2^2\Vert \Delta\check\theta_{k-1}\Vert_2^2\right).
\end{align*}
Let $T_k=B^{-1}\sum_{i=1}^B\Vert X_{i,k}\Vert_2^2$; then
$\mathbb E_{k-1}T_k^2\leq C$.   The exact power-sum bound and
 \autoref{as converence of sums} give
\[
 \frac1N\sum_{k=1}^NJ_k^{3/2}T_k
       \Vert\Delta\check\theta_{k-1}\Vert_2^2
 =O\left(N^{3\alpha_\dagger/2-2\alpha_\theta}\text{PolyLog}\right), \qquad \text{a.s.}
\]
The strict inequality
$\alpha_\theta>1/4+3\alpha_\dagger/4$ makes the above 
$o_{\rm a.s.}(N^{-1/2})$.
Further more, define $Q_{k} =  \frac{1}{B}\sum_{i=1}^B\left(\nabla_z F_0\left(Z_{0,i,k}\right)\right)\Psi_{J_k}(Z_{0,i,k})X_{i,k}^{\top} - \mathbb{E}[\left(\nabla_z F_0\left(Z_{0}\right)\right)\Psi_{J_k}(Z_{0})X^{\top}]$, we have that 
\begin{align*}
    & \frac{1}{N}\sum_{k=1}^NR_{J_k, J_N}^{\top}\Gamma_{J_k}^{-1} \frac{1}{B}\sum_{i=1}^B\left(\nabla_z F_0\left(Z_{0,i,k}\right)\right)\Psi_{J_k}(Z_{0,i,k})X_{i,k}^{\top}\Delta\check\theta_{k-1}\\
    & = \frac{1}{N}\sum_{k=1}^NR_{J_k, J_N}^{\top}\Gamma_{J_k}^{-1} \mathbb{E}[\left(\nabla_z F_0\left(Z_{0}\right)\right)\Psi_{J_k}(Z_{0})X^{\top}] \Delta\check\theta_{k-1} + \frac{1}{N}\sum_{k=1}^NR_{J_k, J_N}^{\top}\Gamma_{J_k}^{-1} Q_{k}\Delta\check\theta_{k-1}
\end{align*}
Let $S_{N} = \sum_{k=1}^NR_{J_k, J_N}^{\top}\Gamma_{J_k}^{-1} Q_{k}\Delta\check\theta_{k-1}$, we have that $S_{N} = R_{J_{N-1},J_N}^{\top}S_{N-1} + \Gamma_{J_N}^{-1}Q_N\Delta\check\theta_{N-1}$. For each $N$, define \[
\Vert S_{N} \Vert_{\Gamma_{J_{N}}} = \left(S_{N}^{\top}\Gamma_{J_{N}}S_{N}\right)^{\frac{1}{2}}, 
\]
We have that 
\begin{align*}
\Vert S_{N} \Vert_{\Gamma_{J_{N}}}^2 & = S_{N-1}^{\top}R_{J_{N-1},J_N}\Gamma_{J_N}R_{J_{N-1},J_N}^{\top}S_{N-1} + \Delta\check\theta_{N-1}^{\top}Q_N^{\top}\Gamma_{J_N}^{-1}Q_N\Delta\check\theta_{N-1}\\
& + 2S_{N-1}^{\top}R_{J_{N-1},J_N} Q_N\Delta\check\theta_{N-1}.
\end{align*}
Since \begin{align*}
    R_{J_{N-1},J_N}\Gamma_{J_N}R_{J_{N-1},J_N}^{\top} & = \mathbb{E}\left(R_{J_{N-1},J_N}\Psi_{J_N}(Z_0)\Psi_{J_N}(Z_0)^{\top}R_{J_{N-1},J_N}^{\top}\right)\\
    & = \mathbb{E}\left( \Psi_{J_{N-1}}(Z_0)\Psi_{J_{N-1}}(Z_0)^{\top}\right) =\Gamma_{J_{N-1}},
\end{align*} we have that 
\[
\Vert S_{N} \Vert_{\Gamma_{J_{N}}}^2  = \Vert S_{N-1} \Vert_{\Gamma_{{J_{N-1}}}}^2 +  \Delta\check\theta_{N-1}^{\top}Q_N^{\top}\Gamma_{J_N}^{-1}Q_N\Delta\check\theta_{N-1} 
  + 2S_{N-1}^{\top}R_{J_{N-1},J_N}Q_N\Delta\check\theta_{N-1}.
\]
So 
\[
\mathbb{E}_{N-1}\Vert S_{N} \Vert_{\Gamma_{J_{N}}}^2  \leq \Vert S_{N-1} \Vert_{\Gamma_{{J_{N-1}}}}^2  + CJ_N\Vert \Delta\check\theta_{N-1}\Vert_2^2\leq \Vert S_{N-1} \Vert_{\Gamma_{{J_{N-1}}}}^2 + CN^{\alpha_{\dagger} - 2\alpha_{\theta}}\log^{2c_\theta}(N), \ \  \text{a.s.}
\]
This immediately leads to $\Vert S_{N} \Vert_{\Gamma_{J_{N}}}^2 = O_{\text{a.s.}}\left(N^{(1+\alpha_{\dagger}-2\alpha_{\theta})_+}\text{PolyLog}\right)$.   Since $\Gamma_{J}$ has uniformly bounded eigenvalues, we have that 
\begin{align*}
\left\Vert \frac{1}{N}\sum_{k=1}^NR_{J_k, J_N}^{\top}\Gamma_{J_k}^{-1} Q_{k}\Delta\check\theta_{k-1}\right\Vert_2 =\frac{1}{N}\left\Vert S_{N}\right\Vert_2\leq \frac{C}{N}\Vert S_{N} \Vert_{\Gamma_{J_{N}}}  = O\left(N^{\frac{(1+\alpha_{\dagger}-2\alpha_{\theta})_+}{2}-1}\text{PolyLog}\right), \qquad  \text{a.s.}
\end{align*}
which is $o(N^{-\frac{1 }{2}})$ a.s. when $\alpha_{\theta}>\frac{1}{2}\alpha_{\dagger}$.  The above analysis together shows that 
\[
\left\Vert\frac{1}{N}\sum_{k=1}^NR_{J_k, J_N}^{\top}\Gamma_{J_k}^{-1}  \zeta_{1 ,k} +  \frac{1}{N}\sum_{k=1}^NR_{J_k, J_N}^{\top}\Gamma_{J_k}^{-1} \mathbb{E}[\left(\nabla_z F_0\left(Z_{0}\right)\right)\Psi_{J_k}(Z_{0})X^{\top}] \Delta\check\theta_{k-1}\right\Vert_2 = o\left(N^{-\frac{1 }{2}}\right), \qquad \text{a.s}
\]
Using the same strategy, for $l=2$,  let $S_{N} =\sum_{k=1}^NR_{J_k, J_N}^{\top}\Gamma_{J_k}^{-1}  \zeta_{2 ,k}$, we have that 
\begin{align*}
  \mathbb{E}_{N-1}\Vert S_{N} \Vert_{\Gamma_{J_{N}}}^2  \leq \Vert S_{N-1} \Vert_{\Gamma_{{J_{N-1}}}}^2  + CJ_N^3 \Vert \Delta\check\theta_{N-1}\Vert_2^2\leq \Vert S_{N-1} \Vert_{\Gamma_{{J_{N-1}}}}^2 + CN^{3\alpha_{\dagger} - 2\alpha_{\theta}}\log^{2c_\theta}(N), \qquad \text{a.s}
\end{align*}
So $\Vert S_{N} \Vert_{\Gamma_{J_{N}}}^2 = O_{\text{a.s.}}\left(N^{(1+3\alpha_{\dagger}-2\alpha_{\theta})_+}\text{PolyLog}\right)$.  Then \begin{align*}
\left\Vert \frac{1}{N}\sum_{k=1}^NR_{J_k, J_N}^{\top}\Gamma_{J_k}^{-1} \zeta_{2,k}\right\Vert_2 =\frac{1}{N}\left\Vert S_{N}\right\Vert_2\leq \frac{C}{N}\Vert S_{N} \Vert_{\Gamma_{J_{N}}}  = O\left(N^{\frac{(1+3\alpha_{\dagger}-2\alpha_{\theta})_+}{2}-1}\text{PolyLog}\right), \qquad \text{a.s}
\end{align*}
When $\alpha_{\theta}>\frac{3}{2}\alpha_{\dagger}$, the above term is of order $o_{\text{a.s.}}(N^{-\frac{1 }{2}})$.  

  For
$l=3$, using \autoref{as converence of sums}, we have that
\begin{align*}
\left\Vert\frac{1}{N}\sum_{k=1}^NR_{J_k, J_N}^{\top}\Gamma_{J_k}^{-1}  \zeta_{3 ,k} \right\Vert_2 & \leq \frac{C}{N}\sum_{k=1}^{N}\Vert \zeta_{3,k}\Vert_2  \leq   \frac{C}{N}\sum_{k=1}^N J_{k}^{\frac{3}{2}-s } \sum_{i=1}^B \Vert X_{i,k}\Vert_2 \Vert \Delta\check{\theta}_{k-1} \Vert_2 \\
& =  O \left(N^{\left(1+\left(\frac{3}{2} -s\right)\alpha_{\dagger} - \alpha_{\theta}\right)_+ - 1}\text{PolyLog}\right), \qquad \text{a.s.}
\end{align*}
so when $\alpha_{\theta}+(s-\frac{3}{2})\alpha_{\dagger}>\frac{1}{2}$, we have that the above term is of order $o_{\text{a.s.}}(N^{-\frac{1 }{2}})$.

For $l=4$, we look at the second term $ \frac{1}{B}\sum_{i=1}^B \left( F_0({Z}_{0,i,k}) - \mathbb{P}_{J_{k}}(F_0)(Z_{0,i,k})\right)\Psi_{J_{k}} \left( Z_{0,i,k}\right)$ first. Noting that \[\mathbb{E}_{k-1}\sum_{i=1}^B \left( F_0({Z}_{0,i,k}) - \mathbb{P}_{J_{k}}(F_0)(Z_{0,i,k})\right)\Psi_{J_{k}} \left( Z_{0,i,k}\right) =0\] and \[\mathbb{E}_{k-1}\left\Vert \sum_{i=1}^B \left( F_0({Z}_{0,i,k}) - \mathbb{P}_{J_{k}}(F_0)(Z_{0,i,k})\right)\Psi_{J_{k}} \left( Z_{0,i,k}\right)\right\Vert_2^2\leq J_k^{1-2s},\] then by \autoref{as converence of sums} we have that 
\begin{align*}
& \left\Vert \frac{1}{N}\sum_{k=1}^NR_{J_k, J_N}^{\top}\Gamma_{J_k}^{-1}  \left(\frac{1}{B} \sum_{i=1}^B \left( F_0({Z}_{0,i,k}) - \mathbb{P}_{J_{k}}(F_0)(Z_{0,i,k})\right)\Psi_{J_{k}} \left( Z_{0,i,k}\right)\right) \right\Vert_2 \\
& = O \left(N^{\frac{(1 -(2s-1)\alpha_{\dagger} )_+}{2}- 1}\text{PolyLog}\right), \qquad \text{a.s.}
\end{align*}
When $2s>1$, the above term is of order $o_{\text{a.s.}}(N^{-\frac{1 }{2}})$. Then 
\[
\left\Vert\frac{1}{N}\sum_{k=1}^NR_{J_k, J_N}^{\top}\Gamma_{J_k}^{-1}  \zeta_{4 ,k} - \frac{1}{N}\sum_{k=1}^NR_{J_k, J_N}^{\top}\Gamma_{J_k}^{-1}  \frac{1}{B}\sum_{i=1}^B\varepsilon_{i,k} \Psi_{J_{k}} \left(Z_{0,i,k}\right) \right\Vert_2 = o\left(N^{-\frac{ 1}{2}}\right), \qquad  \text{a.s.}
\]
We next look at $l=5$. Note that \[\zeta_{5,k} = \left(\Gamma_{J_k} - \widehat \Gamma_{J_k, k}\right)R_{J_{k-1}, J_k}^{\top}\Delta\widehat\beta_{k-1}+\left(\widehat\Gamma_{J_k,k} - \check \Gamma_{J_k, k}\right)R_{J_{k-1}, J_k}^{\top}\Delta\widehat\beta_{k-1},\]
so we let \[
S_{1,N} = \sum_{k=1}^NR_{J_k, J_N}^{\top}\Gamma_{J_k}^{-1} \left(\Gamma_{J_k} - \widehat \Gamma_{J_k, k}\right)R_{J_{k-1}, J_k}^{\top}\Delta\widehat\beta_{k-1},
\]
and 
\[
S_{2,N} =\sum_{k=1}^NR_{J_k, J_N}^{\top}\Gamma_{J_k}^{-1} \left(\widehat\Gamma_{J_k,k} - \check \Gamma_{J_k, k}\right)R_{J_{k-1}, J_k}^{\top}\Delta\widehat\beta_{k-1}.
\]
Since
\begin{align*}
  \mathbb{E}_{N-1}\Vert S_{1,N} \Vert_{\Gamma_{J_N}}^2  \leq \Vert S_{1,N-1} \Vert_{\Gamma_{J_{N-1}}}^2  + CJ_N^2\Vert \Delta\widehat\beta_{N-1}\Vert_2^2\leq \Vert S_{1,N-1} \Vert_{\Gamma_{J_{N-1}}}^2 + CN^{3\alpha_{\dagger} - \alpha_{\eta}}\mathfrak L_N^2, \qquad \text{a.s.}
\end{align*}
using the proof method before, we can show that 
\[
\frac{1}{N}\Vert S_{1,N}\Vert_2 = O \left(N^{\frac{(1+3\alpha_{\dagger}-\alpha_{\eta})_{+} }{2}-1}\text{PolyLog}\right), \qquad \text{a.s.}
\]
when $\alpha_{\eta}>3\alpha_{\dagger}$, we have that $\frac{1}{N}\Vert S_{1,N}\Vert_2 $ is $o_{\text{a.s.}}(N^{-\frac{1 }{2}})$. For $S_{2,N}$, the mean-value theorem and the basis derivative bound give
\[
 \Vert\widehat\Gamma_{J_k,k}-\check\Gamma_{J_k,k}\Vert_{\rm op}
 \leq CJ_k^2T_k\Vert\Delta\check\theta_{k-1}\Vert_2,
 \qquad
 T_k=\frac1B\sum_{i=1}^B(1+\Vert X_{i,k}\Vert_2),
\]
where $\mathbb E_{k-1}T_k^2\leq C$.  Therefore
\[
 \Vert S_{2,N}\Vert_2
 \leq C\sum_{k=1}^N
 k^{2\alpha_\dagger-\alpha_\theta
       -\frac{\alpha_\eta-\alpha_\dagger}{2}} T_k\text{PolyLog} , \qquad \text{a.s.}
\]
The second part of  \autoref{as converence of sums}  yields
\[
 \frac1N\Vert S_{2,N}\Vert_2
 =O \!\left(N^{^{2\alpha_\dagger-\alpha_\theta
       -\frac{\alpha_\eta-\alpha_\dagger}{2}}}\text{PolyLog}\right), \qquad \text{a.s.}
\]
The strict condition
$2\alpha_\theta+\alpha_\eta>1+5\alpha_\dagger$ leads to that the above is $o_{\rm a.s.}(N^{-1/2})$; the same conclusion holds
for $N^{-1}\sum_{k\leq N}R_{J_k,J_N}^{\top}\Gamma_{J_k}^{-1}\zeta_{5,k}$. 

Together, we have shown that 
\begin{align*}
\Delta\overline{\beta}_N & = \frac{1}{N}\sum_{k=1}^N R_{J_k, J_N}^{\top}\Gamma_{J_k}^{-1}\frac{1}{B}\sum_{i=1}^B \varepsilon_{i,k}\Psi_{J_k}(Z_{0,i,k}) -\frac{1}{N}\sum_{k=1}^N R_{J_k, J_N}^{\top}\Gamma_{J_k}^{-1}\mathbb{E}[(\nabla_z F_0(Z_0))\Psi_{J_k}(Z_0)X^{\top}]\Delta\check\theta_{k-1} \\
&  + \frac{1}{N}\sum_{k=1}^N\left(R_{J_k, J_N}^{\top} \beta_{J_k, 0} - \beta_{J_N,0}\right) + r_N,
\qquad \Vert r_N\Vert_2=o_{\rm a.s.}(N^{-1/2}).
\end{align*}
where the second last term describes the impacts of changing pseudo true coefficients. 
This proves the first result.

To prove the second result, recall that we define  
$ \overline{F}_N(z) = \Psi_{J_N}^{\top}(z)\overline{\beta}_N.
$
We immediately have that 
\begin{align*}
  \overline{F}_N(z) - F_0(z)  
& = (\Psi_{J_N}(z)^{\top}\overline{\beta}_N - \Psi_{J_N}(z)^{\top}\beta_{J_N,0}) + (\mathbb{P}_{J_N}(F_0)(z) - F_0(z))\\
&= \left(\mathbb{P}_{J_N}(F_0)(z) - F_0(z)\right)  + \frac{1}{N}\sum_{k=1}^N \left(\left(\mathbb{P}_{J_k}(F_0)(z) - \mathbb{P}_{J_N}(F_0)(z)\right)\right) \\
&+  \frac{1}{N}\sum_{k=1}^N \Psi_{J_k}(z)^{\top}\Gamma_{J_k}^{-1}\frac{1}{B}\sum_{i=1}^B \varepsilon_{i,k}\Psi_{J_k}(Z_{0,i,k})\\
& -\frac{1}{N}\sum_{k=1}^N \Psi_{J_k}(z)^{\top}\Gamma_{J_k}^{-1}\mathbb{E}[(\nabla_z F_0(Z_0))\Psi_{J_k}(Z_0)X^{\top}]\Delta\check\theta_{k-1} + r_N^F(z),
\end{align*}
where $r_N^F(z):=\Psi_{J_N}(z)^{\top}r_N$.  Hence, by
$\sup_z\Vert\Psi_{J_N}(z)\Vert_2\lesssim J_N^{1/2}$,
\[
\Vert r_N^F(\cdot )\Vert_{\infty} 
 =o \!\left(N^{-(1-\alpha_\dagger)/2}\right), \qquad {\rm a.s.}
\]
The two bias terms combine exactly as
\[
 \mathbb P_{J_N}F_0-F_0+\frac1N\sum_{k=1}^N
 (\mathbb P_{J_k}F_0-\mathbb P_{J_N}F_0)
 =\frac1N\sum_{k=1}^N(\mathbb P_{J_k}F_0-F_0).
\]
Consequently their uniform norm is at most
$CN^{-1}\sum_{k=1}^NJ_k^{-s}$.  Recall that
$\mu_{0}(\theta_0, z) = \mathbb{E}(X|Z_0 = z)$, and put
$h(z)=(\nabla_zF_0(z))\mu_0(\theta_0,z)\in\mathbb R^p$ and
$E_J(z)=\mathbb P_Jh(z)-h(z)$.  Then
\[
\Psi_{J}(z)^{\top}\Gamma_{J}^{-1}\mathbb E\left(  \Psi_{J}(Z_{0})  \nabla_z F_0(Z_{0})X^{\top}\right) = \left\{(\mathbb{P}_{J}h)(z)\right\}^{\top},
\]
Since $\Vert E_J\Vert_\infty\leq CJ^{-s_\mu}$ under
\autoref{condition6}, the generated-regressor contribution in the preceding
expansion, including its sign, is exactly
\begin{align*}
&-h(z)^{\top}\frac1N\sum_{k=1}^N\Delta\check\theta_{k-1}
-\frac1N\sum_{k=1}^NE_{J_k}(z)^{\top}\Delta\check\theta_{k-1}.
\end{align*}
Uniformly in $z$, its norm is bounded by
\[
 C\left\Vert\frac1N\sum_{k=1}^N\Delta\check\theta_{k-1}\right\Vert_2
 +\frac C N\sum_{k=1}^NJ_k^{-s_\mu}
       \Vert\Delta\check\theta_{k-1}\Vert_2
 =C\left(\left\Vert\frac1N\sum_{k=1}^N
       \Delta\check\theta_{k-1}\right\Vert_2+A_{\mu,N}\right).
\]

Finally, to study the behavior of $\frac{1}{N}\sum_{k=1}^N \Psi_{J_k}(z)^{\top}\Gamma_{J_k}^{-1}\frac{1}{B}\sum_{i=1}^B \varepsilon_{i,k}\Psi_{J_k}(Z_{0,i,k})$, we define $\varepsilon_{1,N,i,k}$ and $\varepsilon_{2,N,i,k}$ as in (\ref{truncated e}) with $M_N$ to be chosen later. For any $\varrho>0$ and $M_N$, we have that 
\[
\mathrm{Pr}\left(\sup_{z\in\mathcal Z_0}\left\Vert \frac{1}{N}\sum_{k=1}^N \Psi_{J_k}(z)^{\top}\Gamma_{J_k}^{-1}\frac{1}{B}\sum_{i=1}^B \varepsilon_{2,N,i,k}\Psi_{J_k}(Z_{0,i,k})\right\Vert_2>\varrho \right)\leq C\varrho^{-1}J_NM_N^{1-\kappa}.
\]
For any positive integer $L$,  \autoref{condition7}(iii) provides
$z_1,\ldots,z_L\in\mathcal Z_0$ such that
\[\sup_{z\in\mathcal Z_0}\min_{1\leq l\leq L}
\Vert \Psi_{J_N}(z)-\Psi_{J_N}(z_l)\Vert_2
\leq CJ_N^{3/2}L^{-1}.\]
By nestedness,
$\Psi_{J_k}(z)=R_{J_k,J_N}\Psi_{J_N}(z)$ for every $J_k\leq J_N$, and
$\sup_{J_1\leq J_2}\Vert R_{J_1,J_2}\Vert_{\rm op}<\infty$ under the
normalized-basis construction.  Hence this single terminal net controls all
lower-dimensional rows appearing in the sum, up to a fixed multiplicative
constant.
As a result, 
\begin{align*}
 &  \sup_{z\in\mathcal Z_0}\left\Vert \frac{1}{N}\sum_{k=1}^N \Psi_{J_k}(z)^{\top}\Gamma_{J_k}^{-1}\frac{1}{B}\sum_{i=1}^B \varepsilon_{1,N,i,k}\Psi_{J_k}(Z_{0,i,k})\right\Vert_2\\
 & \leq  \max_{1\leq l \leq L}\left\Vert \frac{1}{N}\sum_{k=1}^N \Psi_{J_k}(z_l)^{\top}\Gamma_{J_k}^{-1}\frac{1}{B}\sum_{i=1}^B \varepsilon_{1,N,i,k}\Psi_{J_k}(Z_{0,i,k})\right\Vert_2 + \frac{CJ_N^{2}}{NL }\sum_{k=1}^N \frac{1}{B}\sum_{i=1}^B\left|\varepsilon_{1,N,i,k}\right|
\end{align*}
Note that $\frac{CJ_N^{2}}{NL }\sum_{k=1}^N \frac{1}{B}\sum_{i=1}^B\left|\varepsilon_{1,N,i,k}\right| \leq  CL^{-1}N^{2\alpha_{\dagger}}M_N$. And 
\begin{align*}
    & \mathrm{Pr}\left(\max_{1\leq l \leq L}\left\Vert \frac{1}{N}\sum_{k=1}^N \Psi_{J_k}(z_l)^{\top}\Gamma_{J_k}^{-1}\frac{1}{B}\sum_{i=1}^B \varepsilon_{1,N,i,k}\Psi_{J_k}(Z_{0,i,k})\right\Vert_2>\varrho\right)\\
    &\leq \sum_{1\leq l \leq L} \mathrm{Pr} \left(\left\Vert \frac{1}{N}\sum_{k=1}^N \Psi_{J_k}(z_l)^{\top}\Gamma_{J_k}^{-1}\frac{1}{B}\sum_{i=1}^B \varepsilon_{1,N,i,k}\Psi_{J_k}(Z_{0,i,k})\right\Vert_2>\varrho\right)\\
    & \leq C \exp\left( \log(L)-C_1N^{1-\alpha_{\dagger}}\varrho^2 \right) + C\exp\left( \log(L) - C_2 N^{1-\alpha_{\dagger}}M_N^{-1}\varrho\right)
\end{align*}
If we choose 
\[\varrho = \sqrt {C_1^{-1}N^{-(1-\alpha_{\dagger})}(\log(L) + 2\log(N))},\]
\[
M_N= \sqrt{\frac{C_2^2N^{1-\alpha_{\dagger}}}{C_1(\log(L) + 2\log(N))}}, 
\]
and 
\[
L = \left\lceil\frac{N^{1+\alpha_{\dagger}}}{\log N}\right\rceil,
\]
we have that 
\[
\sup_{z\in\mathcal Z_0}\left\Vert\frac{1}{N}\sum_{k=1}^N \Psi_{J_k}(z)^{\top}\Gamma_{J_k}^{-1}\frac{1}{B}\sum_{i=1}^B \varepsilon_{1,N,i,k}\Psi_{J_k}(Z_{0,i,k})\right\Vert_2 = O\left( N^{-\frac{1 - \alpha_{\dagger}}{2}}\log^{\frac{1}{2}}(N)\right), a.s.,
\]
and if $\kappa>2+\frac{2(1+\alpha_{\dagger})}{1-\alpha_{\dagger}}$, we have that 
\[
\sum_{N=1}^{\infty} P\left(\sup_{z\in\mathcal Z_0}\left\Vert  \frac{1}{N}\sum_{k=1}^N \Psi_{J_k}(z)^{\top}\Gamma_{J_k}^{-1}\frac{1}{B}\sum_{i=1}^B \varepsilon_{2,N,i,k}\Psi_{J_k}(Z_{0,i,k})\right\Vert_2>\varrho \right) <\infty
\]
So together we have that 
\[
\sup_{z\in\mathcal Z_0}\left|\frac{1}{N}\sum_{k=1}^N \Psi_{J_k}(z)^{\top}\Gamma_{J_k}^{-1}\frac{1}{B}\sum_{i=1}^B \varepsilon_{ i,k}\Psi_{J_k}(Z_{0,i,k})\right| =  O\left( N^{-\frac{1 - \alpha_{\dagger}}{2}}\log^{\frac{1}{2}}(N)\right), a.s.
\]
Combining this display with the exact bias and generated-regressor
decompositions gives
\[
 \Vert\overline F_N-F_0\Vert_{\infty}
 =O_{\mathrm{a.s.}}\!\left(
 \sqrt{\frac{J_N\log N}{N}}+
 \frac1N\sum_{k=1}^NJ_k^{-s}+
 \left\Vert\frac1N\sum_{k=1}^N\Delta\check\theta_{k-1}\right\Vert_2+
 A_{\mu,N}\right),
\]
which is \eqref{eq:C1-supnorm-corrected} and finishes the proof.

\section{Phase II Joint Local Refinement} \label{appendixD}

\subsection{Pseudo Code}
The pseudo code for the second phase local refinement is given in Algorithm \ref{alg:local-online}.

\begin{algorithm}
\caption{Joint Refinement Learner with Plug-in AME Average (Phase II)}
\label{alg:local-online}
\textbf{Require:} streaming batches $\mathcal{W}_k$, $k = 1, \ldots, N$; sieve dimensions $\{J_k\}$; learning rates $\{\xi_k\}$; initial estimate $\widetilde\omega_0 = (\widetilde\theta_0^{\top}, \widetilde\beta_0^{\top})^{\top}$; predictable conditioning matrices $\{\widehat G_k\}$; predictable guardrail centers $\{\widehat{\omega}_k\}$; constant $C_{\Omega} > 0$.
\algline{1}{$\doubleoverline{\theta}_0 \gets \widetilde\theta_0$, \ $\doubleoverline{\beta}_0 \gets \widetilde\beta_0$}
\algline{2}{\textbf{for} $k = 1, 2, \ldots, N$ \textbf{do}}
\algline{3}{\quad Construct
    $\widehat\Omega_k = \left\{\omega \in \mathbb{R}^{p + J_k}: \Vert \omega - \widehat{\omega}_k \Vert_2 \leq \tfrac{C_{\Omega}}{2}\, J_k^{-2}\right\}$}
\algline{4}{\quad Re-embed the pre-update state:
    $\omega_{k-1}^{(k)}\gets
    \mathcal R_{J_{k-1},J_k}^{\top}\widetilde\omega_{k-1}$}
\algline{5}{\quad \textbf{Projected joint gradient update}:
    \[
        \widetilde\omega_k \gets \Pi_{\widehat\Omega_k}\!\left[\omega_{k-1}^{(k)}
        - \frac{\xi_k\widehat G_k}{B}\sum_{i=1}^B
        \nabla_{\omega}\mathcal{L}_{J_k}\!\left(\omega_{k-1}^{(k)},W_{i,k}\right)\right].
    \]
}
\algline{6}{\quad \textbf{Update PR averages:}
    \[
        \doubleoverline{\theta}_k \gets \frac{1}{k}\widetilde\theta_k + \frac{k-1}{k}\doubleoverline{\theta}_{k-1}, \qquad
        \doubleoverline{\beta}_k \gets \frac{1}{k}\widetilde\beta_k + \frac{k-1}{k} R_{J_{k-1}, J_k}^{\top}\,\doubleoverline{\beta}_{k-1}
    \]}
\algline{7}{\textbf{return} $\doubleoverline{\theta}_N$ and  $\doubleoverline{F}_N(\cdot) = \Psi_{J_N}(\cdot)^{\top}\doubleoverline{\beta}_N$}
\end{algorithm}

\subsection{Sufficient Conditions and Discussions}

\begin{condition}\label{cond11}
 (i) $\Vert \nabla_{zz} \Psi_J\Vert_{\infty}\leq CJ^{5/2}$ and $\Vert \nabla_{zzz} \Psi_J\Vert_{\infty}\leq CJ^{7/2}$; (ii) $\mathbb{E}[(1+|\varepsilon_{i,k}|^\kappa)(1+\Vert X_{i,k}\Vert_2^{\kappa})]<\infty$ for some $\kappa> 2 +  \max\{\frac{2}{\alpha_{\xi}}, \frac{2(1+\alpha_{\dagger})}{1-\alpha_{\dagger}}\}$; (iii) $\Vert \nabla_{zz} [F_0(\cdot) - \mathbb{P}_J(F_0)(\cdot)]\Vert _{\infty}\leq C_{F_0}J^{ 2 -s}$, $\Vert \nabla_{zzz} [F_0(\cdot) - \mathbb{P}_J(F_0)(\cdot)]\Vert _{\infty}\leq C_{F_0}J^{ 3 -s}$, and $\Vert \nabla_{zzz} F_0\Vert_{\infty}<\infty$, where $C_{F_0}$ is the constant defined in \autoref{condition5}(ii).
  \end{condition}

\begin{condition}\label{eigenvalue}
There holds
\[
0<\underline\lambda_{\mathcal L}
\leq\inf_J\lambda_{\min}\!\left(
 \nabla_{\omega\omega}\mathcal L_J(\omega_{J,0})\right)
\leq\sup_J\lambda_{\max}\!\left(
 \nabla_{\omega\omega}\mathcal L_J(\omega_{J,0})\right)
\leq\overline\lambda_{\mathcal L}<\infty.
\]
\end{condition}

\begin{condition}\label{cond14}
There is a fixed $\delta_\Omega\in(0,1)$ such that, eventually almost surely,
\[
 \Vert\widehat\omega_k-\omega_{J_k,0}\Vert_2
 \leq(1-\delta_\Omega)\frac{C_\Omega}{2}J_k^{-2}.
\]
Moreover,
$\sup_{\omega\in\widehat\Omega_k}\Vert\nabla_{\omega\omega}\mathcal L_{J_k}(\omega)-
\nabla_{\omega\omega}\mathcal L_{J_k}(\omega_{J_k,0})\Vert_{\mathrm{op}}
\leq\underline\lambda_{\mathcal L}/2$, and, for every fixed $C_\omega>0$,
$\sup_{\Vert\omega-\omega_{J,0}\Vert_2\leq C_\omega J^{-2}}
\Vert\nabla_{\omega\omega}\mathcal L_J(\omega)\Vert_{\mathrm{op}}
\leq C_\omega'$ with $C_\omega'$ independent of $J$.
\end{condition}

For later reference, let the deterministic Phase~II basin be
\begin{equation}\label{deterministic_phaseII_basin}
 \Omega_J:=\left\{\omega\in\mathbb R^{p+J}:
 \Vert\omega-\omega_{J,0}\Vert_2\leq C_\Omega J^{-2}\right\}.
\end{equation}

\begin{condition}[Estimated conditioning matrix]\label{cond_gain}
(i)  For every $k$,
$\widehat G_k\in\mathbb R^{(p+J_k)\times(p+J_k)}$ is
$\mathcal F_{k-1}$-measurable, symmetric, and positive definite; (ii)  There are
deterministic symmetric positive-definite matrices $G_{J,0}$ and constants
$0<\underline g\leq\overline g<\infty$ such that
\[
 \underline g\leq\inf_J\lambda_{\min}(G_{J,0})
 \leq\sup_J\lambda_{\max}(G_{J,0})\leq\overline g.
\]
Suppose also
that, for some $c_G>0$,
\begin{equation}\label{gain_hessian_compatibility}
 \inf_J\inf_{\omega\in\Omega_J}
 \lambda_{\min}\!\left(
 \frac{G_{J,0}\nabla_{\omega\omega}\mathcal L_J(\omega)
       +\nabla_{\omega\omega}\mathcal L_J(\omega)G_{J,0}}{2}
 \right)\geq c_G;
\end{equation}
(iii) Writing
$d_{G,k}:=\Vert\widehat G_k-G_{J_k,0}\Vert_{\rm op}$, suppose that, for some
$\alpha_G>0$,
\begin{equation}\label{gain_rate}
 d_{G,k}\leq \Vert\widehat G_k-G_{J_k,0}\Vert_F
 =O_{\rm a.s.}(k^{-\alpha_G}),
 \qquad
 \alpha_G>\frac{1-\alpha_\xi+\alpha_\dagger}{2}.
\end{equation}
Thus $\widehat G_k$ also has eigenvalues uniformly bounded away from zero and
infinity eventually almost surely. 
\end{condition}

\begin{remark}\label{remark_gain_rate}
The consistency $d_{G,k}=o_{\rm a.s.}(1)$, the uniform eigenvalue bounds, and
\eqref{gain_hessian_compatibility} are already sufficient for localization,
eventual inactivity of the projection, and the raw-iterate rates.  The
compatibility condition is automatic for the inverse-Hessian target
$G_{J,0}=\{\nabla_{\omega\omega}\mathcal L_J(\omega_{J,0})\}^{-1}$ after
$C_\Omega$ is chosen so that the local Hessian-variation bound underlying
 \autoref{cond14} holds throughout $\Omega_J$: the matrix in
\eqref{gain_hessian_compatibility} is then a uniformly small perturbation of
the identity.  For a general positive-definite
target, bounded eigenvalues alone do not ensure Euclidean one-step
contraction; alternatively one may use projection in the
$G_{J,0}^{-1}$-metric, in which case \eqref{gain_hessian_compatibility} is not
needed.

On the other side, a rate is needed only for the strong $o_{\rm a.s.}(N^{-1/2})$ remainder in the
PR representation.  The polynomial condition \eqref{gain_rate} is a
transparent sufficient condition.  The proof below uses the two primitive
requirements
\[
 \sum_{k=1}^{\infty}\frac{J_kd_{G,k}^2}{k}<\infty,
 \qquad
 \frac1{\sqrt N}\sum_{k=1}^N
 d_{G,k}k^{-(\alpha_\xi-\alpha_\dagger)/2}\rm{PolyLog}  \rightarrow0,
 \quad\text{a.s.},
\]
together with $d_{G,k}=o_{\rm a.s.}(1)$ and the analogous, smaller secondary
state term.  The first condition makes the predictably gain-weighted score a
common-Hilbert-space martingale with summable normalized quadratic variation;
the second controls the linear state component.  Condition~\eqref{gain_rate}
implies both because $\alpha_\xi<1-3\alpha_\dagger$ and
$\alpha_G>(1-\alpha_\xi+\alpha_\dagger)/2$.  These primitive conditions may be
imposed directly instead of the polynomial rate.
\end{remark}

\begin{condition}\label{new_rate}
Let $s>\frac{7}{2}$ be the approximation error rate  of $F_0$ defined in \autoref{condition5} and
$s_{\mu}$ be defined in \autoref{condition6}(ii), and define
\begin{equation}\label{def_qstar}
q_{\star} := s+\min(s_{\mu},\, s-1).
\end{equation}
The learning rate   $\alpha_{\xi}$ and
sieve growth exponent $\alpha_{\dagger}$ satisfy
$\max\left\{\tfrac{1}{2}+3\alpha_{\dagger},\ 1+(4-2s)\alpha_{\dagger}\right\}
< \alpha_{\xi} <
\min\left\{1-3\alpha_{\dagger},\ \tfrac{1}{2}+ (s-\tfrac{1}{2})\alpha_{\dagger}\right\},
$
and moreover, $q_{\star}\alpha_{\dagger} > \frac{1}{2}$.
\end{condition}

\begin{remark}
    \autoref{cond11} further regulates the norm of second- and third-order derivatives of spline basis functions. Such regulation is necessary when we consider NLS criterion. Note that for normalized spline basis functions, such regulation is automatically valid. \autoref{eigenvalue} regulates the curvature of the NLS loss function at the pseudo true parameter $\omega_{J,0}$. In the following proofs, we show that $\Vert  \mathbb{M}_J - \nabla_{\omega\omega} \mathcal{L}_J(\omega_{J,0}) \Vert_F\leq CJ^{\frac{3}{2} - s},$  where 
    \[
 \mathbb{M}_{J} = \mathbb{E}\begin{pmatrix}
    (\nabla_z F_0(Z_0))^2XX^{\top} &  \nabla_zF_0(Z_0) X\Psi_J(Z_0)^{\top}\\
    \nabla_{z} F_0(Z_0) \Psi_J(Z_0)X^{\top} & \Psi_J(Z_0)\Psi_J(Z_0)^{\top}
\end{pmatrix}.
\]
When $s>\frac{3}{2}$, \autoref{eigenvalue} is equivalent to assuming uniformly
(with respect to $J$) bounded eigenvalues from both below and above for
$\mathbb M_J$.   \autoref{cond14} imposes a strict interior
condition that is used to show eventual inactivity of projection.   The
Phase~I rates in  \autoref{theorem1} and \autoref{theorem6}, with tuning chosen
strictly inside \autoref{rate}, are sufficient to furnish this margin after an
almost surely finite time.
  
\end{remark}

\subsection{ Discussion on \autoref{cond14}}\label{discussion_on_basin}

This subsection shows that under mild conditions, the local basin with radius of order $J^{-2}$  preserves the curvature. To show this, note that the Hessian matrix of the NLS loss function is given by
\[
\nabla_{\omega\omega}\mathcal{L}_J(\omega, W) = \begin{pmatrix}
    \nabla_{\theta\theta}\mathcal{L}_J(\omega, W) &  \nabla_{\theta\beta}\mathcal{L}_J(\omega, W) \\
    \nabla_{\beta\theta}\mathcal{L}_J(\omega, W) & \nabla_{\beta\beta}\mathcal{L}_J(\omega, W)
\end{pmatrix},
\]
where by simple calculation \[\nabla_{\theta\theta}\mathcal{L}_J(\omega, W)  = ((\nabla_z \Psi_J(x_0 + X^{\top}\theta))^{\top}\beta)^2XX^{\top} - (Y - \Psi_J(x_0+X^{\top}\theta)^{\top}\beta)(\nabla_{zz}\Psi_J(x_0+X^{\top}\theta))^{\top}\beta XX^{\top},\] \[\nabla_{\theta\beta}\mathcal{L}_J(\omega, W)  = (\nabla_z \Psi_J(x_0 + X^{\top}\theta))^{\top}\beta X\Psi_J(x_0+X^{\top}\theta)^{\top} - (Y - \Psi_J(x_0+X^{\top}\theta)^{\top}\beta)X (\nabla_{z}\Psi_J(x_0+X^{\top}\theta))^{\top},\] and \[\nabla_{\beta\beta}\mathcal{L}_J(\omega, W) = \Psi_J(x_0+X^{\top}\theta)\Psi_J(x_0+X^{\top}\theta)^{\top}.\]
 We will calculate the bound of the derivative for each of the Hessian block with the local basin.  In particular, we prove the following result. 

 \begin{lemma}\label{local_basin}
Suppose that, for $m=0,1,2,3$,
\[
 J^{-(2m+1)/2}\Vert\nabla_z^m\Psi_J\Vert_\infty\le C,
 \qquad \Vert\nabla_z^mF_0\Vert_\infty\le C,
 \qquad
 \Vert\nabla_z^m(F_0-\mathbb P_JF_0)\Vert_\infty\le CJ^{m-s},
\]
with $s>3$, and suppose $\mathbb E\Vert X\Vert_2^4<\infty$.
For every fixed $C_1<\infty$ there is $C_2(C_1)<\infty$, independent of
$J$, such that, whenever
$\Vert\omega_i-\omega_{J,0}\Vert_2\le C_1J^{-2}$, $i=1,2$,
\[
 \Vert\nabla_{\omega\omega}\mathcal L_J(\omega_1)
       -\nabla_{\omega\omega}\mathcal L_J(\omega_2)\Vert_F
 \le C_2J^2\Vert\omega_1-\omega_2\Vert_2 .
\]
\end{lemma}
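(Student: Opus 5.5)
The plan is to write the population Hessian explicitly and bound its derivative along the segment joining $\omega_2$ and $\omega_1$. Set $\omega_t=\omega_2+t(\omega_1-\omega_2)$ for $t\in[0,1]$. The ball $\{\Vert\omega-\omega_{J,0}\Vert_2\le C_1J^{-2}\}$ is convex, so every $\omega_t$ lies in it. The mean value inequality then gives
\[
\Vert\nabla_{\omega\omega}\mathcal L_J(\omega_1)-\nabla_{\omega\omega}\mathcal L_J(\omega_2)\Vert_F\le\sup_t\Vert D_t\Vert_F\,\Vert\omega_1-\omega_2\Vert_2 .
\]
Here $D_t$ is the directional derivative of the Hessian at $\omega_t$ in the unit direction $d=(d_\theta,d_\beta)$. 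It therefore suffices to show $\Vert D_t\Vert_F\le C_2J^2$ uniformly on the ball.

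First I would remove the noise. Write $Y=F_0(Z_0)+\varepsilon$ with $\mathbb E(\varepsilon\mid x_0,X)=0$. In the block formulas displayed above, $\varepsilon$ enters only linearly, multiplied by functions of $(x_0,X)$. Hence, after interchanging $\nabla_\omega$ and $\mathbb E$ (justified by dominated convergence, using the envelopes below and $\mathbb E\Vert X\Vert_2^4<\infty$), the population Hessian blocks involve only the deterministic residual
\[
r_\omega:=F_0(Z_0)-g_\beta(Z_\theta),\qquad g_\beta:=\Psi_J^\top\beta,\quad Z_\theta:=x_0+X^\top\theta .
\]

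Next I would establish uniform envelopes on the ball. Write $\beta=\beta_{J,0}+\Delta\beta$ with $\Vert\Delta\beta\Vert_2\le C_1J^{-2}$, and note $g_{\beta_{J,0}}=\mathbb P_JF_0$.

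For the derivatives of $g_\beta$: the assumed bound $\Vert\nabla^m_z(F_0-\mathbb P_JF_0)\Vert_\infty\le CJ^{m-s}$ with $s>3$ makes $\nabla^m_z\mathbb P_JF_0$ bounded for $m\le 3$. The perturbation contributes
\[
\Vert\nabla^m_z\Psi_J^\top\Delta\beta\Vert_\infty\le CJ^{(2m+1)/2-2}.
\]
Consequently $|g_\beta|,|g_\beta'|\le C$, $|g_\beta''|\le CJ^{1/2}$ and $|g_\beta'''|\le CJ^{3/2}$.

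For the residual, split it as
\[
r_\omega=\bigl(F_0(Z_0)-F_0(Z_\theta)\bigr)+(F_0-\mathbb P_JF_0)(Z_\theta)-\Psi_J(Z_\theta)^\top\Delta\beta .
\]
The three pieces are bounded by $C\Vert X\Vert_2J^{-2}$, $CJ^{-s}$ and $CJ^{-3/2}$ respectively, so $|r_\omega|\le CJ^{-3/2}(1+\Vert X\Vert_2)$.

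Then I would differentiate each block in the direction $d$ and bound every resulting term by the product of these envelopes and $\Vert\Psi_J\Vert\lesssim J^{1/2}$, $\Vert\nabla\Psi_J\Vert\lesssim J^{3/2}$, $\Vert\nabla^2\Psi_J\Vert\lesssim J^{5/2}$. I would use the crude bound $\Vert\mathbb E A\Vert_F\le\mathbb E\Vert A\Vert_F$ throughout.

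For the $\theta\theta$ block, the $d_\theta$-terms are $g'g''(X^\top d_\theta)XX^\top$, $(g')^2g''$-type terms and $r\,g'''$; these are $O(J^{1/2})$ times powers of $\Vert X\Vert_2$ up to the fourth. The $d_\beta$-terms are $g'\,\nabla\Psi_J^\top d_\beta$, of order $J^{3/2}$, together with $\Psi_J^\top d_\beta\,g''$ and $r\,\nabla^2\Psi_J^\top d_\beta$, each of order $J$.

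For the $\theta\beta$ block, the worst terms are $(\nabla\Psi_J^\top d_\beta)X\Psi_J^\top$ and $(\Psi_J^\top d_\beta)X\nabla\Psi_J^\top$. Their Frobenius norm is at most $J^{3/2}\cdot J^{1/2}\Vert X\Vert_2=J^2\Vert X\Vert_2$. The $d_\theta$-terms there are $O(J^{3/2})$.

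For the $\beta\beta$ block, only the $d_\theta$-direction contributes, namely $(\nabla\Psi_J\Psi_J^\top+\Psi_J\nabla\Psi_J^\top)X^\top d_\theta$, whose Frobenius norm is $O(J^2\Vert X\Vert_2)$.

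Taking expectations with $\mathbb E\Vert X\Vert_2^4<\infty$ gives $\sup_t\Vert D_t\Vert_F\le C_2(C_1)J^2$, which proves the claim. The constant depends on $C_1$ only through the envelopes above, so it is independent of $J$.

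I expect the main obstacle to be the envelope step, and in particular the residual bound $r_\omega=O(J^{-3/2})$. This bound is exactly what makes the terms $r\,g'''$ and $r\,\nabla^2\Psi_J$ of order at most $J$. It rests on the $J^{-2}$ radius of the ball compensating for the $J^{1/2}$ growth of $\Psi_J$, and on $s>3$ keeping $\nabla^3\mathbb P_JF_0$ bounded.
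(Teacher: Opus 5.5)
Your proposal is correct and follows essentially the same route as the paper's proof. It uses the same envelopes on the $J^{-2}$-ball ($|g_\beta|+|g_\beta'|\le C$, $|g_\beta''|\le CJ^{1/2}$, $|g_\beta'''|\le CJ^{3/2}$, and an $O(J^{-3/2})$ residual), the same blockwise directional derivatives of the noise-free population Hessian ($O(J^2)$ for the $\beta\beta$ and $\theta\beta$ blocks, $O(J^{3/2})$ for the $\theta\theta$ block), and the same integration along the segment in the convex ball. The paper's residual envelope $C\{J^{-3/2}+J^{-2}\Vert X\Vert_2\}$ is slightly sharper than your $CJ^{-3/2}(1+\Vert X\Vert_2)$, but both are integrable under $\mathbb E\Vert X\Vert_2^4<\infty$ and give the same conclusion.
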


\begin{proof}
Write $t_\theta=x_0+X^\top\theta$,
$f_\beta(t)=\Psi_J(t)^\top\beta$, and
$r_\omega=F_0(Z_0)-f_\beta(t_\theta)$.  On the stated ball, the inverse
inequalities and the approximation assumptions imply, uniformly in $t$,
\begin{equation}\label{local_basin_envelopes}
 |f_\beta(t)|+|f_\beta'(t)|\le C,
 \qquad |f_\beta''(t)|\le CJ^{1/2},
 \qquad |f_\beta^{(3)}(t)|\le CJ^{3/2},
\end{equation}
and
\begin{equation}\label{local_basin_residual}
 |r_\omega|
 \le C\{J^{-3/2}+J^{-2}\Vert X\Vert_2\}.
\end{equation}
Indeed, the coefficient displacement contributes
$J^{(2m+1)/2}J^{-2}$ to the $m$th derivative, while
$f_{\beta_{J,0}}=\mathbb P_JF_0$; the index displacement contributes
$C\Vert X\Vert_2J^{-2}$ to the residual.

It is enough to bound the directional derivative of the population Hessian.
Let $h=(u^\top,v^\top)^\top$ be a unit vector and put
$\dot t=X^\top u$.  Along $\omega+qh$,
\[
 \dot f=f_\beta'\dot t+\Psi_J^\top v,
 \quad
 \dot f'=f_\beta''\dot t+(\nabla_z\Psi_J)^\top v,
 \quad
 \dot f''=f_\beta^{(3)}\dot t+(\nabla_{zz}\Psi_J)^\top v,
 \quad \dot r=-\dot f .
\]
Taking expectation of the sample Hessian replaces $Y$ by $F_0(Z_0)$, so the three population Hessian blocks are
\begin{align*}
 H_{\beta\beta}(\omega)
 &=\mathbb E[\Psi_J(t_\theta)\Psi_J(t_\theta)^\top],\\
 H_{\theta\theta}(\omega)
 &=\mathbb E[\{(f_\beta')^2-r_\omega f_\beta''\}XX^\top],\\
 H_{\theta\beta}(\omega)
 &=\mathbb E[X\{f_\beta'\Psi_J(t_\theta)^\top
                    -r_\omega\nabla_z\Psi_J(t_\theta)^\top\}].
\end{align*}
For the first block,
\[
 \Vert DH_{\beta\beta}(\omega)[h]\Vert_F
 \le 2\,\mathbb E\!\left[
      \Vert\Psi_J(t_\theta)\Vert_2
      \Vert\nabla_z\Psi_J(t_\theta)\Vert_2
      |\dot t|\right]
 \le CJ^2.
\]
For the scalar multiplying $XX^\top$ in $H_{\theta\theta}$,
\[
 D\{(f_\beta')^2-r_\omega f_\beta''\}[h]
 =2f_\beta'\dot f'+\dot f\,f_\beta''-r_\omega\dot f''.
\]
Using \eqref{local_basin_envelopes}--\eqref{local_basin_residual}, its
absolute value is at most
$C\{J^{3/2}+J^{1/2}\Vert X\Vert_2
       +J^{-1/2}\Vert X\Vert_2^2\}$.
Consequently, by $\mathbb E\Vert X\Vert_2^4<\infty$,
\[
 \Vert DH_{\theta\theta}(\omega)[h]\Vert_F\le CJ^{3/2}.
\]
Finally,
\begin{align*}
 D\{f_\beta'\Psi_J^\top-r_\omega(\nabla_z\Psi_J)^\top\}[h]
 ={}&\dot f'\Psi_J^\top+f_\beta'(\nabla_z\Psi_J)^\top\dot t\\
 &+\dot f(\nabla_z\Psi_J)^\top
   -r_\omega(\nabla_{zz}\Psi_J)^\top\dot t .
\end{align*}
The same envelopes, followed by multiplication by $X$ and Hölder's
inequality, give
$\Vert DH_{\theta\beta}(\omega)[h]\Vert_F\le CJ^2$.
The transpose block has the same bound.  Integrating these directional
bounds along the line segment joining $\omega_1$ and $\omega_2$, which lies
in the convex local ball, proves the claim.
\end{proof}

\subsection{Proofs of  \autoref{theorem_NLS_update}}

   Similar to the proof of \autoref{theorem6}, the proof is divided into three steps.  To keep the already lengthy score decomposition readable, the three baseline steps are first written for an auxiliary identity-gain path; throughout those steps, $\widetilde\omega_k$, $D_k$, and $T_k$ refer to that auxiliary path.  Subsection~\ref{proof_estimated_gain} then repeats the needed stability bounds for the actual conditioned path, replaces the identity by the population matrices $G_{J,0}$, and finally controls the predictable estimates $\widehat G_k$.  That subsection verifies both the gain-dependent raw recursion and the cancellation that leaves the PR influence function unchanged. Before starting, we record two auxiliary lemmas: a \emph{centered} bound for the sieve score at the pseudo-true parameter, and a \emph{telescoping} bound for the sieve-target jumps generated by nestedness of the sieve spaces. Throughout, recall $q_{\star}=s+\min(s_{\mu},s-1)$ from \eqref{def_qstar}, $\mathcal{J}(m) = \min\{k: J_k\geq m\}$, and $h(z) := \left(\nabla_z F_0(z)\right)\mu_0(\theta_0, z)$.

\begin{lemma}[Centered sieve score]\label{lemma_centered_score}
Let \autoref{condition5}--\ref{condition7} and \autoref{cond11} hold with $s>2$. Define
\[
b_J := \mathbb{E}\left[\nabla_{\omega}\mathcal{L}_J(\omega_{J,0}, W)\right] = \left(b_{\theta,J}^{\top}, b_{\beta,J}^{\top}\right)^{\top}.
\]
Then (i) $b_{\beta,J} = 0$; (ii) $\Vert b_{\theta, J}\Vert_2\leq CJ^{-q_{\star}}$; (iii) $\mathbb{E}\Vert \nabla_{\omega}\mathcal{L}_J(\omega_{J,0}, W) - b_J \Vert_2^2\leq CJ$.
\end{lemma}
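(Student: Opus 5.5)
We compute the score at $\omega_{J,0}=(\theta_0^\top,\beta_{J,0}^\top)^\top$ explicitly. Writing $r_J(z):=F_0(z)-\mathbb P_J(F_0)(z)$ and using $\Psi_J(Z_0)^\top\beta_{J,0}=\mathbb P_J(F_0)(Z_0)$, the residual at the pseudo-true parameter is $Y-\Psi_J(Z_0)^\top\beta_{J,0}=\varepsilon+r_J(Z_0)$. The two score blocks are
\[
\nabla_\beta\mathcal L_J(\omega_{J,0},W)=-(\varepsilon+r_J(Z_0))\Psi_J(Z_0),\qquad
\nabla_\theta\mathcal L_J(\omega_{J,0},W)=-(\varepsilon+r_J(Z_0))\,\nabla_z\mathbb P_J(F_0)(Z_0)\,X .
\]

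\textbf{Part (i).} Take expectations and use $\mathbb E(\varepsilon\mid x_0,X)=0$. This gives $b_{\beta,J}=-\mathbb E[r_J(Z_0)\Psi_J(Z_0)]$. That quantity vanishes because $\beta_{J,0}$ is the $L_2(P_{Z_0})$ projection coefficient in \eqref{L2-projection}, so $r_J$ is orthogonal to $\mathcal S_J$.

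\textbf{Part (ii).} Condition on $Z_0$ to get
\[
b_{\theta,J}=-\mathbb E\bigl[r_J(Z_0)\,\nabla_z\mathbb P_J(F_0)(Z_0)\,\mu_0(\theta_0,Z_0)\bigr].
\]
Write $\nabla_z\mathbb P_J(F_0)=\nabla_zF_0-\nabla_z r_J$. This splits $b_{\theta,J}$ into $-\mathbb E[r_J(Z_0)h(Z_0)]+\mathbb E[r_J(Z_0)\nabla_z r_J(Z_0)\mu_0(\theta_0,Z_0)]$, with $h(z)=\nabla_zF_0(z)\mu_0(\theta_0,z)$.

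For the first term, orthogonality of $r_J$ to $\mathcal S_J$ lets us subtract the componentwise projection $\mathbb P_Jh$. Then $|\mathbb E[r_J(h-\mathbb P_Jh)]|\le \Vert r_J\Vert_\infty\Vert h-\mathbb P_Jh\Vert_\infty\le CJ^{-s}J^{-s_\mu}$, using \autoref{condition5} and \autoref{condition6}(ii).

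For the second term, \autoref{condition5} bounds it by $\Vert r_J\Vert_\infty\Vert\nabla_zr_J\Vert_\infty\,\mathbb E\Vert\mu_0(\theta_0,Z_0)\Vert_2\le CJ^{-s}J^{1-s}$. Here $\mathbb E\Vert\mu_0\Vert_2\le\mathbb E\Vert X\Vert_2<\infty$ by Jensen.

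Together these give $\Vert b_{\theta,J}\Vert_2\le CJ^{-s-\min(s_\mu,s-1)}=CJ^{-q_\star}$. The delicate point is the bound on $\nabla_z r_J$: we apply it only to $r_J$ itself and never to $\mu_0$, which may be unbounded. If $\mu_0$ is only integrable, the plain $L_1$ bound above suffices.

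\textbf{Part (iii).} Bound the centered second moment by the raw second moment. For the $\beta$ block, $\mathbb E[(\varepsilon+r_J)^2\Vert\Psi_J\Vert_2^2]\le C(\sup\sigma^2+\Vert r_J\Vert_\infty^2)\,\mathbb E\Vert\Psi_J(Z_0)\Vert_2^2$. This is at most $C\,\mathrm{tr}(\Gamma_J)\le CJ$ by \autoref{condition7}(ii).

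For the $\theta$ block, $\Vert\nabla_z\mathbb P_J(F_0)\Vert_\infty\le\Vert\nabla_zF_0\Vert_\infty+CJ^{1-s}\le C$ since $s>2$. Combining this with bounded $\sigma^2$ and $\mathbb E\Vert X\Vert_2^2<\infty$, the $\theta$ block contributes $O(1)\le CJ$.

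We expect step (ii) to be the only nonroutine step, specifically the insertion of $\mathbb P_Jh$ via orthogonality. Parts (i) and (iii) are direct.
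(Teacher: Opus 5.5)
Your proof is correct and follows the paper's argument essentially step for step: the same residual $\varepsilon+r_J(Z_0)$, orthogonality of $r_J$ to $\mathcal S_J$ for (i), the split $\nabla_z\mathbb P_J(F_0)=\nabla_zF_0-\nabla_z r_J$ with $\mathbb P_Jh$ inserted by orthogonality for (ii), and the trace bound $\mathrm{tr}(\Gamma_J)\le CJ$ for (iii). The only difference is cosmetic: $s>1$ already suffices for the uniform bound on $\nabla_z\mathbb P_J(F_0)$ used in (iii).
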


\begin{proof}
Recall $\nabla_{\beta}\mathcal{L}_J(\omega_{J,0}, W) = - (Y - \Psi_J(Z_0)^{\top}\beta_{J,0}) \Psi_J(Z_0)$. Since $\mathbb{E}(\varepsilon|x_0, X)=0$,
\[
b_{\beta,J} = -\mathbb{E}\left[\left(F_0(Z_0) - \mathbb{P}_{J}(F_0)(Z_0)\right)\Psi_{J}(Z_0)\right] = 0
\]
by the definition of the $L_2$-projection \eqref{L2-projection}. This proves (i). For (ii), $\nabla_{\theta}\mathcal{L}_J(\omega_{J,0}, W) = - (Y - \Psi_J(Z_0)^{\top}\beta_{J,0})(\nabla_{z}\Psi_J(Z_0))^{\top}\beta_{J,0} X$, so conditioning on $Z_0$,
\begin{align*}
b_{\theta,J} & = -\mathbb{E}\left[\left(F_0 - \mathbb{P}_{J}(F_0)\right)(Z_0)\, \nabla_z\mathbb{P}_{J}(F_0)(Z_0)\,\mu_0(\theta_0, Z_0)\right]\\
& = -\mathbb{E}\left[\left(F_0 - \mathbb{P}_{J}(F_0)\right)(Z_0)\, h(Z_0)\right]
- \mathbb{E}\left[\left(F_0 - \mathbb{P}_{J}(F_0)\right)(Z_0)
\left(\nabla_z\mathbb{P}_{J}(F_0) - \nabla_z F_0\right)(Z_0)\,\mu_0(\theta_0, Z_0)\right].
\end{align*}
By the $L_2$-projection property, $\mathbb{E}[(F_0 - \mathbb{P}_{J}(F_0))(Z_0)g(Z_0)]=0$ for every $g\in\mathcal{S}_J$ componentwise; applying this with $g = \mathbb{P}_J(h)$ and using \autoref{condition5} and \autoref{condition6}(ii),
\[
\left\Vert \mathbb{E}\left[\left(F_0 - \mathbb{P}_{J}(F_0)\right)(Z_0) h(Z_0)\right]\right\Vert_2 = \left\Vert \mathbb{E}\left[\left(F_0 - \mathbb{P}_{J}(F_0)\right)(Z_0)\left(h - \mathbb{P}_{J}(h)\right)(Z_0)\right]\right\Vert_2\leq CJ^{-s}\cdot C_{\mu}J^{-s_{\mu}}.
\]
For the second term, \autoref{condition5}(ii) gives $\Vert F_0 - \mathbb{P}_J(F_0)\Vert_{\infty}\leq CJ^{-s}$ and $\Vert \nabla_z[\mathbb{P}_J(F_0) - F_0]\Vert_{\infty}\leq CJ^{1-s}$, while $\mathbb{E}\Vert \mu_0(\theta_0, Z_0)\Vert_2\leq (\mathbb{E}\Vert X\Vert_2^2)^{1/2}<\infty$; hence this term is $O(J^{-s}\cdot J^{1-s}) = O(J^{-(2s-1)})$. Since $q_{\star} = s + \min(s_{\mu}, s-1) \leq \min\{s+s_{\mu},\, 2s-1\}$, (ii) follows. For (iii), $\nabla_z\mathbb{P}_J(F_0)$ is uniformly bounded for $s>1$, so $\mathbb{E}\Vert\nabla_{\theta}\mathcal{L}_J(\omega_{J,0},W)\Vert_2^2\leq C\mathbb{E}[(1+\varepsilon^2)\Vert X\Vert_2^2]\leq C$ by \autoref{condition6}; and
\[
\mathbb{E}\Vert\nabla_{\beta}\mathcal{L}_J(\omega_{J,0},W)\Vert_2^2\leq C\,\mathbb{E}\Vert \Psi_J(Z_0)\Vert_2^2 = C\,\mathrm{tr}(\Gamma_J)\leq CJ
\]
by \autoref{condition7}(ii). Subtracting the mean $b_J$ does not increase the order.
\end{proof}

\begin{lemma}[Telescoping bound for sieve-target jumps]\label{lemma_telescoping}
Let \autoref{condition5} and \autoref{condition7} hold, and recall from \autoref{sec2.2} that the sieve spaces are nested with the exact refinement relation \eqref{eq:reembedding}. For $m\geq J_0+1$ define the target jump
\[
\delta_m := \left\Vert \beta_{m,0} - R^{\top}_{m-1,m}\beta_{m-1,0} \right\Vert_2 .
\]
Then there is a constant $C_{\delta}$ independent of $m$ such that:
\begin{enumerate}
    \item[(i)] $\delta_m\leq C_{\delta}m^{-s}$;
    \item[(ii)] $\sum_{m>M}\delta_m^2\leq C_{\delta}^2M^{-2s}$ for every $M\geq J_0$;
    \item[(iii)] for every $\nu\geq 0$ there is $C(\nu)<\infty$ such that for all $M\geq 2J_0$,
    \[
    \sum_{m=J_0+1}^{M} m^{\nu}\delta_m\leq C(\nu)\left(M^{\nu + \frac{1}{2} - s} + \log M\right), \qquad
    \sum_{m=J_0+1}^{M} m^{\nu}\delta_m^2\leq C(\nu)\left(M^{\nu - 2s} + \log M\right).
    \]
\end{enumerate}
\end{lemma}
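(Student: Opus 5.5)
The plan is to convert the coefficient jump $\delta_m$ into a function-space statement and then exploit the uniform Gram bounds of \autoref{condition7}(ii). By the exact refinement relation \eqref{eq:reembedding}, $R_{m-1,m}^{\top}\beta_{m-1,0}$ is the coefficient vector, in the level-$m$ basis, of the function $\mathbb P_{m-1}(F_0)\in\mathcal S_{m-1}\subseteq\mathcal S_m$. Hence
\[
\Psi_m(z)^{\top}\bigl(\beta_{m,0}-R_{m-1,m}^{\top}\beta_{m-1,0}\bigr)=\mathbb P_m(F_0)(z)-\mathbb P_{m-1}(F_0)(z).
\]
Taking the $L_2(P_{Z_0})$ norm and using $\lambda_{\min}(\Gamma_m)\ge c>0$ uniformly in $m$, we get
\[
c\,\delta_m^2\le \bigl\Vert \mathbb P_mF_0-\mathbb P_{m-1}F_0\bigr\Vert_{L_2}^2 .
\]
This is exactly the argument already used in \autoref{C1.part1} to show $\Vert\beta_{J_k,0}-R^{\top}_{J_{k-1},J_k}\beta_{J_{k-1},0}\Vert_2\le C\boldsymbol 1_kJ_k^{-s}$. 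For part (i), bound the right-hand side by $\Vert \mathbb P_mF_0-F_0\Vert_\infty+\Vert F_0-\mathbb P_{m-1}F_0\Vert_\infty\le C_{F_0}(m^{-s}+(m-1)^{-s})\le Cm^{-s}$ via \autoref{condition5}.

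For part (ii), summing the crude bound (i) only gives $\sum_{m>M}m^{-2s}\lesssim M^{1-2s}$, which is too weak. Instead, use nestedness and orthogonality. Since $\mathcal S_{m-1}\subseteq\mathcal S_m$ are closed subspaces of $L_2(P_{Z_0})$ and the $\mathbb P_m$ are the corresponding orthogonal projections, the increments $\mathbb P_mF_0-\mathbb P_{m-1}F_0$ for $m>M$ are mutually orthogonal and all orthogonal to $\mathcal S_M$. Pythagoras then gives
\[
\sum_{m=M+1}^{M'}\Vert \mathbb P_mF_0-\mathbb P_{m-1}F_0\Vert_{L_2}^2=\Vert \mathbb P_{M'}F_0-\mathbb P_MF_0\Vert_{L_2}^2\le \Vert F_0-\mathbb P_MF_0\Vert_{L_2}^2\le C_{F_0}^2M^{-2s},
\]
where the middle inequality holds because $\mathbb P_{M'}F_0-\mathbb P_MF_0$ is the projection of $F_0-\mathbb P_MF_0$ onto $\mathcal S_{M'}$. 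Combined with the display above and $M'\to\infty$, this yields (ii), with $C_\delta$ absorbing $1/c$. The one point to check is that the nested-knot sieve has $\mathcal S_m$ built from $\mathcal S_{m-1}$ by a single knot insertion for each integer $m$, so every dimension appears. Otherwise, re-index over the realized dimensions; the Pythagoras argument is unchanged.

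For part (iii), use summation by parts on dyadic or tail blocks, with the tail sums $T_M:=\sum_{m>M}\delta_m^2\le CM^{-2s}$ from (ii). For the squared sum, write $\sum_{m\le M}m^{\nu}\delta_m^2=\sum_m m^{\nu}(T_{m-1}-T_m)$ and apply Abel summation. This gives $J_0^{\nu}T_{J_0}+\sum_{m<M}((m+1)^{\nu}-m^{\nu})T_m\le C+C\sum_m m^{\nu-1-2s}$. The result is $O(1+M^{\nu-2s})$, which is within the stated $C(\nu)(M^{\nu-2s}+\log M)$; the $\log M$ covers the boundary case $\nu=2s$ and the constant, since $M\ge2J_0$.

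For the linear sum, apply Cauchy--Schwarz on dyadic blocks $(2^{j},2^{j+1}]$:
\[
\sum_{2^j<m\le 2^{j+1}}m^{\nu}\delta_m\le C2^{j\nu}2^{j/2}\Bigl(\sum_{m>2^j}\delta_m^2\Bigr)^{1/2}\le C2^{j(\nu+1/2-s)}.
\]
Summing over $j\le\log_2M$ gives $O(M^{\nu+1/2-s})$ when the exponent is positive, $O(\log M)$ when it is zero, and $O(1)$ when it is negative.

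The only genuinely nontrivial step is the orthogonality/Pythagoras argument in (ii). Everything else reduces to the eigenvalue bounds and to elementary Abel or dyadic summation.
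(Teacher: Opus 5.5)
Your proposal is correct and follows essentially the same route as the paper. You identify $\beta_{m,0}-R_{m-1,m}^{\top}\beta_{m-1,0}$ with the coefficients of the detail function $\mathbb{P}_mF_0-\mathbb{P}_{m-1}F_0$, apply the uniform lower eigenvalue bound on $\Gamma_m$, obtain (ii) from orthogonality of the nested projections and Pythagoras, and obtain (iii) from dyadic Cauchy--Schwarz. The differences are cosmetic: the paper gets (i) as the single-summand case $M=m-1$ of (ii) rather than from the sup-norm triangle inequality, and it bounds the squared sum in (iii) with the same dyadic blocks instead of Abel summation.
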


\begin{proof}
By the exact refinement relation $\Psi_{m-1}(z) = R_{m-1,m}\Psi_m(z)$, the vector $\beta_{m,0} - R^{\top}_{m-1,m}\beta_{m-1,0}$ collects the coefficients, in the basis $\Psi_m$, of the \emph{detail} function
\[
\Delta_m(z) := \mathbb{P}_m(F_0)(z) - \mathbb{P}_{m-1}(F_0)(z).
\]
Since $\mathbb{P}_{m}$ is the orthogonal projection of $L_2(Z_0)$ onto $\mathcal{S}_{m}$ and $\mathcal{S}_{m-1}\subseteq\mathcal{S}_{m}$, we have $\Delta_m\perp \mathcal{S}_{m-1}$ in $L_2(Z_0)$, while $\Delta_{m'}\in\mathcal{S}_{m'}\subseteq \mathcal{S}_{m-1}$ for every $m'<m$. Hence 
\begin{align*}
& \sum_{m=M+1}^{M'}\Vert \Delta_m\Vert_{L_2(Z_0)}^2 = \left\Vert \mathbb{P}_{M'}(F_0) - \mathbb{P}_{M}(F_0)\right\Vert_{L_2(Z_0)}^2\\
& = \left\Vert \mathbb{P}_{M'}\left(F_0 - \mathbb{P}_{M}(F_0)\right)\right\Vert_{L_2(Z_0)}^2 \leq \left\Vert F_0 - \mathbb{P}_{M}(F_0)\right\Vert_{\infty}^2\leq C^2M^{-2s},
\end{align*}
where the last step uses \autoref{condition5}. Letting $M'\rightarrow\infty$ and using $\delta_m^2\leq \lambda_{\mathrm{min}}(\Gamma_m)^{-1}\Vert \Delta_m\Vert^2_{L_2(Z_0)}\leq C\Vert \Delta_m\Vert^2_{L_2(Z_0)}$ by \autoref{condition7}(ii) proves (ii); taking $M = m-1$ and keeping the single summand $\delta_m^2$ proves (i). For (iii), decompose $\{J_0+1,\ldots,M\}$ into dyadic blocks $B_{\ell} = \{m: 2^{\ell}\leq m<2^{\ell+1}\}$. By the Cauchy--Schwarz inequality and (ii),
\[
\sum_{m\in B_{\ell}}m^{\nu}\delta_m\leq 2^{(\ell+1)\nu}\left(2^{\ell}\right)^{\frac{1}{2}}\left(\sum_{m\geq 2^{\ell}}\delta_m^2\right)^{\frac{1}{2}}\leq C2^{\ell\left(\nu+\frac{1}{2}-s\right)},
\qquad
\sum_{m\in B_{\ell}}m^{\nu}\delta_m^2\leq C2^{\ell\left(\nu-2s\right)}.
\]
Summing the geometric series over $\ell\leq \log_2 M$ gives the displayed bounds, where the term $\log M$ covers the cases in which the exponent is nonpositive.
\end{proof}

\begin{remark}\label{remark_multijump}
Since $J_k = [J_0k^{\alpha_{\dagger}}]$, we have $J_k - J_{k-1}\in\{0,1\}$ for all $k$ sufficiently large, so eventually each sieve expansion inserts exactly one knot. For the finitely many early steps with $J_k-J_{k-1}>1$, telescoping through intermediate dimensions and $\sup_{J_1\leq J_2}\Vert R_{J_1,J_2}\Vert_{\mathrm{op}}<\infty$ give
$\Vert R^{\top}_{J_{k-1},J_k}\beta_{J_{k-1},0} - \beta_{J_k,0}\Vert_2\leq C\sum_{m=J_{k-1}+1}^{J_k}\delta_m$,
so these steps only affect constants. In particular, \autoref{lemma_telescoping}(i) recovers the bound $\Vert R^{\top}_{J_{k-1},J_k}\beta_{J_{k-1},0} - \beta_{J_k,0}\Vert_2\leq CJ_k^{-s}$ used in the proof of \autoref{theorem6}, while \autoref{lemma_telescoping}(ii)--(iii) are strictly sharper \emph{on average} across expansions: the multiresolution (nested-projection) structure makes the jump sizes square-summable at rate $M^{-2s}$, which is the property exploited in \autoref{lemma5.2part3} below.
\end{remark}

\begin{lemma}[Common-space martingale averaging]\label{lem:common_space_mds}
Let $U_k=(U_{\theta,k}^{\top},U_{\beta,k}^{\top})^{\top}\in
\mathbb R^{p+J_k}$ be an $\mathcal F_k$-adapted square-integrable martingale
difference array.  Suppose
\begin{equation}\label{common_space_qv}
 \sum_{k=1}^{\infty}\frac1k\,
 \mathbb E_{k-1}\Vert U_k\Vert_2^2<\infty
 \qquad\text{a.s.}
\end{equation}
Then
\begin{equation}\label{common_space_endpoint}
 \frac1{\sqrt N}\left\Vert
 \sum_{k=1}^N\mathcal R_{J_k,J_N}^{\top}U_k
 \right\Vert_2\longrightarrow0\qquad\text{a.s.}
\end{equation}
Moreover, if $C_k\in\mathbb R^{q\times(p+J_k)}$ is
$\mathcal F_{k-1}$-measurable for a fixed $q$ and
$\sup_k\Vert C_k\Vert_{\rm op}<\infty$ almost surely, then
\begin{equation}\label{common_space_rows}
 \max_{1\leq n\leq N}\frac1{\sqrt N}
 \left\Vert\sum_{k=1}^n C_kU_k\right\Vert_2
 \longrightarrow0\qquad\text{a.s.}
\end{equation}
The conclusions remain true after multiplying $U_k$ by any predictable scalar
$a_k$ with $|a_k|\leq1$.
\end{lemma}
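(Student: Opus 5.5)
The plan is to place all the growing-dimensional vectors in a single fixed Hilbert space, in which the re-embedding operators act as exact isometries. Then a Hilbert-valued martingale convergence theorem and Kronecker's lemma finish the argument.

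\emph{Step 1: common Hilbert space.} Let $\mathcal H=\mathbb R^p\times L_2(P_{Z_0})$ with the product inner product. Define $\iota_J:\mathbb R^{p+J}\to\mathcal H$ by $\iota_J(a,b)=(a,\Psi_J(\cdot)^{\top}b)$. By the exact refinement relation \eqref{eq:reembedding}, $\iota_{J_2}(\mathcal R_{J_1,J_2}^{\top}x)=\iota_{J_1}(x)$ for every $x$; this is the same fact used in \autoref{sec2.2}. Moreover $\Vert\iota_J(a,b)\Vert_{\mathcal H}^2=\Vert a\Vert_2^2+b^{\top}\Gamma_Jb$. By \autoref{condition7}(ii), $\iota_J$ is therefore a linear isomorphism onto its range, with distortion constants uniform in $J$. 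Put $u_k=\iota_{J_k}(U_k)$. Then $\{u_k\}$ is an $\mathcal H$-valued martingale difference sequence with $\mathbb E_{k-1}\Vert u_k\Vert_{\mathcal H}^2\le C\,\mathbb E_{k-1}\Vert U_k\Vert_2^2$. Also $\iota_{J_N}\big(\sum_{k\le N}\mathcal R_{J_k,J_N}^{\top}U_k\big)=\sum_{k\le N}u_k$ exactly. By uniform norm equivalence, \eqref{common_space_endpoint} is thus equivalent to $N^{-1/2}\Vert\sum_{k\le N}u_k\Vert_{\mathcal H}\to0$ a.s.

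\emph{Step 2: convergence of the weighted series.} Let $T_n=\sum_{k\le n}k^{-1/2}u_k$. Condition \eqref{common_space_qv} is only an almost-sure bound on the predictable quadratic variation, so I would localize. For $M>0$ let $\tau_M=\inf\{n:\sum_{k\le n+1}k^{-1}\mathbb E_{k-1}\Vert u_k\Vert^2>M\}$; this is a stopping time because the summands are predictable. The stopped process $T_{n\wedge\tau_M}$ is an $\mathcal H$-valued martingale, and orthogonality of increments gives $\sup_n\mathbb E\Vert T_{n\wedge\tau_M}\Vert^2\le CM$. An $L_2$-bounded martingale in a Hilbert space converges a.s. (Doob's inequality applied to the real submartingale $\Vert T_{n\wedge\tau_M}-T_{m\wedge\tau_M}\Vert$ shows it is a.s. Cauchy). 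On $\{\tau_M=\infty\}$ this yields convergence of $T_n$, and $\bigcup_M\{\tau_M=\infty\}$ has probability one by \eqref{common_space_qv}.

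\emph{Step 3: Kronecker's lemma and extensions.} Kronecker's lemma holds verbatim in a Banach space, by Abel summation. With weights $\sqrt k\uparrow\infty$, convergence of $\sum k^{-1/2}u_k$ gives $N^{-1/2}\sum_{k\le N}u_k\to0$ a.s., and mapping back through $\iota_{J_N}^{-1}$ proves \eqref{common_space_endpoint}.

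For \eqref{common_space_rows}, set $v_k=C_kU_k\in\mathbb R^q$. Then $\mathbb E_{k-1}\Vert v_k\Vert^2\le\Vert C_k\Vert_{\rm op}^2\,\mathbb E_{k-1}\Vert U_k\Vert^2$. The bound $\sup_k\Vert C_k\Vert_{\rm op}$ is random, so I would also truncate by the predictable indicator $\mathbf 1\{\sup_{j\le k}\Vert C_j\Vert_{\rm op}\le M\}$, which keeps the increments martingale differences. Steps 2--3 then give $\epsilon_n:=n^{-1/2}\Vert\sum_{k\le n}v_k\Vert\to0$ a.s. Finally,
\[
\max_{n\le N}N^{-1/2}\Big\Vert\sum_{k\le n}v_k\Big\Vert\le\max_{n\le n_0}N^{-1/2}\Big\Vert\sum_{k\le n}v_k\Big\Vert+\sup_{n>n_0}\epsilon_n ,
\]
and the right-hand side is arbitrarily small for $n_0$ and then $N$ large. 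Multiplying by a predictable scalar with $|a_k|\le1$ preserves the martingale-difference property and only decreases the conditional variances, so every step goes through unchanged.

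The main obstacle is Step 1: one must check that the isometry $\iota_{J_2}\mathcal R_{J_1,J_2}^{\top}=\iota_{J_1}$ is exact even when $J_k$ jumps by more than one. This follows because $\mathcal R_{J_k,J_N}$ is a product of one-step refinements, so the maps compose correctly. With the Euclidean norm alone there is no exact isometry, and the $\Gamma_J$-norm device used for $S_N$ in the proof of \autoref{theorem6} is exactly what repairs this. The localization in Step 2 is routine.
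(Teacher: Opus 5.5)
Your proposal is correct and follows the paper's proof essentially step for step. Both embed into $\mathbb R^p\oplus L_2(P_{Z_0})$ via $\iota_J$, use the exact re-embedding identity together with the uniform eigenvalue bounds on $\Gamma_J$, localize at the stopping times where the normalized quadratic-variation series first exceeds a level, and then apply Hilbert-space martingale convergence plus Kronecker's lemma, finishing the row statement with the early/late partial-sum split. Your explicit predictable truncation $\mathbf 1\{\sup_{j\le k}\Vert C_j\Vert_{\rm op}\le M\}$, which handles the random bound on $\Vert C_k\Vert_{\rm op}$, is a small detail the paper leaves implicit.
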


\begin{proof}
Put $\mathcal H=\mathbb R^p\oplus L_2(P_Z)$ and define the deterministic
common-coordinate embedding
\[
 \iota_J(v_\theta,v_\beta)
   =(v_\theta,\Psi_J^{\top}v_\beta).
\]
By Condition~\ref{condition7},
\[
 c\Vert v\Vert_2^2\leq\Vert\iota_Jv\Vert_{\mathcal H}^2
 =\Vert v_\theta\Vert_2^2+v_\beta^{\top}\Gamma_Jv_\beta
 \leq C\Vert v\Vert_2^2
\]
uniformly in $J$.  Exact re-embedding gives, for $J_1\leq J_2$,
\begin{equation}\label{common_space_embedding}
 \iota_{J_2}(\mathcal R_{J_1,J_2}^{\top}v)=\iota_{J_1}(v),
\end{equation}
because $\Psi_{J_1}=R_{J_1,J_2}\Psi_{J_2}$.
After localization at the stopping times at which the series in
\eqref{common_space_qv} first exceeds an integer, the Hilbert-space martingale
$\sum_k k^{-1/2}\iota_{J_k}(U_k)$ has summable conditional quadratic
variation and therefore converges almost surely.  Kronecker's lemma in
$\mathcal H$ yields
$N^{-1/2}\sum_{k\leq N}\iota_{J_k}(U_k)\to0$.  Equation
\eqref{common_space_embedding} and norm equivalence prove
\eqref{common_space_endpoint}.

For the row statement, $C_kU_k$ is an $\mathbb R^q$-valued martingale
difference and its normalized conditional quadratic variation is summable by
\eqref{common_space_qv}; the same argument gives
$n^{-1/2}\sum_{k\leq n}C_kU_k\to0$.  Given $\epsilon>0$, this endpoint
convergence bounds all sufficiently late partial sums by $\epsilon\sqrt n$,
while the finitely many early sums divided by $\sqrt N$ vanish.  This proves
\eqref{common_space_rows}.  A predictable bounded scalar multiplier preserves
all arguments.
\end{proof}

\subsubsection{ Obtaining Initial Convergence Rate}\label{lemma5.2part1}

\vspace{0.2cm}
\textit{Summary: This part establishes the following results for $\widetilde\omega_k$: for any $r>1$, }
 \begin{align*}
    \Vert \Delta \widetilde\omega_{k}  \Vert_2 = O\left(k^{\frac{1 +\alpha_{\dagger} -  2\alpha_{\xi} }{2}}\log^{\frac{r}{2}}(k) + k^{\frac{1-\alpha_{\xi}}{2} - q_{\star}\alpha_{\dagger}}\log^{\frac{r}{2}}(k) + k^{-\left(s - \frac{1}{2}\right)\alpha_{\dagger}}\log^{\frac{r}{2}}(k)\right), \qquad \text{a.s.}
\end{align*}
\textit{Under \autoref{new_rate}, each of the three exponents is strictly smaller than $-2\alpha_{\dagger}$, so the projection is inactive almost surely for $k$ large.}

\vspace{0.3cm}

Define $\Delta\widetilde\omega_{k} =  \widetilde\omega_{k} - \omega_{J_k, 0}$, where recall that $\omega_{J_k,0} = (\theta_0^{\top}, \beta_{J_k,0}^{\top})^{\top}$.  Note that 
\begin{align*}
    & \mathcal{R}_{J_{k-1}, J_k}^{\top}\widetilde\omega_{k-1} - \frac{\xi_k}{B}\sum_{i=1}^B \nabla_{\omega}  \mathcal{L}_{J_k}\left(\mathcal{R}_{J_{k-1}, J_k}^{\top}\widetilde\omega_{k-1}, W_{i,k}\right) - \omega_{J_{k}, 0}\\
& =  \mathcal{R}_{J_{k-1}, J_k}^{\top}\Delta\widetilde\omega_{k-1} - \frac{\xi_{k}}{B}\sum_{i=1}^B\left( \nabla_{\omega}\mathcal{L}_{J_{k}} (\mathcal{R}_{J_{k-1}, J_k}^{\top}\widetilde\omega_{k-1}, W_{i,k}) - \nabla_{\omega}\mathcal{L}_{J_{k}} (\mathcal{R}_{J_{k-1}, J_k}^{\top}\omega_{J_{k-1},0}, W_{i,k})\right)\\
& - \frac{\xi_{k}}{B}\sum_{i=1}^B \left( \nabla_{\omega}\mathcal{L}_{J_{k}} (\mathcal{R}_{J_{k-1}, J_k}^{\top}\omega_{J_{k-1},0}, W_{i,k}) - \nabla_{\omega}\mathcal{L}_{J_{k}} (\omega_{J_{k},0}, W_{i,k})\right) \\
&- \frac{\xi_k}{B}\sum_{i=1}^B\nabla_{\omega}\mathcal{L}_{J_{k}} (\omega_{J_{k},0}, W_{i,k}) + \mathcal{R}_{J_{k-1}, J_k}^{\top}\omega_{J_{k-1},0} - \omega_{J_k, 0}\\
& = \int_{0}^1 \left(\mathbb{I}_{p+J_k} - \xi_k \nabla_{\omega\omega} \mathcal{L}_{J_k}\left(\mathcal{R}^{\top}_{J_{k-1}, J_k}\left(\omega_{J_{k-1},0} + \tau\Delta\widetilde{\omega}_{k-1}\right)\right)\right)d\tau\mathcal{R}^{\top}_{J_{k-1}, J_k}\Delta\widetilde\omega_{k-1}\\
& -\frac{\xi_k}{B}\sum_{i=1}^B \int_{0}^1\left(\nabla_{\omega\omega} \mathcal{L}_{J_k}\left(\mathcal{R}^{\top}_{J_{k-1}, J_k}\left(\omega_{J_{k-1},0} + \tau\Delta\widetilde{\omega}_{k-1}\right), W_{i,k}\right) \right.\\
&\left. \ \ \ \ \ \ \ \ \ \ \ \ \ \ \ \ -\nabla_{\omega\omega} \mathcal{L}_{J_k}\left(\mathcal{R}^{\top}_{J_{k-1}, J_k}\left(\omega_{J_{k-1},0} + \tau\Delta\widetilde{\omega}_{k-1}\right)\right)\right)d\tau\mathcal{R}^{\top}_{J_{k-1}, J_k}\Delta\widetilde\omega_{k-1}\\
& - \frac{\xi_{k}}{B}\sum_{i=1}^B \left( \nabla_{\omega}\mathcal{L}_{J_{k}} (\mathcal{R}_{J_{k-1}, J_k}^{\top}\omega_{J_{k-1},0}, W_{i,k}) - \nabla_{\omega}\mathcal{L}_{J_{k}} (\omega_{J_{k},0}, W_{i,k})\right)\\
&- \frac{\xi_k}{B}\sum_{i=1}^B\left(\nabla_{\omega}\mathcal{L}_{J_{k}} (\omega_{J_{k},0}, W_{i,k})  +\varphi_{J_k}(W_{i,k})\right) + \frac{\xi_k}{B}\sum_{i=1}^B \varphi_{J_k}(W_{i,k}) + \mathcal{R}_{J_{k-1}, J_k}^{\top}\omega_{J_{k-1},0} - \omega_{J_k, 0}.
\end{align*} 
where 
\[
\varphi_J(W) =  \begin{pmatrix}
    \left(Y - F_0(Z_0)\right)\nabla_z F_0(Z_0) X\\
    \left(Y - F_0(Z_0)\right) \Psi_J(Z_0)
\end{pmatrix}.
\]
Now we analyze the above terms one by one. When $J_k = J_{k-1}$, $\mathcal{R}_{J_{k-1},J_k}^{\top}\widetilde\omega_{k-1} = \widetilde\omega_{k-1}\in\widehat\Omega_{k-1}$. Since $\widehat\Omega_{k-1}$ is convex by construction, \autoref{eigenvalue} and \autoref{cond14} together lead to that the smallest eigenvalue of $\nabla_{\omega \omega}\mathcal{L}_{J_{k}} (\omega_{J_{k-1},0} + \tau\Delta\widetilde{\omega}_{k-1}) = \nabla_{\omega \omega}\mathcal{L}_{J_{k-1}} (\omega_{J_{k-1},0} + \tau\Delta\widetilde{\omega}_{k-1}) $ is uniformly lower bounded from 0 for all $\tau\in[0,1]$. When $J_k>J_{k-1}$, note that \[\Vert \mathcal{R}^{\top}_{J_{k-1}, J_k}\left( \omega_{J_{k-1},0} + \tau\Delta\widetilde{\omega}_{k-1}\right) - \omega_{J_k, 0}\Vert_2 \leq \Vert \mathcal{R}^{\top}_{J_{k-1}, J_k} \omega_{J_{k-1},0}   - \omega_{J_k, 0}\Vert_2 + C\Vert  \Delta\widetilde\omega_{k-1}\Vert_2  \leq CJ_k^{-2},\]
where the last inequality comes from the fact that $\widehat\Omega_{k}$ has radius of order $J_k^{-2}$, \autoref{cond14},   $s>2$,  and \[\Vert \mathcal{R}^{\top}_{J_{k-1}, J_k} \omega_{J_{k-1},0}   - \omega_{J_k, 0}\Vert_2 = \Vert R^{\top}_{J_{k-1}, J_k} \beta_{J_{k-1},0}   - \beta_{J_k, 0}\Vert_2\leq CJ_k^{-s},\] according to the proof of \autoref{theorem6}. 
So  $\nabla_{\omega\omega} \mathcal{L}_{J_k}(\mathcal{R}^{\top}_{J_{k-1}, J_k}(\omega_{J_{k-1},0} + \tau\Delta\widetilde{\omega}_{k-1}))$ has eigenvalues bounded  (in absolute value) by some constant independent of $k$ and $\tau$ following \autoref{cond14}.  Together there holds 
\begin{align*}
& \mathbb{E}_{k-1}\left\Vert \int_{0}^1 \left(\mathbb{I}_{p+J_k} - \xi_k \nabla_{\omega\omega} \mathcal{L}_{J_k}\left(\mathcal{R}^{\top}_{J_{k-1}, J_k}\left(\omega_{J_{k-1},0} + \tau\Delta\widetilde{\omega}_{k-1}\right)\right)\right)d\tau\mathcal{R}^{\top}_{J_{k-1}, J_k}\Delta\widetilde\omega_{k-1} \right\Vert_2 ^2 \\
& \leq \left(1 - C\xi_k + C\boldsymbol{1}_k\right)\Vert \Delta\widetilde\omega_{k-1}\Vert_2^2,
\end{align*}
regardless of whether $J_k = J_{k-1}$ or $J_k>J_{k-1}$, where recall that $\boldsymbol{1}_k=\boldsymbol{1}(J_k>J_{k-1})$. 

Note that the Hessian matrix of the NLS loss function is given by
\[
\nabla_{\omega\omega}\mathcal{L}_J(\omega, W) = \begin{pmatrix}
    \nabla_{\theta\theta}\mathcal{L}_J(\omega, W) &  \nabla_{\theta\beta}\mathcal{L}_J(\omega, W) \\
    \nabla_{\beta\theta}\mathcal{L}_J(\omega, W) & \nabla_{\beta\beta}\mathcal{L}_J(\omega, W)
\end{pmatrix},
\]
where by simple calculation \[\nabla_{\theta\theta}\mathcal{L}_J(\omega, W)  = ((\nabla_z \Psi_J(x_0 + X^{\top}\theta))^{\top}\beta)^2XX^{\top} - (Y - \Psi_J(x_0+X^{\top}\theta)^{\top}\beta)(\nabla_{zz}\Psi_J(x_0+X^{\top}\theta))^{\top}\beta XX^{\top},\] \[\nabla_{\theta\beta}\mathcal{L}_J(\omega, W)  = (\nabla_z \Psi_J(x_0 + X^{\top}\theta))^{\top}\beta X\Psi_J(x_0+X^{\top}\theta)^{\top} - (Y - \Psi_J(x_0+X^{\top}\theta)^{\top}\beta)X (\nabla_{z}\Psi_J(x_0+X^{\top}\theta))^{\top},\] and \[\nabla_{\beta\beta}\mathcal{L}_J(\omega, W) = \Psi_J(x_0+X^{\top}\theta)\Psi_J(x_0+X^{\top}\theta)^{\top}.\]
When $\omega\in\Omega_J$, inverse inequalities and the $O(J^{-2})$
coefficient radius give
\[
\begin{split}
\left\Vert\Psi_J^\top\beta-F_0\right\Vert_\infty
 &=O(J^{-3/2}+J^{-s}),\\
\left\Vert(\nabla_z\Psi_J)^\top\beta-\nabla_zF_0\right\Vert_\infty
 &=O(J^{-1/2}+J^{1-s}),\\
\left\Vert(\nabla_{zz}\Psi_J)^\top\beta-\nabla_{zz}F_0\right\Vert_\infty
 &=O(J^{1/2}+J^{2-s}).
\end{split}
\]
The last quantity is not uniformly bounded in $J$.  Using the bounded true
derivatives gives the  envelope
\[
\Vert \nabla_{\omega\omega}\mathcal{L}_J(\omega, W) \Vert_F
\leq C J^{1/2}(1+|Y|)\Vert X\Vert_2^2
   +C J^{3/2}(1+|Y|)\Vert X\Vert_2+CJ .
\]
\autoref{cond11}(ii), which in particular has $\kappa>4$, makes the
squared expectation of this envelope $O(J^3)$.  Therefore
\begin{align*}
  \mathbb{E}_{k-1} & \left\Vert \frac{\xi_k}{B}\sum_{i=1}^B \int_{0}^1\left(\nabla_{\omega\omega} \mathcal{L}_{J_k}\left(\mathcal{R}^{\top}_{J_{k-1}, J_k}\left(\omega_{J_{k-1},0} + \tau\Delta\widetilde{\omega}_{k-1}\right), W_{i,k}\right) \right.\right.\\
   &\left.\left.- \nabla_{\omega\omega} \mathcal{L}_{J_k}\left(\mathcal{R}^{\top}_{J_{k-1}, J_k}\left(\omega_{J_{k-1},0} + \tau\Delta\widetilde{\omega}_{k-1}\right)\right)\right)d\tau\mathcal{R}^{\top}_{J_{k-1}, J_k}\Delta\widetilde\omega_{k-1}\right\Vert^2_2 \leq C\xi_k^2J_{k-1}^3\Vert \Delta\widetilde\omega_{k-1}\Vert_2^2.
\end{align*}
This also leads to  
\begin{align*}
    &\mathbb{E}_{k-1} \left\Vert\frac{\xi_{k}}{B}\sum_{i=1}^B \left( \nabla_{\omega}\mathcal{L}_{J_{k}} (\mathcal{R}_{J_{k-1}, J_k}^{\top}\omega_{J_{k-1},0}, W_{i,k}) - \nabla_{\omega}\mathcal{L}_{J_{k}} (\omega_{J_{k},0}, W_{i,k})\right)\right\Vert^2_2 \\
    & \leq C\xi_k^2 J_k^{3}  \Vert \mathcal{R}_{J_{k-1}, J_k}^{\top}\omega_{J_{k-1},0} - \omega_{J_{k},0} \Vert^2_2 \leq C\xi_k^2 J_k^{3- 2s},  
\end{align*}
Moreover, note that  \[\nabla_{\theta}\mathcal{L}_J(\omega, W) = - (Y - \Psi_J(x_0 +X^{\top}\theta)^{\top}\beta)(\nabla_{z}\Psi_J(x_0+X^{\top}\theta))^{\top}\beta X\] and \[\nabla_{\beta}\mathcal{L}_J(\omega, W) = - (Y - \Psi_J(x_0 +X^{\top}\theta)^{\top}\beta) \Psi_J(x_0+X^{\top}\theta).\]  
Note that due to $J^{-s}$ and $J^{1-s}$ global approximation error of $F_0$ and $\nabla_z F_0$, we have that 
\begin{align*}
& \left\Vert  \nabla_{\omega}\mathcal{L}_{J} (\omega_{J,0}, W) +  \varphi_J(W) \right\Vert_2
\leq C(1+|Y|)(1+\Vert X\Vert_2)J^{\frac{3}{2}-s},
\end{align*}
which leads  to   
\[
\mathbb{E}_{k-1}\left\Vert \frac{\xi_{k}}{B}\sum_{i=1}^B  \left(\nabla_{\omega}\mathcal{L}_{J_{k}} (\omega_{J_{k},0}, W_{i,k})+\varphi_{J_k}(W_{i,k})\right)\right\Vert^2_2 \leq C \xi_k^2 J_k^{3-2s}.
\]

Also, we have that 
\[
\mathbb{E}_{k-1}\Vert \varphi_{J_k}(W_{i,k})\Vert_2^2 \leq C J_k.
\]

Given the above results, we now expand the squared norm of the pre-projection update.  Write
\[
g_k := \frac{1}{B}\sum_{i=1}^B\nabla_{\omega}\mathcal{L}_{J_{k}} (\omega_{J_{k},0}, W_{i,k}), \qquad \mathbb{E}_{k-1}\,g_k = b_{J_k},
\]
where $b_{J_k}$ is the mean sieve score of \autoref{lemma_centered_score}, so that $\Vert b_{J_k}\Vert_2\leq CJ_k^{-q_{\star}}$ and $\mathbb{E}_{k-1}\Vert g_k - b_{J_k}\Vert_2^2\leq CJ_k$. All remaining stochastic terms in the decomposition displayed at the beginning of this subsection  are handled exactly as bounded above. Collecting terms, the conditional mean of the pre-projection increment consists of (a) the contraction of $\mathcal{R}^{\top}_{J_{k-1},J_k}\Delta\widetilde\omega_{k-1}$ through the integrated population Hessian, (b) the deterministic mean score $-\xi_kb_{J_k}$, and, at expansion times only, (c) the deterministic terms $\mathcal{R}_{J_{k-1}, J_k}^{\top}\omega_{J_{k-1},0} - \omega_{J_k, 0}$ and \[- \mathbb{E}\frac{\xi_k}{B}\sum_{i=1}^B[
    \nabla_\omega\mathcal L_{J_k}
       (\mathcal R_{J_{k-1},J_k}^{\top}\omega_{J_{k-1},0},W_{i,k})
    -\nabla_\omega\mathcal L_{J_k}(\omega_{J_k,0},W_{i,k})].\]  All of these norms are at most $CJ_k^{-s}$ by \autoref{lemma_telescoping}(i), \autoref{remark_multijump}, and \autoref{cond14}. Using $|ab|\leq \frac{ca^2}{2}+\frac{b^2}{2c}$ (any $c>0$) on the cross term between the contracted error and (b),
\[
2\xi_k\left|\left\langle \left(\mathbb{I} - \xi_k\overline{\mathbb{H}}_k\right)\mathcal{R}^{\top}_{J_{k-1},J_k}\Delta\widetilde\omega_{k-1},\; b_{J_k}\right\rangle\right| \leq c\,\xi_k\Vert \Delta\widetilde\omega_{k-1}\Vert_2^2 + C\xi_k\Vert b_{J_k}\Vert_2^2\leq c\,\xi_k\Vert \Delta\widetilde\omega_{k-1}\Vert_2^2 + C\xi_kJ_k^{-2q_{\star}},
\]
where $\overline{\mathbb{H}}_k$ denotes the integrated population Hessian appearing in the contraction term, and $c$ is chosen small relative to the contraction constant. The variance of the centered score contributes $\xi_k^2\,\mathbb{E}_{k-1}\Vert g_k - b_{J_k}\Vert_2^2\leq C\xi_k^2J_k$, the variance of the $\frac{\xi_k}{B}\sum_{i=1}^B\Bigl[
    \nabla_\omega\mathcal L_{J_k}(\omega_{J_k,0},W_{i,k})
    +\varphi_{J_k}(W_{i,k})\Bigr]$-type remainder contributes $C\xi_k^2J_k^{3-2s}\leq C\xi_k^2 J_k$ since $s>1$, and the cross terms at expansion times are bounded by $C\boldsymbol{1}_k(\Vert\Delta\widetilde\omega_{k-1}\Vert_2^2 + J_k^{-2s})$. So when $J_k = J_{k-1}$, there holds
\begin{align*}
& \mathbb{E}_{k-1}\Vert \mathcal{R}_{J_{k-1}, J_k}^{\top}\widetilde\omega_{k-1} - \frac{\xi_k}{B}\sum_{i=1}^B \nabla_{\omega}  \mathcal{L}_{J_k}\left(\mathcal{R}_{J_{k-1}, J_k}^{\top}\widetilde\omega_{k-1}, W_{i,k}\right) - \omega_{J_{k}, 0} \Vert_2^2 \\
&\leq \left(1 - C\xi_k +C\xi_k^2J_{k}^3 \right)\Vert \Delta\widetilde\omega_{k-1}\Vert_2^2 + C\xi_k^2J_k + C\xi_k J_k^{-2q_{\star}},
\end{align*}
and when $J_{k}>J_{k-1}$, we have that
\begin{align*}
& \mathbb{E}_{k-1}\Vert \mathcal{R}_{J_{k-1}, J_k}^{\top}\widetilde\omega_{k-1} - \frac{\xi_k}{B}\sum_{i=1}^B \nabla_{\omega}  \mathcal{L}_{J_k}\left(\mathcal{R}_{J_{k-1}, J_k}^{\top}\widetilde\omega_{k-1}, W_{i,k}\right) - \omega_{J_{k}, 0} \Vert_2^2 \\
&\leq C\left(1 - C\xi_k +  C\xi_k^2J_k^3  +C\xi_k J_k^{3-2s}\right)\Vert \Delta\widetilde\omega_{k-1}\Vert_2^2 + C\xi_k^2J_k + C\xi_k J_k^{-2q_{\star}} + CJ_k^{-2s}.
\end{align*}
Then use the fact that $\omega_{J_k,0}\in\widehat\Omega_{k}$ according to \autoref{cond14}, we have that
\begin{align}\label{general_dynamics_nls}
    \mathbb{E}_{k-1}\Vert \Delta \widetilde\omega_{k}  \Vert_2^2  \leq \left( 1 - C\xi_k  + C\boldsymbol{1}_k  \right)\Vert \Delta\widetilde\omega_{k-1}\Vert_2^2 + C\xi_k^2 J_{k} + C\xi_k J_k^{-2q_{\star}}   + C\boldsymbol{1}_kJ_k^{-2s},
\end{align}
when $ \alpha_{\xi}>3\alpha_{\dagger}$ and $s>\frac{5}{2}$.

\autoref{cond14} gives
$\omega_{J_k,0}\in\widehat\Omega_k$ eventually almost surely.  On that
probability-one tail, nonexpansiveness of projection gives
\[\Vert \Delta\widetilde \omega_k\Vert_2^2\leq \left\Vert \mathcal{R}_{J_{k-1}, J_k}^{\top}\widetilde\omega_{k-1} - \frac{\xi_k}{B}\sum_{i=1}^B \nabla_{\omega}  \mathcal{L}_{J_k}\left(\mathcal{R}_{J_{k-1}, J_k}^{\top}\widetilde\omega_{k-1}, W_{i,k}\right) - \omega_{J_{k}, 0} \right\Vert_2^2\]
This leads to \[
    \mathbb{E}_{k-1}\Vert \Delta \widetilde\omega_{k}  \Vert_2^2  \leq \left( 1 - C\xi_k  + C\boldsymbol{1}_k  \right)\Vert \Delta\widetilde\omega_{k-1}\Vert_2^2 + C\xi_k^2 J_{k} + C\xi_kJ_k^{-2q_{\star}}   + C\boldsymbol{1}_kJ_k^{-2s}.
\]
 The same blockwise Robbins--Siegmund argument as in \autoref{C1.part1} yields
\begin{align*}
    \Vert \Delta \widetilde\omega_{k}  \Vert_2 = O\left(k^{\frac{1 +\alpha_{\dagger} -  2\alpha_{\xi} }{2}}\log^{\frac{r}{2}}(k) + k^{\frac{1-\alpha_{\xi}}{2}-q_{\star}\alpha_{\dagger}}\log^{\frac{r}{2}}(k) + k^{-\left(s-\frac{1}{2}\right)\alpha_{\dagger}}\log^{\frac{r}{2}}(k)\right), \qquad \text{a.s.}
\end{align*}
We finally verify that under \autoref{new_rate} each exponent exceeds
$2\alpha_{\dagger}$, so that the \emph{projected} error satisfies
$\Vert\Delta\widetilde\omega_k\Vert_2=o(J_k^{-2})$ almost surely. First,
$\frac{2\alpha_{\xi}-\alpha_{\dagger}-1}{2}>2\alpha_{\dagger}$ iff
$\alpha_{\xi}>\frac{1}{2}+\frac{5}{2}\alpha_{\dagger}$, which is implied by
$\alpha_{\xi}>\frac{1}{2}+3\alpha_{\dagger}$. Second, using
$\alpha_{\xi}>\frac{1}{2}+3\alpha_{\dagger}$ we get
$\frac{1-\alpha_{\xi}}{2}<\frac{1}{4}-\frac{3}{2}\alpha_{\dagger}$, so
$q_{\star}\alpha_{\dagger}-\frac{1-\alpha_{\xi}}{2}>
\frac{1}{2}-\frac{1}{4}+\frac{3}{2}\alpha_{\dagger}>2\alpha_{\dagger}$
whenever $q_{\star}\alpha_{\dagger}>\frac{1}{2}$. Third,
$(s-\frac{1}{2})\alpha_{\dagger}>2\alpha_{\dagger}$ iff $s>\frac52$.

Given such result, we easily have that 
\begin{align*}
    \Vert \widetilde\omega_{k} - \widehat\omega_k\Vert_2 & \leq \Vert \widetilde\omega_{k}  - \omega_{J_k,0}\Vert_2 + \Vert \widehat\omega_{k}  - \omega_{J_k,0}\Vert_2\\
    & \leq o(J_k^{-2}) + \frac{(1-\delta_{\Omega})C_{\Omega}J_k^{-2}}{2}<\frac{C_{\Omega}J_k^{-2}}{2}, \qquad \text{a.s.}
\end{align*}
This finally validates the projection inactivity.

\subsubsection{Obtain the Refined Rate}\label{lemma5.2part2}

\vspace{0.3cm}

\textit{Summary: This part decomposes the raw iterate error into a \emph{regular} component $D_k$ and a deterministic \emph{sieve-expansion transient} $T_k$,
\[
\Delta\widetilde\omega_k = D_k + T_k,
\]
and builds the refined rates.  In particular, we have that }
\begin{equation*}
    \Vert D_{k}\Vert_2 = O\left(k^{-\frac{\alpha_{\xi}-\alpha_{\dagger}}{2}}\text{PolyLog} + k^{-(2s-2)\alpha_{\dagger}}\right), \qquad \Vert T_k\Vert_2\leq Ck^{-s\alpha_{\dagger}}, \qquad \text{a.s.},
\end{equation*}
\textit{so that \[\Vert \Delta\widetilde\omega_{k}\Vert_2 = O (k^{-\frac{\alpha_{\xi}-\alpha_{\dagger}}{2}}\text{PolyLog} + k^{-s\alpha_{\dagger}}), \qquad \text{a.s.}\]}

\vspace{0.3cm}

In \autoref{lemma5.2part1}, we have shown that the projection is inactive almost surely for $k$ large. In this case, almost surely the following holds for $k$ sufficiently large \begin{align} \label{local joint update - no projection}
\widetilde\omega_{k} =  \mathcal{R}_{J_{k-1}, J_k}^{\top}\widetilde\omega_{k-1} - \frac{\xi_k}{B}\sum_{i=1}^B\nabla_{\omega}\mathcal{L}_{J_k}\left(\mathcal{R}_{J_{k-1}, J_k}^{\top}\widetilde\omega_{k-1}, W_{i,k}\right).
\end{align}
To simplify our following exposition, without loss of generality we assume that (\ref{local joint update - no projection}) holds true for all $k$; otherwise we can start from a random integer $k_0$, whose impacts will be exponentially decaying so are negligible.  Define 
\[
 \mathbb{M}_{J} = \mathbb{E}\begin{pmatrix}
    (\nabla_z F_0(Z_0))^2XX^{\top} &  \nabla_zF_0(Z_0) X\Psi_J(Z_0)^{\top}\\
    \nabla_{z} F_0(Z_0) \Psi_J(Z_0)X^{\top} & \Psi_J(Z_0)\Psi_J(Z_0)^{\top}
\end{pmatrix}.
\]
At the sieve reference point,
\begin{align*}
\nabla_{\theta\theta}\mathcal L_J(\omega_{J,0})
&=\mathbb E\!\left[
 \left\{(\nabla_z\mathbb P_JF_0(Z_0))^2
 -(F_0-\mathbb P_JF_0)(Z_0)
   \nabla_{zz}\mathbb P_JF_0(Z_0)\right\}XX^\top\right],\\
\nabla_{\theta\beta}\mathcal L_J(\omega_{J,0})
&=\mathbb E\!\left[
 \nabla_z\mathbb P_JF_0(Z_0)X\Psi_J(Z_0)^\top
 -(F_0-\mathbb P_JF_0)(Z_0)
   X\nabla_z\Psi_J(Z_0)^\top\right].
\end{align*}
By \autoref{condition6} and \autoref{cond11} and the fact that $s>2$, we have that   \[\Vert  \mathbb{M}_J - \nabla_{\omega\omega} \mathcal{L}_J(\omega_{J,0}) \Vert_F\leq CJ^{\frac{3}{2} - s},\]  
where $C$ is a constant independent of $J$. 
From the proof of the first step, let
\[
 e_k^-:=\mathcal R_{J_{k-1},J_k}^{\top}\Delta\widetilde\omega_{k-1},
 \qquad
 A_k:=\bigl(\mathbb I_{p+J_k}-\xi_k\mathbb M_{J_k}\bigr)
       \mathcal R_{J_{k-1},J_k}^{\top}.
\]
Then the unprojected recursion has the exact decomposition
\begin{equation}\label{eq:phaseII-zeta-decomposition}
 \Delta\widetilde\omega_k
 =A_k\Delta\widetilde\omega_{k-1}
  -\xi_k\sum_{\ell=1}^{6}\zeta_{\ell,k}+\zeta_{7,k},
\end{equation}
where
\begin{align*}
 \zeta_{1,k}
 &:=\{\nabla_{\omega\omega}\mathcal L_{J_k}(\omega_{J_k,0})
          -\mathbb M_{J_k}\}e_k^-,\\
 \zeta_{2,k}
 &:=\int_0^1\!\Bigl[
    \nabla_{\omega\omega}\mathcal L_{J_k}
       \{\mathcal R_{J_{k-1},J_k}^{\top}
          (\omega_{J_{k-1},0}+\tau\Delta\widetilde\omega_{k-1})\}
    -\nabla_{\omega\omega}\mathcal L_{J_k}(\omega_{J_k,0})
    \Bigr]d\tau\,e_k^-,\\
 \zeta_{3,k}
 &:=\frac1B\sum_{i=1}^B\int_0^1\!\Bigl[
    \nabla_{\omega\omega}\mathcal L_{J_k}
       \{\mathcal R_{J_{k-1},J_k}^{\top}
          (\omega_{J_{k-1},0}+\tau\Delta\widetilde\omega_{k-1}),W_{i,k}\}
    -\nabla_{\omega\omega}\mathcal L_{J_k}
       \{\mathcal R_{J_{k-1},J_k}^{\top}
          (\omega_{J_{k-1},0}+\tau\Delta\widetilde\omega_{k-1})\}
    \Bigr]d\tau\,e_k^-,\\
 \zeta_{4,k}
 &:=\frac1B\sum_{i=1}^B\Bigl[
    \nabla_\omega\mathcal L_{J_k}
       (\mathcal R_{J_{k-1},J_k}^{\top}\omega_{J_{k-1},0},W_{i,k})
    -\nabla_\omega\mathcal L_{J_k}(\omega_{J_k,0},W_{i,k})\Bigr],\\
 \zeta_{5,k}
 &:=\frac1B\sum_{i=1}^B\Bigl[
    \nabla_\omega\mathcal L_{J_k}(\omega_{J_k,0},W_{i,k})
    +\varphi_{J_k}(W_{i,k})\Bigr],\\
 \zeta_{6,k}
 &:=-\frac1B\sum_{i=1}^B\varphi_{J_k}(W_{i,k}),\\
 \zeta_{7,k}
 &:=\mathcal R_{J_{k-1},J_k}^{\top}\omega_{J_{k-1},0}-\omega_{J_k,0}.
\end{align*}
Thus \eqref{eq:phaseII-zeta-decomposition} yields
\begin{align*}
\Delta\widetilde\omega_k
=
\left(\prod_{j=1}^k A_j\right)\Delta\widetilde\omega_0
-
\sum_{j=1}^k\xi_j
\left(\prod_{m=j+1}^k A_m\right)\sum_{l=1}^6 \zeta_{l,j} + \sum_{j=1}^k
\left(\prod_{m=j+1}^k A_m\right)  \zeta_{7,j},
\end{align*}
where following the previous convention,  $\prod_{j=1}^k A_j = A_k A_{k-1}\cdots A_1$, $\prod_{m=j+1}^k A_m = A_k A_{k-1}\cdots A_{j+1}$ if $j\leq k-1$ and  $\prod_{m=j+1}^k A_m =\mathbb{I}_{p + J_k}$ if $j = k$.

Before analyzing these terms, we now formally introduce the transient/regular decomposition announced in the summary. Decompose the score-difference term at expansion times into its (deterministic) mean and its martingale part:
$\zeta_{4,k} = \mathbb{E}\zeta_{4,k} + \zeta_{4,k}^{\circ},
$
noting that $\zeta_{4,k}$ is evaluated at the deterministic points $\mathcal{R}^{\top}_{J_{k-1},J_k}\omega_{J_{k-1},0}$ and $\omega_{J_k,0}$, so $\mathbb{E}_{k-1}\zeta_{4,k} = \mathbb{E}\zeta_{4,k}$ is nonrandom. Define the deterministic \emph{sieve-expansion transient}
\begin{equation}\label{transient_def}
T_k := \sum_{j=1}^k\left(\prod_{m=j+1}^kA_m\right)\tau_j, \qquad \tau_j := \zeta_{7,j} - \xi_j\,\mathbb{E}\zeta_{4,j}, \qquad D_k := \Delta\widetilde\omega_k - T_k,
\end{equation}
so that $T_k = A_kT_{k-1}+\tau_k$ with $T_0=0$, and
\begin{equation}\label{regular_recursion}
D_k = A_kD_{k-1} - \xi_k\left(\zeta_{1,k}+\zeta_{2,k}+\zeta_{3,k}+\zeta_{4,k}^{\circ}+\zeta_{5,k}+\zeta_{6,k}\right), \qquad D_0 = \Delta\widetilde\omega_0.
\end{equation}
Note that $\tau_j\neq 0$ only at expansion times, and by \autoref{cond14} (boundedness of the population Hessian on the local basin, which contains the segment between $\mathcal{R}^{\top}_{J_{j-1},J_j}\omega_{J_{j-1},0}$ and $\omega_{J_j,0}$ since $s>2$), a mean-value expansion gives $\Vert\mathbb{E}\zeta_{4,j}\Vert_2\leq C\Vert \mathcal{R}^{\top}_{J_{j-1},J_j}\omega_{J_{j-1},0}-\omega_{J_j,0}\Vert_2$. Hence, by \autoref{lemma_telescoping} and \autoref{remark_multijump},
\begin{equation}\label{tau_bound}
\Vert \tau_j\Vert_2\leq C\boldsymbol{1}_j\sum_{m=J_{j-1}+1}^{J_j}\delta_m\leq C\boldsymbol{1}_jJ_j^{-s}.
\end{equation}
The next lemma collects the properties of $T_k$ that will be used below and in \autoref{lemma5.2part3}.

\begin{lemma}[Checkpoint transient]\label{lemma_transient}
Let the conditions of \autoref{theorem_NLS_update} hold. Then there are constants $c, C>0$ such that:
\begin{enumerate}
\item[(i)] $\left\Vert \prod_{m=j+1}^kA_m\right\Vert_{\mathrm{op}}\leq C\exp\left(-c\left(k^{1-\alpha_{\xi}} - j^{1-\alpha_{\xi}}\right)\right)$ for all $j\leq k$;
\item[(ii)] $\Vert T_k\Vert_2\leq Ck^{-s\alpha_{\dagger}}$ for all $k$, and $\Vert T_k\Vert_2^2\leq C\sum_{m\leq J_k}\exp\left(-2c\left(k^{1-\alpha_{\xi}}-\mathcal{J}(m)^{1-\alpha_{\xi}}\right)\right)\delta_m^2$;
\item[(iii)] for every $\nu\geq 0$ there is $C(\nu)$ such that for all $N$,
\[
\sum_{k=1}^Nk^{\nu}\Vert T_k\Vert_2\leq C(\nu)\left(N^{\nu+\alpha_{\xi}+\left(\frac{1}{2}-s\right)\alpha_{\dagger}} + \log N\right), \ \ \ 
\sum_{k=1}^Nk^{\nu}\Vert T_k\Vert_2^2\leq C(\nu)\left(N^{\nu+\alpha_{\xi}+\left(1-2s\right)\alpha_{\dagger}} + \log N\right);
\]
\item[(iv)] for every fixed $r>0$ and all $m$ sufficiently large, $\Vert T_{\mathcal{J}(m)-1}\Vert_2\leq \mathcal{J}(m)^{-r}$; that is, the transient is superpolynomially small at the end of each constant-dimension block.
\end{enumerate}
\end{lemma}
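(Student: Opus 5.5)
The proof rests on a per-step contraction bound for $A_m$. Part (ii) then follows by unrolling $T_k=A_kT_{k-1}+\tau_k$ against the telescoping bounds of \autoref{lemma_telescoping}. Parts (iii) and (iv) follow by summing those unrolled bounds over $k$ and evaluating them at block endpoints.

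For (i), write $A_m=(\mathbb I-\xi_m\mathbb M_{J_m})\mathcal R^{\top}_{J_{m-1},J_m}$. On a non-expansion step, $\mathcal R=\mathbb I$, and by \autoref{eigenvalue} together with $\Vert\mathbb M_J-\nabla_{\omega\omega}\mathcal L_J(\omega_{J,0})\Vert_F\le CJ^{3/2-s}$, the matrix $\mathbb M_J$ has eigenvalues uniformly bounded away from $0$ and $\infty$. Hence $\Vert A_m\Vert_{\rm op}\le 1-c\xi_m$ for $m$ large. Expansion steps are not contractions in the Euclidean norm, so I would not multiply Euclidean norms there. Instead I would use the weighted norm $\Vert v\Vert_{*,J}^2=\Vert v_\theta\Vert_2^2+v_\beta^\top\Gamma_Jv_\beta$, i.e.\ the norm of $\iota_J v$ in the common space used in \autoref{lem:common_space_mds}. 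Re-embedding is an exact isometry in this norm by \eqref{common_space_embedding}, and it is uniformly equivalent to $\Vert\cdot\Vert_2$. The remaining issue is that $\mathbb I-\xi\mathbb M_J$ must contract in $\Vert\cdot\Vert_{*,J}$. This holds because $\mathbb M_J$ is the Gram operator of $(\theta,\beta)\mapsto\nabla_zF_0(Z_0)X^\top\theta+\Psi_J(Z_0)^\top\beta$. Under the metric $\mathrm{diag}(\mathbb I,\Gamma_J)$ it is therefore comparable to a positive self-adjoint operator, and I would pass to the preconditioned form $\mathrm{diag}(\mathbb I,\Gamma_J)^{-1/2}\mathbb M_J\,\mathrm{diag}(\mathbb I,\Gamma_J)^{-1/2}$ to get $1-c\xi$ contraction up to an $O(\xi^2)$ error. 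Multiplying the per-step bounds and using $\sum_{m=j+1}^k\xi_m\asymp k^{1-\alpha_\xi}-j^{1-\alpha_\xi}$ (\autoref{lemma4}) then gives (i), with $C$ absorbing the norm equivalence constants. I expect this uniform weighted-norm contraction across expansions to be the main obstacle. A naive bound of $C$ per expansion accumulates to $C^{J_k}$, and that only becomes harmless when $J_k\ll k^{1-\alpha_\xi}$. In that case I would fall back on $\alpha_\dagger<1-\alpha_\xi$, which \autoref{new_rate} guarantees, and absorb the $e^{CJ_k}$ factor into a slightly smaller $c$.

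For (ii), unroll $T_k=\sum_{j\le k}\bigl(\prod_{m=j+1}^kA_m\bigr)\tau_j$. The $\tau_j$ are nonzero only at $j=\mathcal J(m)$, and by \eqref{tau_bound} and \autoref{remark_multijump} they are bounded by $C\delta_m$ eventually. Applying (i) gives
\[
\Vert T_k\Vert_2\le C\sum_{m\le J_k}e^{-c(k^{1-\alpha_\xi}-\mathcal J(m)^{1-\alpha_\xi})}\delta_m .
\]
The squared bound in (ii) needs the cross terms to be controlled. I would obtain it from the weighted-norm version of the recursion, where each jump adds an increment that is orthogonal in $L_2(Z_0)$, namely the detail $\Delta_m\perp\mathcal S_{m-1}$ from \autoref{lemma_telescoping}, and the subsequent contraction keeps the cross terms lower order. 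Alternatively, Cauchy--Schwarz with the exponential weights gives the same bound up to a constant. For the first bound in (ii), use $\delta_m\le Cm^{-s}$ and note that the most recent jumps dominate: the gaps $\mathcal J(m+1)-\mathcal J(m)\asymp m^{1/\alpha_\dagger-1}$ make the weights decay geometrically going backwards in $m$, because $1-\alpha_\xi-\alpha_\dagger>0$. This yields $Ck^{-s\alpha_\dagger}$.

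For (iii), substitute the squared bound of (ii) and exchange the order of summation. For fixed $m$,
\[
\sum_{k\ge\mathcal J(m)}k^\nu e^{-2c(k^{1-\alpha_\xi}-\mathcal J(m)^{1-\alpha_\xi})}\le C\mathcal J(m)^{\nu+\alpha_\xi},
\]
again by \autoref{lemma4}, and $\mathcal J(m)\asymp m^{1/\alpha_\dagger}$. Hence
\[
\sum_{k\le N}k^\nu\Vert T_k\Vert_2^2\le C\sum_{m\le J_N}m^{(\nu+\alpha_\xi)/\alpha_\dagger}\delta_m^2 ,
\]
and \autoref{lemma_telescoping}(iii) with $M=J_N\asymp N^{\alpha_\dagger}$ gives the stated power of $N$ plus a $\log N$ term. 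The linear sum is handled in the same way from the first display in (ii), using the first inequality of \autoref{lemma_telescoping}(iii).

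For (iv), evaluate the bound in (ii) at $k=\mathcal J(m)-1$. Every jump index $m'\le m-1$ satisfies $\mathcal J(m)^{1-\alpha_\xi}-\mathcal J(m')^{1-\alpha_\xi}\gtrsim m^{(1-\alpha_\xi-\alpha_\dagger)/\alpha_\dagger}$, as computed in \autoref{C1.part1}. This is a positive power of $m$, so the exponential factor beats any polynomial in $\mathcal J(m)$.
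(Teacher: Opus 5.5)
Your parts (ii)--(iv) are the paper's proof. You unroll $T_k$ over the checkpoints $\mathcal{J}(m)$ via \eqref{tau_bound}, use the super-geometric decay of the weights in $m$ (from $1-\alpha_\xi-\alpha_\dagger>0$), apply Cauchy--Schwarz for the squared bound, interchange sums with \autoref{lemma_telescoping}(iii), and evaluate at $k=\mathcal{J}(m)-1$. Two minor corrections apply here. First, Cauchy--Schwarz yields $e^{-c(\cdot)}$ rather than $e^{-2c(\cdot)}$, so the exponent constant $c$ must be relabelled (as the paper does), not just the prefactor. Second, the tail bound $\sum_{k\geq j}k^{\nu}e^{-c(k^{1-\alpha_\xi}-j^{1-\alpha_\xi})}\leq Cj^{\nu+\alpha_\xi}$ does not come from \autoref{lemma4}; it follows directly by splitting at $k=2j$.

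The real problem is your main argument for (i). Re-embedding is indeed isometric for $\Vert v\Vert_{*,J}^2=v^{\top}D_Jv$ with $D_J=\mathrm{diag}(\mathbb{I},\Gamma_J)$. However, $\mathbb{I}-\xi\mathbb{M}_J$ need not contract in that norm, because $\Vert(\mathbb{I}-\xi\mathbb{M}_J)v\Vert_{*,J}^2=v^{\top}D_Jv-\xi v^{\top}(\mathbb{M}_JD_J+D_J\mathbb{M}_J)v+O(\xi^2)$, and the symmetrized product of two positive definite matrices need not be positive definite. For example, take $\Gamma_J=\gamma\mathbb{I}$ with $\gamma\neq 1$. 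Then $\mathbb{M}_JD_J+D_J\mathbb{M}_J$ fails to be positive definite as soon as $\mathbb{M}_{J,12}^{\top}\mathbb{M}_{J,11}^{-1}\mathbb{M}_{J,12}$ has an eigenvalue in $[4\gamma^2/(1+\gamma)^2,\gamma)$, while $\mathbb{M}_J$ stays positive definite. The matrix $D_J^{-1/2}\mathbb{M}_JD_J^{-1/2}$ governs the iteration $\mathbb{I}-\xi D_J^{-1}\mathbb{M}_J$, not the one in $A_m$.

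So (i) rests entirely on your fallback, which is exactly the paper's argument: $\Vert A_m\Vert_{\rm op}\leq 1-c\xi_m$ off checkpoints, $\leq C$ at them, then $\alpha_\dagger<1-\alpha_\xi$. State it with the correct count, though. The product $\prod_{m=j+1}^kA_m$ contains at most $J_k-J_j\leq 1+J_0(k^{\alpha_\dagger}-j^{\alpha_\dagger})$ expansion factors, not $J_k$. Since $k^{\alpha_\dagger}-j^{\alpha_\dagger}\leq\epsilon(k^{1-\alpha_\xi}-j^{1-\alpha_\xi})+C_\epsilon$, the factor $e^{C(J_k-J_j)}$ is absorbed uniformly in $j\leq k$. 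A factor $e^{CJ_k}$ cannot be absorbed when $j$ is close to $k$.

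Your isometry idea does work with a different metric, namely $\Vert v\Vert_{\mathbb{M}_J}^2=v^{\top}\mathbb{M}_Jv$. Write $\mathbb{M}_J=\mathbb{E}[\phi_J\phi_J^{\top}]$ with $\phi_J=(\nabla_zF_0(Z_0)X^{\top},\Psi_J(Z_0)^{\top})^{\top}$. The refinement relation gives $\mathcal{R}_{J_1,J_2}\phi_{J_2}=\phi_{J_1}$, hence $\mathcal{R}_{J_1,J_2}\mathbb{M}_{J_2}\mathcal{R}_{J_1,J_2}^{\top}=\mathbb{M}_{J_1}$, so re-embedding is isometric in this norm. Since $\mathbb{I}-\xi\mathbb{M}_J$ commutes with $\mathbb{M}_J$, its norm in this metric is $\max_i|1-\xi\lambda_i(\mathbb{M}_J)|\leq 1-c\xi$ once $\xi\lambda_{\max}(\mathbb{M}_J)\leq 1$. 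Uniform equivalence with $\Vert\cdot\Vert_2$ then gives (i) with no accumulation of expansion constants. It also avoids using $\alpha_\dagger<1-\alpha_\xi$ at this step, which makes it cleaner than the paper's route; that restriction is still needed for the weight decay in (ii) and (iv).
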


\begin{proof}
(i) is obviously from our previous proofs. For (ii), by \eqref{tau_bound} and (i), and merging \autoref{remark_multijump} into constants,
\[
\Vert T_k\Vert_2\leq C\sum_{m\leq J_k}\exp\left(-c\left(k^{1-\alpha_{\xi}}-\mathcal{J}(m)^{1-\alpha_{\xi}}\right)\right)\delta_m =: C\sum_{m\leq J_k}e_m(k)\,\delta_m.
\]
For consecutive $m$, the gap between expansion times satisfies $\mathcal{J}(m+1)-\mathcal{J}(m)\asymp \mathcal{J}(m)^{1-\alpha_{\dagger}}$, so
\[
\frac{e_{m}(k)}{e_{m+1}(k)} = \exp\left(-c\left(\mathcal{J}(m+1)^{1-\alpha_{\xi}}-\mathcal{J}(m)^{1-\alpha_{\xi}}\right)\right)\leq \exp\left(-c^{\prime}\mathcal{J}(m)^{1-\alpha_{\xi}-\alpha_{\dagger}}\right)\leq \frac{1}{2}
\]
for all $m\geq m_0$, since $1-\alpha_{\xi}-\alpha_{\dagger}>0$. Hence the weights $e_m(k)$ decay super-geometrically as $m$ decreases below $J_k$, $\sum_{m\leq J_k}e_m(k)\leq C$, and $\sum_{m\leq J_k}e_m(k)\delta_m\leq C\max\{\delta_m: J_{k}/2\leq m\leq J_k\} + C\exp(-c^{\prime\prime}k^{1-\alpha_{\xi}})\leq CJ_k^{-s}\leq Ck^{-s\alpha_{\dagger}}$ by \autoref{lemma_telescoping}(i). The second claim in (ii) follows directly from Cauchy--Schwarz:
$\Vert T_k\Vert_2^2\leq(\sum_me_m(k))(\sum_me_m(k)\delta_m^2)
\leq C\sum_me_m(k)\delta_m^2$.  Relabelling the positive decay constant
absorbs the harmless distinction between $e_m(k)$ and $e_m(k)^2$. For (iii),  interchanging the order of summation and using, for $\varrho := \alpha_{\xi}<1$,
\[
\sum_{k\geq j}k^{\nu}\exp\left(-c\left(k^{1-\varrho}-j^{1-\varrho}\right)\right)\leq Cj^{\nu+\varrho},
\]
which follows by splitting $k\leq 2j$ (where $k^{1-\varrho}-j^{1-\varrho}\geq (1-\varrho)(k-j)(2j)^{-\varrho}$, giving a geometric sum of size $Cj^{\varrho}$ with $k^{\nu}\leq (2j)^{\nu}$) from $k>2j$ (where $k^{1-\varrho}-j^{1-\varrho}\geq (1-2^{-(1-\varrho)})k^{1-\varrho}$, giving an $O(1)$ tail), we obtain
\[
\sum_{k=1}^Nk^{\nu}\Vert T_k\Vert_2\leq C\sum_{m\leq J_N}\delta_m\,\mathcal{J}(m)^{\nu+\alpha_{\xi}}\leq C\sum_{m\leq J_N}m^{\frac{\nu+\alpha_{\xi}}{\alpha_{\dagger}}}\delta_m,
\]
and \autoref{lemma_telescoping}(iii) with $J_N\leq CN^{\alpha_{\dagger}}$ gives the first display; the second is identical with $\delta_m^2$ in place of $\delta_m$. (iv) The most recent expansion before time $\mathcal{J}(m)-1$ occurs at $\mathcal{J}(m-1)$, and by (i)--(ii),
\[
\Vert T_{\mathcal{J}(m)-1}\Vert_2\leq C\exp\left(-c\left((\mathcal{J}(m)-1)^{1-\alpha_{\xi}}-\mathcal{J}(m-1)^{1-\alpha_{\xi}}\right)\right)\leq C\exp\left(-c^{\prime}\mathcal{J}(m)^{1-\alpha_{\xi}-\alpha_{\dagger}}\right),
\]
which is superpolynomially small since $1-\alpha_{\xi}-\alpha_{\dagger}>0$.
\end{proof}

\autoref{lemma_transient} provides the bound for $\Vert T_k\Vert_2$. It remains to bound $\Vert D_k\Vert_2$. In view of \eqref{regular_recursion}, the regular component satisfies $D_k = (\prod_{j=1}^kA_j)\Delta\widetilde\omega_0 - \sum_{l\in\{1,2,3,5,6\}}Q_{l,k} - Q_{4,k}^{\circ}$, where $
Q_{l,k} = A_kQ_{l,k-1} + \xi_k\zeta_{l,k}.
$ and $Q^{\circ}_{4,k}$ is defined with $\zeta^{\circ}_{4,k}$ in place of $\zeta_{4,k}$.
For $(\prod_{j=1}^k A_j)\Delta\widetilde\omega_0
$, similar to the proof in \autoref{C1.part2},  when $J_{k-1}= J_k$, $\mathcal{R}_{J_{k-1},J_k} = \mathbb{I}_{p+J_k}$, so $A_k = \mathbb{I}_{p + J_k} - \xi_k \mathbb{M}_{J_k}$ and $\Vert A_k \Vert_{\mathrm{op}}\leq (1 - C\xi_k)$; on the other side, when $J_k>J_{k-1}$, we have that $\Vert A_k\Vert_{\mathrm{op}}\leq C$. Together, we have that 
\begin{align*}
\left\Vert \left(\prod_{j=1}^k A_j\right)\Delta\widetilde\omega_0 \right\Vert_2&  \leq C^{J_k}\prod_{j=1}^k \left(1 - C\xi_j\right)\Vert \Delta\widetilde\omega_0\Vert_2\leq C\exp\left(-Ck^{1-\alpha_{\xi} }  + Ck^{\alpha_{\dagger}}\right)\\
& \leq C\exp\left(-Ck^{1-\alpha_{\xi} }\right)
\end{align*}
when $1 - \alpha_{\xi}>\alpha_{\dagger}$. 
It then remains to look at  $Q_{l,k}$ one by one.

We first show that  $Q_{1,k}$ does not determine the convergence rate of $\Delta\widetilde\omega_{k}$ under the given conditions. In particular, let $\left\Vert \Delta\widetilde\omega_{k}\right\Vert_{2} = O(k^{-\alpha_{\omega}})$  a.s. holds, where $\alpha_{\omega}$ is some preliminary positive rate. Such rate exists according to our first-step proof. Then
\begin{align*}
\Vert Q_{1,k} \Vert_2  &  \leq \Vert A_k \Vert_{\mathrm{op}}\Vert Q_{1,k-1}\Vert_2 + C\xi_k J_k^{\frac{3}{2} -s} \Vert \Delta\widetilde\omega_{k-1} \Vert_2\\
&\leq \left(1 - Ck^{-\alpha_{\xi}} + C\boldsymbol{1}_k\right)\Vert Q_{1,k-1}\Vert_2 + Ck^{-\alpha_{\xi} - \alpha_{\omega}+(\frac{3}{2} -s)\alpha_{\dagger}}, \qquad   \text{a.s.}
\end{align*}
where the exponent $(\frac{3}{2} -s)\alpha_{\dagger}$ comes from the fact that $\Vert  \mathbb{M}_J - \nabla_{\omega\omega} \mathcal{L}_J(\omega_{J,0}) \Vert_F\leq CJ^{\frac{3}{2} - s}$. 
Without loss of generality, we can assume the above holds for all $k$ (otherwise we can consider sufficiently large $k$ only). Iterations gives \begin{align*}
\Vert Q_{1,k} \Vert_2 
&\leq C\exp\left(
    -C\sum_{j=1}^k j^{-\alpha_{\xi}} + Ck^{\alpha_{\dagger}}
\right)\Vert Q_{1,0} \Vert_2  \\
&\quad + C\sum_{j=1}^k
\exp\left(
    -C\sum_{m=j+1}^k m^{-\alpha_{\xi}} + C(k^{\alpha_{\dagger}}-j^{\alpha_{\dagger}}+1)
\right)
j^{-\alpha_{\xi} - \alpha_{\omega}+(\frac{3}{2} -s)\alpha_{\dagger}}\\
& \leq C\exp\left(- Ck^{1-\alpha_{\xi}} + Ck^{\alpha_{\dagger}}\right) \Vert Q_{1,0} \Vert_2  \\
& \quad + C\exp\left(- C^*k^{1-\alpha_{\xi}} + Ck^{\alpha_{\dagger}}\right)\sum_{j=1}^k \exp\left(C^*j^{1-\alpha_{\xi}} - Cj^{\alpha_{\dagger}} \right)j^{-\alpha_{\xi} -  \alpha_{\omega} + (\frac{3}{2} -s)\alpha_{\dagger}}.
\end{align*}
So when $1 - \alpha_{\xi}>\alpha_{\dagger}$, by \autoref{lemma4},  there holds  $\Vert Q_{1,k} \Vert_2 = O_{\text{a.s.}}(k^{- \alpha_{\omega} + (\frac{3}{2} -s)\alpha_{\dagger}})$. When $s>\frac{3}{2}$, we have that  $\Vert Q_{1,k}\Vert_2 = o(k^{-\alpha_{\omega}})$ a.s., so it does not affect the bound of the convergence rate of $\Delta\widetilde\omega_{k}$. 

For $Q_{2,k}$, note that the first Hessian argument in $\zeta_{2,k}$ differs from $\omega_{J_k,0}$
by $\zeta_{7,k}+\tau e_k^-$.  Hence
 \autoref{local_basin}, bounded re-embedding, and Young's inequality give
\begin{equation}\label{zeta2_corrected_bound}
\begin{split}
 \Vert\zeta_{2,k}\Vert_2
 &\leq CJ_k^2\left\{
  \Vert\Delta\widetilde\omega_{k-1}\Vert_2^2
  +\boldsymbol1_kJ_k^{-s}
   \Vert\Delta\widetilde\omega_{k-1}\Vert_2\right\}\\
 &\leq CJ_k^2\Vert\Delta\widetilde\omega_{k-1}\Vert_2^2
  +C\boldsymbol1_kJ_k^{2-2s}.
\end{split}
\end{equation}
Thus
\begin{align*}
\Vert Q_{2,k}\Vert_2
&\leq(1-Ck^{-\alpha_\xi}+C\boldsymbol1_k)
 \Vert Q_{2,k-1}\Vert_2
 +Ck^{-\alpha_\xi-2\alpha_\omega+2\alpha_\dagger}
 +C\boldsymbol1_k k^{-\alpha_\xi-(2s-2)\alpha_\dagger}, \qquad \text{a.s.}
\end{align*}
\autoref{lemma_transient}(i) therefore yields
\[
 \Vert Q_{2,k}\Vert_2
 =O\!\left(
 k^{-2\alpha_\omega+2\alpha_\dagger}
 +k^{-(2s-2)\alpha_\dagger}\right), \qquad \text{a.s.}
\]
When $\alpha_{\omega}>2\alpha_{\dagger}$, which holds given the proof of first part, the first term does not affect the convergence rate.

For $Q_{3,k}$, we obviously have that 
\[
\mathbb{E}_{k-1}\Vert Q_{3,k}\Vert_2^2\leq (1-C\xi_k +C\boldsymbol{1}_k)\Vert Q_{3,k-1}\Vert_2^2 + C\xi_k^2J_k^3 \Vert \Delta\widetilde\omega_{k-1}\Vert_2^2.
\]
This leads to that
\[
\Vert Q_{3,k}\Vert_2 = O\left(k^{\frac{1+3\alpha_{\dagger}-2\alpha_{\xi}}{2} - \alpha_{\omega}}\log^{\frac{r}{2}}(k)\right), \qquad \text{a.s.}
\]
When $\alpha_{\xi}>\frac{1+3\alpha_{\dagger}}{2}$, we have that $Q_{3,k}$
does not affect the convergence rate of $\Delta\widetilde\omega_{k-1}$.

For $Q_{4,k}^{\circ}$, recall that the deterministic mean $\mathbb{E}\zeta_{4,k}$ has been moved into the transient $T_k$ via \eqref{transient_def}, so only the martingale part $\zeta_{4,k}^{\circ}$ remains, and $\zeta_{4,k}^{\circ}=0$ for any $k$ with $J_k =J_{k-1}$. By the Hessian bound used for $\zeta_{4,k}$ in \autoref{lemma5.2part1} and \autoref{lemma_telescoping}(i),
\[
\mathbb{E}_{k-1}\Vert \zeta^{\circ}_{4,k}\Vert^2_2\leq \mathbb{E}\Vert \zeta_{4,k}\Vert_2^2\leq CJ_k^3\Vert \mathcal{R}^{\top}_{J_{k-1},J_k}\omega_{J_{k-1},0} - \omega_{J_k,0}\Vert_2^2\leq C\boldsymbol{1}_kJ_k^{3-2s},
\]
so that
\begin{align*}
    \mathbb{E}_{k-1}\Vert Q^{\circ}_{4,k}\Vert_2^2 \leq \left(1 - Ck^{-\alpha_{\xi}}+C\boldsymbol{1}_k\right)\Vert Q^{\circ}_{4,k-1}\Vert_2^2 + C\boldsymbol{1}_k k^{-2\alpha_{\xi} + (3-2s)\alpha_{\dagger}}.
\end{align*}
Multiplying by the weight $k^{\alpha_{\xi}-\alpha_{\dagger}}\log^{-r}(k)$, the forcing term is summable:
\[
\sum_k\boldsymbol{1}_kk^{\,\alpha_{\xi}-\alpha_{\dagger}-2\alpha_{\xi}+(3-2s)\alpha_{\dagger}}\log^{-r}(k)\asymp \sum_m \mathcal{J}(m)^{-\alpha_{\xi}+(2-2s)\alpha_{\dagger}}\log^{-r}(\mathcal{J}(m))<\infty,
\]
since $-\alpha_{\xi}+(2-2s)\alpha_{\dagger}<-\alpha_{\dagger}$ holds trivially for $s>\frac{3}{2}$. Following the same argument used for $Q_{4,k}$-type terms previously (supermartingale convergence along constant-dimension blocks), we obtain
\[
\Vert Q^{\circ}_{4,k}\Vert_2 = O\left(k^{-\frac{\alpha_{\xi} - \alpha_{\dagger}}{2}}\log^{\frac{r}{2}}(k)\right), \qquad \text{a.s.}
\]

For $Q_{5,k}$, we first note that since $\mathbb{E}[\varphi_{J}(W)]=0$, the mean of $\zeta_{5,k}$ is precisely the mean sieve score of \autoref{lemma_centered_score}:
\[
\mathbb{E}\zeta_{5,k} = \mathbb{E}\left[\nabla_{\omega}\mathcal{L}_{J_k}(\omega_{J_k,0},W)\right] = b_{J_k}, \qquad \Vert b_{J_k}\Vert_2\leq CJ_k^{-q_{\star}} = CJ_k^{-s-\min(s_{\mu},\,s-1)}.
\]
Moreover, $\mathbb{E}\Vert \zeta_{5,k}\Vert_2^2\leq CJ^{3-2s}\mathbb{E}[(1+|Y|^2)(1+\Vert X\Vert_2^2)]\leq CJ^{3-2s}$.

Given the above two bounds, we have that 
\[
Q_{5,k} = \xi_k \mathbb{E}_{k-1}\zeta_{5,k} +  \xi_k(\zeta_{5,k} - \mathbb{E}_{k-1}\zeta_{5,k})+ A_k Q_{5,k-1}
\]
So we only need to bound two sequences $Q_{5,k,1}$ and $Q_{5,k,2}$ defined by $Q_{5,k,1} = \xi_k\mathbb{E}_{k-1}\zeta_{5,k} + A_kQ_{5,k-1,1}$ and $Q_{5,k,2} = \xi_k(\zeta_{5,k} - \mathbb{E}_{k-1}\zeta_{5,k}) + A_kQ_{5,k-1,2}$. 
Obviously the first term is of order $O_{\text{a.s.}}(k^{-(s+\min(s_{\mu},\, s-1))\alpha_{\dagger} })$  by \autoref{lemma4}, while the second term, by the previous proof method for $Q_{4,k}$, is of order $O_{\text{a.s.}}(k^{-\frac{2\alpha_{\xi} +(2s-3)\alpha_{\dagger} -1}{2}}\text{PolyLog})$.  This
shows that 
\[
\Vert Q_{5,k}\Vert_2 = O\left(k^{-(s+\min(s_{\mu},\, s-1))\alpha_{\dagger} } + k^{-\frac{2\alpha_{\xi} +(2s-3)\alpha_{\dagger} -1}{2}}\text{PolyLog} \right), \qquad \text{a.s.}
\]
For $Q_{6,k}$, applying the proof of \autoref{theorem6} to $\varepsilon X$ together shows that 
\[
\Vert Q_{6,k}\Vert_2 = O\left(k^{-\frac{\alpha_{\xi}-\alpha_{\dagger}}{2}}\log^{\frac{1}{2}}(k)\right). 
\]
Finally, the terms driven by $\zeta_{7,j}$ and by $\mathbb{E}\zeta_{4,j}$ constitute exactly the transient $T_k$ of \eqref{transient_def}, and \autoref{lemma_transient}(ii) leads to 
$
\Vert T_{k}\Vert_2 \leq Ck^{- s\alpha_{\dagger}}.
$

Consequently, we have that 
\begin{equation}\label{refined_D_rate}
    \Vert D_{k}\Vert_2 = O\left(k^{-\frac{\alpha_{\xi}-\alpha_{\dagger}}{2}}  \text{PolyLog} + k^{-(2s-2)\alpha_{\dagger}}\right), \ \ \ 
    \Vert \Delta\widetilde\omega_{k}\Vert_2 = O\left(k^{-\frac{\alpha_{\xi}- \alpha_{\dagger}}{2}}\text{PolyLog}  + k^{-s\alpha_{\dagger}}\right), \ \ \ \text{a.s.}
\end{equation}
Here the component $k^{-(2s-2)\alpha_{\dagger}}$ of $\Vert D_k\Vert_2$ collects the contributions of $Q_{1,k}$ and $Q_{2,k}$ driven by the transient part of $\Delta\widetilde\omega$ (namely $O(k^{-(2s-\frac{3}{2})\alpha_{\dagger}})$ and $O(k^{-(2s-2)\alpha_{\dagger}})$ respectively), while the $Q_{5,k}$ mean component $O(k^{-q_{\star}\alpha_{\dagger}})$ is absorbed into the leading term because $q_{\star}\alpha_{\dagger}>\frac{1}{2}>\frac{\alpha_{\xi}-\alpha_{\dagger}}{2}$, and the $Q_{5,k}$ variance component is absorbed because $\frac{2\alpha_{\xi}+(2s-3)\alpha_{\dagger}-1}{2}\geq \frac{\alpha_{\xi}-\alpha_{\dagger}}{2}$ iff $\alpha_{\xi}+(2s-2)\alpha_{\dagger}\geq 1$, which follows from $\alpha_{\xi}>1+(4-2s)\alpha_{\dagger}$. The second display in \eqref{refined_D_rate} follows from the first, \autoref{lemma_transient}(ii), and $(2s-2)\alpha_{\dagger}\geq s\alpha_{\dagger}$ for $s\geq 2$. This gives the refined convergence rate of $\Vert \Delta\widetilde\omega_{k}\Vert_2$

\subsubsection{PR Averaging and the Asymptotic Linear Representation}\label{lemma5.2part3}

\vspace{0.3cm}

\textit{Summary: This part will prove the main results without conditioning.}

\vspace{0.3cm}

From the proof of \autoref{lemma5.2part2}, we have the decomposition $\Delta\widetilde\omega_k = D_k + T_k$ from \eqref{transient_def}--\eqref{regular_recursion}. We first dispose of the transient component of the PR average directly --- this is the checkpoint-aware step. By \autoref{lemma_transient}(iii) with $\nu = 0$, and $\Vert\mathcal{R}_{J_k,J_N}\Vert_{\mathrm{op}}\leq C$,
\begin{equation}\label{transient_average}
\left\Vert \frac{1}{N}\sum_{k=1}^N\mathcal{R}^{\top}_{J_k,J_N}T_k\right\Vert_2\leq \frac{C}{N}\sum_{k=1}^N\Vert T_k\Vert_2\leq \frac{C}{N}\left(N^{\alpha_{\xi}+\left(\frac{1}{2}-s\right)\alpha_{\dagger}}+\log N\right) = o\left(N^{-\frac{1}{2}}\right),
\end{equation}
where the final step holds because $\alpha_{\xi}<\frac{1}{2}+(s-\frac{1}{2})\alpha_{\dagger}$ under \autoref{new_rate}. 

It remains to analyze the regular component. For $\mathcal{J}(m)+1\leq k\leq \mathcal{J}(m+1)-1$, sieve dimension holds constant, so $A_k = \mathbb{I}_{p+m} - \xi_k \mathbb{M}_{m}$, $\tau_k = 0$, and $\zeta^{\circ}_{4,k} = 0$, whence by \eqref{regular_recursion},
\[
D_{k}   = \left(\mathbb{I}_{p+m} - \xi_k  \mathbb{M}_m\right)D_{k-1} - \xi_k\sum_{l\in\{1,2,3,5,6\}}\zeta_{l,k}, \ \  \mathcal{J}(m)+1\leq k\leq \mathcal{J}(m+1)-1.
\]
This implies that \[D_{k-1} = \xi_k^{-1} \mathbb{M}_m^{-1}\left( D_{k-1}  - D_{k} \right) -  \mathbb{M}_m^{-1}\sum_{l\in\{1,2,3,5,6\}} \zeta_{l,k}, \ \mathcal{J}(m)+1\leq k\leq \mathcal{J}(m+1)-1\] and
\begin{align*}
    &\sum_{k=\mathcal{J}(m)}^{\mathcal{J}(m+1)-2}D_{k} =  \sum_{k=\mathcal{J}(m)+1}^{\mathcal{J}(m+1)-1}D_{k-1}\\
    & =  \mathbb{M}_m^{-1} \left[\sum_{k=\mathcal{J}(m)+1}^{\mathcal{J}(m+1)-2}\left(\xi_{k+1}^{-1} - \xi_k^{-1}\right)D_k + \xi_{\mathcal{J}(m)+1}^{-1}D_{\mathcal{J}(m)} -\xi_{\mathcal{J}(m+1)-1}^{-1}D_{\mathcal{J}(m+1)-1}\right]\\
    & - \mathbb{M}_m^{-1}\sum_{k=\mathcal{J}(m)+1}^{\mathcal{J}(m+1)-1} \sum_{l\in\{1,2,3,5,6\}} \zeta_{l,k}.
\end{align*}
Now we look at $\doubleoverline{\omega}_N = \frac{1}{N}\sum_{k=1}^N \mathcal{R}_{J_k, J_N}^{\top}\widetilde\omega_{k}$. Using $\Delta\widetilde\omega_k = D_k+T_k$, \eqref{transient_average}, and the same block decomposition as before applied to $D_k$,
\begin{align*}
    \Delta\doubleoverline{\omega}_N & =  \frac{1}{N}\sum_{m=J_0}^{J_N - 1} \mathcal{R}_{m, J_N}^{\top} \sum_{k=\mathcal{J}(m)}^{\mathcal{J}(m+1)-2} D_{k}+   \frac{1}{N}\sum_{k = \mathcal{J}(J_N)}^{N}D_{k}
    + \frac{1}{N}\sum_{m=J_0}^{J_N-1} \mathcal{R}_{m, J_N}^{\top}D_{\mathcal{J}(m+1)-1}\\
    &+ \frac{1}{N}\sum_{m=J_0}^{J_N-1}\sum_{k=\mathcal{J}(m)}^{\mathcal{J}(m+1)-1}\left(\mathcal{R}_{m, J_N}^{\top}\omega_{m,0}- \omega_{J_N, 0}\right) + o\left(N^{-\frac{1}{2}}\right)\\
    & = \frac{1}{N}\sum_{m=J_0}^{J_N - 1}\mathcal{R}_{m, J_N}^{\top} \mathbb{M}_m^{-1} \sum_{k=\mathcal{J}(m)+1}^{\mathcal{J}(m+1)-2}\left(\xi_{k+1}^{-1} - \xi_k^{-1}\right)D_k + \frac{1}{N} \mathbb{M}_{J_N}^{-1} \sum_{k=\mathcal{J}(J_N)+1}^{N-1}\left(\xi_{k+1}^{-1} - \xi_k^{-1}\right)D_k \\
    & + \frac{1}{N}\sum_{m=J_0}^{J_N - 1}\mathcal{R}_{m, J_N}^{\top} \mathbb{M}_m^{-1}\left(\xi_{\mathcal{J}(m)+1}^{-1}D_{\mathcal{J}(m)} - \xi_{\mathcal{J}(m+1)-1}^{-1}D_{\mathcal{J}(m+1)-1}\right)\\
    & + \frac{1}{N} \mathbb{M}_{J_N}^{-1}\left(\xi_{\mathcal{J}(J_N)+1}^{-1}D_{\mathcal{J}(J_N)} - \xi_{N}^{-1}D_{N}\right)\\
    & - \frac{1}{N}\sum_{m=J_0}^{J_N -1}\mathcal{R}_{m, J_N}^{\top} \mathbb{M}^{-1}_m\sum_{k=\mathcal{J}(m)+1}^{\mathcal{J}(m+1)-1}\sum_{l\in\{1,2,3,5,6\}} \zeta_{l,k} - \frac{1}{N} \mathbb{M}^{-1}_{J_N}\sum_{k=\mathcal{J}(J_N)+1}^{N}\sum_{l\in\{1,2,3,5,6\}} \zeta_{l,k}\\
    & + \frac{1}{N}\sum_{m=J_0}^{J_N-1} \mathcal{R}_{m, J_N}^{\top}D_{\mathcal{J}(m+1)-1}
    + \frac{1}{N}\sum_{m=J_0}^{J_N-1}\sum_{k=\mathcal{J}(m)}^{\mathcal{J}(m+1)-1}\left(\mathcal{R}_{m, J_N}^{\top}\omega_{m,0}- \omega_{J_N, 0}\right) + o\left(N^{-\frac{1}{2}}\right).
\end{align*}
We analyze the above equation term by term, using the refined rate \eqref{refined_D_rate} for $D_k$; note that each bound below is stated for the two components of \eqref{refined_D_rate} separately. Due to the boundedness of eigenvalues of $ \mathbb{M}_m$ and $\Vert \mathcal{R}_{m, J_N}\Vert_{\mathrm{op}}$, we have that
\begin{align*}
    & \left\Vert \frac{1}{N}\sum_{m=J_0}^{J_N - 1}\mathcal{R}_{m, J_N}^{\top} \mathbb{M}_m^{-1} \sum_{k=\mathcal{J}(m)+1}^{\mathcal{J}(m+1)-2}\left(\xi_{k+1}^{-1} - \xi_k^{-1}\right)D_k + \frac{1}{N} \mathbb{M}_{J_N}^{-1} \sum_{k=\mathcal{J}(J_N)+1}^{N-1}\left(\xi_{k+1}^{-1} - \xi_k^{-1}\right)D_k \right\Vert_2\\
    & \leq \frac{1}{N}\sum_{k=1}^N \left(\xi_{k+1}^{-1} - \xi_k^{-1}\right)\left\Vert D_k \right\Vert_2\leq \frac{1}{N}\sum_{k=1}^N \left(k^{\alpha_{\xi}-1 - \frac{\alpha_{\xi}-\alpha_{\dagger}}{2}}\text{PolyLog} + k^{\alpha_{\xi}-1-(2s-2)\alpha_{\dagger}}\right)\\
    &\leq CN^{\frac{\alpha_{\xi}+\alpha_{\dagger}}{2}-1}\text{PolyLog} + CN^{\alpha_{\xi} - (2s-2)\alpha_{\dagger}-1} + CN^{-1}\log(N),  \qquad \text{a.s.}
\end{align*}
\begin{align*}
   & \left\Vert\frac{1}{N}\sum_{m=J_0}^{J_N - 1}\mathcal{R}_{m, J_N}^{\top} \mathbb{M}_m^{-1}\left(\xi_{\mathcal{J}(m)+1}^{-1}D_{\mathcal{J}(m)} - \xi_{\mathcal{J}(m+1)-1}^{-1}D_{\mathcal{J}(m+1)-1}\right)
    + \frac{1}{N} \mathbb{M}_{J_N}^{-1}\left(\xi_{\mathcal{J}(J_N)+1}^{-1}D_{\mathcal{J}(J_N)} - \xi_{N}^{-1}D_{N}\right)\right\Vert_2\\
    &\leq \frac{C}{N}\left(\sum_{m=J_0}^{J_N } \left(m^{\frac{\alpha_{\xi}}{\alpha_{\dagger}} - \frac{\alpha_{\xi} - \alpha_{\dagger}}{2\alpha_{\dagger}}}\text{PolyLog} + m^{\frac{\alpha_{\xi} - (2s-2)\alpha_{\dagger}}{\alpha_{\dagger}}}\right) + N^{\alpha_{\xi}- \frac{\alpha_{\xi} - \alpha_{\dagger}}{2}}\text{PolyLog} + N^{\alpha_{\xi}-(2s-2)\alpha_{\dagger}}\right)\\
    &\leq CN^{\frac{\alpha_{\xi} +3\alpha_{\dagger}}{2} - 1}\text{PolyLog} + CN^{\alpha_{\xi}+\alpha_{\dagger}-(2s-2)\alpha_{\dagger}-1} + CN^{-1}\log(N),  \qquad \text{a.s.}
\end{align*}
\begin{align*}
    \left\Vert \frac{1}{N}\sum_{m=J_0}^{J_N-1} \mathcal{R}_{m, J_N}^{\top}D_{\mathcal{J}(m+1)-1}  \right\Vert_2 &  \leq \frac{C}{N}\sum_{m=J_0}^{J_N}\left(m^{-\frac{\alpha_{\xi}-\alpha_{\dagger}}{2\alpha_{\dagger}}}\text{PolyLog} + m^{-\frac{(2s-2)\alpha_{\dagger}}{\alpha_{\dagger}}}\right)\leq \frac{C}{N}\left(N^{\frac{3\alpha_{\dagger}-\alpha_{\xi}}{2}}\text{PolyLog} + \log N\right), \qquad \text{a.s.}
\end{align*}
Note that the above terms are all $o(N^{-\frac{1}{2}})$ a.s.: the leading components require $\alpha_{\xi}+3\alpha_{\dagger}<1$, exactly as before, while the components generated by the secondary rate $k^{-(2s-2)\alpha_{\dagger}}$ in \eqref{refined_D_rate} are $o(N^{-\frac{1}{2}})$ whenever $\alpha_{\xi}<\frac{1}{2}+(2s-3)\alpha_{\dagger}$, which is implied by $\alpha_{\xi}<\frac{1}{2}+(s-\frac{1}{2})\alpha_{\dagger}$ since $2s-3\geq s-\frac{1}{2}$ for $s\geq\frac{5}{2}$.

It only remains to look at \[\frac{1}{N}\sum_{m=J_0}^{J_N -1}\mathcal{R}_{m, J_N}^{\top} \mathbb{M}^{-1}_m\sum_{k=\mathcal{J}(m)+1}^{\mathcal{J}(m+1)-1}\sum_{l\in\{1,2,3,5,6\}} \zeta_{l,k} + \frac{1}{N} \mathbb{M}^{-1}_{J_N}\sum_{k=\mathcal{J}(J_N)+1}^{N}\sum_{l\in\{1,2,3,5,6\}} \zeta_{l,k}.\]
Similar to the proof in \autoref{lemma5.2part2}, we look at each $l$ separately. Note that $\zeta_{1,k}$, $\zeta_{2,k}$ and $\zeta_{3,k}$ depend on the \emph{full} iterate error $\Delta\widetilde\omega_{k-1} = D_{k-1}+T_{k-1}$, so we bound their sums using both components: the refined rate \eqref{refined_D_rate} for $D$, and the checkpoint-aware sums of \autoref{lemma_transient}(iii) for $T$.

For $l=1$, obviously we have that $\Vert \zeta_{1,k}\Vert_2\leq J_k^{\frac{3}{2} -s} \Vert \Delta\widetilde\omega_{k-1} \Vert_2\leq J_k^{\frac{3}{2}-s}\left(\Vert D_{k-1}\Vert_2 + \Vert T_{k-1}\Vert_2\right)$, so
\begin{align*}
  &   \left\Vert \frac{1}{N}\sum_{m=J_0}^{J_N -1}\mathcal{R}_{m, J_N}^{\top} \mathbb{M}^{-1}_m\sum_{k=\mathcal{J}(m)+1}^{\mathcal{J}(m+1)-1}\zeta_{1,k} + \frac{1}{N} \mathbb{M}^{-1}_{J_N}\sum_{k=\mathcal{J}(J_N)+1}^{N} \zeta_{1,k}\right\Vert_2 \\
  & \leq \frac{C}{N}\sum_{k=1}^N \Vert \zeta_{1,k}\Vert_2 \leq \frac{C}{N}\sum_{k=1}^N k^{-\frac{\alpha_{\xi}-\alpha_{\dagger}}{2} + (\frac{3}{2}-s)\alpha_{\dagger}}\text{PolyLog}+ \frac{C}{N}\sum_{k=1}^Nk^{(\frac{3}{2}-s)\alpha_{\dagger}}\left(k^{-(2s-2)\alpha_{\dagger}} + \Vert T_{k-1}\Vert_2\right)\\
  &\leq CN^{-\frac{\alpha_{\xi}-\alpha_{\dagger}}{2} + (\frac{3}{2}-s)\alpha_{\dagger}}\text{PolyLog} + CN^{(\frac{7}{2}-3s)\alpha_{\dagger}} + \frac{C}{N}\left(N^{\alpha_{\xi}+(2-2s)\alpha_{\dagger}}+\log N\right), \qquad \text{a.s.},
\end{align*}
where the sum involving $\Vert T_{k-1}\Vert_2$ is bounded by \autoref{lemma_transient}(iii) (the factor $k^{(\frac{3}{2}-s)\alpha_{\dagger}}$ is decreasing, so for the checkpoint at $\mathcal{J}(m)$ it may be replaced by $\mathcal{J}(m)^{(\frac{3}{2}-s)\alpha_{\dagger}}$ inside the checkpoint sums, giving $\sum_mm^{\frac{\alpha_{\xi}+(\frac{3}{2}-s)\alpha_{\dagger}}{\alpha_{\dagger}}}\delta_m$-type sums, which \autoref{lemma_telescoping}(iii) bounds by $C(N^{\alpha_{\xi}+(\frac{3}{2}-s)\alpha_{\dagger}+(\frac{1}{2}-s)\alpha_{\dagger}}+\log N) = C(N^{\alpha_{\xi}+(2-2s)\alpha_{\dagger}}+\log N)$). When $\alpha_{\xi}+(2s-4)\alpha_{\dagger}>1$, the first term is $o(N^{-\frac{1}{2}})$; the second is $o(N^{-\frac{1}{2}})$ since $(3s-\frac{7}{2})\alpha_{\dagger}\geq (2s-1)\alpha_{\dagger}\geq q_{\star}\alpha_{\dagger}>\frac{1}{2}$ for $s\geq\frac{5}{2}$; and the third is $o(N^{-\frac{1}{2}})$ since $\alpha_{\xi}+(2-2s)\alpha_{\dagger}<\frac{1}{2}$ is implied by $\alpha_{\xi}<\frac{1}{2}+(s-\frac{1}{2})\alpha_{\dagger}$, using $2s-2\geq s-\frac{1}{2}$ for $s\geq\frac{3}{2}$. So the $l=1$ term is $o_{\text{a.s.}}(N^{-\frac{1}{2}})$.

 The block normal equations
contain only no-jump indices.  Hence \eqref{zeta2_corrected_bound} gives
\begin{align*}
  &   \left\Vert \frac{1}{N}\sum_{m=J_0}^{J_N -1}\mathcal{R}_{m, J_N}^{\top} \mathbb{M}^{-1}_m\sum_{k=\mathcal{J}(m)+1}^{\mathcal{J}(m+1)-1}\zeta_{2,k} + \frac{1}{N} \mathbb{M}^{-1}_{J_N}\sum_{k=\mathcal{J}(J_N)+1}^{N} \zeta_{2,k}\right\Vert_2 \\
  & \leq \frac{C}{N}\sum_{k=1}^N
 J_k^{2}\Vert \Delta \widetilde\omega_{k-1}\Vert_2^2 \leq \frac{C}{N}\sum_{k=1}^N J_k^{2}\Vert D_{k}\Vert_2^2
  + \frac{C}{N}\sum_{k=1}^N J_k^{2}\Vert T_{k}\Vert_2^2.
\end{align*}
The index shifts are harmless because $J_k/J_{k-1}$ is uniformly bounded.
For the $D$-part, \eqref{refined_D_rate} gives
\[
\frac{C}{N}\sum_{k=1}^N J_k^{2}\Vert D_{k}\Vert_2^2\leq CN^{-\alpha_{\xi}+ 3\alpha_{\dagger}}\text{PolyLog} + CN^{(6-4s)\alpha_{\dagger}}\text{PolyLog}, \qquad \text{a.s.},
\]
which is $o(N^{-\frac{1}{2}})$ when $\alpha_{\xi} >\frac{1}{2} + 3\alpha_{\dagger}$, using also $(4s-6)\alpha_{\dagger}\geq (2s-1)\alpha_{\dagger}\geq q_{\star}\alpha_{\dagger}>\frac{1}{2}$ for $s\geq\frac{5}{2}$. For the $T$-part, \autoref{lemma_transient}(iii) (with the decreasing-weight device described under $l=1$) gives
\[
\frac{C}{N}\sum_{k=1}^N J_k^{2}\Vert T_{k}\Vert_2^2\leq \frac{C}{N}\left(N^{\alpha_{\xi}+(3-2s)\alpha_{\dagger}}+\log N\right) = o\left(N^{-\frac{1}{2}}\right),
\]
since $\alpha_{\xi}+(3-2s)\alpha_{\dagger}<\frac{1}{2}$ is implied by $\alpha_{\xi}<\frac{1}{2}+(s-\frac{1}{2})\alpha_{\dagger}$, using $2s-3\geq s-\frac{1}{2}$ for $s\geq\frac{5}{2}$. So the $l=2$ term is $o_{\text{a.s.}}(N^{-\frac{1}{2}})$.

For later gain and AME arguments, we also record an all-index checkpoint
bound.  Let $t_m=\mathcal J(m)$ and $R=(\alpha_\xi-\alpha_\dagger)/2$.
Equations~\eqref{refined_D_rate} and Lemma~\ref{lemma_transient}(iv) imply
\[
 \Vert\Delta\widetilde\omega_{\mathcal J(m)-1}\Vert_2
 =O \!\left(m^{-\frac{\alpha_{\xi}-\alpha_{\dagger}}{2\alpha{_\dagger}}}\text{PolyLog}(m)+m^{-s}\right), \qquad \text{a.s.}
\]
Because \autoref{new_rate} gives
$\alpha_\dagger<1/12$ and $\alpha_\xi>9\alpha_\dagger$, we have
$R/\alpha_\dagger>4$.  Therefore \eqref{zeta2_corrected_bound} yields
\begin{align}
 \sum_{\mathcal{J}(m)\leq N}\Vert\zeta_{2,\mathcal{J}(m)}\Vert_2
 &\leq C\sum_{m\leq J_N}m^2\left[
  \{m^{-\frac{\alpha_{\xi}-\alpha_{\dagger}}{2\alpha_\dagger}}\text{PolyLog} +m^{-s}\}^2\right.\nonumber\\
 &\hspace{34mm}\left.
  +m^{-s}\{m^{-\frac{\alpha_{\xi}-\alpha_{\dagger}}{2\alpha_\dagger}}\text{PolyLog} +m^{-s}\}
 \right]
 =O_{\rm a.s.}(1).\label{zeta2_checkpoint_absolute}
\end{align}
Together with the no-jump calculation above, this gives
\begin{equation}\label{zeta2_all_index_absolute}
 \sum_{k\leq N}\Vert\zeta_{2,k}\Vert_2=o(\sqrt N), \qquad \text{a.s.}
\end{equation}

For $l=3$, let $\chi_k=\boldsymbol 1\{J_k=J_{k-1}\}$ (and set
$\chi_1=0$).  Up to finitely many initialization indices, the block expression
in \eqref{regular_recursion} is exactly equal to $
 \frac1N\sum_{k=1}^N\mathcal R_{J_k,J_N}^{\top}
 \mathbb M_{J_k}^{-1}\chi_k\zeta_{3,k}.
$ 
Define 
$U_{3,k}=\chi_k\mathbb M_{J_k}^{-1}\zeta_{3,k}$, $U_{3,k}$ is a martingale
difference.  The empirical-Hessian bound and the uniform inverse bound give
$
 \mathbb E_{k-1}\Vert U_{3,k}\Vert_2^2
 \leq CJ_k^3\Vert\Delta\widetilde\omega_{k-1}\Vert_2^2.
$ 
Put $a=\alpha_\dagger$ and $\rho=\alpha_\xi$.  By
\eqref{refined_D_rate},  \autoref{lemma_transient}(ii), and
$\Vert D+T\Vert^2\leq2\Vert D\Vert^2+2\Vert T\Vert^2$, we have that 
\begin{align*}
 \sum_{k=2}^{\infty}\frac1k
 \mathbb E_{k-1}\Vert U_{3,k}\Vert_2^2
 &\leq C\sum_k k^{-1-\alpha_\xi+4\alpha_\dagger}\text{PolyLog}
   +C\sum_k k^{-1+(7-4s)\alpha_\dagger}\text{PolyLog}\\
   & 
   +C\sum_k k^{-1+(3-2s)\alpha_{\dagger}}<\infty
 \qquad\text{a.s.}
\end{align*}
Here $\alpha_\xi>4\alpha_\dagger$ follows from
$\alpha_\xi>\tfrac12+3\alpha_\dagger$ and $\alpha_\dagger<\tfrac12$, while $s>7/2$ makes the other two
exponents strictly below $-1$.   \autoref{lem:common_space_mds} therefore
implies that the $l=3$ block sum is $o_{\rm a.s.}(N^{-1/2})$.

For $l=4$, recall that the block sums involve $\zeta_{4,k}^{\circ}$ only through indices $\mathcal{J}(m)+1\leq k\leq \mathcal{J}(m+1)-1$ at which the sieve dimension does not change, so $\zeta^{\circ}_{4,k}=0$ there and this term vanishes identically (the deterministic mean $\mathbb{E}\zeta_{4,k}$ at expansion times is part of the transient $T_k$, already handled in \eqref{transient_average}).

For $l=5$, decompose
$\zeta_{5,k}=b_{J_k}+\zeta_{5,k}^{\circ}$, where
$\mathbb E_{k-1}\zeta_{5,k}^{\circ}=0$,
$\Vert b_{J_k}\Vert_2\leq CJ_k^{-q_\star}$, and
$\mathbb E_{k-1}\Vert\zeta_{5,k}^{\circ}\Vert_2^2\leq CJ_k^{3-2s}$.
The deterministic part satisfies the exact power-sum bound $\frac C N\sum_{k=1}^N J_k^{-q_\star} = o(N^{-1/2})$ 
because $q_\star\alpha_\dagger>1/2$.  For the centered part set
$U_{5,k}=\chi_k\mathbb M_{J_k}^{-1}\zeta_{5,k}^{\circ}$.  It is a martingale
difference and
\[
 \sum_{k=2}^{\infty}\frac1k
 \mathbb E_{k-1}\Vert U_{5,k}\Vert_2^2
 \leq C\sum_{k=2}^{\infty}k^{-1+(3-2s)\alpha_\dagger}<\infty
\]
because $s>3/2$.   \autoref{lem:common_space_mds} gives an
$o_{\rm a.s.}(N^{-1/2})$ averaged centered term.  Hence the entire $l=5$
contribution is $o_{\rm a.s.}(N^{-1/2})$.

The above implies that the following a.s. holds  
\begin{align*}
    \Delta\doubleoverline{\omega}_N  & =   
     \frac{1}{N}\sum_{k=1}^{N}\mathcal{R}_{J_k, J_N}^{\top} \mathbb{M}^{-1}_{J_k}\frac{1}{B}\sum_{i=1}^B \varepsilon_{i,k}\begin{pmatrix}
   \nabla_z  F_0(Z_{i,k}(\theta_0)) X_{i,k}\\
    \Psi_{J_{k}}(Z_{i,k}(\theta_0))
\end{pmatrix}\\
&  \ \ \ \ \ \ \ \ \ \ \ \ \ \  \ \ \ \ \ \  \ \ \ \ \ \  \ \ \ \ \ \   +\frac{1}{N}\sum_{k=1}^{N}\left(\mathcal{R}_{J_k, J_N}^{\top}\omega_{J_k,0}- \omega_{J_N, 0}\right)  +o\left(N^{-\frac{1}{2}}\right).
\end{align*}
Note that the first summation on the right-hand side  omits the expansion
indices, but define 
$\chi_k^{\rm  }=\boldsymbol1\{J_k>J_{k-1}\}$ and
$U_{6,k} =\chi_k \mathbb M_{J_k}^{-1}\zeta_{6,k}$.
Then
$
 \mathbb E_{k-1}\Vert U_{6,k} \Vert_2^2
 \leq C\chi_k J_k.$ 
We have that 
\[
 \sum_{k=1}^{\infty}\frac1k
 \mathbb E_{k-1}\Vert U_{6,k} \Vert_2^2
 \leq C\sum_m\frac{m}{\mathcal J(m)}
 \asymp\sum_m m^{1-1/\alpha_\dagger}<\infty,
\]
since $\alpha_\dagger<1/2$.   \autoref{lem:common_space_mds} shows that the
restored checkpoint score is $o_{\rm a.s.}(\sqrt N)$ before division by $N$.

To further simplify notations, define \[ \mathbb{M}_{J,11} = \mathbb{E}\left[(\nabla_zF_0(Z_0))^2 XX^{\top}\right], \ \  \mathbb{M}_{J,12} = \mathbb{E}\left[ \nabla_zF_0(Z_0)  X\Psi_J(Z_0)^{\top}\right], \ \ \mathbb{M}_{J,22} = \Gamma_J.\] We have 
\begin{align*}
  &    \mathbb{M}_{J}^{-1}\begin{pmatrix}
   \varepsilon\nabla_z  F_0(Z_0) X\\
   \varepsilon\Psi_{J}(Z_0)
\end{pmatrix}\\
& =\varepsilon\begin{pmatrix}
    \left(  \mathbb{M}_{J,11} -  \mathbb{M}_{J,12} \mathbb{M}_{J,22}^{-1} \mathbb{M}_{J,12}^{\top}\right)^{-1}\left( \nabla_z F_0(Z_0) X -  \mathbb{M}_{J,12} \mathbb{M}_{J,22}^{-1}\Psi_{J}(Z_0) \right)\\
    \left( \mathbb{M}_{J,22} -  \mathbb{M}_{J,12}^{\top} \mathbb{M}_{J,11}^{-1} \mathbb{M}_{J,12}\right)^{-1}\left( \Psi_{J}(Z_0) -  \mathbb{M}_{J, 12}^{\top} \mathbb{M}_{J,11}^{-1}\nabla_z  F_0(Z_0) X\right)
\end{pmatrix} 
\end{align*}
We note that by \autoref{condition6}, 
\begin{align*}
 & \left\Vert \Psi_J(\cdot)^{\top}\Gamma_J^{-1}\mathbb{E}\left(\nabla_zF_0(Z_0)  \Psi_J(Z_0)X^{\top}\right) - \nabla_z F_0(\cdot)\mu_0(\theta_0, \cdot)^{\top}\right\Vert_{\infty}\\
 & = \left\Vert \Psi_J(\cdot)^{\top}\Gamma_J^{-1}\mathbb{E}\left(\nabla_zF_0(Z_0)  \Psi_J(Z_0)\mu_0(\theta_0,Z_0)^{\top}\right) - \nabla_z F_0(\cdot)\mu_0(\theta_0, \cdot)^{\top}\right\Vert_{\infty}\leq CJ^{-s_{\mu}}.
\end{align*}
So \begin{align*}
& \left\Vert  \mathbb{M}_{J,12} \mathbb{M}_{J,22}^{-1} \mathbb{M}_{J,12}^{\top} - \mathbb{E}\left((\nabla_z F_0(Z_0))^2 \mu_0(\theta_0, Z_0)\mu_0(\theta_0, Z_0)^{\top}\right)\right\Vert_F\\
& \leq  \mathbb{E}[\Vert\nabla_z F_0(Z_0)X\Vert_2]\left\Vert \Psi_J(\cdot)^{\top}\Gamma_J^{-1}\mathbb{E}\left(\nabla_zF_0(Z_0)  \Psi_J(Z_0)X^{\top}\right) - \nabla_z F_0(\cdot)\mu_0(\theta_0, \cdot)^{\top}\right\Vert_{\infty}\leq CJ^{-s_{\mu}}.
\end{align*}
This, due to uniformly lower bounded eigenvalues of $\mathbb{M}_{\theta}$ according to \autoref{eigenvalue} and $\Vert  \mathbb{M}_J - \nabla_{\omega\omega} \mathcal{L}_J(\omega_{J,0}) \Vert_F\leq CJ^{\frac{3}{2} - s}$, implies that 
\begin{align*}
  &  \left\Vert  \left(  \mathbb{M}_{J,11} -  \mathbb{M}_{J,12} \mathbb{M}_{J,22}^{-1} \mathbb{M}_{J,12}^{\top}\right)^{-1} - \mathbb{M}_{\theta}^{-1} \right\Vert_F \leq CJ^{-s_{\mu}} \rightarrow 0, J\rightarrow \infty,
\end{align*}
where recall that $\mathbb{M}_{\theta}  = \mathbb{E}[\left(\nabla_z F_0(Z_0)\right)^2(X- \mu_0(\theta_0, Z_0))(X- \mu_0(\theta_0, Z_0))^{\top}]$. 
Similarly, we have that 
\begin{align*}
     \Vert   \mathbb{M}_{J,12} \mathbb{M}_{J,22}^{-1}\Psi_J(Z_0)-\nabla_z F_0(Z_0)\mu_0(\theta_0, Z_0)\Vert_2 \leq CJ^{-s_{\mu}}
\end{align*}
uniformly for all $Z_0$. Together, we have that 
\begin{align*}
   & \varepsilon\left(  \mathbb{M}_{J,11} -  \mathbb{M}_{J,12} \mathbb{M}_{J,22}^{-1} \mathbb{M}_{J,12}^{\top}\right)^{-1}\left( \nabla_z F_0(Z_0) X -  \mathbb{M}_{J,12} \mathbb{M}_{J,22}^{-1}\Psi_{J}(Z_0) \right)\\
   &  = \varepsilon \mathbb{M}_{\theta}^{-1} \nabla_z F_0(Z_0)\left(X - \mu_0(\theta_0, Z_0)\right) + \varepsilon\delta_{J,\theta}(W), \\
   & \qquad \qquad \Vert \delta_{J,\theta}(W) \Vert_2\leq CJ^{-s_{\mu}}(1 + \Vert X\Vert_2 + \Vert \mu_0(\theta_0, Z_0) \Vert_2) \text{ for all $W$}. 
\end{align*}
The above  implies that 
\begin{align}
    \Delta\doubleoverline{\theta}_N & = \frac{1}{N}\sum_{k=1}^N \frac{1}{B}\sum_{i=1}^B \varepsilon_{i,k} \mathbb{M}_{\theta}^{-1} \nabla_z F_0(Z_{0,i,k})\left(X_{i,k} - \mu_0(\theta_0, Z_{0,i,k})\right)\nonumber\\
    & + \frac{1}{N}\sum_{k=1}^N\frac{1}{B}\sum_{i=1}^B\varepsilon_{i,k}\delta_{J_k,\theta}(W_{i,k}) + o\left(N^{-\frac{1}{2}}\right), \qquad \text{a.s.}
\end{align}
To finish the first result, set
\[
 U_{\theta,k}=\frac1B\sum_{i=1}^B
 \varepsilon_{i,k}\delta_{J_k,\theta}(W_{i,k}).
\]
This is a fixed-dimensional martingale difference and the displayed envelope
for $\delta_{J,\theta}$, together with \autoref{cond11}, gives
$\mathbb E_{k-1}\Vert U_{\theta,k}\Vert_2^2\leq CJ_k^{-2s_\mu}$.
Consequently
$\sum_k k^{-1}\mathbb E_{k-1}\Vert U_{\theta,k}\Vert_2^2<\infty$.
The row part of \autoref{lem:common_space_mds} yields
$N^{-1}\sum_{k\leq N}U_{\theta,k}=o_{\rm a.s.}(N^{-1/2})$.

Also, for   $\Delta\doubleoverline{\beta}_N$, using the same reasoning, we have that
\begin{align*}
    \Delta\doubleoverline{\beta}_N  & =   
     \frac{1}{N}\sum_{k=1}^{N}R_{J_k, J_N}^{\top}\mathbb{M}^{-1}_{\beta, J_k}\frac{1}{B}\sum_{i=1}^B \varepsilon_{i,k}\left(\Psi_{J_k}\left(Z_{0,i,k}\right) - P_{J_k}\nabla_{z}F_0(Z_{0,i,k})X_{i,k}\right)\\
     & + \frac{1}{N}\sum_{k=1}^{N}\left(R_{J_k, J_N}^{\top}\beta_{J_k,0}- \beta_{J_N, 0}\right) +o\left(N^{-\frac{1}{2}}\right), \qquad \text{a.s.}  
\end{align*}
where recall that $ P_{J} = \left(\mathbb{E}\left[\Psi_J(Z_0)\nabla_z F_0(Z_0) X^{\top}\right]\right)\left(\mathbb{E}\left[\left(\nabla_z F_0(Z_0)\right)^2XX^{\top}\right]\right)^{-1}$.

\subsubsection{Extension to estimated conditioning matrices}\label{proof_estimated_gain}
We now pass from the auxiliary identity-gain calculation to the  update adjusted by conditioning matrix $\widehat G_k$.  For simplicity, define 
\[
 G_k^0=G_{J_k,0},\qquad E_k=\widehat G_k-G_k^0,\qquad
 B_k=(G_k^0)^{-1}E_k,\qquad d_k=\Vert E_k\Vert_{\rm op},
\]
and
\[
 \overline g_k(\omega)=\frac1B\sum_{i=1}^B
 \nabla_\omega\mathcal L_{J_k}(\omega,W_{i,k}).
\]
All the analysis are based on the probability-one event on which the gain,
localization, and rate conditions hold.

\paragraph{Stability, localization, and raw rates.}
On a no-jump step let $H_k$ be the integrated population Hessian on the local
segment.  \autoref{cond_gain} gives  $d_k=o(1)$, and the local Hessian bound
imply, eventually,
\[
 \lambda_{\min}\!\left\{\frac{\widehat G_kH_k+H_k\widehat G_k}{2}\right\}
 \geq c_G/2,
 \qquad \Vert\widehat G_kH_k\Vert_{\rm op}\leq C.
\]
Hence
\begin{equation}\label{audited_gain_contraction}
 \Vert(I-\xi_k\widehat G_kH_k)v\Vert_2^2
 \leq(1-c\xi_k)\Vert v\Vert_2^2
\end{equation}
for all sufficiently large $k$.   At a sieve expansion, bounded re-embedding contributes the
same $C\boldsymbol1_k$ factor as before.  Premultiplication by the bounded
predictable $\widehat G_k$ changes only constants in the score and empirical
Hessian bounds.  Repeating the centered-score expansion leading to
\eqref{general_dynamics_nls} therefore gives the explicit recurrence
\begin{equation}\label{audited_gain_dynamics}
 \mathbb E_{k-1}\Vert\Delta\widetilde\omega_k^G\Vert_2^2
 \leq(1-c\xi_k+C\boldsymbol1_k)
       \Vert\Delta\widetilde\omega_{k-1}^G\Vert_2^2
   +C\xi_k^2J_k+C\xi_kJ_k^{-2q_\star}
   +C\boldsymbol1_kJ_k^{-2s}.
\end{equation}
Thus the initial rate and eventual inactivity of the projection follow by the
same blockwise Robbins--Siegmund argument.

The refined decomposition also survives.  The extra linear
term generated by $E_k$ is absorbed into the contraction in
\eqref{audited_gain_contraction}; its centered score has conditional covariance
bounded by $Cd_k^2J_k\leq CJ_k$; and all nonlinear terms are unchanged up to a
constant.  At a sieve increase, the local-Hessian remainder also contains the
second term in \eqref{zeta2_corrected_bound}, with the actual gain-path error
in place of $\Delta\widetilde\omega_{k-1}$.  Bounded predictable
premultiplication changes only its constant, and Young's inequality produces
the same $k^{-(2s-2)\alpha_\dagger}$ deterministic floor.  Thus it does not
alter either the raw-rate bootstrap or the componentwise theta rate.  Let $D_k^G$ and $T_k^G$ denote the counterparts of $D_k$ and $T_k$ with preconditioning, the above implies that 
\begin{equation}\label{audited_gain_raw_rates}
 \Vert D_k^G\Vert_2
 =O_{\rm a.s.}\left(k^{-\frac{\alpha_\xi-\alpha_\dagger}{2}}\text{PolyLog}+k^{-(2s-2)\alpha_\dagger}\right),
 \qquad
 \Vert T_k^G\Vert_2\leq Ck^{-s\alpha_\dagger},
\end{equation}
with the weighted transient bounds in  \autoref{lemma_transient}.

\paragraph{Exact population-gain cancellation.}
Using the same score remainders as in \eqref{regular_recursion}, evaluated on
the conditioned path, the no-projection recursion is
\begin{equation}\label{gain_regular_recursion}
 \Delta\widetilde\omega_k^G
 =A_k^G\Delta\widetilde\omega_{k-1}^G
 -\xi_kG_k^0\left(\sum_{\ell=1}^6\zeta_{\ell,k}+\zeta_{G,k}\right)
 +\zeta_{7,k},
\end{equation}
where
\[
 A_k^G=(I-\xi_kG_k^0\mathbb M_{J_k})
       \mathcal R_{J_{k-1},J_k}^{\top},
 \qquad
 \zeta_{G,k}=B_k\,
 \overline g_k(\mathcal R_{J_{k-1},J_k}^{\top}
                         \widetilde\omega_{k-1}^G).
\]
Define the checkpoint transient with
$\tau_k^G=\zeta_{7,k}-\xi_kG_k^0\mathbb E_{k-1}\zeta_{4,k}$.  On a constant-$m$ block,
\[
 D_k^G=(I-\xi_kG_{m,0}\mathbb M_m)D_{k-1}^G
       -\xi_kG_{m,0}Z_k^G,
\]
where
\[
 Z_k^G=\zeta_{1,k}+\zeta_{2,k}+\zeta_{3,k}+\zeta_{4,k}^{\circ}
       +\zeta_{5,k}+\zeta_{6,k}+\zeta_{G,k}.
\]
The transition is uniformly stable because
\[
 I-\xi G_{m,0}\mathbb M_m
 =G_{m,0}^{1/2}
  \{I-\xi G_{m,0}^{1/2}\mathbb M_mG_{m,0}^{1/2}\}
  G_{m,0}^{-1/2}.
\]
Most importantly, the exact normal equation is
\begin{equation}\label{gain_normal_equation}
 D_{k-1}^G
 =\xi_k^{-1}\mathbb M_m^{-1}G_{m,0}^{-1}
   (D_{k-1}^G-D_k^G)-\mathbb M_m^{-1}Z_k^G.
\end{equation}
After blockwise Abel summation, $G_{m,0}^{-1}$ appears only in boundary and
step-size-difference terms.  Those rows are uniformly bounded, so the boundary
estimates in Subsection~\ref{lemma5.2part3} are unchanged.  The gain cancels
exactly from the score term in \eqref{gain_normal_equation}; consequently the
leading PR score $-\mathbb M_m^{-1}\zeta_{6,k}$ remains.

\begin{lemma}[Estimated-gain averaging remainder]\label{lem:estimated_gain_remainder}
Under the primitive conditions in  \autoref{remark_gain_rate}, for every
predictable scalar $a_k$ with $|a_k|\leq1$,
\begin{equation}\label{estimated_gain_average_goal}
 \frac1{\sqrt N}\left\Vert\sum_{k=1}^N
 a_k\mathcal R_{J_k,J_N}^{\top}\mathbb M_{J_k}^{-1}\zeta_{G,k}
 \right\Vert_2\longrightarrow0\qquad\text{a.s.}
\end{equation}
If $C_k\in\mathbb R^{q\times(p+J_k)}$ is predictable, $q$ is fixed, and
$\sup_k\Vert C_k\Vert_{\rm op}<\infty$, then also
\begin{equation}\label{estimated_gain_row_goal}
 \max_{1\leq n\leq N}\frac1{\sqrt N}
 \left\Vert\sum_{k=1}^n a_kC_k\mathbb M_{J_k}^{-1}\zeta_{G,k}
 \right\Vert_2\longrightarrow0\qquad\text{a.s.}
\end{equation}
\end{lemma}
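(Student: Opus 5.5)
I plan to split $\zeta_{G,k}=B_k\overline g_k(\omega^{(k)}_{k-1})$, with $\omega^{(k)}_{k-1}=\mathcal R^{\top}_{J_{k-1},J_k}\widetilde\omega^G_{k-1}$, into a martingale part evaluated at the pseudo-true point and a state-dependent part:
\[
\zeta_{G,k}=B_k\{\overline g_k(\omega_{J_k,0})-b_{J_k}\}+B_kb_{J_k}+B_k\{\overline g_k(\omega^{(k)}_{k-1})-\overline g_k(\omega_{J_k,0})\}=:\zeta^{\rm mds}_{G,k}+\zeta^{\rm bias}_{G,k}+\zeta^{\rm st}_{G,k}.
\]
Since $B_k=(G^0_k)^{-1}E_k$ is $\mathcal F_{k-1}$-measurable with $\Vert B_k\Vert_{\rm op}\le Cd_k$ (by the uniform eigenvalue bounds on $G_{J,0}$ in \autoref{cond_gain}), each piece inherits a factor $d_k$.

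\textbf{Martingale piece.} Set $U_k=a_k\mathbb M_{J_k}^{-1}\zeta^{\rm mds}_{G,k}$. This is a square-integrable martingale difference in $\mathbb R^{p+J_k}$. By \autoref{lemma_centered_score}(iii) and the uniform bound on $\Vert\mathbb M_J^{-1}\Vert_{\rm op}$, we have $\mathbb E_{k-1}\Vert U_k\Vert_2^2\le Cd_k^2J_k$. The first primitive condition $\sum_k J_kd_k^2/k<\infty$ is exactly \eqref{common_space_qv}. \autoref{lem:common_space_mds} then gives \eqref{estimated_gain_average_goal} for this piece, and its row part gives \eqref{estimated_gain_row_goal}, since $C_k\mathbb M_{J_k}^{-1}$ is predictable and bounded.

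\textbf{Bias piece.} This piece is deterministic up to $d_k$. It is bounded in norm by $C\sum_k d_kJ_k^{-q_\star}\le C\sum_k k^{-q_\star\alpha_\dagger}d_k$. Because $q_\star\alpha_\dagger>1/2$ and $d_k=o(1)$, this sum is $o(\sqrt N)$, uniformly over the prefixes $n\le N$.

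\textbf{State piece (the main obstacle).} I would write $\overline g_k(\omega^{(k)}_{k-1})-\overline g_k(\omega_{J_k,0})$ as an integrated sample Hessian applied to $e^-_k+\zeta_{7,k}$. I would then split this Hessian into the population Hessian plus a centered empirical fluctuation.

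The population part is bounded by
\[
Cd_k\bigl(\Vert D^G_{k-1}\Vert_2+\Vert T^G_{k-1}\Vert_2+\boldsymbol 1_kJ_k^{-s}\bigr).
\]
For the $D$ term I use \eqref{audited_gain_raw_rates}; the second primitive condition $N^{-1/2}\sum_kd_kk^{-(\alpha_\xi-\alpha_\dagger)/2}\mathrm{PolyLog}\to0$ then controls it. The secondary rate $k^{-(2s-2)\alpha_\dagger}$ is handled the same way, using $(2s-2)\alpha_\dagger>(\alpha_\xi-\alpha_\dagger)/2$. The $T$ term and the checkpoint term are summable after weighting by \autoref{lemma_transient}(iii) and \autoref{lemma_telescoping}(iii), with $d_k\le C$; both are $O(N^{\alpha_\xi+(1/2-s)\alpha_\dagger}+\log N)=o(\sqrt N)$ under \autoref{new_rate}.

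The fluctuation part is again a martingale difference. Its conditional second moment is at most $Cd_k^2J_k^3\Vert\Delta\widetilde\omega^G_{k-1}\Vert_2^2$, which is summable after division by $k$, exactly as for $U_{3,k}$. \autoref{lem:common_space_mds} therefore applies once more.

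The delicate point is the cross-level re-embedding $\mathcal R_{J_k,J_N}^{\top}$ in the non-martingale state part. I would handle it through the norm bound $\sup\Vert\mathcal R\Vert_{\rm op}<\infty$ and absolute summation, rather than through any cancellation. Since every bound above is monotone in $n$, the uniform-in-prefix row statement follows directly from the endpoint bounds.
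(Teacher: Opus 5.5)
Your proposal is correct and follows essentially the paper's proof. The ingredients are the same: \autoref{lem:common_space_mds} with the first primitive condition for the centered pseudo-true score, the second primitive condition together with \autoref{lemma_transient} and \autoref{lemma_telescoping} for the linear state and checkpoint terms, $q_\star\alpha_\dagger>1/2$ for the bias, and a $U_{3,k}$-type martingale bound for the empirical-Hessian fluctuation. The only difference is bookkeeping. You fold the paper's $\zeta_{1,k}$, $\zeta_{2,k}$ and $\mathbb E\zeta_{4,k}$ into a single population-Hessian term bounded crudely by $Cd_k\Vert e_k^-+\zeta_{7,k}\Vert_2$; this is legitimate because the gain error already supplies the needed smallness. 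One small correction: in the fluctuation variance you should carry $\Vert e_k^-+\zeta_{7,k}\Vert_2^2$ rather than $\Vert\Delta\widetilde\omega^G_{k-1}\Vert_2^2$. This only adds the checkpoint term $Cd_k^2\boldsymbol 1_kJ_k^{3-2s}$, which is summable after division by $k$.
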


\begin{proof}
The exact score decomposition is
\begin{equation}\label{audited_gain_score_decomposition}
 \overline g_k(\mathcal R_{J_{k-1},J_k}^{\top}\widetilde\omega_{k-1}^G)
 =\mathbb M_{J_k}\mathcal R_{J_{k-1},J_k}^{\top}
     (D_{k-1}^G+T_{k-1}^G)
  +\sum_{\ell=1}^6\zeta_{\ell,k}.
\end{equation}
For the regular state part, uniform bounds on $\mathbb M_J^{\pm1}$ and
$(G_J^0)^{-1}$ give
\[
 \frac1{\sqrt N}\sum_{k\leq N}
 d_k\Vert D_{k-1}^G\Vert_2\longrightarrow0
\]
by the second primitive condition in \autoref{remark_gain_rate} and
\eqref{audited_gain_raw_rates}; the secondary state exponent
$(2s-2)\alpha_\dagger$ is larger than $R$.  Since $d_k\leq1$ eventually,
 \autoref{lemma_transient}(iii) gives the same conclusion for $T_{k-1}^G$.
These absolute bounds are uniform over partial sums and remain valid after any
bounded row premultiplication.

Now we analyze the $\zeta$ terms in the PR average. The $\zeta_1$ sum and the no-jump part of the $\zeta_2$ sum are absolutely
$o_{\rm a.s.}(\sqrt N)$ by the $l=1,2$ calculations in
 \autoref{lemma5.2part3}.  The checkpoint calculation
\eqref{zeta2_checkpoint_absolute} also applies to the actual gain path by
\eqref{audited_gain_raw_rates}, so the expansion-round $\zeta_2$ sum is
$O_{\rm a.s.}(1)$.  Multiplication by
$\mathbb M_{J_k}^{-1}B_k$, whose norm is $O(d_k)$ and eventually bounded,
does not enlarge them.  For $\zeta_3$, define
$U_{G_3,k}=\mathbb M_{J_k}^{-1}B_k\zeta_{3,k}$.  It is a martingale difference
and
$
 \mathbb E_{k-1}\Vert U_{G_3,k}\Vert_2^2
 \leq Cd_k^2J_k^3\Vert\Delta\widetilde\omega_{k-1}^G\Vert_2^2.
$
The normalized quadratic variation is summable by the calculation for
$U_{3,k}$ above, since $d_k\leq1$ eventually.  \autoref{lem:common_space_mds}
handles both \eqref{estimated_gain_average_goal} and the row-uniform version.
For $\zeta_4$ term, again decompose $\zeta_{4,k}=\mathbb E_{k-1}\zeta_{4,k}+\zeta_{4,k}^{\circ}$.
For the centered part,
$
 \mathbb E_{k-1}\Vert
 \mathbb M_{J_k}^{-1}B_k\zeta_{4,k}^{\circ}\Vert_2^2
 \leq Cd_k^2\boldsymbol1_kJ_k^{3-2s},
$ 
whose normalized sum is finite.  The mean is supported on checkpoints and
\[
 \sum_{k=1}^N \Vert\mathbb M_{J_k}^{-1}B_k
       \mathbb E_{k-1}\zeta_{4,k}\Vert_2
 \leq C\sum_{m\leq J_N}\delta_m=O(\log N)=o(\sqrt N)
\]
by  \autoref{lemma_telescoping}(iii).
Next write $\zeta_{5,k}=b_{J_k}+\zeta_{5,k}^{\circ}$.  Its mean contributes
$O(\sum_{k\leq N}J_k^{-q_\star})=o(\sqrt N)$.  The centered part is a
martingale difference with conditional second moment at most
$Cd_k^2J_k^{3-2s}$, so  \autoref{lem:common_space_mds} applies.  Finally, define $
 U_{G_6,k}=\mathbb M_{J_k}^{-1}B_k\zeta_{6,k}.$ $U_{G_6,k}$
is a martingale difference with
$\mathbb E_{k-1}\Vert U_{G_6,k}\Vert_2^2\leq Cd_k^2J_k$.  The first primitive
condition in \autoref{remark_gain_rate} is exactly the normalized
quadratic-variation condition for this term.  Summing the seven pieces in
\eqref{audited_gain_score_decomposition} proves both conclusions.  We also point that the
predictability of $\widehat G_k$ is used at every martingale step.
\end{proof}

Apply \autoref{lem:estimated_gain_remainder} with the no-jump indicator in
the block normal equations.  Together with \eqref{gain_normal_equation}, the
identity-gain remainder calculations, and the transient average, it yields the
same joint PR representation.  This finishes the entire of Lemma 5.2 in the main text. 

\subsection{Proof of \autoref{corollary1}}
Let
\[
 \psi_{\theta,k}=\frac1B\sum_{i=1}^B
 \varepsilon_{i,k}\mathbb M_\theta^{-1}F_0'(Z_{0,i,k})
 \{X_{i,k}-\mu_0(\theta_0,Z_{0,i,k})\}.
\]
These are independent, mean-zero $p$-vectors with covariance $\Omega_\theta$
and the moment required by the classical multivariate LIL.  Its cluster-set
form gives that the limit points of
\[
 (2N\log\log N)^{-1/2}\Omega_\theta^{-1/2}
 \sum_{k=1}^N\psi_{\theta,k}
\]
form the closed unit ball.  The almost-sure
$o(N^{-1/2})$ representation in \autoref{theorem_NLS_update} is negligible on
the LIL scale, proving the first part of the theorem including the stated directional equivalent.

To prove the second part, use the second expansion in
\autoref{theorem_NLS_update} and retain its deterministic approximation part
exactly as $
 \frac1N\sum_{k=1}^N
 \{\mathbb P_{J_k}F_0(z)-F_0(z)\}.$ 
The basis envelope $\sup_{z\in\mathcal Z_0}\Vert\Psi_{J_N}(z)\Vert_2\leq CJ_N^{1/2}$
and the common-space martingale maximal bound give the stochastic term
$O_{\rm a.s.}(J_N^{1/2}N^{-1/2}(\log N)^{1/2})$.  Uniform sieve
approximation gives
\[
 \sup_{z\in\mathcal Z_0}\left|\frac1N\sum_{k=1}^N
 \{\mathbb P_{J_k}F_0(z)-F_0(z)\}\right|
 \leq \frac CN\sum_{k=1}^N J_k^{-s}.
\]
This proves the second part of the theorem.  

\subsection{Proof of \autoref{corollary2}}
The batch influences $\psi_{\theta,k}$ defined in the preceding proof are
iid, mean zero, have covariance $\Omega_\theta$, and have a finite
$2+\delta$ moment under \autoref{cond11}.  The multivariate invariance
principle therefore gives
$N^{-1/2}\sum_{k\leq[Nr]}\psi_{\theta,k}\Rightarrow
\Omega_\theta^{1/2}\mathbb W_p(r)$ for $r\in[0,1]$.  It remains only to prove the
uniform equivalence
    \[
    \max_{0\leq r\leq 1} \left\Vert  \frac{1}{\sqrt N}\sum_{k=1}^{[Nr]} \frac{1}{B}\sum_{i=1}^B \varepsilon_{i,k} \mathbb{M}_{\theta}^{-1} \nabla_z F_0(Z_{0,i,k})\left(X_{i,k} - \mu_0(\theta_0, Z_{0,i,k})\right) -\frac{[Nr]}{\sqrt{N}}\Delta\doubleoverline{\theta}_{[Nr]} \right\Vert_{2} \rightarrow_{a.s.} 0
    \]

For simplicity, define $e_N =  \Delta\doubleoverline{\theta}_N  -\frac{1}{N}\sum_{k=1}^N \frac{1}{B}\sum_{i=1}^B \varepsilon_{i,k} \mathbb{M}_{\theta}^{-1} \nabla_z F_0(Z_{0,i,k})\left(X_{i,k} - \mu_0(\theta_0, Z_{0,i,k})\right)$, then it remains to show that 
\[
\max_{0\leq r\leq 1} \left\Vert\frac{[Nr]}{\sqrt{N}}  e_{[Nr]} \right\Vert_{2} \rightarrow_{a.s.} 0.
\]
According to \autoref{theorem_NLS_update}, we have that $\Vert e_N\Vert_2 =  o\left(N^{-\frac{1}{2}}\right)$. So for any $\epsilon>0$, we can find  path-dependent  $N_1$ such that for any $N>N_1$, there holds $\Vert e_N\Vert_2\leq \frac{\epsilon}{\sqrt{N}}$. Also, for such $\epsilon$, we can also find path-dependent $N_2$ such that 
$\max_{1\leq k\leq N_1} \Vert \frac{ke_{k}}{\sqrt{N}}\Vert_2\leq \epsilon$. So together we have that for $N\geq N_2$, we have 
\[
\max_{0\leq r\leq 1} \left\Vert\frac{[Nr]}{\sqrt{N}}  e_{[Nr]} \right\Vert_{2} \leq \max\left\{\max_{1\leq k\leq N_1}  \left\Vert\frac{k e_{k}}{\sqrt{N}}  \right\Vert_{2}, \max_{N_1<k\leq N}  \left\Vert\frac{ke_{k}}{\sqrt{N}}   \right\Vert_{2}\right\}\leq \max\left\{\epsilon, \max_{N_1<k\leq N}\frac{\epsilon\sqrt{k}}{\sqrt{N}}\right\}\leq \epsilon.
\]
Since $\epsilon$ can be arbitrarily small, this shows the result.

\section{Average Marginal Effects}\label{appendixE}

\subsection{Additional Conditions for \autoref{FCLT_average_marginal_effect}} 
\begin{condition}\label{cond_ame} 
(i) $\mathbb{E}\Vert \nabla_z \mu_0(\theta_0, Z_0)\Vert^2_2<\infty$; 
(ii) Let $f_Z(z)$ denote the density of $Z_0$, which is continuously differentiable. There holds 
$\lim_{z\rightarrow\pm\infty}\mu_0(\theta_0,z) f_Z(z)=0$; define 
$\ell(z)=\nabla_z f_Z(z)/f_Z(z)$. Then $\mathbb{E}\ell^2(Z_0)<\infty$, and for all $J$ there holds
\[
\mathbb{E}\left|\ell(Z_0)-\mathbb{P}_J(\ell)(Z_0)\right|^2
\leq C_fJ^{-2s_f}
\]
for some $s_f>0$. Moreover, $(s+s_f)\alpha_\dagger>\frac12$; 
(iii) Define $\Upsilon_k=B^{-1}\sum_{i=1}^B\Upsilon_{i,k}$. There holds
\[
\mathbb{E}\Upsilon_k\Upsilon_k^\top\longrightarrow V_\Upsilon
\]
as $k\rightarrow\infty$ for some positive definite matrix $V_\Upsilon$.
\end{condition}

\subsection{Proof of \autoref{FCLT_average_marginal_effect}}
\label{proof_marginal_effect}

Note that
\[
\widetilde\tau_k
=
\frac1B\sum_{i=1}^B
\left(
\nabla_z\Psi_{J_{k-1}}
\left(Z_{i,k}(\widetilde\theta_{k-1})\right)
\right)^\top
\widetilde\beta_{k-1}\widetilde\theta_{k-1}
\]
can be decomposed as follows:
\begin{align*}
\widetilde\tau_k
&=
\frac1B\sum_{i=1}^B
\left(
\nabla_z\Psi_{J_{k-1}}
\left(Z_{i,k}(\widetilde\theta_{k-1})\right)
\right)^\top
\widetilde\beta_{k-1}\widetilde\theta_{k-1}\\
&=
\underset{(i)}{\underbrace{
\frac1B\sum_{i=1}^B
\left(
\nabla_z\Psi_{J_{k-1}}(Z_{0,i,k})
\right)^\top
\widetilde\beta_{k-1}\widetilde\theta_{k-1}
}}\\
&\quad+
\underset{(ii)}{\underbrace{
\frac1B\sum_{i=1}^B
\widetilde\beta_{k-1}^\top
\left(
\nabla_{zz}\Psi_{J_{k-1}}(Z_{0,i,k})
\right)
X_{i,k}^\top
\Delta\widetilde\theta_{k-1}
\widetilde\theta_{k-1}
}}\\
&\quad+
\underset{(iii)}{\underbrace{
\frac1B\sum_{i=1}^B
\int_0^1(1-\tau)\,
\widetilde\beta_{k-1}^\top
\nabla_{zzz}\Psi_{J_{k-1}}
\left(
Z_{0,i,k}
+\tau X_{i,k}^\top\Delta\widetilde\theta_{k-1}
\right)
\left(
X_{i,k}^\top\Delta\widetilde\theta_{k-1}
\right)^2
\widetilde\theta_{k-1}\,d\tau
}}.
\end{align*}

We first look at (iii). By \autoref{condition6}, \autoref{cond11}, the decomposition
$\Delta\widetilde\omega_{k-1}=D_{k-1}+T_{k-1}$, and the proof of
\autoref{lemma5.2part2},
\[
\Vert(iii)\Vert_2
\leq
\frac{C}{B}\sum_{i=1}^B\Vert X_{i,k}\Vert_2^2
\left\{
\Vert D_{k-1}\Vert_2^2
+J_{k-1}^{\frac72}\Vert D_{k-1}\Vert_2^3
+\Vert T_{k-1}\Vert_2^2
+J_{k-1}^{\frac72}\Vert T_{k-1}\Vert_2^3
\right\},
\qquad\text{a.s.}
\]
Since $s>\frac72$, $J_{k-1}^{7/2}\Vert T_{k-1}\Vert_2\leq C$ eventually. Therefore, by
\eqref{refined_D_rate}, \autoref{lemma_transient}(ii)--(iii), and the same weighted strong-law argument used above,
\begin{align*}
\left\Vert
\frac1N\sum_{k=1}^N(iii)
\right\Vert_2
&=
O\left(
N^{\left(
\frac{7\alpha_\dagger}{2}
-\frac{\alpha_\xi-\alpha_\dagger}{2}
\right)_+
-\alpha_\xi+\alpha_\dagger}
\text{PolyLog}
+
N^{-(4s-4)\alpha_\dagger}\text{PolyLog}
\right.\\
&\hspace{22mm}\left.
+
N^{(\frac{19}{2}-6s)\alpha_\dagger}\text{PolyLog}
+
N^{-1}\text{PolyLog}
\right)\\
&\quad+
O\left(
\frac1N
\left\{
N^{\alpha_\xi+(1-2s)\alpha_\dagger}
+\log N
\right\}
\right)\\
&=
o\left(N^{-\frac12}\right),
\qquad\text{a.s.}
\end{align*}
The leading term is $o(N^{-1/2})$ under \autoref{new_rate}. When the positive part is active, it is enough that $
3\alpha_\xi>1+10\alpha_\dagger,
$ 
which follows from
$\alpha_\xi>\frac12+3\alpha_\dagger$ and $\alpha_\dagger<\frac12$; when it is inactive,
$\alpha_\xi>\frac12+\alpha_\dagger$ suffices. The remaining terms are smaller under the same conditions.
So we will only look at (i) and (ii).

We first look at the second term. For (ii), we can see that
\begin{align*}
(ii)
&=
\frac1B\sum_{i=1}^B
\beta_{J_{k-1},0}^\top
\left(
\nabla_{zz}\Psi_{J_{k-1}}(Z_{0,i,k})
\right)
\widetilde\theta_{k-1}
X_{i,k}^\top\Delta\widetilde\theta_{k-1}
\qquad (iv)\\
&\quad+
\frac1B\sum_{i=1}^B
\Delta\widetilde\beta_{k-1}^\top
\left(
\nabla_{zz}\Psi_{J_{k-1}}(Z_{0,i,k})
\right)
\widetilde\theta_{k-1}
X_{i,k}^\top\Delta\widetilde\theta_{k-1}
\qquad (v).
\end{align*}

We first look at (iv). The exact decomposition is
\begin{align*}
&(iv)
-\frac1B\sum_{i=1}^B
\nabla_{zz}F_0(Z_{0,i,k})
\theta_0X_{i,k}^\top
\Delta\widetilde\theta_{k-1}\\
&=
\frac1B\sum_{i=1}^B
\nabla_{zz}
\left\{
\mathbb P_{J_{k-1}}(F_0)-F_0
\right\}(Z_{0,i,k})
\theta_0X_{i,k}^\top
\Delta\widetilde\theta_{k-1}\\
&\quad+
\frac1B\sum_{i=1}^B
\nabla_{zz}\mathbb P_{J_{k-1}}(F_0)(Z_{0,i,k})
\Delta\widetilde\theta_{k-1}
X_{i,k}^\top
\Delta\widetilde\theta_{k-1}.
\end{align*}

For the first term, integration by parts gives
\begin{align*}
&\left\Vert
\mathbb E\left[
\nabla_{zz}
\left\{
\mathbb P_J(F_0)-F_0
\right\}(Z_0)X^\top
\right]
\right\Vert_2\\
&=
\left\Vert
-\int
\nabla_z
\left\{
\mathbb P_J(F_0)-F_0
\right\}(z)
\left\{
\nabla_z\mu_0(\theta_0,z)^\top f_Z(z)
+
\mu_0(\theta_0,z)^\top\nabla_zf_Z(z)
\right\}dz
\right\Vert_2\\
&\leq
CJ^{1-s}
\left[
\left\{
\mathbb E
\Vert\nabla_z\mu_0(\theta_0,Z_0)\Vert_2^2
\right\}^{1/2}
+
\left\{
\mathbb E
\Vert\mu_0(\theta_0,Z_0)\Vert_2^2
\right\}^{1/2}
\left\{
\mathbb E\ell^2(Z_0)
\right\}^{1/2}
\right]\leq CJ^{1-s}.
\end{align*}
Hence, using
$\Delta\widetilde\omega_{k-1}=D_{k-1}+T_{k-1}$,
\begin{align*}
\frac1N\sum_{k=1}^N
J_{k-1}^{1-s}
\Vert\Delta\widetilde\theta_{k-1}\Vert_2&\leq
CN^{(1-s)\alpha_\dagger-\frac{\alpha_\xi-\alpha_\dagger}{2}}
\text{PolyLog}
+
CN^{(3-3s)\alpha_\dagger}\text{PolyLog}
+
CN^{-1}\text{PolyLog}\\
&\quad+
\frac{C}{N}
\left\{
N^{\alpha_\xi+(\frac12-s)\alpha_\dagger}
+\log N
\right\}=
o(N^{-\frac12}),
\qquad\text{a.s.}
\end{align*}
After subtracting its conditional mean, the same martingale-difference argument gives an
$o_{\rm a.s.}(N^{-1/2})$ running average. Indeed, using
$\Delta\widetilde\omega_{k-1}=D_{k-1}+T_{k-1}$,
\eqref{refined_D_rate}, and \autoref{lemma_transient}(ii), its conditional second moments divided by $k$ are summable.

For the second term in the exact decomposition of (iv), boundedness of
$\nabla_{zz}\mathbb P_{J_{k-1}}(F_0)$ gives
\begin{align*}
&\frac1N\sum_{k=1}^N
\left\Vert
\frac1B\sum_{i=1}^B
\nabla_{zz}\mathbb P_{J_{k-1}}(F_0)(Z_{0,i,k})
\Delta\widetilde\theta_{k-1}
X_{i,k}^\top
\Delta\widetilde\theta_{k-1}
\right\Vert_2\\
&\leq
CN^{-\alpha_\xi+\alpha_\dagger}\text{PolyLog}
+
CN^{-(4s-4)\alpha_\dagger}\text{PolyLog}
+
CN^{-1}\text{PolyLog}\\
&\quad+
\frac{C}{N}
\left\{
N^{\alpha_\xi+(1-2s)\alpha_\dagger}
+\log N
\right\}=o(N^{-\frac12}),
\qquad\text{a.s.}
\end{align*}
Consequently,
\[
\frac1N\sum_{k=1}^N(iv)
=
\frac1N\sum_{k=1}^N
\frac1B\sum_{i=1}^B
\nabla_{zz}F_0(Z_{0,i,k})
\theta_0X_{i,k}^\top
\Delta\widetilde\theta_{k-1}
+
o(N^{-\frac12}),
\qquad\text{a.s.}
\]

Now we look at
$\left\Vert N^{-1}\sum_{k=1}^N(v)\right\Vert_2$.
Similar to the previous strategy, we decompose each summand into its conditional mean and martingale-difference part. Note that
\begin{align*}
\mathbb E
\left[
\nabla_{zz}\Psi_J(Z_0)X^\top
\right]
&=
\int
\nabla_{zz}\Psi_J(z)
\mu_0(\theta_0,z)^\top
f_Z(z)\,dz\\
&=
-\mathbb E\left[
\nabla_z\Psi_J(Z_0)
\left\{
\mu_0(\theta_0,Z_0)^\top\ell(Z_0)
+
\nabla_z\mu_0(\theta_0,Z_0)^\top
\right\}
\right].
\end{align*}
Therefore, by \autoref{cond_ame}(i)--(ii),
\[
\left\Vert
\mathbb E
\left[
\nabla_{zz}\Psi_J(Z_0)X^\top
\right]
\right\Vert_F
\leq CJ^{\frac32}.
\]
It follows that
\begin{align*}
\left\Vert
\frac1N\sum_{k=1}^N
\mathbb E_{k-1}(v)
\right\Vert_2
&\leq
\frac{C}{N}\sum_{k=1}^N
J_k^{\frac32}
\left(
\Vert D_{k-1}\Vert_2^2
+
\Vert T_{k-1}\Vert_2^2
\right)\\
&\leq
CN^{\frac{5\alpha_\dagger}{2}-\alpha_\xi}
\text{PolyLog}
+
CN^{(\frac{11}{2}-4s)\alpha_\dagger}
\text{PolyLog}
+
CN^{-1}\text{PolyLog}\\
&\quad+
\frac{C}{N}
\left\{
N^{\alpha_\xi+(\frac52-2s)\alpha_\dagger}
+\log N
\right\} =
o(N^{-\frac12}),
\qquad\text{a.s.}
\end{align*}
On the other side,
\begin{align*}
\sum_{k=2}^{\infty}
\frac{
\mathbb E_{k-1}
\Vert(v)-\mathbb E_{k-1}(v)\Vert_2^2
}{k}
&\leq
C\sum_{k=2}^{\infty}
\left[
k^{-1+7\alpha_\dagger-2\alpha_\xi}
\text{PolyLog}
+
k^{-1+(13-8s)\alpha_\dagger}
\text{PolyLog}
\right.\\
&\hspace{35mm}\left.
+
k^{-1+(5-4s)\alpha_\dagger}
\right]
<\infty,
\qquad\text{a.s.}
\end{align*}
so
\[
\left\Vert
\frac1N\sum_{k=1}^N
\left\{
(v)-\mathbb E_{k-1}(v)
\right\}
\right\Vert_2
=
o(N^{-\frac12}),
\qquad\text{a.s.}
\]
Finally, again using
$\Delta\widetilde\omega_{k-1}=D_{k-1}+T_{k-1}$,
\[
\sum_{k=2}^{\infty}
\frac1k
\mathbb E_{k-1}
\left\Vert
\left\{
\frac1B\sum_{i=1}^B
\nabla_{zz}F_0(Z_{0,i,k})
\theta_0X_{i,k}^\top
-
\mathbb E
\left[
\nabla_{zz}F_0(Z_0)\theta_0X^\top
\right]
\right\}
\Delta\widetilde\theta_{k-1}
\right\Vert_2^2
<\infty,
\qquad\text{a.s.}
\]
and hence its running average is
$o_{\rm a.s.}(N^{-1/2})$. This shows that
\[
\frac1N\sum_{k=1}^N(ii)
=
\mathbb E
\left[
\nabla_{zz}F_0(Z_0)\theta_0X^\top
\right]
\frac1N\sum_{k=1}^N
\Delta\widetilde\theta_{k-1}
+
o(N^{-\frac12}),
\qquad\text{a.s.}
\]

We finally look at (i). Expansion leads to
\begin{align*}
&(i)-\mathbb E[\nabla_zF_0(Z_0)]\theta_0\\
&=
\mathbb E
\left[
\nabla_z\Psi_{J_{k-1}}(Z_0)^\top
\right]
\beta_{J_{k-1},0}\theta_0
-
\mathbb E[\nabla_zF_0(Z_0)]\theta_0
\qquad (i.1)\\
&\quad+
\left(
\frac1B\sum_{i=1}^B
\nabla_z\Psi_{J_{k-1}}(Z_{0,i,k})^\top
-
\mathbb E
\left[
\nabla_z\Psi_{J_{k-1}}(Z_0)^\top
\right]
\right)
\beta_{J_{k-1},0}\theta_0
\qquad (i.2)\\
&\quad+
\mathbb E
\left[
\nabla_z\Psi_{J_{k-1}}(Z_0)^\top
\right]
\Delta\widetilde\beta_{k-1}\theta_0
\qquad (i.3)\\
&\quad+
\mathbb E
\left[
\nabla_z\Psi_{J_{k-1}}(Z_0)^\top
\right]
\beta_{J_{k-1},0}
\Delta\widetilde\theta_{k-1}
\qquad (i.4)\\
&\quad+
\mathbb E
\left[
\nabla_z\Psi_{J_{k-1}}(Z_0)^\top
\right]
\Delta\widetilde\beta_{k-1}
\Delta\widetilde\theta_{k-1}
\qquad (i.5)\\
&\quad+
\left(
\frac1B\sum_{i=1}^B
\nabla_z\Psi_{J_{k-1}}(Z_{0,i,k})^\top
-
\mathbb E
\left[
\nabla_z\Psi_{J_{k-1}}(Z_0)^\top
\right]
\right)
\Delta\widetilde\beta_{k-1}\theta_0
\qquad (i.6)\\
&\quad+
\left(
\frac1B\sum_{i=1}^B
\nabla_z\Psi_{J_{k-1}}(Z_{0,i,k})^\top
-
\mathbb E
\left[
\nabla_z\Psi_{J_{k-1}}(Z_0)^\top
\right]
\right)
\beta_{J_{k-1},0}
\Delta\widetilde\theta_{k-1}
\qquad (i.7)\\
&\quad+
\left(
\frac1B\sum_{i=1}^B
\nabla_z\Psi_{J_{k-1}}(Z_{0,i,k})^\top
-
\mathbb E
\left[
\nabla_z\Psi_{J_{k-1}}(Z_0)^\top
\right]
\right)
\Delta\widetilde\beta_{k-1}
\Delta\widetilde\theta_{k-1}
\qquad (i.8).
\end{align*}

We look at (i.1)--(i.8) separately. For (i.1), by
\autoref{cond_ame}(ii) and integration by parts,
\begin{align*}
\mathbb E
\left[
\nabla_z
\left\{
\mathbb P_J(F_0)(Z_0)-F_0(Z_0)
\right\}
\right]
&=
\int
\nabla_z
\left\{
\mathbb P_J(F_0)(z)-F_0(z)
\right\}
f_Z(z)\,dz\\
&=
\left.
\left\{
\mathbb P_J(F_0)(z)-F_0(z)
\right\}f_Z(z)
\right|_{z=-\infty}^{z=+\infty}\\
&\quad-
\int
\left\{
\mathbb P_J(F_0)(z)-F_0(z)
\right\}
\nabla_zf_Z(z)\,dz\\
&=
-\mathbb E
\left[
\left\{
\mathbb P_J(F_0)(Z_0)-F_0(Z_0)
\right\}
\ell(Z_0)
\right]\\
&=
-\mathbb E
\left[
\left\{
\mathbb P_J(F_0)(Z_0)-F_0(Z_0)
\right\}
\left\{
\ell(Z_0)-\mathbb P_J(\ell)(Z_0)
\right\}
\right],
\end{align*}
because $\mathbb P_J(\ell)\in\mathcal S_J$ and the projection residual is orthogonal to
$\mathcal S_J$. Thus, under \autoref{cond_ame}(ii),
\[
\Vert(i.1)\Vert_2
\leq
C
\Vert\mathbb P_J(F_0)-F_0\Vert_\infty
\left\{
\mathbb E
\left[
\ell(Z_0)-\mathbb P_J(\ell)(Z_0)
\right]^2
\right\}^{1/2}
\leq
CJ_{k-1}^{-s-s_f}.
\]
When $(s+s_f)\alpha_\dagger>1/2$,
\[
\left\Vert
\frac1N\sum_{k=1}^N(i.1)
\right\Vert_2
=
o(N^{-1/2}).
\]

For (i.2), we note that
\begin{align*}
&(i.2)
-
\frac1B\sum_{i=1}^B
\left\{
\nabla_zF_0(Z_{0,i,k})
-
\mathbb E[\nabla_zF_0(Z_0)]
\right\}\theta_0\\
&=
\frac1B\sum_{i=1}^B
\left[
\nabla_z
\left\{
\mathbb P_{J_{k-1}}(F_0)(Z_{0,i,k})
-
F_0(Z_{0,i,k})
\right\}
\right.\\
&\hspace{36mm}\left.
-
\mathbb E
\nabla_z
\left\{
\mathbb P_{J_{k-1}}(F_0)(Z_0)
-
F_0(Z_0)
\right\}
\right]\theta_0.
\end{align*}
Using the proof strategy for sums of martingale-difference sequences, when
$s>1$, as guaranteed by \autoref{new_rate},
\begin{align*}
&\left\Vert
\frac1N\sum_{k=1}^N
\frac1B\sum_{i=1}^B
\left[
\nabla_z
\left\{
\mathbb P_{J_{k-1}}(F_0)(Z_{0,i,k})
-
F_0(Z_{0,i,k})
\right\}
\right.\right.\\
&\hspace{32mm}\left.\left.
-
\mathbb E
\nabla_z
\left\{
\mathbb P_{J_{k-1}}(F_0)(Z_0)
-
F_0(Z_0)
\right\}
\right]\theta_0
\right\Vert_2 =
o(N^{-1/2}),
\qquad\text{a.s.}
\end{align*}
As a result,
\[
\frac1N\sum_{k=1}^N(i.2)
=
\frac1N\sum_{k=1}^N
\frac1B\sum_{i=1}^B
\left\{
\nabla_zF_0(Z_{0,i,k})
-
\mathbb E[\nabla_zF_0(Z_0)]
\right\}\theta_0
+
o(N^{-1/2}),
\qquad\text{a.s.}
\]

Next, for (i.3), note that
\begin{align*}
&\frac1N\sum_{k=1}^N
\mathbb E
\left[
\nabla_z\Psi_{J_{k-1}}(Z_0)^\top
\right]
\Delta\widetilde\beta_{k-1}\theta_0\\
&=
\frac{N-1}{N}
\mathbb E
\left[
\nabla_z\Psi_{J_{N-1}}(Z_0)^\top
\right]
\Delta\doubleoverline\beta_{N-1}\theta_0  +
\frac1N\sum_{k=1}^N
\mathbb E
\left[
\nabla_z
\left\{
\mathbb P_{J_{N-1}}(F_0)(Z_0)
-
\mathbb P_{J_{k-1}}(F_0)(Z_0)
\right\}
\right]\theta_0.
\end{align*}
So we only need to look at
\[
\mathbb E
\left[
\nabla_z\Psi_{J_N}(Z_0)^\top
\right]
\Delta\doubleoverline\beta_N.
\]
According to \autoref{theorem_NLS_update},
\begin{align*}
\Delta\doubleoverline\beta_N
&=
\frac1N\sum_{k=1}^N
R_{J_k,J_N}^\top
\mathbb M_{\beta,J_k}^{-1}
\frac1B\sum_{i=1}^B
\varepsilon_{i,k}
\left\{
\Psi_{J_k}(Z_{0,i,k})
-
P_{J_k}\nabla_zF_0(Z_{0,i,k})X_{i,k}
\right\}\\
&\quad+
\frac1N\sum_{k=1}^N
\left(
R_{J_k,J_N}^\top\beta_{J_k,0}
-
\beta_{J_N,0}
\right)
+
o(N^{-1/2}),
\qquad\text{a.s.}
\end{align*}
We have that
\begin{align*}
&\mathbb E
\left[
\nabla_z\Psi_{J_N}(Z_0)^\top
\right]
\Delta\doubleoverline\beta_N\\
&=
\frac1N\sum_{k=1}^N
\frac1B\sum_{i=1}^B
\varepsilon_{i,k}
\mathbb E
\left[
\nabla_z\Psi_{J_k}(Z_0)^\top
\right]
\mathbb M_{\beta,J_k}^{-1}
\left\{
\Psi_{J_k}(Z_{0,i,k})
-
P_{J_k}\nabla_zF_0(Z_{0,i,k})X_{i,k}
\right\}\\
&\quad+
\frac1N\sum_{k=1}^N
\mathbb E
\left[
\nabla_z
\left\{
\mathbb P_{J_k}(F_0)(Z_0)-F_0(Z_0)
\right\}
\right] -
\mathbb E
\left[
\nabla_z
\left\{
\mathbb P_{J_N}(F_0)(Z_0)-F_0(Z_0)
\right\}
\right]
+
o(N^{-1/2}),
\qquad\text{a.s.}
\end{align*}
where the last $o(N^{-1/2})$ follows from
\[
\sup_J
\left\Vert
\mathbb E
\left[
\nabla_z\Psi_J(Z_0)
\right]
\right\Vert_2
<\infty.
\]
Indeed, integration by parts and Cauchy--Schwarz give
\[
\left\Vert
\mathbb E
\left[
\nabla_z\Psi_J(Z_0)
\right]
\right\Vert_2
=
\left\Vert
\mathbb E
\left[
\Psi_J(Z_0)\ell(Z_0)
\right]
\right\Vert_2
\leq
\overline\lambda(\Gamma_J)^{1/2}
\left\{
\mathbb E|\ell(Z_0)|^2
\right\}^{1/2}
\leq C
\]
under \autoref{condition7} and \autoref{cond_ame}(ii). Moreover, according to the proof of (i.1), the second and third lines above are bounded by
\[
\frac{C}{N}\sum_{k=1}^N J_k^{-s-s_f}
+
CJ_N^{-s-s_f}
=
O\left(
N^{-(s+s_f)\alpha_\dagger}
\right)
=
o(N^{-1/2}).
\]
Therefore,
\begin{align*}
&\mathbb E
\left[
\nabla_z\Psi_{J_N}(Z_0)^\top
\right]
\Delta\doubleoverline\beta_N\\
&=
\frac1N\sum_{k=1}^N
\frac1B\sum_{i=1}^B
\varepsilon_{i,k}
\mathbb E
\left[
\nabla_z\Psi_{J_k}(Z_0)^\top
\right]
\mathbb M_{\beta,J_k}^{-1}
\left\{
\Psi_{J_k}(Z_{0,i,k})
-
P_{J_k}\nabla_zF_0(Z_{0,i,k})X_{i,k}
\right\}\\
&\quad+
o(N^{-1/2}),
\qquad\text{a.s.}
\end{align*}

For (i.4), we have that
\[
\left\Vert
(i.4)
-
\mathbb E[\nabla_zF_0(Z_0)]
\Delta\widetilde\theta_{k-1}
\right\Vert_2
\leq
CJ_{k-1}^{1-s}
\left(
\Vert D_{k-1}\Vert_2
+
\Vert T_{k-1}\Vert_2
\right),
\qquad\text{a.s.}
\]
Consequently,
\begin{align*}
&\frac1N\sum_{k=1}^N
\left\Vert
(i.4)
-
\mathbb E[\nabla_zF_0(Z_0)]
\Delta\widetilde\theta_{k-1}
\right\Vert_2\\
&\leq
CN^{-(s-1)\alpha_\dagger-\frac{\alpha_\xi-\alpha_\dagger}{2}}
\text{PolyLog}
+
CN^{(3-3s)\alpha_\dagger}\text{PolyLog}
+
CN^{-1}\text{PolyLog}\\
&\quad+
\frac{C}{N}
\left\{
N^{\alpha_\xi+(\frac12-s)\alpha_\dagger}
+
\log N
\right\} =
o(N^{-1/2}),
\qquad\text{a.s.}
\end{align*}
Using the PR representation of
$\Delta\doubleoverline\theta_N$, we obtain
\begin{align*}
\frac1N\sum_{k=1}^N(i.4)
&=
\frac1N\sum_{k=1}^N
\frac1B\sum_{i=1}^B
\varepsilon_{i,k}
\mathbb E[\nabla_zF_0(Z_0)]
\mathbb M_\theta^{-1}
\nabla_zF_0(Z_{0,i,k})\\
&\qquad\qquad\times
\left\{
X_{i,k}
-
\mu_0(\theta_0,Z_{0,i,k})
\right\}
+
o(N^{-1/2}),
\qquad\text{a.s.}
\end{align*}

For (i.5),
\begin{align*}
\frac1N\sum_{k=1}^N
\Vert(i.5)\Vert_2
&\leq
\frac{C}{N}\sum_{k=1}^N
\left(
\Vert D_{k-1}\Vert_2^2
+
\Vert T_{k-1}\Vert_2^2
\right)\\
&\leq
CN^{-\alpha_\xi+\alpha_\dagger}\text{PolyLog}
+
CN^{-(4s-4)\alpha_\dagger}\text{PolyLog}
+
CN^{-1}\text{PolyLog}\\
&\quad+
\frac{C}{N}
\left\{
N^{\alpha_\xi+(1-2s)\alpha_\dagger}
+
\log N
\right\} =
o(N^{-1/2}),
\qquad\text{a.s.}
\end{align*}

For (i.6)--(i.8), note that
\[
\mathbb E
\left\Vert
\nabla_z\Psi_J(Z_0)
-
\mathbb E\nabla_z\Psi_J(Z_0)
\right\Vert_2^2
\leq CJ^3.
\]
Using
$\Delta\widetilde\omega_{k-1}=D_{k-1}+T_{k-1}$,
\eqref{refined_D_rate}, and
\autoref{lemma_transient}(ii), for each of these terms we have
\[
\sum_{k=2}^{\infty}
\frac1k
\mathbb E_{k-1}
\Vert(i.l)\Vert_2^2
\leq
C\sum_{k=2}^{\infty}
\left[
k^{-1-\alpha_\xi+4\alpha_\dagger}
\text{PolyLog}
+
k^{-1+(7-4s)\alpha_\dagger}
\text{PolyLog}
+
k^{-1+(3-2s)\alpha_\dagger}
\right]
<\infty,
\ l=6,7,8.
\]
Thus their running averages are all
$o_{\rm a.s.}(N^{-1/2})$.

Combining the above analysis, we have that
\begin{align*}
\Delta\doubleoverline\tau_N
&=
\frac1N\sum_{k=1}^N
\frac1B\sum_{i=1}^B
\varepsilon_{i,k}
\mathbb E
\left[
\nabla_{zz}F_0(Z_0)\theta_0X^\top
\right]
\mathbb M_\theta^{-1}
\nabla_zF_0(Z_{0,i,k})
\left\{
X_{i,k}
-
\mu_0(\theta_0,Z_{0,i,k})
\right\}\\
&\quad+
\frac1N\sum_{k=1}^N
\frac1B\sum_{i=1}^B
\left\{
\nabla_zF_0(Z_{0,i,k})
-
\mathbb E[\nabla_zF_0(Z_0)]
\right\}\theta_0\\
&\quad+
\frac1N\sum_{k=1}^N
\frac1B\sum_{i=1}^B
\varepsilon_{i,k}\theta_0
\mathbb E
\left[
\nabla_z\Psi_{J_k}(Z_0)^\top
\right]
\mathbb M_{\beta,J_k}^{-1}
\left\{
\Psi_{J_k}(Z_{0,i,k})
-
P_{J_k}\nabla_zF_0(Z_{0,i,k})X_{i,k}
\right\}\\
&\quad+
\frac1N\sum_{k=1}^N
\frac1B\sum_{i=1}^B
\varepsilon_{i,k}
\mathbb E[\nabla_zF_0(Z_0)]
\mathbb M_\theta^{-1}
\nabla_zF_0(Z_{0,i,k})
\left\{
X_{i,k}
-
\mu_0(\theta_0,Z_{0,i,k})
\right\}\\
&\quad+
o(N^{-1/2}),
\qquad\text{a.s.}
\end{align*}
Therefore,
\[
\Delta\doubleoverline\tau_N
=
\frac1N\sum_{k=1}^N\Upsilon_k
+
o(N^{-1/2}),
\qquad\text{a.s.}
\]

Under \autoref{cond_ame}(iii), Ces\`aro summation gives
\[
\frac1N\sum_{k=1}^{[Nr]}
\mathbb E
\left[
\Upsilon_k\Upsilon_k^\top
\right]
\longrightarrow
rV_\Upsilon,
\qquad r\in[0,1].
\]
We verify the Lindeberg condition via Lyapunov's criterion. The first, second, and fourth components of
$\Upsilon_{i,k}$ have finite $(2+\delta)$-th moments for
$\delta\in(0,\kappa-2)$ by \autoref{condition6} and \autoref{cond11}. For the third component, since
\[
\left\Vert
\mathbb E
\left[
\nabla_z\Psi_J(Z_0)
\right]
\right\Vert_2
\leq C,
\]
the eigenvalues of $\mathbb M_{\beta,J}$ are uniformly bounded away from zero,
$\Vert\Psi_J\Vert_\infty\leq CJ^{1/2}$, and
$\Vert P_J\Vert_F\leq CJ^{1/2}$, its norm is bounded by
\[
C|\varepsilon_{i,k}|J_k^{1/2}
\left(
1+\Vert X_{i,k}\Vert_2
\right).
\]
Hence
\[
\mathbb E
\Vert\Upsilon_k\Vert_2^{2+\delta}
\leq
CJ_k^{1+\delta/2},
\]
and
\[
N^{-1-\frac\delta2}
\sum_{k=1}^N
\mathbb E
\Vert\Upsilon_k\Vert_2^{2+\delta}
\leq
CN^{-\frac\delta2+
\left(1+\frac\delta2\right)\alpha_\dagger}
\longrightarrow0
\]
whenever
\[
\delta>
\frac{2\alpha_\dagger}{1-\alpha_\dagger}.
\]
Such
\[
\delta\in
\left(
\frac{2\alpha_\dagger}{1-\alpha_\dagger},
\kappa-2
\right)
\]
exists under \autoref{cond11}. The multivariate Lyapunov FCLT therefore gives
\[
\frac1{\sqrt N}
\sum_{k=1}^{[Nr]}
\Upsilon_k
\Longrightarrow
V_\Upsilon^{1/2}
\mathbb W_p(r),
\qquad r\in[0,1].
\]
Combining this with the almost-sure
$o(N^{-1/2})$ remainder and the argument in the proof of
\autoref{corollary2}, we obtain
\[
(NV_\Upsilon)^{-1/2}
[Nr]\,
\Delta\doubleoverline\tau_{[Nr]}
\Longrightarrow
\mathbb W_p(r),
\qquad r\in[0,1].
\]

\subsection{Average Marginal Effects of Binary Regressors}\label{ame_binary}

For the binary-regressor marginal effect defined in the main text, an additional counterfactual overlap condition is necessary. Without introducing further notation, it is enough to assume directly that, for the corresponding $j$,
\[
\sup_J
\left\Vert
\mathbb E
\left[
\Psi_J(Z_{0,-j}+\theta_{0,j})
-
\Psi_J(Z_{0,-j})
\right]
\right\Vert_2
<\infty
\]
and
\[
\frac1N\sum_{k=1}^N
\left|
\mathbb E
\left[
\left\{
\mathbb P_{J_{k-1}}(F_0)-F_0
\right\}
(Z_{0,-j}+\theta_{0,j})
-
\left\{
\mathbb P_{J_{k-1}}(F_0)-F_0
\right\}
(Z_{0,-j})
\right]
\right|
=
o(N^{-1/2}).
\]
The second display is the direct counterfactual analogue of the small-bias condition used for the continuous average marginal effect. The centered difference between the sieve and true counterfactual signals has conditional second moment bounded by
$CJ_{k-1}^{-2s}$, so its running average is
$o_{\rm a.s.}(N^{-1/2})$ by the same martingale argument used for (i.2).

A first-order expansion of the two counterfactual evaluations gives
\begin{align*}
&\frac1N\sum_{k=1}^N
\left(
\widetilde\tau_{k,j}
-
\tau_{0,j}
\right)\\
&=
\frac1N\sum_{k=1}^N
\frac1B\sum_{i=1}^B
\left[
F_0
\left(
x_{0,i,k}
+
\sum_{l\ne j}\theta_{0,l}x_{l,i,k}
+
\theta_{0,j}
\right)
-
F_0
\left(
x_{0,i,k}
+
\sum_{l\ne j}\theta_{0,l}x_{l,i,k}
\right)
-
\tau_{0,j}
\right]\\
&\quad+
\mathbb E
\left[
\left.
\nabla_\theta
\left\{
F_0
\left(
x_0+\sum_{l\ne j}\theta_lx_l+\theta_j
\right)
-
F_0
\left(
x_0+\sum_{l\ne j}\theta_lx_l
\right)
\right\}
\right|_{\theta=\theta_0}
\right]
\frac1N\sum_{k=1}^N
\Delta\widetilde\theta_{k-1}\\
&\quad+
\frac1N\sum_{k=1}^N
\mathbb E
\left[
\Psi_{J_{k-1}}(Z_{0,-j}+\theta_{0,j})
-
\Psi_{J_{k-1}}(Z_{0,-j})
\right]^\top
\Delta\widetilde\beta_{k-1}+
o(N^{-1/2}),
\qquad\text{a.s.}
\end{align*}
Indeed, the second-order and centered empirical remainders are bounded by the same quantities used above for
(iii), (v), and (i.5)--(i.8). In particular, using
\[
\Delta\widetilde\omega_{k-1}
=
D_{k-1}
+
T_{k-1},
\]
their $D_{k-1}$-parts follow from
\eqref{refined_D_rate}, while their $T_{k-1}$-parts follow from
\autoref{lemma_transient}(ii)--(iii), and all are
$o_{\rm a.s.}(N^{-1/2})$ after averaging.

Using the PR representations of
$\Delta\doubleoverline\theta_N$ and
$\Delta\doubleoverline\beta_N$, together with the preceding counterfactual small-bias condition, we obtain
\[
\doubleoverline\tau_{N,j}
-
\tau_{0,j}
=
\frac1N\sum_{k=1}^N
\Upsilon_k
+
o(N^{-1/2}),
\qquad\text{a.s.},
\]
where, for this binary-regressor marginal effect,
\begin{align*}
\Upsilon_{i,k}
&=
F_0
\left(
x_{0,i,k}
+
\sum_{l\ne j}\theta_{0,l}x_{l,i,k}
+
\theta_{0,j}
\right)
-
F_0
\left(
x_{0,i,k}
+
\sum_{l\ne j}\theta_{0,l}x_{l,i,k}
\right)
-
\tau_{0,j}\\
&\quad+
\varepsilon_{i,k}
\mathbb E
\left[
\left.
\nabla_\theta
\left\{
F_0
\left(
x_0+\sum_{l\ne j}\theta_lx_l+\theta_j
\right)
-
F_0
\left(
x_0+\sum_{l\ne j}\theta_lx_l
\right)
\right\}
\right|_{\theta=\theta_0}
\right]\\
&\qquad\qquad\times
\mathbb M_\theta^{-1}
\nabla_zF_0(Z_{0,i,k})
\left\{
X_{i,k}
-
\mu_0(\theta_0,Z_{0,i,k})
\right\}\\
&\quad+
\varepsilon_{i,k}
\mathbb E
\left[
\Psi_{J_k}(Z_{0,-j}+\theta_{0,j})
-
\Psi_{J_k}(Z_{0,-j})
\right]^\top
\mathbb M_{\beta,J_k}^{-1}\\
&\qquad\qquad\times
\left\{
\Psi_{J_k}(Z_{0,i,k})
-
P_{J_k}\nabla_zF_0(Z_{0,i,k})X_{i,k}
\right\}.
\end{align*}
Thus, when \autoref{cond_ame}(iii) is imposed for
\[
\Upsilon_k
=
\frac1B\sum_{i=1}^B
\Upsilon_{i,k},
\]
the same Lyapunov argument yields
\[
(NV_\Upsilon)^{-1/2}
[Nr]
\left\{
\doubleoverline\tau_{[Nr],j}
-
\tau_{0,j}
\right\}
\Longrightarrow
\mathbb W(r),
\qquad r\in[0,1].
\]

\section{Proof of \autoref{theorem1} in \autoref{appendixB}}\label{appendixF}

This Appendix provides details for the proof of  \autoref{theorem1}. In particular, \autoref{appendix_b1} provides some preliminary results; \autoref{discrete_regressors} extends the results to the case where some regressors are discrete. \autoref{appendix_b2} provides the asymptotic properties of the raw updates $\widehat\theta_N$. Finally, the last section concludes with the proof of \autoref{theorem1}. 

\subsection{Preliminaries}\label{appendix_b1}
 
  Define
\begin{equation}\label{empirical score}\Phi_k\left(\theta, \mathcal{W}_k\right) =\frac{1}{h_k\cdot B(B-1)}\cdot\sum_{i_1\neq i_2}^B \mathcal{K}\left(\frac{Z_{i_1,k}(\theta) - Z_{i_2,k}(\theta)}{h_k}\right)\left(Y_{i_1,k} - Y_{i_2,k}\right)\left(X_{i_1,k} - X_{i_2,k}\right),\end{equation}
\begin{equation}\label{theoretical score}
\Phi_k\left(\theta\right) = \mathbb E\left[\Phi_k\left(\theta, \mathcal{W}_k\right)\right],
\end{equation}
and 
\begin{equation}\label{limit score}
\Phi\left(\theta\right) = \lim_{k\rightarrow\infty}\Phi_k\left(\theta\right).
\end{equation}

Given \autoref{condition1} and \autoref{condition2}, we have the following lemma that describes the properties of $\Phi_k(\theta,\mathcal{W}_k), \Phi_k(\theta),$ and $\Phi(\theta)$. 

\begin{lemma}\label{lemma1}
    Let \autoref{condition1} and \autoref{condition2} hold, for any choice of $\{h_k\}_{k=1}^{\infty}$ with $h_k\downarrow 0$, the following results hold:
    \begin{enumerate}
        \item For any $\theta\in\mathbb{R}^p$ we have:
        \begin{align*}
            \Phi(\theta) & = \int \left[ F_0\left(z - X_1^{\top}\Delta\theta\right)-F_0\left(z - X_2^{\top}\Delta\theta\right)\right]\left(X_1 - X_2\right)\times\\
            & \ \ \ \ \ \ \ \ \ \ \ \ \ \ f\left(z - X_1^{\top}\theta, X_1\right)f\left(z - X_2^{\top}\theta, X_2\right)dzdX_1dX_2;
        \end{align*} 
        \item  There is a finite constant $C_{\Phi, 1}>0$ such that \[\sup_{\theta\in\mathbb{R}^p}\left\Vert\Phi_k(\theta) - \Phi(\theta)\right\Vert_2\leq C_{\Phi, 1}h_k^{s_{\mathcal K}};\]
        \item There holds 
        \[\mathbb E\left[\left(\Phi_k(\theta, \mathcal{W}_k) - \Phi_k(\theta)\right)\left(\Phi_k(\theta, \mathcal{W}_k) - \Phi_k(\theta)\right)^{\top} \right] =  \frac{2\mathcal{V}_{\mathcal{K}}(\theta)}{B(B-1)h_k} + O(1),\]
        where 
        \begin{align*}
        \mathcal{V}_{\mathcal{K}}(\theta) = & \int \mathcal{K}^2(t)dt \int \left(\sigma^2(z - X_1^{\top}\theta, X_1) + \sigma^2(z - X_2^{\top}\theta, X_2) + \left(F_0(z - X_1^{\top}\Delta\theta) - F_0(z - X_2^{\top}\Delta\theta)\right)^2\right)  \\
        & \ \ \ \ \ \ \ \ \ \ \ \ \ \ \ \ \ \ \ \ \ \ \ \ \times (X_1-X_2)(X_1-X_2)^{\top} f(z- X_1^{\top}\theta, X_1)f(z- X_2^{\top}\theta, X_2)dzdX_1dX_2;
    \end{align*}  
        Consequently, there are finite constants $C_{\Phi,2}$ such that
        \[
        \sup_{\theta\in\mathbb R^p}\mathbb E\left\Vert
        \Phi_k(\theta,W_k)-\Phi_k(\theta)\right\Vert_2^2
        \leq  C_{\Phi,2}h_k^{-1}.
        \]

        \item There exist positive constants $C_{\Phi,3}$ such that
        \[
        \sup_{\theta\in\mathbb R^p}
        \mathbb E\left\Vert\nabla_{\theta}\Phi_k(\theta,W_k)
        -\nabla_{\theta}\Phi_k(\theta)\right\Vert_2^2
        \leq C_{\Phi,3}h_k^{-3}.
        \]
        In particular, for the fixed batch size maintained in this paper this is
        $O(h_k^{-3})$.

        \item There is a
        finite constant $C_{\Phi,4}$ such that
        \[
        \sup_{\theta\in\mathbb R^p}
        \mathbb E\left\Vert\Phi_k(\theta,W_k)-\Phi_k(\theta)\right\Vert_2^4
        \leq C_{\Phi,4}h_k^{-3}.
        \]

    \end{enumerate}
\end{lemma}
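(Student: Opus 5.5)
The plan is to work directly with the U-statistic structure of $\Phi_k(\theta,\mathcal W_k)$. Because the batch is i.i.d., $\Phi_k(\theta)$ equals the expectation of a single pair kernel $h_k^{-1}\mathcal K((Z_1(\theta)-Z_2(\theta))/h_k)(Y_1-Y_2)(X_1-X_2)$. For part 1, I will condition on $(x_{0,1},X_1,x_{0,2},X_2)$ and use $\mathbb E(\varepsilon\mid x_0,X)=0$ to replace $Y_j$ by $F_0(x_{0,j}+X_j^\top\theta_0)$. Next I change variables $z_j=x_{0,j}+X_j^\top\theta$, which gives $x_{0,j}+X_j^\top\theta_0=z_j-X_j^\top\Delta\theta$ and joint density $f(z_j-X_j^\top\theta,X_j)$. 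Then I substitute $z_1=z_2+h_kt$. This writes $\Phi_k(\theta)$ as
\[
\int\mathcal K(t)\,\big[F_0(z+h_kt-X_1^\top\Delta\theta)-F_0(z-X_2^\top\Delta\theta)\big](X_1-X_2)\,f(z+h_kt-X_1^\top\theta,X_1)f(z-X_2^\top\theta,X_2)\,dt\,dz\,dX_1dX_2 .
\]
Letting $h_k\to0$ gives the formula in part 1. Dominated convergence applies because $F_0$ is bounded, $\int|\mathcal K|<\infty$, and the envelopes $g_0(X)$ satisfy $\int(1+\Vert X\Vert^4)g_0<\infty$. Note that the $z$-integral of $f(\cdot-X^\top\theta,X)$ over $z$ is translation invariant, so it does not depend on $\theta$. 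This is what makes the bounds uniform in $\theta$.

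For part 2, I Taylor expand the integrand in $h_kt$ to order $s_{\mathcal K}$. The product $F_0(\cdot)f(\cdot,X_1)$ has $s_{\mathcal K}$ bounded derivatives in its first argument, dominated by $\sum_j g_j(X_1)$. The moment conditions on $\mathcal K$ (Condition \ref{condition2}(i)) kill the terms of order $1,\dots,s_{\mathcal K}-1$. The integral-form remainder is bounded by $h_k^{s_{\mathcal K}}\int|t|^{s_{\mathcal K}}|\mathcal K(t)|dt$ times $\int(1+\Vert X_1\Vert+\Vert X_2\Vert)g_{j}(X_1)\,g_0(X_2)$-type integrals. After the $z$-integration, none of these depend on $\theta$, which gives $\sup_\theta$.

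For parts 3 and 5, I use the Hoeffding decomposition of the U-statistic with fixed $B$. The variance equals $\binom{B}{2}^{-1}\big[2(B-2)\zeta_1+\zeta_2\big]$. Here $\zeta_2$ is the second moment of the pair kernel. After the same substitution $z_1=z_2+h_kt$, the factor $h_k^{-2}\mathcal K^2$ becomes $h_k^{-1}\int\mathcal K^2$ times the stated integrand. The conditional second moment of $(Y_1-Y_2)$ contributes $\sigma^2+\sigma^2+(F_0-F_0)^2$. This gives the leading term $2\mathcal V_{\mathcal K}(\theta)/(B(B-1)h_k)$, with $O(1)$ remainders from the continuity of $\sigma^2$ in $x_0$ (bounded derivative). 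The projection term satisfies $\zeta_1=O(1)$ because a single-index conditional expectation integrates out one kernel. The fourth-moment bound of part 5 is done the same way: bound $\mathbb E\Vert\cdot\Vert^4$ by sums over at most four pairs, where the worst term is a single pair with $h_k^{-4}\int\mathcal K^4(t/h)\,\cdot\,\sim h_k^{-3}$. This uses $\sup\mathbb E(\varepsilon^4\mid\cdot)<\infty$, $\mathbb E\Vert X\Vert^4<\infty$, and $\int\mathcal K^4|t|^j<\infty$.

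For part 4, I differentiate in $\theta$. The gradient brings down $h_k^{-2}\mathcal K'(\cdot)(X_1-X_2)^\top$, so its second moment on a single pair is $h_k^{-4}\cdot h_k\int(\mathcal K')^2=O(h_k^{-3})$, using Condition \ref{condition2}(iii) and the fourth moments of $X$. The main obstacle, I expect, is uniformity in $\theta$ for these higher moment bounds. The moments involve $\Vert X_1-X_2\Vert^4$ weighted by densities evaluated at shifted arguments, and uniformity relies on the translation-invariance trick combined with the $\sup_{x_0}$ envelopes $g_j(X)$. I will handle this carefully by always integrating $z$ first against $\sup_{x_0}f\le g_0(X)$ for one factor and $\int f\,dz$ for the other.
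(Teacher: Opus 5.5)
Your proposal is correct and follows essentially the paper's proof. The shared steps are:
\begin{itemize}
\item the pair-kernel form, with $(Z(\theta),X)$ having density $f(z-X^{\top}\theta,X)$;
\item the substitution $z_1=z+h_kt$, a Taylor expansion, and the vanishing kernel moments for parts 1--2;
\item the Hoeffding variance identity, with a bounded projection term and an $h_k^{-1}$ pair term, for part 3;
\item single-pair moment bounds, using that $B$ is fixed, for parts 4--5.
\end{itemize}
The only slip is the $g_0(X_2)$ envelope in your part-2 remainder, which would leave a divergent $z$-integral. As you state at the end, you should instead integrate $z$ against $f(z-X_2^{\top}\theta,X_2)$ to obtain the marginal density of $X_2$, and this is what makes every bound uniform in $\theta$.
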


\begin{proof}[Proof of \autoref{lemma1}]
    We first prove \autoref{lemma1}(1) and \autoref{lemma1}(2). Obviously, under \autoref{condition1}, the expression in \autoref{lemma1}(1) is well defined. For arbitrary $x_0, X$, define $Z(\theta) = x_0 + X^{\top}\theta$. Then \begin{equation}\mathbb E(Y|x_{0}, X) = F_0(x_{0} + X^{\top}\theta_0) = F_0(Z(\theta) - X^{\top}\Delta\theta).\end{equation}  So  
    \begin{align}
        \Phi_k(\theta)  &= \mathbb E\left[h_k^{-1}\mathcal{K}\left(h_k^{-1}\left(Z_{1,k}(\theta) - Z_{2,k}(\theta)\right)\right)\left(Y_{1,k} - Y_{2,k}\right)\left(X_{1,k} - X_{2,k}\right)\right]\nonumber \\
        & = \int h_k^{-1}\mathcal{K}\left(h_k^{-1}\left(z_1 - z_2\right)\right)\left( F_0(z_1 - X_1^{\top}\Delta\theta) - F_0(z_2 - X_2^{\top}\Delta\theta)\right)\times\nonumber\\
        & \ \ \ \ \ \ \ \ \ \ \ \ \ \left(X_{1} - X_{2}\right)v(z_1,X_1|\theta)v(z_2, X_2|\theta)dz_1dz_2dX_1dX_2,
    \end{align}
    where $v(z, X|\theta)$ is the joint density of $Z(\theta)$ and $X$ given $\theta$. Now we derive the expression of $v(z, X|\theta)$.  Since \[\mathrm{Pr}(Z_{1,k}(\theta)\leq z, X_{1,k}\leq X) = \int_{X_{1,k}\leq X} \int_{x_{0,1,k} \leq z - X_{1,k}^{\top}\theta} f(x_{0,1, k}, X_{1,k})dx_{0,1,k}dX_{1,k},\]
    we have that 
    \[
    \frac{\partial \mathrm{Pr}(Z_{1,k}(\theta)\leq z, X_{1,k}\leq X)}{\partial z} = \int_{X_{1,k}\leq X} f(z - X_{1,k}^{\top}\theta, X_{1,k})dX_{1,k}
    \]
    and 
    \[
    \frac{\partial^{p+1}P(Z_{1,k}(\theta)\leq z, X_{1,k}\leq X)}{\partial z\partial X} =   f(z - X^{\top}\theta, X).
    \]
    This implies that   $v(z, X|\theta) = f(z - X^{\top}\theta, X)$. So 
   \begin{align*}
        \Phi_k(\theta) & = \int \mathcal{K}\left(t\right)\left( F_0(z + th_k - X_1^{\top}\Delta\theta) - F_0(z - X_2^{\top}\Delta\theta)\right)\left(X_{1} - X_{2}\right)\times\\
        & \ \ \ \ \ \ \ \ f(z + th_k - X_1^{\top}\theta ,X_1)f(z     - X_2^{\top}\theta ,X_2)dtdzdX_1dX_2\\
        & = \int \mathcal{K}\left(t\right)\left(  F_0(z  - X_1^{\top}\Delta\theta) - F_0(z - X_2^{\top}\Delta\theta)+ \sum_{j=1}^{s_{\mathcal K}-1} \frac{\partial^j F_0(z - X_1^{\top}\Delta\theta)}{j!}t^jh_k^j \right.\\
        & \left. \ \ \ \ \ \ \ \ \ \ \ \ \ \ \ \ + \frac{\partial^{s_{\mathcal K}} F_0(\zeta_1)}{s_{\mathcal K}!}t^{s_{\mathcal K}}h_k^{s_{\mathcal K}} \right) \left(f\left(z  - X_1^{\top}\theta ,X_1\right) + \right. \\
        & \ \ \ \ \ \ \ \ \ \ \ \left. \sum_{j=1}^{s_{\mathcal K}-1} \frac{1}{j!}\frac{\partial^j f(z - X_1^{\top}\theta, X_1)}{\partial z^j}t^jh_k^j + \frac{1}{s_{\mathcal K}!}\frac{\partial^{s_{\mathcal K}} f(\zeta_2, X_1)}{\partial z^{s_{\mathcal K}}}t^{s_{\mathcal K}}h_k^{s_{\mathcal K}}\right)\times\\
        & \ \ \ \ \ \ \left(X_{1} - X_{2}\right)f(z   - X_2^{\top}\theta ,X_2)dtdzdX_1dX_2,
    \end{align*}
    where $\zeta_1$ lies  somewhere between $z- X_1^{\top}\Delta\theta$ and $z+ th_k- X_1^{\top}\Delta\theta$, and $\zeta_2$ lies somewhere between $z- X_1^{\top} \theta$ and $z+ th_k- X_1^{\top} \theta$. Since $\int \mathcal{K}(t)dt = 1$, we have that 
    \begin{align*}
        \Phi_k(\theta) & = \Phi(\theta) + \int \mathcal{K}(t) \left(  F_0(z  - X_1^{\top}\Delta\theta) - F_0(z - X_2^{\top}\Delta\theta)\right) (X_1 - X_2)f(z - X_2^{\top}\theta, X_2)\times  \\
        & \ \ \ \ \ \ \  \ \  \ \  \ \  \ \  \left(\sum_{j=1}^{s_{\mathcal K} - 1} \frac{1}{j!}\frac{\partial^j f(z - X_1^{\top}\theta, X_1)}{\partial z^j}t^jh_k^j + \frac{1}{s_{\mathcal K}!}\frac{\partial^{s_{\mathcal K}} f(\zeta_2, X_1)}{\partial z^{s_{\mathcal K}}}t^{s_{\mathcal K}}h_k^{s_{\mathcal K}}\right)dtdzdX_1dX_2\\
        & \ \ \ \ \ \  \ \ \ \ +  \int \mathcal{K}(t) (X_1 - X_2) f(z - X_1^{\top}\theta, X_1) f(z - X_2^{\top}\theta, X_2)\times\\
        &  \ \ \ \ \ \ \  \ \  \ \  \ \  \ \ \left(\sum_{j=1}^{s_{\mathcal K}-1} \frac{\partial^j F_0(z - X_1^{\top}\Delta\theta)}{j!}t^jh_k^j + \frac{\partial^{s_{\mathcal K}} F_0(\zeta_1)}{s_{\mathcal K}!}t^{s_{\mathcal K}}h_k^{s_{\mathcal K}}\right) dtdzdX_1dX_2\\
        & \ \ \ \ \ \  \ \ \ \ + \int \mathcal{K}(t)(X_1 - X_2)f(z - X_2^{\top}\theta, X_2)\left(\sum_{j=1}^{s_{\mathcal K}-1} \frac{\partial^j F_0(z - X_1^{\top}\Delta\theta)}{j!}t^jh_k^j + \frac{\partial^{s_{\mathcal K}} F_0(\zeta_1)}{s_{\mathcal K}!}t^{s_{\mathcal K}}h_k^{s_{\mathcal K}}\right)\times\\
        &  \ \ \ \ \ \ \  \ \  \ \  \ \  \ \ \left(\sum_{j=1}^{s_{\mathcal K}-1} \frac{1}{j!}\frac{\partial^j f(z - X_1^{\top}\theta, X_1)}{\partial z^j}t^jh_k^j + \frac{1}{s_{\mathcal K}!}\frac{\partial^{s_{\mathcal K}} f(\zeta_2, X_1)}{\partial z^{s_{\mathcal K}}}t^{s_{\mathcal K}}h_k^{s_{\mathcal K}}\right)dtdzdX_1dX_2.
    \end{align*}
    \autoref{condition1} and \autoref{condition2} immediately lead to \autoref{lemma1}(2) and   \autoref{lemma1}(1) is proved because $h_k\downarrow 0$.

   To prove \autoref{lemma1}(3), let $\mathbb{K}_{k,\theta}(W_1,W_2) = \mathcal{K}(h_k^{-1}(Z_1(\theta)-Z_2(\theta))(Y_1-Y_2)(X_1 - X_2)$ denote the
symmetric order-two kernel in \eqref{empirical score}.  The exact Hoeffding covariance identity is
\begin{align*}
 \operatorname{Cov}\{\Phi_k(\theta,W_k)\}
& =\frac{4(B-2)}{B(B-1)}
       \operatorname{Cov}\{\mathbb{E}[\mathbb{K}_{k,\theta}(W_1,W_2)|W_1] - \mathbb{E}[\mathbb{K}_{k,\theta}(W_1,W_2)]\}\\
 &+\frac{2}{B(B-1)}
       \operatorname{Cov}\{\mathbb{K}_{k,\theta}(W_1,W_2)\}.
\end{align*}
 In the calculations below,
$\mathbb E_1$ denotes integration w.r.t. the first observation.
Note that 
\begin{align*}
   &  \mathbb{E}_1 \left[\frac{1}{h_k}\mathcal{K}\left(\frac{Z_{1,k}(\theta) - Z_{2,k}(\theta)}{h_k}\right)\left(Y_{1,k} - Y_{2,k}\right)\left(X_{1,k} - X_{2,k}\right)\right]\\
   & = \int \frac{1}{h_k} \mathcal{K}\left(\frac{z - Z_{2,k}(\theta)}{h_k}\right)\left[F_0(z - X_1^{\top}\Delta\theta) - Y_{2,k}\right]\left(X_1 - X_{2,k}\right)f(z - X_1^{\top}\theta, X_1)dzdX_1\\
   & = \int \mathcal{K}(t)\left(F_0\left( Z_{2,k}(\theta) +th_k - X_1^{\top}\Delta\theta\right) - Y_{2,k}\right)(X_{1} - X_{2,k})f(Z_{2,k}(\theta) + th_k - X_1^{\top}\theta, X_1)dtdX_1\\
   & = \int \left(F_0\left( Z_{2,k}(\theta)   - X_1^{\top}\Delta\theta\right) - Y_{2,k}\right)(X_1 - X_{2,k})f(Z_{2,k}(\theta) - X_1^{\top}\theta, X_1)dX_1\\
   & + O\left( h_k^{s_{\mathcal K}}  (1 + |Y_{2,k}|)(1 + \Vert X_{2,k}\Vert_2) \right ).
\end{align*}
So the above term is bounded by $(1 + |Y_{2,k}|)(1 + \Vert X_{2,k}\Vert_2)$ up to some constant. This implies that 
\begin{align*}
& \mathrm{var}\left(\mathbb{E}_1 \left[\frac{1}{h_k}\cdot \mathcal{K}\left(\frac{Z_{1,k}(\theta) - Z_{2,k}(\theta)}{h_k}\right)\left(Y_{1,k} - Y_{2,k}\right)\left(X_{1,k} - X_{2,k}\right)\right]\right)  \\
& \leq C\mathbb{E}\left[ (1 + |Y_{2,k}|)^2(1 + \Vert X_{2,k}\Vert_2)^2 \right]<\infty
\end{align*}
due to the fact that $\sup_{x_0, X}\mathbb{E}(Y^2|x_0, X) \leq  \sup_{x_0, X}F_0^2(x_0+X^{\top}\theta_0) + \sup_{x_0, X}\sigma^2(x_0, X)<\infty$ and $\mathbb{E}\Vert X\Vert^2_2<\infty$. 
 To show the remaining parts, we note that $\Phi_k(\theta) = \Phi(\theta) + O(h_k^{s_{\mathcal K}})$ and $\Phi(\theta)$ is uniformly bounded. So 
\begin{align*}
     & \mathrm{var}\left( \frac{1}{h_k }\cdot  \mathcal{K}\left(\frac{Z_{1,k}(\theta) - Z_{2,k}(\theta)}{h_k}\right)\left(Y_{1,k} - Y_{2,k}\right)\left(X_{1,k} - X_{2,k}\right)\right)\\
     & = h_k^{-2}\mathbb{E}\left( \mathcal{K}^2\left(\frac{Z_{1,k}(\theta) - Z_{2,k}(\theta)}{h_k}\right)\left(Y_{1,k} - Y_{2,k}\right)^2\left(X_{1,k} - X_{2,k}\right)\left(X_{1,k} - X_{2,k}\right)^{\top}\right) + O(1).
\end{align*}
Note that 
\begin{align*}
   & h_k^{-2}\mathbb{E}\left( \mathcal{K}^2\left(\frac{Z_{1,k}(\theta) - Z_{2,k}(\theta)}{h_k}\right)\left(Y_{1,k} - Y_{2,k}\right)^2\left(X_{1,k} - X_{2,k}\right)\left(X_{1,k} - X_{2,k}\right)^{\top}\right)\\
   & = h_k^{-2}\mathbb{E}\left( \mathcal{K}^2\left(\frac{Z_{1,k}(\theta) - Z_{2,k}(\theta)}{h_k}\right)\left(F_0(Z_{1,k}(\theta) - X_{1,k}^{\top}\Delta\theta) -  F_0(Z_{2,k}(\theta) - X_{2,k}^{\top}\Delta\theta)\right)^2\right.\\
   &\left. \qquad \qquad \qquad \qquad \times \left(X_{1,k} - X_{2,k}\right)\left(X_{1,k} - X_{2,k}\right)^{\top}\right)\\
   & + h_k^{-2}\mathbb{E}\left( \mathcal{K}^2\left(\frac{Z_{1,k}(\theta) - Z_{2,k}(\theta)}{h_k}\right)\left(\sigma^2(Z_{1,k}(\theta) - X_{1,k}^{\top}\theta,X_{1,k}) +  \sigma^2(Z_{2,k}(\theta) - X_{2,k}^{\top}\theta,X_{2,k})\right) \right. \\
   & \left. \qquad \qquad \qquad \times \left(X_{1,k} - X_{2,k}\right) \left(X_{1,k} - X_{2,k}\right)^{\top}\right)\\
  & = h_k^{-2}\int  \mathcal{K}^2\left(\frac{z_1 - z_2}{h_k}\right)\left(F_0(z_1 - X_{1}^{\top}\Delta\theta) -  F_0(z_{2} - X_{2}^{\top}\Delta\theta)\right)^2\left(X_{1} - X_{2}\right)\left(X_{1} - X_{2}\right)^{\top}\times \\
  & \ \ \ \ \ \ \ \ \ \ \ \ \ \ \ \ \ \ \ \ \ \ \ \ \  \ \ \ \ \ \ \ \ \ \ \ \ \ \ \ \ \ \ \ \ f(z_1 - X_1^{\top} \theta, X_1) f(z_2 - X_2^{\top} \theta, X_2)dz_1dz_2dX_1dX_2 \\
  & + h_k^{-2}\int  \mathcal{K}^2\left(\frac{z_1 - z_2}{h_k}\right)\left(\sigma^2(z_1 - X_{1}^{\top} \theta, X_1) +  \sigma^2(z_{2} - X_{2}^{\top}\theta, X_2)\right)\left(X_{1} - X_{2}\right)\left(X_{1} - X_{2}\right)^{\top}\times \\
  & \ \ \ \ \ \ \ \ \ \ \ \ \ \ \ \ \ \ \ \ \ \ \ \ \  \ \ \ \ \ \ \ \ \ \ \ \ \ \ \ \ \ \ \ \ f(z_1 - X_1^{\top} \theta, X_1) f(z_2 - X_2^{\top} \theta, X_2)dz_1dz_2dX_1dX_2
\end{align*}
Following the previous proofs, we can show that the   terms on the RHS of the  last equality can be written as 
\begin{align*}
& h_k^{-1}\int\mathcal{K}^2(t)dt \int \left(F_0(z - X_{1}^{\top}\Delta\theta) -  F_0(z - X_{2}^{\top}\Delta\theta)\right)^2\left(X_{1} - X_{2}\right)\left(X_{1} - X_{2}\right)^{\top}\times \\
& \ \ \ \ \ \ \ \ \ \ \ \ \ \ \ \ \ \ \ \ \ \ \ \ \  \ \ \ \ \ \ \ \ \ \ \ \ \ \ \ \ \ \ \ \ f(z - X_1^{\top} \theta, X_1) f(z - X_2^{\top}\theta, X_2 )dzdX_1dX_2\\
& + h_k^{-1}\int\mathcal{K}^2(t)dt \int \left(\sigma^2(z   - X_{1}^{\top} \theta, X_1) +  \sigma^2(z  - X_{2}^{\top}\theta, X_2)\right)\left(X_{1} - X_{2}\right)\left(X_{1} - X_{2}\right)^{\top}\times \\
& \ \ \ \ \ \ \ \ \ \ \ \ \ \ \ \ \ \ \ \ \ \ \ \ \  \ \ \ \ \ \ \ \ \ \ \ \ \ \ \ \ \ \ \ \ f(z - X_1^{\top} \theta, X_1) f(z - X_2^{\top}\theta, X_2 )dzdX_1dX_2 + O(1).
\end{align*}
This proves the result. 

For \autoref{lemma1}(4), note that \[\nabla_{\theta}\Phi_k\left(\theta, \mathcal{W}_k\right) = \frac{1}{h_k^{2}B(B-1)}\sum_{i\neq j}^B\nabla\mathcal{K}\left(h_k^{-1}\left(Z_{i,k}(\theta) - Z_{j,k}(\theta)\right)\right)\left(Y_{i,k} - Y_{j,k}\right)\left(X_{i,k} - X_{j,k}\right)\left(X_{i,k} - X_{j,k}\right)^{\top}\]
Again using the Hoeffding decomposition for a symmetric second-order
U-statistic, the first projection and the degenerate remainder contribute,
respectively,
\begin{align*}
 &\mathbb E\left\Vert\nabla_{\theta}\Phi_k(\theta,\mathcal{W}_k)
       -\nabla_{\theta}\Phi_k(\theta)\right\Vert_2^2\\
 &\qquad\leq \frac{C}{B}
 +\frac{C}{B(B-1)h_k^3}
 \int (\nabla \mathcal K(t))^2
 \left[\left\{F_0(z+th_k-X_1^\top\Delta\theta)
              -F_0(z-X_2^\top\Delta\theta)\right\}^2\right.\\
 &\hspace{5.5cm}\left.
 +\sigma^2(z+th_k-X_1^\top\theta,X_1)
 +\sigma^2(z-X_2^\top\theta,X_2)\right]
 \Vert X_1-X_2\Vert_2^4\\
 &\hspace{5.5cm}\times
 f(z+th_k-X_1^\top\theta,X_1)
 f(z-X_2^\top\theta,X_2)\,dt\,dz\,dX_1\,dX_2\\
 &\qquad\leq \frac{C}{B}+\frac{C}{B(B-1)h_k^3}.
\end{align*}
The last integral is uniformly bounded by  \autoref{condition1} and \autoref{condition2}.

For part~(5), write the centered U-statistic as a finite sum of centered
ordered-pair kernels.  Since $B$ is fixed, the inequality
$\Vert\sum_{r=1}^{B(B-1)}a_r\Vert_2^4\leq C_B\sum_r\Vert a_r\Vert_2^4$
reduces the claim to one ordered pair.  Conditional fourth moments of
$Y_1-Y_2$ are uniformly bounded by  \autoref{condition1}(iv), and the
fourth moment of $X_1-X_2$ is integrable by  \autoref{condition1}(iii).
Consequently, uniformly in $\theta$,
\begin{align*}
 &\mathbb E\left[
 h_k^{-4}\mathcal K^4\!\left(
   \frac{Z_1(\theta)-Z_2(\theta)}{h_k}\right)
 |Y_1-Y_2|^4\Vert X_1-X_2\Vert_2^4\right]\\
 &\qquad\leq C h_k^{-4}\int
 \mathcal K^4\!\left(\frac{z_1-z_2}{h_k}\right)
 (\Vert X_1\Vert_2^4+\Vert X_2\Vert_2^4)
 f(z_1-X_1^\top\theta,X_1)f(z_2-X_2^\top\theta,X_2)
 \,dz_1dz_2dX_1dX_2\\
 &\qquad\leq C h_k^{-3}\int \mathcal K^4(t)\,dt.
\end{align*}
 This proves part~(5) and the
lemma.
\end{proof}

Given \autoref{lemma1}, we can briefly discuss the intuition of our algorithm. Note that (\ref{first-step algorithm}) leads to 
\begin{align}\label{intuition}
    \widehat\theta_k &  = \widehat\theta_{k-1} +  \gamma_k \cdot \Phi_k(\widehat\theta_{k-1},\mathcal{W}_k)\nonumber \\
    & = \underset{(I)}{\underbrace{\widehat\theta_{k-1} + \gamma_k \cdot \Phi(\widehat\theta_{k-1})}} - \underset{(II)}{\underbrace{ \gamma_k \cdot \left( \Phi(\widehat\theta_{k-1}) - \Phi_k(\widehat\theta_{k-1})\right)}} - \underset{(III)}{\underbrace{\gamma_k\cdot \left( \Phi_k(\widehat\theta_{k-1}) - \Phi_k(\widehat\theta_{k-1}, \mathcal{W}_k)\right)}}.
\end{align}
In the above update, term (III) has zero expectation conditioned on observations up to period $k-1$, and term (II) will vanish as $k$ increases if we choose $h_k\downarrow 0$ according to \autoref{lemma1}. So we can regard the update (\ref{first-step algorithm}) as being mainly driven by term (I).   Define 
  \begin{align}
  \mathbb{H}(\theta, \tau) \equiv & \int \nabla_z F_0(z - X_2^{\top}\Delta\theta - \tau(X_1 - X_2)^{\top}\Delta\theta)f(z - X_1^{\top}\theta, X_1)f(z-X_2^{\top}\theta, X_2)\nonumber \\
  & \ \ \ \ \ \ \ \ \ \ \ \ \ \ \ \ \ \ \ \ \ \ \  \times\left(X_1 - X_2\right)\left(X_1 - X_2\right)^{\top}dzdX_1dX_2,
  \end{align}
  then for any $\theta$, we obviously have that \begin{equation}\Phi(\theta) =- \left[\int_{0}^{1}  \mathbb{H}(\theta,\tau)d\tau \right]\Delta\theta\end{equation}  and \begin{equation}\lambda_{\mathrm{min}}\left(\int_{0}^1\mathbb{H}(\theta, \tau)d\tau\right)\geq 0.\end{equation} 
  If additional conditions are imposed on the data generating process so that $\lambda_{\mathrm{min}}(\int_{0}^1\mathbb{H}(\theta, \tau)d\tau)> 0$ for any $\theta$,  (I)  is a contraction mapping and  the same applies to the kernel-based warm-start update as $k\rightarrow \infty$. Consequently,  our method guarantees global stability, meaning that we do not require any assumptions over the initial starting point of the update. Such strict lower boundedness of the smallest eigenvalue is formally stated in the following lemma. Note that the following lemma relies on the fact that all regressors are continuous; we relax this requirement in the following \autoref{discrete_regressors}.

  \begin{lemma}\label{lemma2}
      If \autoref{condition1}  holds, then there exists $C_{\Phi, 4}>0$ such that 
       \[
\sup_{\theta\in\mathbb{R}^p}\sup_{\tau\in[0,1]}\lambda_{\text{max}}\left(\mathbb{H}(\theta,\tau)\right)\leq C_{\Phi, 4}.
 \]
 If \autoref{condition3} holds, then  
 \[
\inf_{\tau\in[0,1]}\lambda_{\mathrm{min}}\left(\mathbb{H}(\theta,\tau)\right)\geq c_{\mathbb{H}} \cdot \left(\frac{\overline{z} - \underline{z}}{12(1 + \Vert\Delta\theta\Vert_2 + \Vert\theta_0\Vert_2)}\wedge r_X\right)^{2p+2},
 \]
 where $c_{\mathbb{H}}$ is a positive constant that does not depend on the choice of $\theta$. 
 \end{lemma}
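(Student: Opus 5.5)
For the upper bound I would bound the quadratic form directly. For a unit vector $a$, $a^{\top}\mathbb H(\theta,\tau)a$ is at most $\Vert\nabla_zF_0\Vert_\infty$ times
\[
\int ((X_1-X_2)^{\top}a)^2 f(z-X_1^{\top}\theta,X_1)f(z-X_2^{\top}\theta,X_2)\,dz\,dX_1\,dX_2 .
\]
I would integrate out $z$ first, using $\int f(z-X_1^{\top}\theta,X_1)f(z-X_2^{\top}\theta,X_2)\,dz\le g_0(X_2)\int f(u,X_1)\,du$. The remaining integral is then at most $C\int(\Vert X_1\Vert_2^2+\Vert X_2\Vert_2^2)\,f_X(X_1)\,g_0(X_2)\,dX_1dX_2$. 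This is finite by \autoref{condition1}(iii) and $\mathbb E\Vert X\Vert_2^4<\infty$, and the bound does not depend on $\theta$ or $\tau$. That gives $C_{\Phi,4}$.

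For the lower bound the plan is to localize the integral to a small region where all integrands are bounded below. Since the integrand is nonnegative, I may restrict to $X_1,X_2\in\{\Vert X\Vert_2\le \rho\}$ with $\rho:=\delta\wedge r_X$, where $\delta=(\overline z-\underline z)/(12(1+\Vert\Delta\theta\Vert_2+\Vert\theta_0\Vert_2))$. I also restrict $z$ to a middle subinterval $I$ of $[\underline z,\overline z]$ of length proportional to $\overline z-\underline z$. On this set I would check three things.
\begin{itemize}
\item[(a)] Both density arguments lie in $[\underline z,\overline z]$. One has $|X_i^{\top}\theta|\le \rho(\Vert\Delta\theta\Vert_2+\Vert\theta_0\Vert_2)\le(\overline z-\underline z)/12$, so if $z$ stays a margin $(\overline z-\underline z)/12$ inside the interval, both density factors are at least $\underline c_f$.
\item[(b)] The argument $z-X_2^{\top}\Delta\theta-\tau(X_1-X_2)^{\top}\Delta\theta$ of $\nabla_zF_0$ moves by at most $3\rho\Vert\Delta\theta\Vert_2\le(\overline z-\underline z)/4$. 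Choosing $I$ as the middle half keeps it in $[\underline z,\overline z]$, so $\nabla_zF_0\ge\underline c_F$.
\item[(c)] The factor 12 in $\delta$ is just the bookkeeping of these margins.
\end{itemize}
Hence
\[
a^{\top}\mathbb H a\ \ge\ \underline c_F\,\underline c_f^2\,|I|\int_{\Vert X_1\Vert,\Vert X_2\Vert\le\rho}((X_1-X_2)^{\top}a)^2\,dX_1dX_2 .
\]

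The remaining piece is a scaling computation, which I expect to be the only delicate step. It must produce the exact power $\rho^{2p+2}$ with a constant free of $\theta$. Substituting $X_i=\rho U_i$ gives $\rho^{2p}\cdot\rho^2\int_{\Vert U_i\Vert\le1}((U_1-U_2)^{\top}a)^2\,dU_1dU_2$. By rotational symmetry the last integral equals $2\,\mathrm{vol}(B_1)\int_{B_1}u_1^2\,du>0$, independent of $a$. Hence $\lambda_{\min}(\mathbb H(\theta,\tau))\ge c_{\mathbb H}\rho^{2p+2}$, uniformly in $\tau\in[0,1]$. Here $c_{\mathbb H}$ collects $\underline c_F$, $\underline c_f^2$, $|I|$ and the ball constant, none of which depend on $\theta$. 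Since $\rho=\delta\wedge r_X$, this is exactly the form in the statement.
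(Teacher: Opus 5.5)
Your proposal is correct and follows essentially the same route as the paper. For the upper bound the paper likewise bounds one density factor by $g_0$ and integrates the other out; it first symmetrizes $\Vert X_1-X_2\Vert_2^2$ to $4\Vert X_1\Vert_2^2$ and works with the Frobenius norm rather than the Rayleigh quotient, which is immaterial. For the lower bound it localizes to exactly your set, $z$ in the middle half of $[\underline z,\overline z]$ and $\Vert X_i\Vert_2\le\rho$, checks the same margins, and evaluates the same ball integral to get $\rho^{2p+2}$; your constant $\underline c_F\,\underline c_f^2\,|I|\cdot 2\,\mathrm{vol}(B_1)\int_{B_1}u_1^2\,du$ coincides with the paper's $c_p^2(\overline z-\underline z)\underline c_F\underline c_f^2/(p+2)$.
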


\begin{proof}[Proof of \autoref{lemma2}]
\autoref{lemma2}(i) is obvious if we note that \begin{align*}\lambda_{\text{max}}\left(\mathbb{H}(\theta,\tau)\right)\leq \Vert\mathbb{H}(\theta,\tau)\Vert_F &\leq \Vert\nabla_z F_0\Vert_{\infty}\int  \Vert X_1 - X_2\Vert_2^2f(z- X_1^{\top}\theta, X_1) f(z- X_2^{\top}\theta, X_2)dzdX_1dX_2\\
& \leq 4\Vert\nabla_z F_0\Vert_{\infty}\int\Vert  X_1\Vert_2^2 f(z- X_1^{\top}\theta, X_1) f(z- X_2^{\top}\theta, X_2)dzdX_1dX_2\\
& \leq 4\Vert\nabla_z F_0\Vert_{\infty} \int\Vert  X_1\Vert_2^2g_0(X_1)dX_1 \int f(z- X_2^{\top}\theta, X_2)dzdX_2 \end{align*} which is uniformly bounded according to \autoref{condition1}.

To prove \autoref{lemma2}(ii), consider the choices of $z$, $X_1$, and $X_2$  such that 
\[
\frac{3\underline{z} + \overline{z}}{4}\leq z\leq  \frac{\underline{z} + 3\overline{z}}{4}, \  \Vert X_1\Vert_2\leq \frac{\overline{z} - \underline{z}}{12(1 + \Vert\Delta\theta\Vert_2 + \Vert\theta_0\Vert_2)}\wedge r_X, \ \Vert X_2\Vert_2\leq \frac{\overline{z} - \underline{z}}{12(1 + \Vert\Delta\theta\Vert_2 + \Vert\theta_0\Vert_2)}\wedge r_X.
\]
In this case, 
\begin{align*}
z - X_{2}^{\top} \Delta  \theta - \tau \left(X_{1} - X_{2}\right)^{\top} \Delta \theta - \underline{z} \geq  \frac{\overline{z}-\underline{z}}{4} - 2\Vert X_2\Vert_2\Vert\Delta\theta\Vert_2 -\Vert X_1\Vert_2\Vert\Delta\theta\Vert_2\geq 0,  
\end{align*}
and 
\begin{align*}
z - X_{2}^{\top} \Delta  \theta - \tau \left(X_{1} - X_{2}\right)^{\top} \Delta \theta - \overline{z} \leq  -\frac{\overline{z}-\underline{z}}{4} + 2\Vert X_2\Vert_2\Vert\Delta\theta\Vert_2 + \Vert X_1\Vert_2\Vert\Delta\theta\Vert_2\leq 0,  
\end{align*}
so $\nabla F_0\left(z - X_{2}^{\top} \Delta  \theta - \tau \left(X_{1} - X_{2}\right)^{\top}\Delta\theta\right)\geq \underline{c}_F$ according to \autoref{condition3}. On the other side, for $z, X_1, X_2$ satisfying the above requirement, we also have that for $i=1,2$,  $z - X_i^{\top}\theta\geq z - \Vert X_i\Vert_2\Vert\theta\Vert_2\geq z - \Vert X_i\Vert_2(\Vert\theta_0\Vert_2 + \Vert\Delta\theta\Vert_2 )\geq \underline{z}$ and similarly  $z - X_i^{\top}\theta\leq \overline{z}$. This implies that $f(z- X_i^{\top}\theta, X_i)\geq \underline{c}_f$ according again to \autoref{condition3}. 

Then define area $\Omega = \{(z, X_1,X_2): \frac{3\underline{z} + \overline{z}}{4}\leq z\leq  \frac{\underline{z} + 3\overline{z}}{4}, 
\Vert X_1\Vert_2\leq \frac{\overline{z} - \underline{z}}{12(1 + \Vert\Delta\theta\Vert_2 + \Vert\theta_0\Vert_2)}\wedge r_X, \Vert X_2\Vert_2\leq \frac{\overline{z} - \underline{z}}{12(1 + \Vert\Delta\theta\Vert_2 + \Vert\theta_0\Vert_2)}\wedge r_X\}$.  We have that
\begin{align*}
& \lambda_{\mathrm{min}}(\mathbb{H}(\theta,\tau))\\
& \geq \lambda_{\mathrm{min}}\left(\int_{\Omega}\partial F_0\left(z - X_{2}^{\top} \Delta  \theta - \tau \left(X_{1} - X_{2}\right)^{\top}\Delta\theta\right)f(z - X_1^{\top}\theta, X_1)f(z-X_2^{\top}\theta, X_2)\right.\\
& \left. \qquad \qquad \qquad \qquad \times \left(X_1 - X_2\right)\left(X_1 - X_2\right)^{\top}dzdX_1dX_2\right)\\
& \geq \underline{c}_F\underline{c}_f^2\lambda_{\mathrm{min}}\left( \int_{\Omega}\left(X_1 - X_2\right)\left(X_1 - X_2\right)^{\top}dzdX_1dX_2\right)\\
&= (\overline{z} - \underline{z})\underline{c}_F\underline{c}_f^2 c_p \cdot \left(\frac{\overline{z} - \underline{z}}{12(1 + \Vert\Delta\theta\Vert_2 + \Vert\theta_0\Vert_2)}\wedge r_X\right)^{p}\cdot \lambda_{\mathrm{min}}\left( \int_{
\Vert X\Vert_2\leq \frac{\overline{z} - \underline{z}}{12(1 + \Vert\Delta\theta\Vert_2 + \Vert\theta_0\Vert_2)}\wedge r_X}XX^{\top}dX\right)\\
& = (\overline{z} - \underline{z})\underline{c}_F\underline{c}_f^2 c_p \cdot \left(\frac{\overline{z} - \underline{z}}{12(1 + \Vert\Delta\theta\Vert_2 + \Vert\theta_0\Vert_2)}\wedge r_X\right)^{p}\cdot  \int_{
\Vert X\Vert_2\leq \frac{\overline{z} - \underline{z}}{12(1 + \Vert\Delta\theta\Vert_2 + \Vert\theta_0\Vert_2)}\wedge r_X}x_1^2dX \\
&=  \frac{c_p^2(\overline{z}-\underline{z})\underline{c}_F\underline{c}_f^2}{p+2} \left(\frac{\overline{z} - \underline{z}}{12(1 + \Vert\Delta\theta\Vert_2 + \Vert\theta_0\Vert_2)}\wedge r_X\right)^{2p+2},
\end{align*}
where $c_p$ is the volume of $p$-dimensional unit ball. This shows the result.
\end{proof}

The lower bound of the smallest eigenvalue as a function of $\theta$ in
\autoref{lemma2} is crucial because, for all sufficiently large $k$, it makes
term~(I) in \eqref{intuition} a strict contraction.  The same result also
guarantees point identification of $\theta_0$.

\subsection{The Case with Discrete Regressors}\label{discrete_regressors}

This section verifies the curvature and identification ingredient needed to
extend the preceding global-stability argument when $X$ contains both
continuous and discrete components.  The stochastic convergence conclusions
also require the corresponding moment, noise, kernel, and step-size assumptions
from \autoref{condition1}, \autoref{condition2}, and \autoref{condition4}; the
lemma below does not replace those assumptions.  We require
throughout that $x_0$ be continuously distributed. Decompose $X = (X_c^{\top}, 
X_d^{\top})^{\top}$, where $X_c \in \mathbb{R}^{p_c}$ collects the continuous regressors and 
$X_d \in \mathcal{X}_d \subseteq \mathbb{R}^{p_d}$ the discrete ones, $p = p_c + p_d$, with 
$\mathcal{X}_d$ countable. Let $p(x_d) = \mathbb{P}(X_d = x_d)$, and let $f(\cdot, \cdot 
\mid x_d)$ denote the conditional joint density of $(Z_0, X_c^{\top})^{\top}$ given $X_d = 
x_d$, where $Z_0 = x_0 + X^{\top}\theta_0$ is the true index.

Define 
\begin{align}\label{eq:H_discrete}
\mathbb{H}(\theta, \tau) &  \;=\; \sum_{X_{d,1}\in\mathcal{X}_d}\sum_{X_{d,2}\in\mathcal{X}_d} 
p(X_{d,1})\,p(X_{d,2}) \int \nabla_z F_0\big(z - X_2^{\top}\Delta\theta - \tau(X_1 - 
X_2)^{\top}\Delta\theta\big)\,  \nonumber \\
& \ \ \ \ \ \ \ \ \ \ \ \ \ \times\, f(z - X_1^{\top}\Delta\theta, X_{c,1}\mid X_{d,1})\, f(z - X_2^{\top}\Delta\theta, 
X_{c,2}\mid X_{d,2})\\ \nonumber
& \ \ \ \ \ \  \ \ \ \ \ \ \  \times (X_1 - X_2)(X_1 - X_2)^{\top}\,dz\,dX_{c,1}\,dX_{c,2},
\end{align}
indexed by $\tau \in [0,1]$. The object of this section is a strictly positive lower bound 
on the minimum eigenvalue of $\mathbb{H}(\theta,\tau)$, which substitutes for the 
corresponding step in the continuous-regressor proofs.

\medskip
\noindent\textbf{Condition 1$'$.} \textit{(i) $F_0$ is bounded and has uniformly bounded derivatives up to
order $s_{\mathcal K}$; (ii) for each $x_d \in \mathcal{X}_d$, $f(\cdot, X_c \mid x_d)$ has partial 
derivatives with respect to its first argument up to order $s_{\mathcal K}$, and these are uniformly 
bounded over $x_d \in \mathcal{X}_d$; (iii) there exist nonnegative functions $g_0, \ldots, 
g_{s_{\mathcal K}+1}$ on $\mathbb{R}^{p_c}$ such that, uniformly over $x_d \in \mathcal{X}_d$ and $0 
\leq j \leq s_{\mathcal K}+1$, $\sup_{z\in\mathbb{R}} \left|\partial^{j} f(z, X_c \mid x_d)/\partial 
z^{j}\right| \leq g_j(X_c)$ and $\int \left(1 + \Vert X_c\Vert_2^4\right) g_j(X_c)\, dX_c < 
\infty$; moreover, $\mathbb{E}\Vert X\Vert_2^4 < \infty$; (iv) $\sup_{x_0,X}\mathbb E(\varepsilon^4\mid x_0,X)<\infty$ is uniformly bounded and $\sigma^2(x_0, X)$ has a uniformly bounded 
derivative with respect to $x_0$.}

\medskip
\noindent\textbf{Condition 3$'$.} \textit{(i) $\nabla_z F_0(z) > 0$ for all $z \in 
\mathbb{R}$; (ii) there exist finitely many pairs of support points $(x_d^{(l)}, 
\tilde{x}_d^{(l)}) \in \mathcal{X}_d \times \mathcal{X}_d$, $l = 1, \ldots, L$, each with 
$p(x_d^{(l)}) > 0$ and $p(\tilde{x}_d^{(l)}) > 0$, such that
\[
\mathrm{span}\left\{\,x_d^{(1)} - \tilde{x}_d^{(1)}, \;\ldots,\; x_d^{(L)} - 
\tilde{x}_d^{(L)}\,\right\} \;=\; \mathbb{R}^{p_d};
\]
(iii) there exists $r_X > 0$ such that, for every support point $x_d$ appearing in the 
pairs of part (ii), the map $(z, X_c) \mapsto f(z, X_c \mid x_d)$ is continuous and, for 
every compact $I \subset \mathbb{R}$,
\[
\inf_{z \in I,\; \Vert X_c\Vert_2 \leq r_X} f(z, X_c \mid x_d) \;>\; 0.
\]}

\begin{remark}\label{rem:spanning}
Condition 3$'$(ii) is the natural discrete-regressor analogue of the full-rank requirement 
in the continuous case, and is in the spirit of the identification conditions for 
maximum-rank-correlation estimators \citep{han1987non}. It accommodates mutually 
exclusive dummy variables: if $X_d$ consists of group indicators with an omitted reference 
category (as with the race and stop-reason dummies in Stanford Open Policing Project application in the main text), the pairs 
$(e_j, \boldsymbol{0}_{p_d})$, $j = 1, \ldots, p_d$ --- ``group $j$ versus the reference 
group'' --- easily satisfy the spanning requirement whenever each category has positive 
probability. 
\end{remark}

\begin{lemma}\label{lem:H_discrete}
Suppose Conditions~1$'$ and 3$'$ hold and $x_0$ is continuously distributed given $X$. Then 
for every $\theta \in \mathbb{R}^p$,
\[
\inf_{\tau \in [0,1]} \lambda_{\mathrm{min}}\big(\mathbb{H}(\theta, \tau)\big) \;>\; 0.
\]
Moreover, for every compact $\Theta \subset \mathbb{R}^p$, $\inf_{\theta\in\Theta} 
\inf_{\tau\in[0,1]} \lambda_{\mathrm{min}}(\mathbb{H}(\theta,\tau)) > 0$.
\end{lemma}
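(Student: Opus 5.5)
The plan is to prove that $a^{\top}\mathbb H(\theta,\tau)a>0$ for every unit vector $a\in\mathbb R^p$, and then to make this uniform in $(a,\tau)$, and in $\theta\in\Theta$, by continuity and compactness. Write $a=(a_c^{\top},a_d^{\top})^{\top}$. Every term in \eqref{eq:H_discrete} is nonnegative, since $\nabla_zF_0>0$ by Condition 3$'$(i), the densities are nonnegative, and $(a^{\top}(X_1-X_2))^2\geq0$. So it is enough to find, for each $a$, a single pair of discrete support points $(X_{d,1},X_{d,2})$ with positive probability, together with a set of positive Lebesgue measure in $(z,X_{c,1},X_{c,2})$ on which the integrand is bounded below by a positive constant times $(a^{\top}(X_1-X_2))^2$, where this square is itself not almost everywhere zero.

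The argument splits into two cases. If $a_d\neq0$, Condition 3$'$(ii) supplies an index $l$ with $a_d^{\top}(x_d^{(l)}-\tilde x_d^{(l)})\neq0$. Take $X_{d,1}=x_d^{(l)}$ and $X_{d,2}=\tilde x_d^{(l)}$, and restrict to $\Vert X_{c,i}\Vert_2\leq\rho$ and $z\in I$, where $\rho\le r_X$ is small and $I$ is a compact interval. Then $a^{\top}(X_1-X_2)=a_c^{\top}(X_{c,1}-X_{c,2})+a_d^{\top}(x_d^{(l)}-\tilde x_d^{(l)})$ stays bounded away from zero once $\rho$ is small relative to $|a_d^{\top}(x_d^{(l)}-\tilde x_d^{(l)})|$. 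If instead $a_d=0$, then $a_c\neq0$. Use any single support point, with $X_{d,1}=X_{d,2}=x_d^{(1)}$, and take $X_{c,1},X_{c,2}$ in the ball of radius $r_X$. The quantity $(a_c^{\top}(X_{c,1}-X_{c,2}))^2$ is positive on a set of positive measure, exactly as in the ball computation of \autoref{lemma2}.

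On these sets the remaining factors need to be bounded below. The argument of $\nabla_zF_0$, namely $z-X_2^{\top}\Delta\theta-\tau(X_1-X_2)^{\top}\Delta\theta$, ranges over a compact set, because $z\in I$, the continuous components $X_c$ are bounded, the finitely many discrete points are fixed, and $\tau\in[0,1]$. Since $\nabla_zF_0$ is continuous and strictly positive, it is bounded below there. The density arguments $z-X_i^{\top}\Delta\theta$ also lie in a compact set, so Condition 3$'$(iii) bounds the conditional densities below. This is where Condition 3$'$ is used in place of the compact window $[\underline z,\overline z]$ of Condition~\ref{condition3}: global positivity of $\nabla_zF_0$ and positivity of the density on every compact set remove the need to shrink the $X$-ball as $\Vert\Delta\theta\Vert_2$ grows. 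Together these give $a^{\top}\mathbb H(\theta,\tau)a\geq c(a,\theta,\tau)>0$.

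The main obstacle is uniformity over $\tau$, over the unit sphere of $a$, and over compact $\Theta$. Here the constant from the case split is not continuous in $a$ near the boundary $a_d=0$. I would avoid that issue by proving continuity of $(\theta,\tau,a)\mapsto a^{\top}\mathbb H(\theta,\tau)a$ directly, via dominated convergence, using the envelopes $g_0$ with $\int(1+\Vert X_c\Vert_2^4)g_0<\infty$ from Condition 1$'$(iii), boundedness of $\nabla_zF_0$, and $\mathbb E\Vert X\Vert_2^4<\infty$ to dominate the sums over $\mathcal X_d$. Continuity in $\theta$ also requires continuity of $f(\cdot,X_c\mid x_d)$ in its first argument, which follows from the bounded first derivative in Condition 1$'$(ii). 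A continuous, pointwise positive function on the compact set $\Theta\times[0,1]\times S^{p-1}$ has a positive minimum. This gives both claims of \autoref{lem:H_discrete}, with the first obtained by taking $\Theta=\{\theta\}$.
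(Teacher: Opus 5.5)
Your proposal is correct and follows essentially the same route as the paper. You reduce to a single pair of discrete support points and a compact region via nonnegativity of every term, split on $a_d\neq 0$ (using the spanning pairs of Condition 3$'$(ii)) versus $a_d=0$ (the isotropic ball computation), and get uniformity from joint continuity of $\mathbb H$ by dominated convergence plus compactness. The only cosmetic difference is that you minimize the quadratic form over $\Theta\times[0,1]\times S^{p-1}$, whereas the paper uses continuity of $(\theta,\tau)\mapsto\lambda_{\min}(\mathbb H(\theta,\tau))$; the two are equivalent.
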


\begin{proof}
  We first verify that $\mathbb{H}(\theta,\tau)$ 
is finite and jointly continuous in $(\theta, \tau)$. Each entry of the integrand in 
\eqref{eq:H_discrete} is bounded in absolute value by $\Vert\nabla_z F_0\Vert_{\infty}\, 
\Vert X_1 - X_2\Vert_2^2\, f_1 f_2$ with $f_l := f(z - X_l^{\top}\Delta\theta, X_{c,l} \mid 
X_{d,l})$. Integrating $z$ first against $f_2$ for fixed $(X_{c,2}, X_{d,2})$ gives the 
marginal conditional density value $f_{X_c \mid X_d}(X_{c,2} \mid X_{d,2})$, while $f_1 
\leq g_0(X_{c,1})$ by Condition~1$'$(iii). Hence, by $\Vert X_1 - X_2\Vert_2^2 \leq 2\Vert 
X_1\Vert_2^2 + 2\Vert X_2\Vert_2^2$, the total mass is bounded by
\[
C \Vert\nabla_z F_0\Vert_{\infty} \left[\int (1 + \Vert X_c\Vert_2^2)\, g_0(X_c)\, dX_c + 
\mathbb{E}\Vert X\Vert_2^2 + \sum_{x_d} p(x_d)\Vert x_d\Vert_2^2 \right] < \infty,
\]
using Condition~1$'$(iii) and $\mathbb{E}\Vert X\Vert_2^4 < \infty$ (which implies 
$\sum_{x_d} p(x_d)\Vert x_d\Vert_2^2 < \infty$). The integrand is continuous in $(\theta, 
\tau)$ for almost every $(z, X_{c,1}, X_{c,2})$ and every $(X_{d,1}, X_{d,2})$, by 
continuity of $\nabla_z F_0$ (Condition~1$'$(i)) and of $f$ in its first argument 
(Condition~1$'$(ii)); the display above provides an integrable envelope that is locally 
uniform in $\theta$. Dominated convergence therefore yields joint continuity of $(\theta, 
\tau) \mapsto \mathbb{H}(\theta, \tau)$.

Now fix $\theta$ and $\tau 
\in [0,1]$, and let $v = (v_c^{\top}, v_d^{\top})^{\top} \in \mathbb{R}^p$ with $\Vert 
v\Vert_2 = 1$. Since $\nabla_z F_0 > 0$ and all densities and weights are nonnegative, 
every term of the double sum in $v^{\top}\mathbb{H}(\theta,\tau)v$ is nonnegative, so
\begin{equation}\label{eq:oneterm}
v^{\top}\mathbb{H}(\theta,\tau)v \;\geq\; p(x_d)\,p(\tilde{x}_d) \int_{\mathcal{R}} 
\nabla_z F_0(\zeta)\, f_1 f_2\, \big[(X_1 - X_2)^{\top}v\big]^2\, dz\, dX_{c,1}\, dX_{c,2}
\end{equation}
for any fixed pair $(x_d, \tilde{x}_d)$ of support points and any measurable region 
$\mathcal{R}$, where $\zeta := z - X_2^{\top}\Delta\theta - \tau(X_1 - 
X_2)^{\top}\Delta\theta$, $X_1 = (X_{c,1}^{\top}, x_d^{\top})^{\top}$, $X_2 = 
(X_{c,2}^{\top}, \tilde{x}_d^{\top})^{\top}$. Take $\mathcal{R} = \mathcal{R}(r) := \{(z, 
X_{c,1}, X_{c,2}) : z \in [-1,1],\; \Vert X_{c,1}\Vert_2 \leq r,\; \Vert X_{c,2}\Vert_2 
\leq r\}$ for an $r \in (0, r_X]$ to be chosen. On $\mathcal{R}(r)$, the argument $\zeta$ 
lies in the compact interval
\[
\mathcal{A}_{\theta} := \left[-1 - (r + M_d)\Vert\Delta\theta\Vert_2 - 2(r + 
M_d)\Vert\Delta\theta\Vert_2,\; 1 + (r + M_d)\Vert\Delta\theta\Vert_2 + 2(r + 
M_d)\Vert\Delta\theta\Vert_2\right]
\]
for all $\tau \in [0,1]$, where $M_d := \max_l \max\{\Vert x_d^{(l)}\Vert_2, \Vert 
\tilde{x}_d^{(l)}\Vert_2\}$. By Condition~3$'$(i) and continuity of $\nabla_z F_0$,
\[
c_0(\theta) := \inf_{u \in \mathcal{A}_{\theta}} \nabla_z F_0(u) > 0,
\]
and this bound holds uniformly over $\tau \in [0,1]$ because $\zeta \in \mathcal{A}_{\theta}$ for 
every $\tau$. Similarly, the arguments $z - X_l^{\top}\Delta\theta$ of $f_1, f_2$ lie in a 
compact interval $I_{\theta}$ on $\mathcal{R}(r)$, so Condition~3$'$(iii) gives
\[
\delta(\theta) := \min_{x_d \in \{x_d^{(l)}, \tilde{x}_d^{(l)}: l \leq L\}}\; \inf_{z \in 
I_{\theta},\, \Vert X_c\Vert_2 \leq r_X} f(z, X_c \mid x_d) > 0.
\]
Now consider $v$ with $v_d \neq 0$. By Condition~3$'$(ii), there exists $l$ with $a := 
(x_d^{(l)} - \tilde{x}_d^{(l)})^{\top} v_d \neq 0$; take $(x_d, \tilde{x}_d) = (x_d^{(l)}, 
\tilde{x}_d^{(l)})$ in \eqref{eq:oneterm} and set $r = \min\{r_X, |a|/4\}$. On 
$\mathcal{R}(r)$,
\[
\left|(X_1 - X_2)^{\top}v\right| \;=\; \left|(X_{c,1} - X_{c,2})^{\top}v_c + a\right| 
\;\geq\; |a| - 2r\Vert v_c\Vert_2 \;\geq\; |a| - 2r \;\geq\; \frac{|a|}{2},
\]
so $[(X_1 - X_2)^{\top}v]^2 \geq a^2/4$ pointwise. Consequently, we have that 
\[
v^{\top}\mathbb{H}(\theta,\tau)v \;\geq\; p(x_d^{(l)})\, p(\tilde{x}_d^{(l)})\, 
c_0(\theta)\, \delta(\theta)^2\, \frac{a^2}{4}\, \cdot 2\, \mathrm{vol}(B_r)^2 \;>\; 0,
\]
where $\mathrm{vol}(B_r)$ is the volume of the $r$-ball in $\mathbb{R}^{p_c}$ and the 
factor $2$ is the length of $[-1,1]$.

We finally consider the case of $v$ with $v_d = 0$, $v_c \neq 0$, so $\Vert v_c\Vert_2 = 1$.  Take any single 
support point from the pairs, say $x_d = \tilde{x}_d = x_d^{(1)}$, and $r = r_X$ in 
\eqref{eq:oneterm}. Then $(X_1 - X_2)^{\top}v = (X_{c,1} - X_{c,2})^{\top}v_c$, and by 
symmetry of $B_{r}\times B_{r}$ (the cross term vanishes and the ball is isotropic),
\[
\int_{B_{r}}\!\int_{B_{r}} \big[(X_{c,1} - X_{c,2})^{\top}v_c\big]^2\, dX_{c,1}\, dX_{c,2} 
\;=\; 2\,\mathrm{vol}(B_{r}) \int_{B_{r}} (u^{\top}v_c)^2\, du \;=\; 
2\,\mathrm{vol}(B_{r})\, \kappa(r)\,\Vert v_c\Vert_2^2 \;=:\; c_1(r) > 0,
\]
with $\kappa(r) = \int_{B_r}(u^{\top}e)^2 du > 0$ independent of the unit vector $e$. Hence
\[
v^{\top}\mathbb{H}(\theta,\tau)v \;\geq\; p(x_d^{(1)})^2\, c_0(\theta)\, 
\delta(\theta)^2\, \cdot 2\, c_1(r_X) \;>\; 0.
\]
The above discussion shows that $v^{\top}\mathbb{H}(\theta,\tau)v > 0$ for 
every unit $v$, i.e., $\mathbb{H}(\theta,\tau)$ is positive definite; since it is a finite 
symmetric matrix, $\lambda_{\mathrm{min}}(\mathbb{H}(\theta,\tau)) > 0$.   Finally, to show the uniform boundedness, we only need to note that    $(\theta,\tau) \mapsto 
\lambda_{\mathrm{min}}(\mathbb{H}(\theta,\tau))$ is continuous on any compact set.
\end{proof}

\begin{remark}\label{rem:global_vs_local}
As in the continuous case, Condition~3$'$(i) and the full-support requirement in
Condition~3$'$(iii) (positivity of $f(\cdot, X_c \mid x_d)$ for all
$z\in\mathbb R$) supply the curvature condition used for \emph{global}
stability and are stronger than what point identification requires.  Together
with the remaining stochastic assumptions stated above, they permit the
continuous-regressor stability proof to be repeated conditional on the
discrete support points.
\end{remark}

\subsection{Convergence Rate of Raw Iterate $\widehat\theta_N$}
\label{appendix_b2}
This section gives the sharp rate and distribution of $\widehat\theta_N$, which will be frequently used in the proof of \autoref{theorem1}. 
Consider the following decomposition of $\mathbb{H}_0$: 
\[
 \mathbb H_0=\mathcal P_{\mathbb H_0}^{\top}
 \varLambda_{\mathbb H_0}\mathcal P_{\mathbb H_0},
 \qquad
 \mathcal P_{\mathbb H_0}\mathbb H_0
 \mathcal P_{\mathbb H_0}^{\top}=\varLambda_{\mathbb H_0}.
\]
 Choose a deterministic
$k_0$ so large that
$0<\gamma_k\lambda_{\max}(\mathbb H_0)<1$ for $k\geq k_0$, and define
\[
 A_{k_0-1}=\mathbb I_p,
 \qquad
 A_k=\prod_{\ell=k_0}^{k}(\mathbb I_p-\gamma_\ell\mathbb H_0)^{-1},
 \qquad
 \zeta_{0,k}=\Phi_k(\theta_0,\mathcal{W}_k)-\Phi_k(\theta_0).
\]
Define
\[
 \Sigma_N=\operatorname{Var}\!\left(
   \sum_{k=k_0}^{N}\gamma_k\mathcal P_{\mathbb H_0}A_k\zeta_{0,k}
 \right),
 \qquad
 s_{N,j}^2=e_j^{\top}\Sigma_Ne_j,
 \qquad D_N=\operatorname{Diag}(s_{N,1},\ldots,s_{N,p}).
\]
 We have the following results. 
\begin{theorem}\label{theorem2}
Let \autoref{condition1}--\ref{condition4} hold.  Then
for every fixed coordinate $j$,
\begin{equation}\label{lil_theta}
 \limsup_{N\to\infty}
 \frac{e_j^{\top}\mathcal P_{\mathbb H_0}A_N
       \Delta\widehat\theta_N}
      {\sqrt{2s_{N,j}^2\log\log(s_{N,j}^2)}}=1,
 \qquad
 \liminf_{N\to\infty}
 \frac{e_j^{\top}\mathcal P_{\mathbb H_0}A_N
       \Delta\widehat\theta_N}
      {\sqrt{2s_{N,j}^2\log\log(s_{N,j}^2)}}=-1
 \quad\text{a.s.}
\end{equation}
\end{theorem}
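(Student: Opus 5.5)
The plan is to linearize the raw recursion around $\theta_0$, rescale by $A_N$ so that the linear part telescopes, and then apply a martingale LIL coordinatewise to the rotated leading noise. By Theorem~\ref{theorem1}, which we prove in parallel (or first, via Robbins--Siegmund using Lemma~\ref{lemma2} and Lemma~\ref{lemma1}(2)--(3)), we have $\widehat\theta_k\to\theta_0$ a.s. Write the update as
\[
\Delta\widehat\theta_k=\Delta\widehat\theta_{k-1}+\gamma_k\Phi(\widehat\theta_{k-1})+\gamma_k\{\Phi_k(\widehat\theta_{k-1})-\Phi(\widehat\theta_{k-1})\}+\gamma_k\{\Phi_k(\widehat\theta_{k-1},\mathcal W_k)-\Phi_k(\widehat\theta_{k-1})\}.
\]
Use $\Phi(\theta)=-\int_0^1\mathbb H(\theta,\tau)d\tau\,\Delta\theta$ and note that $\mathbb H(\theta_0,\tau)=\mathbb H_0$. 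The linear term is then $-\gamma_k\mathbb H_0\Delta\widehat\theta_{k-1}$ plus a remainder $r_{1,k}=O(\gamma_k\|\Delta\widehat\theta_{k-1}\|_2^2)$, which follows from Lipschitz continuity of $\mathbb H(\cdot,\tau)$ under Condition~\ref{condition1}. The bias term is $r_{2,k}=O(\gamma_kh_k^{s_{\mathcal K}})$ by Lemma~\ref{lemma1}(2). Split the noise as $\zeta_{0,k}+\{\text{noise at }\widehat\theta_{k-1}-\zeta_{0,k}\}=:\zeta_{0,k}+r_{3,k}$. By the mean value theorem and Lemma~\ref{lemma1}(4), $r_{3,k}$ is a martingale difference with $\mathbb E_{k-1}\|r_{3,k}\|^2\le Ch_k^{-3}\|\Delta\widehat\theta_{k-1}\|^2$.

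Next, premultiply the recursion $\Delta\widehat\theta_k=(\mathbb I-\gamma_k\mathbb H_0)\Delta\widehat\theta_{k-1}+\gamma_k(\zeta_{0,k}+r_k)$ by $A_k$; since $A_k(\mathbb I-\gamma_k\mathbb H_0)=A_{k-1}$, this telescopes to
\[
\mathcal P_{\mathbb H_0}A_N\Delta\widehat\theta_N=\mathcal P_{\mathbb H_0}A_{k_0-1}\Delta\widehat\theta_{k_0-1}+\sum_{k=k_0}^N\gamma_k\mathcal P_{\mathbb H_0}A_k\zeta_{0,k}+\sum_{k=k_0}^N\gamma_k\mathcal P_{\mathbb H_0}A_kr_k .
\]
Because $\mathcal P_{\mathbb H_0}$ diagonalizes $\mathbb H_0$, the matrix $\mathcal P_{\mathbb H_0}A_k\mathcal P^\top_{\mathbb H_0}$ is diagonal with entries $\prod_\ell(1-\gamma_\ell\lambda_j)^{-1}\asymp\exp(\lambda_j\gamma_0k^{1-\alpha_\gamma}/(1-\alpha_\gamma))$. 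Each coordinate of the main sum is therefore a scalar martingale with independent increments. By Lemma~\ref{lemma1}(3), $s_{N,j}^2\asymp\gamma_N^2h_N^{-1}a_{N,j}^2\gamma_N^{-1}$, where $a_{N,j}$ is the $j$th diagonal entry; in particular $s^2_{N,j}\to\infty$. For the LIL we apply Stout's martingale LIL (equivalently, the Kolmogorov LIL for independent, non-identically distributed summands). Its conditions are $s_{N,j}\to\infty$ and increments bounded by $o(s_{N,j}/\sqrt{\log\log s_{N,j}^2})$, which we reach after truncating at that level. The truncation error is handled with the fourth moment bound $O(h_k^{-3})$ of Lemma~\ref{lemma1}(5): a Borel--Cantelli argument makes the tail pieces summably small, and this uses $2\alpha_\gamma-3\alpha_h>1$ from Condition~\ref{condition4}. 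Variance stabilization, i.e.\ conditional variance over $s^2_{N,j}\to1$, holds deterministically because the increments are independent.

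The main obstacle is showing that the remainder sum is $o(s_{N,j}\sqrt{\log\log s_{N,j}^2})$ a.s. The argument is a bootstrap. First obtain a crude rate $\|\Delta\widehat\theta_k\|_2=O(k^{-\delta})$ from the Robbins--Siegmund step, and compare each weighted remainder with $s_{N,j}$ using the exponential-weight sum estimates of Lemma~\ref{lemma4}; the effective weight concentrates on the last $\gamma_N^{-1}$ terms, so the sum behaves like the remainder's local size divided by $\gamma_N$. For the pieces: $r_{2}$ contributes $h_N^{s_{\mathcal K}}$ against a noise scale $\sqrt{\gamma_N/h_N}$, which is negligible since $s_{\mathcal K}\alpha_h>1/2$ and $\alpha_\gamma<1$ (if not sharply enough, the bias condition must be combined with the specific exponents). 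The quadratic term $r_1$ contributes $\|\Delta\widehat\theta\|^2$, negligible once the crude rate beats half the noise rate. The martingale $r_3$ has relative variance $\|\Delta\widehat\theta\|^2h_k^{-2}$, which requires iterating the rate bootstrap until $\delta>\alpha_h$. Once $\|\Delta\widehat\theta_k\|=O(\sqrt{\gamma_kh_k^{-1}\log\log k})$ is established in a first pass, all three remainders are of smaller order, and dividing by $\sqrt{2s^2_{N,j}\log\log s^2_{N,j}}$ gives \eqref{lil_theta}.
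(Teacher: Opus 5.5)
Your architecture is the paper's. You use the same linearization at $\mathbb H_0$. Your four pieces are the paper's: bias $r_2$ and quadratic $r_1$ are $\delta_{1,N}$ and $\delta_{2,N}$, noise at the truth $\zeta_{0,k}$ feeds $V_{1,N}$, and the noise difference $r_3$ feeds $V_{2,N}$. You telescope with $A_k$ in the same way and prove a coordinatewise LIL for the independent main sum under fourth-moment truncation; the paper uses Tomkins's Lindeberg-type LIL in Lemma~\ref{lemma6}, and Kolmogorov's LIL with truncation is equivalent here. You then run a rate bootstrap on the remainders.

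The main-term and bias steps are fine, and so is the quadratic step once iterated. The step that does not work as written is the one for $r_3$. Comparing its conditional variance with $s_{N,j}^2$ is not an almost-sure statement, and the assumptions give only second moments of $\nabla_\theta\Phi_k$ (Lemma~\ref{lemma1}(4)). The natural almost-sure tools are the weighted Robbins--Siegmund bound of Lemma~\ref{lemma7} and the martingale strong law $M_N=o(\langle M\rangle_N^{1/2}\log^{(1+\epsilon)/2}\langle M\rangle_N)$. Both lose a factor of roughly $N^{(1-\alpha_\gamma)/2}$, because $\log s_{N,j}^2\asymp N^{1-\alpha_\gamma}$ is polynomial in $N$. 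With these tools, $\Vert\Delta\widehat\theta_k\Vert_2=O(k^{-\delta})$ bounds the $r_3$ contribution in original coordinates only by $O(N^{-\delta-d/2})$ up to logarithms, where $d=2\alpha_\gamma-3\alpha_h-1$. That bound is $o(N^{-(\alpha_\gamma-\alpha_h)/2})$ only when $2\delta>1-\alpha_\gamma+2\alpha_h$, not when $\delta>\alpha_h$. Each pass of the bootstrap therefore gains $d/2$ in the exponent, and the iteration reaches the sharp exponent $(\alpha_\gamma-\alpha_h)/2$ precisely because $d>0$. This is where $2\alpha_\gamma-3\alpha_h>1$ from Condition~\ref{condition4} is really needed; the truncation for the main sum needs only $2\alpha_\gamma-\alpha_h>1$.

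Two smaller points. First, the rate you aim for in the ``first pass'' should be $\Vert\Delta\widehat\theta_k\Vert_2=O(\sqrt{\gamma_kh_k^{-1}\log k})$, not $O(\sqrt{\gamma_kh_k^{-1}\log\log k})$. Since $\log\log s_{k,j}^2\asymp\log k$, the LIL you are proving forces $(\mathcal P_{\mathbb H_0}\Delta\widehat\theta_k)_j$ above a positive multiple of $\sqrt{\gamma_kh_k^{-1}\log k}$ infinitely often. This is harmless, because every remainder comparison has polynomial slack.

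Second, the Robbins--Siegmund step used for Theorem~\ref{theorem1} gives consistency but no rate, since its contraction coefficient depends on $\Vert\Delta\widehat\theta_{k-1}\Vert_2$. To start the bootstrap you need a weighted version with an eventually uniform contraction. The paper gets this by running the recursion for $V_N$, whose contraction $\mathbb I_p-\gamma_N\mathbb H_0$ is deterministic (Lemma~\ref{lemma3}). It then transfers the rate to $\Delta\widehat\theta_N$ through the $\mathcal A_N$-recursion (Lemma~\ref{lemma5}).
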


\autoref{theorem2} provides the sharp rate and coordinate-wise LIL under
Conditions~\ref{condition1}--\ref{condition4}.  In particular, \autoref{lemma4} gives, for every
fixed $j$,
\[
 s_{N,j}^2\asymp
 N^{-\alpha_\gamma+\alpha_h}
 \exp\!\left(\frac{2\gamma_0\lambda_j}{1-\alpha_\gamma}
 N^{1-\alpha_\gamma}\right),
 \qquad
 e_j^{\top}\mathcal P_{\mathbb H_0}A_N
 \mathcal P_{\mathbb H_0}^{\top}e_j
 \asymp
 \exp\!\left(\frac{\gamma_0\lambda_j}{1-\alpha_\gamma}
 N^{1-\alpha_\gamma}\right).
\]
Thus \eqref{lil_theta} implies
$\Vert\Delta\widehat\theta_N\Vert_2=
 O_{\mathrm{a.s.}}(N^{-(\alpha_\gamma-\alpha_h)/2}
 \sqrt{\log N})$.  Detailed supporting lemmas follow in
Section~\ref{appendix_b3}.

\subsubsection{Development of \autoref{theorem2}}\label{appendix_b3}

This section provides detailed development of \autoref{theorem2}. Define
\begin{align*}
        V_N = & \gamma_N (\Phi_N(\widehat\theta_{N-1}, \mathcal W_N) - \Phi_N(\widehat\theta_{N-1})) + \gamma_{N-1} (\mathbb I_p - \gamma_{N}\mathbb{H}_0)(\Phi_{N-1}(\widehat\theta_{N-2}, \mathcal W_{N-1}) - \Phi_{N-1}(\widehat\theta_{N-2}))\\
        & +\cdots +  \gamma_1 \left(\mathbb I_p - \gamma_{N}\mathbb{H}_0\right)\cdots \left(\mathbb I_p - \gamma_{2}\mathbb{H}_0\right)(\Phi_1(\widehat\theta_0, \mathcal W_1) - \Phi_1(\widehat\theta_0)).
        \end{align*}
    The following lemma describes a preliminary a.s. convergence rate of $\Vert V_N\Vert^2_2$.
\begin{lemma}\label{lemma3}
    Let \autoref{condition1}--\autoref{condition4} hold, then for any $r>1$ we have that 
        \[
       \Vert V_N \Vert_2 ^2 = o\left(N^{-2\alpha_{\gamma}+\alpha_h + 1}\log^r(N)\right), \  a.s.
        \]
\end{lemma}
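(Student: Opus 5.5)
The plan is to write $V_N$ as the solution of a linear stochastic recursion and apply the first part of \autoref{as converence of sums} to $\Vert V_N\Vert_2^2$. By construction,
\[
V_N=(\mathbb I_p-\gamma_N\mathbb H_0)V_{N-1}+\gamma_N\xi_N,\qquad \xi_N:=\Phi_N(\widehat\theta_{N-1},\mathcal W_N)-\Phi_N(\widehat\theta_{N-1}),
\]
with $V_0=0$. The key properties of the innovation are:
\begin{itemize}
\item $\xi_N$ is a martingale difference with respect to $\mathcal F_{N-1}$, since $\widehat\theta_{N-1}$ is $\mathcal F_{N-1}$-measurable and $\mathcal W_N$ is independent of $\mathcal F_{N-1}$;
\item by \autoref{lemma1}(3), uniformly in $\theta$, $\mathbb E_{N-1}\Vert\xi_N\Vert_2^2\le C_{\Phi,2}h_N^{-1}$.
\end{itemize}

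\textbf{One-step bound.} Expanding the square, the cross term vanishes conditionally. With $\lambda_{\min}(\mathbb H_0)>0$ (\autoref{lemma2} at $\theta=\theta_0$) and $\gamma_N\to0$,
\[
\mathbb E_{N-1}\Vert V_N\Vert_2^2\le(1-c\gamma_N)\Vert V_{N-1}\Vert_2^2+C\gamma_N^2h_N^{-1}
=(1-cN^{-\alpha_\gamma})\Vert V_{N-1}\Vert_2^2+CN^{-2\alpha_\gamma+\alpha_h}
\]
for $N\ge k_0$.

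\textbf{Robbins--Siegmund step.} To reach the claimed order $N^{1-2\alpha_\gamma+\alpha_h}\log^r N$, I would not use the sharp contraction rate (which would give $N^{-\alpha_\gamma+\alpha_h}$). Instead I would mimic the weighting device of \autoref{C1.part1}. Set $W_N:=N^{2\alpha_\gamma-\alpha_h-1}\log^{-r}(N)\Vert V_N\Vert_2^2$. The ratio of consecutive weights is $1+O(N^{-1})$, and $N^{-1}=o(N^{-\alpha_\gamma})$ because $\alpha_\gamma<1$. Hence
\[
\mathbb E_{N-1}W_N\le(1-c'N^{-\alpha_\gamma})W_{N-1}+CN^{-1}\log^{-r}N .
\]
Since $r>1$, the forcing term is summable. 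The Robbins--Siegmund theorem \citep{robbins1971convergence} then shows that $W_N$ converges almost surely and that $\sum_N N^{-\alpha_\gamma}W_{N-1}<\infty$. Because $\sum_N N^{-\alpha_\gamma}=\infty$, the limit must be zero, so $\Vert V_N\Vert_2^2=o(N^{1-2\alpha_\gamma+\alpha_h}\log^rN)$ a.s.

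\textbf{Early terms.} The finitely many terms with $k<k_0$ enter $V_N$ only through a product of contracting factors. Each such term has finite second moment by \autoref{lemma1}(3), so these are absorbed into the initial value of the recursion.

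\textbf{Main obstacle.} The only delicate point is that the variance bound must hold uniformly over the random, possibly unbounded, argument $\widehat\theta_{N-1}$. The supremum over $\theta\in\mathbb R^p$ in \autoref{lemma1}(3) provides exactly this uniformity. If that supremum fails to hold literally, because of the $(F_0(\cdot)-F_0(\cdot))^2$ term in $\mathcal V_{\mathcal K}(\theta)$, I would bound that term using the boundedness of $F_0$ and $\mathbb E\Vert X\Vert_2^2<\infty$. This still gives $C h_N^{-1}$ uniformly.
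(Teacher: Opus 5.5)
Your proposal is correct and follows the paper's proof essentially step for step. It uses the same linear recursion for $V_N$ with a conditionally mean-zero innovation, the same one-step bound $(1-CN^{-\alpha_\gamma})\Vert V_{N-1}\Vert_2^2+CN^{-2\alpha_\gamma+\alpha_h}$ from \autoref{lemma1}(3), the same weight $N^{2\alpha_\gamma-\alpha_h-1}\log^{-r}(N)$, and the same Robbins--Siegmund argument in which $\sum_N N^{-\alpha_\gamma}=\infty$ forces the limit to zero. The opening reference to \autoref{as converence of sums} is not needed, since you, like the paper, apply \citet{robbins1971convergence} directly.
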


\begin{proof}[Proof of \autoref{lemma3}] 
   According to \autoref{lemma2}, we have that $\lambda_{\mathrm{min}}(\mathbb{H}_0) >0$.  Note that 
    \[
    V_{N} = \gamma_{N}(\Phi_{N}(\widehat\theta_{N-1}, \mathcal W_{N}) - \Phi_{N}(\widehat\theta_{N-1})) + (\mathbb I_{p} - \gamma_{N}\mathbb{H}_0)V_{N-1}, 
    \]
    and $\mathbb E_{N-1} [V_{N-1}^{\top}(\Phi_{N}(\widehat\theta_{N-1}, \mathcal W_{N}) - \Phi_{N}(\widehat\theta_{N-1})) ]=0 $. 
    Then according to \autoref{lemma1}(3), we have that 
    \begin{align*}
   \mathbb  E_{N-1}\left\Vert V_N\right\Vert^2_2 & = \Vert\left(\mathbb I_p - \gamma_{N}\mathbb{H}_0\right)V_{N-1} \Vert^2_2 + \gamma_N^2 \mathbb E_{N-1}\Vert\Phi_{N}(\widehat\theta_{N-1}, W_{N}) - \Phi_{N}(\widehat\theta_{N-1})\Vert^2_2\\
    &\leq  \left(1 - C\gamma_N\right)\left\Vert V_{N-1}\right\Vert^2_2 + C\gamma_N^2h_N^{-1} \leq \left(1 - CN^{-\alpha_{\gamma}}\right)\left\Vert V_{N-1}\right\Vert^2_2 + CN^{-2\alpha_{\gamma} + \alpha_{h}}.
    \end{align*}
    Define $a_N = N^{2\alpha_{\gamma}-\alpha_h - 1}\log^{-r}(N)$ for any $r>1$, we have 
    \begin{align*}
    & \mathbb E_{N-1}\left[a_N\left\Vert V_N\right\Vert^2_2\right]  \leq \frac{a_N}{a_{N-1}} \left( 1 - CN^{-\alpha_{\gamma}}\right)\left[a_{N-1}\left\Vert V_{N-1}\right\Vert^2_2\right]  + C(N\log^r(N))^{-1}. 
    \end{align*}
  When $2\alpha_{\gamma} - \alpha_h - 1>0 $,   for $N$ sufficiently large we have  \[\frac{N^{2\alpha_{\gamma} - \alpha_h - 1}}{(N-1)^{2\alpha_{\gamma} - \alpha_h - 1}} = \left(1+ \frac{1}{N-1}\right)^{2\alpha_{\gamma}-\alpha_h - 1}\leq 1+\frac{C}{N},\] and \[\frac{\log^r(N)}{\log^{r}(N-1)}\leq \left(1 + \frac{1}{(N-1)\log(N-1)}\right)^r\leq 1+\frac{C}{N}.\] So 
  \[
  \frac{a_N}{a_{N-1}}  \left( 1- CN^{-\alpha_{\gamma}}\right) \leq \left( 1 +CN^{-1}\right)\left(1 - CN^{-\alpha_{\gamma}}\right) \leq 1 - CN^{-\alpha_{\gamma}}
  \]
  for $N$ sufficiently large, then we have that for $k$ sufficiently large, 
  \[
  \mathbb E_{N-1}\left[a_N\left\Vert V_N\right\Vert^2_2\right]  \leq  \left( 1 - CN^{-\alpha_{\gamma}}\right)\left[a_{N-1}\left\Vert V_{N-1}\right\Vert^2_2\right] +C(N\log^r(N))^{-1}.
  \]
  Since $r>1$,  $\sum_{N=1}^{\infty}(N\log^r(N))^{-1}<\infty$, thus we have that $a_N\left\Vert V_N\right\Vert^2_2$ a.s. converges to a finite random variable by Theorem 1 of \citet{robbins1971convergence}. Furthermore, we have that 
 $\sum_{N=1}^{\infty} N^{-\alpha_{\gamma}}\left[a_{N-1}\left\Vert V_{N-1}\right\Vert^2_2\right] < \infty $ a.s. holds.
  This shows that $a_{N-1}\left\Vert V_{N-1}\right\Vert^2_2  \rightarrow0$ a.s. holds, which proves the result. 
\end{proof}

As will be seen later, the convergence rate provided in \autoref{lemma3} is quite rough, but  it will be used to provide a preliminary convergence rate for $ \widehat\theta_N$. To do this, we first provide a technical lemma. 

\begin{lemma}\label{lemma4} Let $\frac{1}{2}<\alpha<1$ and $0<C_0<1$, then 
(i) the following exists \[\lim_{j\rightarrow \infty}\frac{\prod_{l=1}^j (1 - C_0l^{-\alpha})}{ \exp\left(-\frac{C_0}{1-\alpha}j^{1-\alpha}\right)};\] (ii)  
    further, let $\varsigma>0$, and let $b(x)$ be a strictly positive,
    continuously differentiable function for all sufficiently large $x$ such that
    $b(x)\exp(\frac{\varsigma C_0}{1-\alpha}x^{1-\alpha})$ is increasing for
    $x\geq x^*$, $\lim_{x\rightarrow \infty}b^{\prime}(x)x^{\alpha}/b(x)=0$,
    and $\int_{1}^{k} (x^{\alpha-1}b(x) + x^{\alpha}b^{\prime}(x))
    \exp(\frac{\varsigma C_0}{1-\alpha}x^{1-\alpha})dx\uparrow \infty$,
    then the following limit exists:
    \[
\lim_{k\rightarrow \infty} \frac{\sum_{j=1}^k b(j)\left(\prod_{l=1}^j (1 - C_0l^{-\alpha})\right)^{-\varsigma}}{ k^{\alpha}b(k)\exp\left(\frac{\varsigma C_0}{1-\alpha}k^{1-\alpha}\right)}. 
    \]
\end{lemma}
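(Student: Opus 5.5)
For part (i) I will pass to logarithms and compare the product with the exponential. Write
\[
\log\prod_{l=1}^j(1-C_0l^{-\alpha})=-C_0\sum_{l=1}^j l^{-\alpha}-\sum_{l=1}^j\sum_{m\geq 2}\frac{C_0^m l^{-m\alpha}}{m},
\]
which is legitimate because $0<C_0<1$ gives $C_0l^{-\alpha}<1$ for every $l\geq1$. Since $\alpha>1/2$, the double sum is dominated by $C\sum_l l^{-2\alpha}<\infty$, so it converges to a finite constant. For the first sum I will use the Euler--Maclaurin (integral comparison) estimate
\[
\sum_{l=1}^j l^{-\alpha}=\frac{j^{1-\alpha}}{1-\alpha}+\zeta_\alpha+o(1),
\]
where the constant $\zeta_\alpha$ exists because $l^{-\alpha}-\int_l^{l+1}x^{-\alpha}dx=O(l^{-\alpha-1})$ is summable. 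Subtracting $-\frac{C_0}{1-\alpha}j^{1-\alpha}$, the log-ratio therefore converges to a finite limit, and the ratio itself tends to a strictly positive constant $\kappa_0$.

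For part (ii), by (i) the product raised to the power $-\varsigma$ equals $(\kappa_0^{-\varsigma}+o(1))\exp\!\big(\frac{\varsigma C_0}{1-\alpha}j^{1-\alpha}\big)$. Because the summands are positive and the denominator diverges, the $o(1)$ factor can be absorbed by a Ces\`aro/Toeplitz argument. That argument needs the target partial sums to diverge, which the assumption $\int_1^k(\cdots)\uparrow\infty$ supplies. It therefore suffices to find the asymptotics of $S_k:=\sum_{j\leq k}g(j)$ with $g(x)=b(x)e^{\phi(x)}$ and $\phi(x)=\frac{\varsigma C_0}{1-\alpha}x^{1-\alpha}$.

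I will apply Stolz--Ces\`aro with $T_k:=k^{\alpha}b(k)e^{\phi(k)}$, which is eventually increasing and divergent. This reduces the problem to the ratio
\[
\frac{g(k)}{T_k-T_{k-1}}.
\]
By the mean value theorem, $T_k-T_{k-1}=T'(\xi_k)$ for some $\xi_k\in(k-1,k)$, and
\[
T'(x)=e^{\phi(x)}\big(\alpha x^{\alpha-1}b+x^{\alpha}b'+\varsigma C_0 b\big)=b(x)e^{\phi(x)}\Big(\varsigma C_0+\alpha x^{-1}+\tfrac{x^{\alpha}b'(x)}{b(x)}\Big).
\]
The hypothesis $x^{\alpha}b'/b\to0$ then gives $T'(x)\sim\varsigma C_0\,g(x)$.

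It remains to replace $g(\xi_k)$ by $g(k)$. I need $g(k)/g(\xi_k)\to1$, which holds because
\[
\log g(k)-\log g(k-1)=\int_{k-1}^{k}\big(b'/b+\phi'\big)=o(k^{-\alpha})+O(k^{-\alpha})\to0.
\]
The Stolz limit is therefore $1/(\varsigma C_0)$, and the limit in (ii) equals $\kappa_0^{-\varsigma}/(\varsigma C_0)$. The monotonicity hypothesis on $b\,e^{\phi}$ is used only to ensure that $T_k$ is eventually increasing, as Stolz requires.

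I expect the main obstacle to be the uniform control needed to absorb the $o(1)$ from (i) into the partial sums. This is the point where the divergence condition must be used carefully: given $\epsilon>0$, split the sum at an index $j_\epsilon$, note that the head sum is fixed while the denominator diverges, and handle the tail with the bounds $(\kappa_0^{-\varsigma}\pm\epsilon)$.
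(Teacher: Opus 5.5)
Your proof is correct. Part (i) is essentially the paper's argument: logarithms, a remainder dominated by $C\sum_l l^{-2\alpha}$, and convergence of $\sum_{l\le j}l^{-\alpha}-\frac{j^{1-\alpha}}{1-\alpha}$.

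Part (ii) takes a genuinely different route. The paper works in three steps:
\begin{enumerate}
\item It uses the monotonicity hypothesis to sandwich $\sum_j b(j)e^{\phi(j)}$ between $\int b\,e^{\phi}$ and $\int b\,e^{\phi}+b(k)e^{\phi(k)}$.
\item It integrates by parts to extract $\frac{k^{\alpha}b(k)}{\varsigma C_0}e^{\phi(k)}$.
\item It removes the residual $\int(\alpha x^{\alpha-1}b+x^{\alpha}b')e^{\phi}$ by L'Hospital's rule, which is where its integral-divergence hypothesis is invoked.
\end{enumerate}
You instead apply Stolz--Ces\`aro directly to the discrete sums. You evaluate $T_k-T_{k-1}$ by the mean value theorem and use $g(k)/g(\xi_k)\to1$. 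This bypasses the sum--integral comparison and the integration by parts. It also identifies the limit explicitly as $\kappa_0^{-\varsigma}/(\varsigma C_0)$, whereas the paper only asserts existence.

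Your route further shows that the only substantive hypothesis is $x^{\alpha}b'(x)/b(x)\to0$. Because $(\log g)'(x)=x^{-\alpha}\{\varsigma C_0+x^{\alpha}b'(x)/b(x)\}$, the function $g$ is automatically eventually increasing. Hence $T_k\uparrow\infty$, and the integral-divergence hypothesis is never used. This matters: that hypothesis fails for $b(x)=x^{a}$ with $a\le -1$, where the integrand equals $(1+a)x^{\alpha-1+a}e^{\phi}\le 0$. The paper itself needs such cases, e.g.\ $b(k)=k^{\alpha_2}$ with arbitrary $\alpha_2\in\mathbb{R}$ in the proof of Lemma~\ref{as converence of sums}, and $b(k)=k^{-\alpha_\gamma-s_{\mathcal K}\alpha_h}$ in the proof of Lemma~\ref{lemma5}.

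Accordingly, a few of your attributions are slightly off, though none is a gap:
\begin{itemize}
\item The eventual monotonicity and unboundedness of $T_k$, and the divergence of the partial sums, all follow from $x^{\alpha}b'/b\to0$ and positivity of $b$, not from the integral hypothesis.
\item Monotonicity of $T_k$ would follow from $T'>0$ even without the stated hypothesis on $b\,e^{\phi}$.
\item The Toeplitz absorption of the $o(1)$ from (i) needs only $T_k\to\infty$ with $S_k/T_k$ bounded, and your Stolz step supplies both.
\end{itemize}
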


\begin{proof}[Proof of \autoref{lemma4}]
To show (i), note that  for $x\in(0,1)$, $\log(1-x)\leq -x$. So sequence 
\[
\log\left(\prod_{l=1}^j(1-C_0l^{-\alpha})\right) +  C_0\sum_{l=1}^j l^{-\alpha} = \sum_{l=1}^j \left(\log(1- C_0l^{-\alpha}) + C_0l^{-\alpha}\right)
\]
is non-increasing with respect to $j$. On the other side, Taylor expansion leads to 
\[
\left|\log\left(\prod_{l=1}^j(1-C_0l^{-\alpha})\right) +  C_0\sum_{l=1}^j l^{-\alpha} \right|\leq C\sum_{l=1}^j l^{-2\alpha},
\]
where the RHS is upper bounded uniformly w.r.t. $j$ because $2\alpha>1$. 
This implies that \[
\lim_{j\rightarrow \infty}\left[\log\left(\prod_{l=1}^j(1-C_0l^{-\alpha})\right) +  C_0\sum_{l=1}^j l^{-\alpha}\right]
\]
exists. Also note that $\sum_{l=1}^j l^{-\alpha} - \frac{1}{1-\alpha}j^{1-\alpha}$ is bounded and decreasing, so the limit also exists. This leads to that \[
\lim_{j\rightarrow \infty}\left[\log\left(\prod_{l=1}^j(1-C_0l^{-\alpha})\right) +  \frac{C_0}{1-\alpha}j^{1-\alpha}\right]
\]exists.  Denote such limit as $C_{1}$, we have that  
\[
 \lim_{j\rightarrow \infty}\exp\left(\log\left(\prod_{l=1}^j(1-C_0l^{-\alpha})\right) +  \frac{C_0}{1-\alpha}j^{1-\alpha}\right) = \exp(C_1).
\]
This proves the result.
 
 To prove (ii), note that from (i), for any fixed $\varsigma$, 
 \[\delta_j\equiv \frac{ \left(\prod_{l=1}^j (1 - C_0l^{-\alpha})\right)^{-\varsigma}}{  \exp\left(\frac{\varsigma C_0}{1-\alpha}j^{1-\alpha}\right)}\rightarrow \delta_{\infty}\]
 so 
 \[
 \sum_{j=1}^k b(j)\left(\prod_{l=1}^j (1 - C_0l^{-\alpha})\right)^{-\varsigma} = \delta_{\infty} \sum_{j=1}^k b(j)\exp\left(\frac{\varsigma C_0}{1-\alpha}j^{1-\alpha}\right) + \sum_{j=1}^k (\delta_j-\delta_{\infty})b(j)\exp\left(\frac{\varsigma C_0}{1-\alpha}j^{1-\alpha}\right).
 \]
 Since $b(x)\exp(\frac{\varsigma C_0}{1-\alpha}x^{1-\alpha})$ eventually increases for sufficiently large $x$, we have that 
 \[
 \frac{\sum_{j=1}^k (\delta_j-\delta_{\infty})b(j)\exp\left(\frac{\varsigma C_0}{1-\alpha}j^{1-\alpha}\right)}{\sum_{j=1}^k b(j)\exp\left(\frac{\varsigma C_0}{1-\alpha}j^{1-\alpha}\right)}\rightarrow 0,
 \]
and we only need to look at the ratio
\[\frac{\sum_{j=1}^k b(j)\exp\left(\frac{\varsigma C_0}{1-\alpha}j^{1-\alpha}\right)}{k^{\alpha}b(k)\exp(\frac{\varsigma C_0}{1-\alpha}k^{1-\alpha})}.\]  Note that for $k\geq x^*$, $b(x)\exp\left(\frac{\varsigma C_0}{1-\alpha}x^{1-\alpha}\right)$ is increasing, so 
\begin{align*}
\int_{[x^*]}^{k} b(x)\exp\left(\frac{\varsigma C_0}{1-\alpha}x^{1-\alpha}\right)dx & \leq \sum_{j=[x^*]+1}^k b(j)\exp\left(\frac{\varsigma C_0}{1-\alpha}j^{1-\alpha}\right)\\
& \leq \int_{[x^*]}^{k} b(x)\exp\left(\frac{\varsigma C_0}{1-\alpha}x^{1-\alpha}\right)dx + b(k)\exp\left(\frac{\varsigma C_0}{1-\alpha}k^{1-\alpha}\right).
\end{align*}
For $\int_{[x^*]}^{k} b(x)\exp(\frac{\varsigma C_0}{1-\alpha}x^{1-\alpha})dx$, integration by part leads to 
\begin{align*}
\int_{[x^*]}^{k} b(x)\exp\left(\frac{\varsigma C_0}{1-\alpha}x^{1-\alpha}\right)dx & = \int_{[x^*]}^{k} \frac{x^{\alpha}b(x)}{\varsigma C_0}\varsigma C_0 x^{-\alpha}\exp\left(\frac{\varsigma C_0}{1-\alpha}x^{1-\alpha}\right)dx\\
& = \frac{k^{\alpha}b(k)}{\varsigma C_0}\exp\left(\frac{\varsigma C_0}{1-\alpha}k^{1-\alpha}\right) - C(x^*)\\
&- \frac{1}{\varsigma C_0}\int_{[x^*]}^{k} (\alpha x^{\alpha-1}b(x) + x^{\alpha}b^{\prime}(x))\exp\left(\frac{\varsigma C_0}{1-\alpha}x^{1-\alpha}\right)dx,
\end{align*}
where $C(x^*)$ is a constant depending only on $x^*$. Using L'Hospital's rule, we have that 
\begin{align*}
& \lim_{k\rightarrow \infty} \frac{\int_{[x^*]}^{k} (\alpha x^{\alpha-1}b(x) + x^{\alpha}b^{\prime}(x))\exp\left(\frac{\varsigma C_0}{1-\alpha}x^{1-\alpha}\right)dx}{k^{\alpha}b(k)\exp(\frac{\varsigma C_0}{1-\alpha}k^{1-\alpha})} = \lim_{k\rightarrow \infty}\frac{ \alpha k^{\alpha-1}b(k) + k^{\alpha}b^{\prime}(k)}{ \alpha k^{\alpha-1}b(k) + k^{\alpha}b^{\prime}(k) + \varsigma C_0q(k) } = 0.
\end{align*}
So the following exists 
\[
\lim_{k\rightarrow \infty}\frac{\int_{[x^*]}^{k} b(x)\exp\left(\frac{\varsigma C_0}{1-\alpha}x^{1-\alpha}\right)dx}{ k^{\alpha}b(k)\exp\left(\frac{\varsigma C_0}{1-\alpha}k^{1-\alpha}\right)}.
\]
This shows the result.
\end{proof}

Combine \autoref{lemma3} and \autoref{lemma4}, we   can  immediately provide an a.s. convergence rate  for $\Vert\Delta\widehat\theta_N\Vert^2_2$. But to highlight how the convergence rate of $\Vert\Delta\widehat\theta_N\Vert^2_2$ can be accelerated as we will do later, we first provide the following general results. 

\begin{lemma}\label{lemma5}
    Let \autoref{condition1}--\autoref{condition4}  hold. Then for any $b(x)\downarrow 0$    satisfying all the requirements in \autoref{lemma4} with $\alpha = \alpha_{\gamma}$, if $\Vert V_N\Vert^2_2 = O(b(N))$ a.s. holds, there holds  
 \[ \Vert\Delta\widehat\theta_N \Vert^2_2 = O(b(N)\vee N^{-2s_{\mathcal K}\alpha_h}), \ a.s.\]
 \end{lemma}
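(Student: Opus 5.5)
The plan is to compare the raw iterate with a linearized recursion. The difference between the two will be driven only by the accumulated martingale noise $V_N$, the kernel bias of order $h_k^{s_{\mathcal K}}$, and a curvature-dependent contraction.

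Using \eqref{intuition} and \autoref{lemma1}(1), write $\Phi(\theta)=-\overline{\mathbb H}(\theta)\Delta\theta$ with $\overline{\mathbb H}(\theta)=\int_0^1\mathbb H(\theta,\tau)d\tau$. The recursion then reads
\[
\Delta\widehat\theta_k=(\mathbb I_p-\gamma_k\overline{\mathbb H}(\widehat\theta_{k-1}))\Delta\widehat\theta_{k-1}+\gamma_k(\Phi_k-\Phi)(\widehat\theta_{k-1})+\gamma_k\xi_k ,
\]
where $\xi_k=\Phi_k(\widehat\theta_{k-1},\mathcal W_k)-\Phi_k(\widehat\theta_{k-1})$. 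Since $V_k=(\mathbb I_p-\gamma_k\mathbb H_0)V_{k-1}+\gamma_k\xi_k$, set $U_k=\Delta\widehat\theta_k-V_k$. This gives
\[
U_k=(\mathbb I_p-\gamma_k\overline{\mathbb H}(\widehat\theta_{k-1}))U_{k-1}+\gamma_k(\Phi_k-\Phi)(\widehat\theta_{k-1})+\gamma_k(\mathbb H_0-\overline{\mathbb H}(\widehat\theta_{k-1}))V_{k-1}.
\]
This recursion is pathwise and contains no martingale term.

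The first step is a global a.s. boundedness, and in fact consistency, of $\widehat\theta_k$. I would run Robbins--Siegmund on $\Vert\Delta\widehat\theta_k\Vert_2^2$. The cross term $-2\gamma_k\Delta\theta^\top\overline{\mathbb H}\Delta\theta\le0$ by \autoref{lemma2}. The bias contributes at most $2\gamma_k C h_k^{s_{\mathcal K}}\Vert\Delta\theta\Vert_2\le \gamma_k h_k^{s_{\mathcal K}}(1+\Vert\Delta\theta\Vert_2^2)$. The noise contributes $C\gamma_k^2h_k^{-1}$ by \autoref{lemma1}(3). Together with the summability facts recorded after \autoref{condition4} ($\sum\gamma_k^2h_k^{-1}<\infty$, and $\sum\gamma_kh_k^{s_{\mathcal K}}<\infty$ because $s_{\mathcal K}\alpha_h>1/2$ and $\alpha_\gamma>1/2$), this yields a.s. convergence of $\Vert\Delta\widehat\theta_k\Vert_2$ and $\sum_k\gamma_k\Delta\widehat\theta_{k-1}^\top\overline{\mathbb H}\Delta\widehat\theta_{k-1}<\infty$. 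On the bounded path, \autoref{lemma2} gives a uniform lower bound $c(\omega)>0$ on $\lambda_{\min}(\overline{\mathbb H}(\widehat\theta_{k-1}))$. Combined with $\sum\gamma_k=\infty$, this forces $\Delta\widehat\theta_k\to0$ a.s.

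With consistency in hand, continuity of $\mathbb H(\theta,\tau)$ in $\theta$, which follows from bounded $\nabla_{zz}F_0$ and dominated convergence as in \autoref{lemma2}(i), gives $\Vert\overline{\mathbb H}(\widehat\theta_{k-1})-\mathbb H_0\Vert\le C\Vert\Delta\widehat\theta_{k-1}\Vert_2\to0$. For large $k$ we then have $\Vert\mathbb I_p-\gamma_k\overline{\mathbb H}(\widehat\theta_{k-1})\Vert_{\rm op}\le1-c\gamma_k$. Taking norms in the $U_k$ recursion,
\[
\Vert U_k\Vert_2\le(1-c\gamma_k)\Vert U_{k-1}\Vert_2+C\gamma_kh_k^{s_{\mathcal K}}+C\gamma_k\Vert\Delta\widehat\theta_{k-1}\Vert_2\Vert V_{k-1}\Vert_2 .
\]
Iterating with \autoref{lemma4}, the bias term yields $O(h_N^{s_{\mathcal K}})=O(N^{-s_{\mathcal K}\alpha_h})$. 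For the last term I would write $\Vert\Delta\widehat\theta_{k-1}\Vert_2\le\Vert U_{k-1}\Vert_2+\Vert V_{k-1}\Vert_2$. The part $\gamma_k\Vert V_{k-1}\Vert_2\Vert U_{k-1}\Vert_2$ is $o(\gamma_k)\Vert U_{k-1}\Vert_2$, since $V_k\to0$, so it is absorbed into the contraction. The remaining forcing $\gamma_k\Vert V_{k-1}\Vert_2^2=O(\gamma_kb(k))$ produces, via \autoref{lemma4}(ii) with $\varsigma=1$, a contribution of order $b(N)$. That is at most $b(N)^{1/2}$ once $b\to0$. Hence $\Vert U_N\Vert_2=O(b(N)^{1/2}+N^{-s_{\mathcal K}\alpha_h})$, and $\Vert\Delta\widehat\theta_N\Vert_2^2\le2\Vert U_N\Vert_2^2+2\Vert V_N\Vert_2^2=O(b(N)\vee N^{-2s_{\mathcal K}\alpha_h})$.

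The main obstacle is the first step, global consistency without any initial localization. Everything hinges on the bias term in Robbins--Siegmund being handled with the $(1+\Vert\Delta\theta\Vert_2^2)$ trick, and on \autoref{lemma2} giving curvature uniformly on bounded sets. A secondary technical point is checking that $b(x)$ and $h_x^{s_{\mathcal K}}$ meet the regularity requirements of \autoref{lemma4}(ii); for polynomial-times-log rates this is routine.
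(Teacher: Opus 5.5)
Your proposal is correct and follows essentially the paper's route. The paper also passes to the noise-free recursion for $\mathcal A_N=\Delta\widehat\theta_N-V_N$ (your $U_N$), bounds the kernel bias by $C\gamma_N h_N^{s_{\mathcal K}}$, absorbs the nonlinear remainder into the contraction using a.s.\ consistency, and finishes with \autoref{lemma4}. The differences are cosmetic: the paper contracts with the fixed $\mathbb H_0$ and bounds the quadratic term $\delta_{2,N}=\int_0^1(\mathbb H(\theta_0,\tau)-\mathbb H(\widehat\theta_{N-1},\tau))d\tau\,\Delta\widehat\theta_{N-1}$ by $2C\Vert\mathcal A_{N-1}\Vert_2^2+2C\Vert V_{N-1}\Vert_2^2$, where you instead contract with $\overline{\mathbb H}(\widehat\theta_{k-1})$ and carry the bilinear term $(\mathbb H_0-\overline{\mathbb H})V_{k-1}$; and the paper imports consistency from the Robbins--Siegmund argument in the proof of \autoref{theorem1}, whereas you rederive it inline.
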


 \begin{proof}[Proof of \autoref{lemma5}] 
Note that we can decompose the dynamics of $\widehat\theta_N$ as 
\[
\Delta\widehat\theta_{N} -V_N = \left(\mathbb{I}_p - \gamma_{N}\mathbb{H}_0\right)(\Delta\widehat\theta_{N-1} - V_{N-1}) + \gamma_N \delta_{1,N} + \gamma_N \delta_{2,N}
\]
where $\delta_{1,N} = \Phi_N(\widehat\theta_{N-1}) - \Phi(\widehat\theta_{N-1})$, and $\delta_{1,N} = \int_0^1(\mathbb{H}(\theta_0, \tau) - \mathbb{H}(\widehat\theta_{N-1}, \tau))\Delta\widehat\theta_{N-1}d\tau$. Note that $\gamma_N\Vert \delta_{1,N}\Vert_2 \leq CN^{-\alpha_{\gamma} - s_{\mathcal K}\alpha_h}$. Under \autoref{condition1},  we have that $\mathbb{H}(\theta, \tau)$ is Lipschitz with respect to $\theta$  uniformly for all $\tau$, that is, there exists a positive constant $C$ such that for any $\theta_1$ and $\theta_2$, there holds $\sup_{\tau\in[0,1]}\Vert \mathbb{H}(\theta_1, \tau) -\mathbb{H}(\theta_2, \tau) \Vert_F \leq C\Vert \theta_1 - \theta_2\Vert_2 $. So  $\Vert\delta_{2,N}\Vert_2\leq C\Vert \Delta\widehat\theta_{N-1}\Vert^2_2$. Define \begin{equation}
    \mathcal{A}_N = \Delta\widehat\theta_N - V_{N},
\end{equation} then $\Vert\delta_{2,N}\Vert_2\leq C \Vert \mathcal{A}_{N-1} + V_{N-1}\Vert^2_2\leq 2C\Vert \mathcal{A}_{N-1}\Vert^2_2  + 2C\Vert V_{N-1}\Vert^2_2$. Since $\Delta\widehat\theta_N$ and $V_N$ are both $o(1)$ a.s.,
$\mathcal A_N=o(1)$ a.s.  Let $c_0>0$ be the eventual linear contraction
constant of $I-\gamma_N\mathbb H_0$.  On each probability-one path choose
$N_0$ so large that
$2C\Vert\mathcal A_{N-1}\Vert_2\leq c_0/2$ for $N\geq N_0$.
Then the quadratic term satisfies
$2C\gamma_N\Vert\mathcal A_{N-1}\Vert_2^2
\leq(c_0/2)\gamma_N\Vert\mathcal A_{N-1}\Vert_2$ and is absorbed by the
linear contraction.  Thus, for $N$ sufficiently large on that path, 
\[
\Vert \mathcal{A}_{N}\Vert_2 \leq (1 - CN^{-\alpha_{\gamma}})\Vert \mathcal{A}_{N-1}\Vert_2 +  CN^{-\alpha_{\gamma} - s_{\mathcal K}\alpha_h} + CN^{-\alpha_{\gamma}} b(N)   
\]
because we assume that $\Vert V_{N-1}\Vert_2^2 = O(b(N))$ a.s. holds. 
Then for almost all paths, there exists a (path-specific) $N_0$, such that 
\begin{align*}
\Vert \mathcal{A}_N\Vert_2 &  \leq \left(\prod_{k=N_0}^N (1 - Ck^{-\alpha_{\gamma}})\right)C\sum_{k=N_0}^N k^{-\alpha_{\gamma}}(k^{-s_{\mathcal K}\alpha_{h}} +b(k))\left(\prod_{j=N_0}^k (1 - Cj^{-\alpha_{\gamma}})\right)^{-1}\\
& + \left(\prod_{k=N_0+1}^N (1 - Ck^{-\alpha_{\gamma}})\right)\Vert \mathcal{A}_{N_0}\Vert_2. 
\end{align*}
So $\Vert \mathcal{A}_N\Vert_2 = O(N^{- s_{\mathcal K}\alpha_h}\vee b(N))$ a.s. holds according to \autoref{lemma4}. Finally, note that $\Vert\Delta\widehat\theta_N\Vert _2\leq \Vert \mathcal{A}_N\Vert_2  +\Vert V_{N} \Vert_2 = O(N^{-s_{\mathcal K}\alpha_h}\vee \sqrt{b(N)})$ a.s. holds. This proves the result. 
\end{proof}

Obviously, if we specify $b(N) = N^{-2\alpha_{\gamma} + \alpha_h +1}\log^r(N)$, then  \autoref{lemma3} and \autoref{lemma5} immediately lead to that $\Vert\Delta\widehat\theta_N \Vert^2_2 = O(N^{-2\alpha_{\gamma}+\alpha_h + 1 }\log^r(N))$ a.s. for any $r>1$. Again,  we point out that this rate is slow.  However, as we have demonstrated before, such preliminary rate can be used to build the sharp rate for $\Delta\widehat\theta_N$. To achieve sharp rate, we  need a more delicate decomposition for $V_{N}$. In particular, decompose $V_{N} = V_{1,N} + V_{2,N}$, where 
    \begin{align*}
        V_{1,N} = & \gamma_N \left(\Phi_N(\theta_{0}, \mathcal W_N) - \Phi_N(\theta_{0})\right) + \gamma_{N-1} \left(\mathbb I_p - \gamma_{N}\mathbb{H}_0\right)\left(\Phi_{N-1}(\theta_{0}, \mathcal W_{N-1}) - \Phi_{N-1}(\theta_{0})\right)+\\
        & \cdots +  \gamma_1 \left(\mathbb{I}_p - \gamma_{N}\mathbb{H}_0\right)\cdots \left(\mathbb I_p - \gamma_{2}\mathbb{H}_0\right)\left(\Phi_1(\theta_{0}, \mathcal W_1) - \Phi_1(\theta_{0})\right).
        \end{align*}
and 
\begin{align*}
        V_{2,N} = & \gamma_N\int_{0}^{1} \partial_{\theta}\left(\Phi_N(\theta_{0} + \tau\Delta\widehat\theta_{N-1}, \mathcal  W_N) - \Phi_N(\theta_{0} + \tau\Delta\widehat\theta_{N-1})\right)d\tau \Delta\widehat\theta_{N-1}\\
        & + \gamma_{N-1} \left(\mathbb I_p - \gamma_{N}\mathbb{H}_0\right)\int_{0}^{1} \partial_{\theta}\left(\Phi_{N-1}(\theta_{0} + \tau\Delta\widehat\theta_{N-2}, \mathcal  W_{N-1}) - \Phi_{N-1}(\theta_{0} + \tau\Delta\widehat\theta_{N-2})\right)d\tau \Delta\widehat\theta_{N-2}\\
        & +\cdots +  \gamma_1 \left(\mathbb I_p - \gamma_{N}\mathbb{H}_0\right)\cdots \left(\mathbb I_p - \gamma_{2}\mathbb{H}_0\right) \int_{0}^{1} \partial_{\theta}\left(\Phi_1(\theta_{0} + \tau\Delta\widehat\theta_0, \mathcal  W_1) - \Phi_1(\theta_{0} + \tau\Delta\widehat\theta_0)\right)d\tau \Delta\widehat\theta_0.
        \end{align*}
We present the asymptotic behaviors of $V_{1,N}$ and $V_{2,N}$.  

\begin{lemma}\label{lemma6}
 Let \autoref{condition1}--\autoref{condition4} hold.  With $A_N$,
 $\Sigma_N$, and $s_{N,j}^2$ defined above, for every fixed
 $j=1,\ldots,p$,
 \begin{equation*}
  \limsup_{N\to\infty}
  \frac{e_j^{\top}\mathcal P_{\mathbb H_0}A_NV_{1,N}}
       {\sqrt{2s_{N,j}^2\log\log(s_{N,j}^2)}}=1,
  \qquad
  \liminf_{N\to\infty}
  \frac{e_j^{\top}\mathcal P_{\mathbb H_0}A_NV_{1,N}}
       {\sqrt{2s_{N,j}^2\log\log(s_{N,j}^2)}}=-1
  \quad\text{a.s.}
 \end{equation*}
\end{lemma}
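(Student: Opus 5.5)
The plan is to turn $\mathcal P_{\mathbb H_0}A_NV_{1,N}$ into a martingale with independent increments and apply a LIL for triangular-free weighted sums. Unrolling the recursion $V_{1,N}=\gamma_N\zeta_{0,N}+(\mathbb I_p-\gamma_N\mathbb H_0)V_{1,N-1}$ gives, after the deterministic initial segment $k<k_0$,
\[
A_NV_{1,N}=A_{k_0-1}\,\text{(fixed prefix)}+\sum_{k=k_0}^{N}\gamma_kA_k\zeta_{0,k}.
\]
The identity uses $A_N\prod_{\ell=k+1}^{N}(\mathbb I_p-\gamma_\ell\mathbb H_0)=A_k$, and all factors commute because they are functions of $\mathbb H_0$. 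Premultiplying by $\mathcal P_{\mathbb H_0}$ diagonalizes every $A_k$. The $j$-th coordinate is therefore
\[
S_{N,j}:=\sum_{k=k_0}^N \gamma_k a_{k,j}\,e_j^{\top}\mathcal P_{\mathbb H_0}\zeta_{0,k},\qquad a_{k,j}=\prod_{\ell=k_0}^k(1-\gamma_\ell\lambda_j)^{-1}.
\]
Here $\lambda_j$ is the $j$-th eigenvalue. The summands are independent and mean zero, since $\zeta_{0,k}$ is evaluated at the deterministic point $\theta_0$ and the batches are i.i.d. The variance is exactly $s_{N,j}^2$. The prefix is a.s. bounded, so it is negligible once we check $s_{N,j}^2\to\infty$.

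Next, obtain the variance order from \autoref{lemma1}(3) and \autoref{lemma4}. At $\theta=\theta_0$, \autoref{lemma1}(3) gives $\operatorname{Var}(e_j^\top\mathcal P_{\mathbb H_0}\zeta_{0,k})=\frac{2e_j^\top\mathcal P_{\mathbb H_0}\mathcal V_{\mathcal K,0}\mathcal P_{\mathbb H_0}^\top e_j}{B(B-1)h_k}(1+o(1))$. By \autoref{lemma4}(i), $a_{k,j}^2\asymp\exp(\frac{2\gamma_0\lambda_j}{1-\alpha_\gamma}k^{1-\alpha_\gamma})$. Applying \autoref{lemma4}(ii) with $b(x)=x^{-2\alpha_\gamma+\alpha_h}$ and $\varsigma=2$ then yields $s_{N,j}^2\asymp N^{-\alpha_\gamma+\alpha_h}\exp(\frac{2\gamma_0\lambda_j}{1-\alpha_\gamma}N^{1-\alpha_\gamma})\to\infty$, matching the display after \autoref{theorem2}. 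This requires $\mathcal V_{\mathcal K,0}$ to be positive definite, which \autoref{condition1}(v) provides.

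For the LIL itself, use Kolmogorov's LIL for independent non-identically distributed summands, or equivalently Stout's martingale LIL. It needs $s_{N,j}^2\uparrow\infty$ and a uniform bound $|X_k|\le \epsilon_k s_k/\sqrt{\log\log s_k^2}$ with $\epsilon_k\to0$. Our increments are unbounded, so the plan is to truncate.

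\textbf{Step 1 (truncation).} Write $\xi_k=\gamma_ka_{k,j}e_j^\top\mathcal P_{\mathbb H_0}\zeta_{0,k}$. Split $\xi_k=\xi_k\mathbf 1\{|\xi_k|\le c_k\}+\xi_k\mathbf 1\{|\xi_k|>c_k\}$ with $c_k=\epsilon_k s_{k,j}/\sqrt{\log\log s_{k,j}^2}$, and recenter the truncated part.

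\textbf{Step 2 (tail part).} Show the tail part and the recentering are $o(s_{N,j}\sqrt{\log\log s_{N,j}^2})$ a.s. Use \autoref{lemma1}(5), which bounds $\mathbb E|\cdot|^4\le Ch_k^{-3}$, together with Chebyshev and Borel--Cantelli, plus Kronecker's lemma in the exponentially weighted scale. The key ratio is $\gamma_k^2a_{k,j}^2h_k^{-3/2}/s_{k,j}^2\asymp k^{-\alpha_\gamma-\alpha_h/2}$ up to logarithms. This summability check, using $2\alpha_\gamma-3\alpha_h>1$ from \autoref{condition4}, is where I expect the main difficulty. The weights grow like $\exp(ck^{1-\alpha_\gamma})$, so the last few terms dominate $s_N^2$; the Lindeberg-type tail must therefore be controlled relative to $s_k$ rather than $s_N$, and the truncated variance must still be $s_{N,j}^2(1+o(1))$.

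\textbf{Step 3 (bounded part).} Apply the bounded-increment LIL (Kolmogorov/Stout) to the truncated part, which gives $\limsup=1$ and, by symmetry, $\liminf=-1$. Combining the three pieces proves the lemma.
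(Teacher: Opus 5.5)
Your proposal is correct. The reduction matches the paper's: you unroll $V_{1,N}$, diagonalize with $\mathcal P_{\mathbb H_0}$ so that the $j$th coordinate is a fixed prefix plus $\sum_{k\ge k_0}\gamma_k a_{k,j}e_j^{\top}\mathcal P_{\mathbb H_0}\zeta_{0,k}$, and get the variance order from \autoref{lemma1}(3) and \autoref{lemma4}.

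The difference is in the LIL step. The paper uses the Lindeberg-type LIL of \citet{tomkins1983lindeberg} as a black box. It checks $e_N=o(s_N/\sqrt{\log\log s_N^2})$ a.s.\ through $\sum_N s_N^{-4}(\log\log s_N^2)^2\mathbb E e_N^4<\infty$. It then shows the Lindeberg function $H_N(x)\to 1$ for every $x$, which fixes the limsup and liminf at $\pm1$. You instead reprove such a theorem by hand, truncating at $c_k=\epsilon_k s_k/\sqrt{\log\log s_k^2}$ and applying Kolmogorov's bounded-increment LIL. Both routes use the same input: the fourth-moment bound of \autoref{lemma1}(5) and summability of $k^{-2\alpha_\gamma+\alpha_h}(\log k)^2$, i.e.\ $2\alpha_\gamma-\alpha_h>1$. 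The paper's route is shorter given Tomkins's theorem. Yours needs only the classical Kolmogorov LIL, at the cost of truncation bookkeeping.

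That bookkeeping closes, which settles the difficulty you flagged.
\begin{itemize}
\item Set $p_k:=\mathbb E\xi_k^4/c_k^4\le C k^{-2\alpha_\gamma+\alpha_h}(\log k)^2\epsilon_k^{-4}$. This is summable if you take, for example, $\epsilon_k=k^{-\eta}$ with $0<4\eta<2\alpha_\gamma-\alpha_h-1$.
\item Borel--Cantelli then leaves only finitely many untruncated terms.
\item For $k\le N$ you have $c_k\le s_k\le s_N$, and $s_N\to\infty$. Dominated convergence therefore gives $\sum_{k\le N}p_kc_k=o(s_N)$ for the recentering, and $\sum_{k\le N}p_kc_k^2=o(s_N^2)$ for the variance loss.
\item The truncated variance is thus $s_N^2(1+o(1))$, so normalizing by either variance gives the same LIL.
\end{itemize}

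One slip: $\gamma_k^2a_{k,j}^2h_k^{-3/2}/s_{k,j}^2$ is of order $k^{-\alpha_\gamma+\alpha_h/2}$, not $k^{-\alpha_\gamma-\alpha_h/2}$. Borel--Cantelli actually uses summability of its square, and that is where $2\alpha_\gamma-\alpha_h>1$ (implied by \autoref{condition4}) enters.
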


\begin{proof}[Proof of \autoref{lemma6}]
Choose $k_0$ as above.  According to \autoref{lemma1},
\[ \mathbb E\left[\mathcal{P}_{\mathbb{H}_0}\left(\Phi_N(\theta_{0}, W_N) - \Phi_N(\theta_{0})\right)\left(\Phi_N(\theta_{0}, W_N) - \Phi_N(\theta_{0})\right)^{\top}\mathcal{P}_{\mathbb{H}_0}^{\top}\right] = \frac{2\mathcal{P}_{\mathbb{H}_0}\mathcal{V}_{\mathcal{K},0}\mathcal{P}_{\mathbb{H}_0}^{\top}}{B(B-1)h_N} + O(1).\] 
Recall that
$\mathbb H_0=\mathcal P_{\mathbb H_0}^{\top}
\varLambda_{\mathbb H_0}\mathcal P_{\mathbb H_0}$.
For coordinate $j$, let $\lambda_j$ be the corresponding diagonal element,
let $\widetilde e_{k,j}$ be coordinate $j$ of
$\mathcal P_{\mathbb H_0}\zeta_{0,k}$, and put
\[
 a_{k,j}=\prod_{\ell=k_0}^{k}(1-\lambda_j\gamma_\ell)^{-1},
 \qquad e_{k,j}=\gamma_k a_{k,j}\widetilde e_{k,j}.
\]
Cancellation of the tail factors gives the exact identity
\[
 e_j^\top\mathcal P_{\mathbb H_0}A_NV_{1,N}
 =C_{j,k_0-1}+\sum_{k=k_0}^{N}e_{k,j},
\]
where $C_{j,k_0-1}$ is the fixed contribution of the finitely many innovations
before $k_0$.  
We give the scalar calculation for $j=1$ and write $e_k=e_{k,1}$ and
$s_N^2=\sum_{k=k_0}^N\mathbb E e_k^2$.  Then
\[\mathbb Ee_k^2 = \frac{\gamma_k^2}{\left[\prod_{j=k_0}^k (1- \lambda_{\mathbb{H}_0,1}\gamma_j)\right]^{2}} \left\{ \frac{2[\mathcal{P}_{\mathbb{H}_0}\mathcal{V}_{\mathcal{K},0}\mathcal{P}_{\mathbb{H}_0}^{\top}](1,1)}{B(B-1)h_k}+  O(1)\right\}\]
 where $[\mathcal{P}_{\mathbb{H}_0}\mathcal{V}_{\mathcal{K},0}\mathcal{P}_{\mathbb{H}_0}^{\top}](1,1)$ refers to the element in first column and first row of $\mathcal{P}_{\mathbb{H}_0}\mathcal{V}_{\mathcal{K},0}\mathcal{P}_{\mathbb{H}_0}^{\top}$.  This leads to that for any $N$, 
\[
\left|s_N^2 -\sum_{k=k_0}^N \frac{ 2[\mathcal{P}_{\mathbb{H}_0}\mathcal{V}_{\mathcal{K},0}\mathcal{P}_{\mathbb{H}_0}^{\top}](1,1)\gamma_0^2h_0^{-1} k^{-2\alpha_{\gamma}+ \alpha_{h}}}{B(B-1)\left[\prod_{j=k_0}^k (1-\gamma_0\lambda_{\mathbb{H}_0,1}j^{-\alpha_{\gamma}})\right]^2}\right|\leq C\sum_{k=k_0}^N \frac{  k^{-2\alpha_{\gamma}}}{\left[\prod_{j=k_0}^k (1-\gamma_0\lambda_{\mathbb{H}_0,1}j^{-\alpha_{\gamma}})\right]^2}.
\]
Using \autoref{lemma4},  we can show that for $N$ sufficiently large, we have that \[
s_N^{-2} \sum_{k=k_0}^N \frac{ 2[\mathcal{P}_{\mathbb{H}_0}\mathcal{V}_{\mathcal{K},0}\mathcal{P}_{\mathbb{H}_0}^{\top}](1,1)\gamma_0^2h_0^{-1} k^{-2\alpha_{\gamma}+ \alpha_{h}}}{B(B-1)\left[\prod_{j=k_0}^k (1-\gamma_0\lambda_{\mathbb{H}_0,1}j^{-\alpha_{\gamma}})\right]^2}\rightarrow 1
\]
and
\[
0<C_1\leq \frac{s_N^2}{ N^{-\alpha_{\gamma}+ \alpha_{h}}\exp(C^*N^{1 - \alpha_{\gamma}}) } \leq C_2<
\infty\] for some positive constant $C^*$. So given the choice of $\alpha_{\gamma}$ and $\alpha_{h}$, we have that $s_N^2 \rightarrow \infty$. 

Moreover, we can similarly show that 
\[
\frac{\log^2(\log(s_N^2))\mathbb{E}e_N^4}{s_N^4}\leq \frac{C\log^2(N)N^{-4\alpha_{\gamma}+3\alpha_h}\exp(2C^{*}N^{1-\alpha_{\gamma}})}{N^{-2\alpha_{\gamma}+2\alpha_{h}}\exp(2C^{*}N^{1 - \alpha_{\gamma}})}= C\log^2(N)N^{-2\alpha_{\gamma}+\alpha_h}.
\]
This implies that 
$
\sum_{N\geq \max\{k_0,3\}}s_N^{-4}\log^2(\log(s_N^2))\mathbb Ee_N^4<\infty
$ because $2\alpha_{\gamma} - \alpha_h > 1$, 
so \[\left|s_N^{-1}\sqrt{\log(\log(s_N^2))}e_N\right|<C, \ a.s. \]holds for any $C>0$, implying that $s_N^{-1}\sqrt{\log(\log(s_N^2))}e_N\rightarrow 0$ a.s. holds. Using the LIL from Theorem 1.1 of  \citet{tomkins1983lindeberg}, we have that 
\[
H_0\leq \overline{\lim}_{N\rightarrow \infty}\frac{\sum_{k=k_0}^N e_k }{\sqrt{2s_N^2\log(\log(s_N^2))}}\leq H_1, \  a.s.,
\]
and
\[
-H_1^-\leq \underline{\lim}_{N\rightarrow \infty}\frac{\sum_{k=k_0}^N e_k }{\sqrt{2s_N^2\log(\log(s_N^2))}}\leq -H_0^-, \  a.s.,
\]
for some $0\leq H_0\leq H_1\leq1$ and $0\leq H_0^-\leq H_1^-\leq 1$.

We next show that $H_0 = H_1 = 1$; the proof that
$H_0^- = H_1^- = 1$ is identical. To show the results, following \citet{tomkins1983lindeberg} we define
$
H_N(x) = s_N^{-2}\sum_{k=k_0}^N \mathbb E\left(e_k^2 \boldsymbol{1}(|e_k|\leq xs_kt_k^{-1})\right)
$
with $t_k = \sqrt{2\log(\log(s_k^2))}$. For any $x$, we have that 
$
H_N(x) = 1 - s_N^{-2}\sum_{k=k_0}^N \mathbb E\left(e_k^2 \boldsymbol{1}(|e_k|>xs_kt_k^{-1})\right),
$
where \[\mathbb E\left(e_k^2 \boldsymbol{1}(|e_k|>xs_kt_k^{-1})\right)\leq \sqrt{\mathbb Ee_k^4}\sqrt{\mathbb E\boldsymbol{1}(|e_k|>xs_kt_k^{-1})}.\]  According to our previous analysis, we know that \[\sqrt{\mathbb Ee_k^4}\leq Ck^{-2\alpha_{\gamma}+3/2\alpha_h}\exp(C^*k^{1-\alpha_{\gamma}}),\] \[\sqrt{\mathbb E\boldsymbol{1}(|e_k|>xs_kt_k^{-1})}\leq C\sqrt{\frac{t_k^4\mathbb E e_k^4}{x^4s_k^4}}\leq Cx^{-2}\log (k)k^{-\alpha_{\gamma}+\alpha_h/2}\]
So $\sqrt{\mathbb Ee_k^4}\sqrt{\mathbb E\boldsymbol{1}(|e_k|>xs_kt_k^{-1})}\leq Cx^{-2}\log(k)k^{-3\alpha_{\gamma} + 2\alpha_h}\exp(C^*k^{1-\alpha_{\gamma}})$. Then
\begin{align*}
\sum_{k=k_0}^N \mathbb E\left(e_k^2 \boldsymbol{1}(|e_k|>xs_kt_k^{-1})\right) & \leq Cx^{-2}\sum_{k=k_0}^N \log(k)k^{-3\alpha_{\gamma} + 2\alpha_h}\exp(C^*k^{1-\alpha_{\gamma}})\\
& \leq Cx^{-2}\log(N)N^{-2\alpha_{\gamma} + 2\alpha_h}\exp(C^*N^{1-\alpha_{\gamma}}).
\end{align*}
Since $s_N^2 \geq C N^{-\alpha_{\gamma} + \alpha_h}\exp(C^*N^{1-\alpha_{\gamma}})$, we have that
$s_N^{-2}\sum_{k=k_0}^N \mathbb E\{e_k^2
\boldsymbol{1}(|e_k|>xs_kt_k^{-1})\}\rightarrow 0$, and consequently,
$
\lim_{N\rightarrow \infty} H_N(x) = 1. 
$
According to \citet{tomkins1983lindeberg}, we have that 
\[
H_0 = \lim\inf_{N\rightarrow\infty} H_N(x), \ \ H_1 = \lim\sup_{N\rightarrow\infty} H_N(x),
\]
This leads to $H_0 = H_1 = 1$, and similarly $H_0^- = H_1^- = 1$. Based on the above analysis, we have that
\[
\limsup_{N\to\infty}\frac{\sum_{k=k_0}^N e_k}
 {\sqrt{2s_N^2\log\log(s_N^2)}}=1,
 \qquad
\liminf_{N\to\infty}\frac{\sum_{k=k_0}^N e_k}
 {\sqrt{2s_N^2\log\log(s_N^2)}}=-1,
 \quad\text{a.s.}
\]
The same argument applies to every coordinate and proves the coordinatewise
LIL using only \autoref{condition1}--\autoref{condition4}.  
\end{proof}

Next we introduce the lemma that clearly demonstrates how the convergence of $\Vert \Delta\widehat\theta_N\Vert ^2$  accelerates the convergence rate of $\Vert V_{2,N}\Vert^2$, which, in turn,  accelerates the convergence rate of $\Vert \Delta\widehat\theta_N\Vert ^2$ itself. 

\begin{lemma}\label{lemma7}
    Let \autoref{condition1}--\autoref{condition4} hold.  If
    $\Vert \Delta\widehat\theta_N\Vert^2 = O(b(N))$ a.s. for some
    $b(N)\downarrow0$ with $b(N-1)/b(N)\leq1+C/N$, then, for every $r>1$,
    \[
    \Vert  V_{2,N} \Vert^2_2 = o\left(b(N)N^{-2\alpha_{\gamma}+3\alpha_h +1}\log^r(N)\right), \ a.s. 
    \]
\end{lemma}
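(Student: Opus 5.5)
The argument mirrors the proof of \autoref{lemma3}, with the innovation now carrying the factor $\Delta\widehat\theta_{N-1}$. First write $V_{2,N}$ recursively as
\[
V_{2,N}=(\mathbb I_p-\gamma_N\mathbb H_0)V_{2,N-1}+\gamma_N\xi_N,
\]
where
\[
\xi_N:=\int_0^1\partial_\theta\bigl(\Phi_N(\theta_0+\tau\Delta\widehat\theta_{N-1},\mathcal W_N)-\Phi_N(\theta_0+\tau\Delta\widehat\theta_{N-1})\bigr)d\tau\,\Delta\widehat\theta_{N-1}.
\]
Since $\widehat\theta_{N-1}$ is $\mathcal F_{N-1}$-measurable and $\mathcal W_N$ is independent of $\mathcal F_{N-1}$, we have $\mathbb E_{N-1}\xi_N=0$. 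Hence the cross term vanishes conditionally and
\[
\mathbb E_{N-1}\Vert V_{2,N}\Vert_2^2\le(1-C\gamma_N)\Vert V_{2,N-1}\Vert_2^2+\gamma_N^2\,\mathbb E_{N-1}\Vert\xi_N\Vert_2^2 ,
\]
using $\lambda_{\min}(\mathbb H_0)>0$ from \autoref{lemma2}.

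Next bound the innovation variance. By Jensen's inequality in $\tau$ and \autoref{lemma1}(4), which holds uniformly in $\theta$,
\[
\mathbb E_{N-1}\Vert\xi_N\Vert_2^2\le \sup_\theta\mathbb E\Vert\nabla_\theta\Phi_N(\theta,\mathcal W_N)-\nabla_\theta\Phi_N(\theta)\Vert^2\,\Vert\Delta\widehat\theta_{N-1}\Vert_2^2\le C h_N^{-3}\Vert\Delta\widehat\theta_{N-1}\Vert_2^2 .
\]
On the almost-sure event, for $N$ beyond a path-dependent index, this is at most $C N^{3\alpha_h}b(N-1)\le C N^{3\alpha_h}b(N)$, using $b(N-1)/b(N)\le 1+C/N$. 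This gives
\[
\mathbb E_{N-1}\Vert V_{2,N}\Vert_2^2\le(1-CN^{-\alpha_\gamma})\Vert V_{2,N-1}\Vert_2^2+CN^{-2\alpha_\gamma+3\alpha_h}b(N).
\]

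The step needing care is that the bound $\Vert\Delta\widehat\theta_{N-1}\Vert^2\le Cb(N-1)$ holds only almost surely with a random constant and threshold. To keep the Robbins--Siegmund setup predictable, I would localize. Fix $M$ and replace $\Delta\widehat\theta_{N-1}$ in $\xi_N$ by $\Delta\widehat\theta_{N-1}\mathbf 1\{\Vert\Delta\widehat\theta_{N-1}\Vert^2\le Mb(N-1)\}$. This indicator is $\mathcal F_{N-1}$-measurable, so the martingale-difference property is preserved and the bound above holds deterministically with constant $M$. The localized recursion $V^{(M)}_{2,N}$ coincides with $V_{2,N}$ up to an exponentially damped initial discrepancy on the event $\{\sup_N\Vert\Delta\widehat\theta_N\Vert^2/b(N)\le M\}$. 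Letting $M\to\infty$ along the integers covers almost every path.

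With $a_N:=b(N)^{-1}N^{2\alpha_\gamma-3\alpha_h-1}\log^{-r}N$, the growth conditions give $a_N/a_{N-1}\le 1+C/N$ for large $N$. Here one uses $2\alpha_\gamma-3\alpha_h>1$ from \autoref{condition4} and the ratio condition on $b$. Then
\[
\mathbb E_{N-1}[a_N\Vert V^{(M)}_{2,N}\Vert^2]\le(1-CN^{-\alpha_\gamma})a_{N-1}\Vert V^{(M)}_{2,N-1}\Vert^2+C(N\log^rN)^{-1}.
\]
The forcing term is summable because $r>1$. Theorem~1 of \citet{robbins1971convergence} then shows that $a_N\Vert V^{(M)}_{2,N}\Vert^2$ converges almost surely and that $\sum_NN^{-\alpha_\gamma}a_{N-1}\Vert V^{(M)}_{2,N-1}\Vert^2<\infty$. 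Since $\sum_N N^{-\alpha_\gamma}=\infty$, the limit must be zero. Therefore $\Vert V_{2,N}\Vert_2^2=o(b(N)N^{-2\alpha_\gamma+3\alpha_h+1}\log^rN)$ almost surely.
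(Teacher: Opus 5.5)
Your proposal is correct and follows essentially the same route as the paper: the same one-step recursion for $V_{2,N}$, the conditional variance bound $\mathbb E_{N-1}\Vert\xi_N\Vert_2^2\le Ch_N^{-3}\Vert\Delta\widehat\theta_{N-1}\Vert_2^2$ from \autoref{lemma1}(4), the same weight $b^{-1}(N)N^{2\alpha_\gamma-3\alpha_h-1}\log^{-r}N$, and the Robbins--Siegmund argument from the proof of \autoref{lemma3}. The only difference is your predictable truncation, which is harmless but unnecessary. Theorem~1 of \citet{robbins1971convergence} already allows the forcing term $C(N\log^rN)^{-1}b^{-1}(N)\Vert\Delta\widehat\theta_{N-1}\Vert_2^2$ to be random, adapted and merely almost surely summable, and the paper uses it in exactly that form. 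Also, on the event $\{\sup_N\Vert\Delta\widehat\theta_N\Vert_2^2/b(N)\le M\}$ your two recursions coincide exactly, so there is no damped initial discrepancy to track.
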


\begin{proof}[Proof of \autoref{lemma7}] 
    Note that 
    \[
    V_{2,N} = \gamma_N\int_{0}^{1} \partial_{\theta}\left(\Phi_N(\theta_{0} + \tau\Delta\widehat\theta_{N-1}, W_N) - \Phi_N(\theta_{0} + \tau\Delta\widehat\theta_{N-1})\right)d\tau \Delta\widehat\theta_{N-1} + \left(\mathbb I_{p} - \gamma_N \mathbb{H}_0\right)V_{2, N-1}. 
    \]
    Then according to \autoref{lemma1}(4), 
    \[
    \mathbb E_{N-1}\Vert V_{2,N}\Vert ^2_2 \leq \left(1 - C\gamma_N\right)\Vert V_{2,N-1}\Vert ^2 _2+ C\gamma_N^2 h_N^{-3}\Vert\Delta\widehat\theta_{N-1}\Vert^2_2,
    \]
    and 
     \begin{align*}
   & \mathbb E_{N-1}\left[b^{-1}(N)N^{ 2\alpha_{\gamma} - 3\alpha_h - 1}\log^{-r}(N)\Vert V_{2,N}\Vert ^2_2\right] \\
   & \leq \left(1 - C N^{-\alpha_{\gamma}}\right)\frac{b(N-1)\log^r(N-1)}{b(N)\log^r(N)}\left[b^{-1}(N-1)(N-1)^{2\alpha_{\gamma} - 3\alpha_h - 1}\log^{-r}(N-1)\Vert V_{2,N-1}\Vert ^2_2\right] \\
    & + Cb(N-1) b^{-1}(N)(N\log^r(N))^{-1} b^{-1}(N-1)\Vert\Delta\widehat\theta_{N-1}\Vert^2_2\\
    & \leq \left(1 - CN^{-\alpha_{\gamma}}\right)\left[b ^{-1}(N-1)(N-1)^{2\alpha_{\gamma} - 3\alpha_h - 1}\log^{-r}(N-1)\Vert V_{2,N-1}\Vert ^2_2\right] \\
    &  + Cb(N-1) b^{-1}(N)(N\log^r(N))^{-1} b^{-1}(N-1)\Vert\Delta\widehat\theta_{N-1}\Vert^2_2.
    \end{align*}
  Since $\Vert \Delta\widehat\theta_N\Vert^2_2 = O(b(N))$ a.s. and $b(N-1)/b(N)\leq C$, we have that 
  \[
  \sum_{N=1}^{\infty} b(N-1) b^{-1}(N)(N\log^r(N))^{-1} b^{-1}(N-1)\Vert\Delta\widehat\theta_{N-1}\Vert^2_2 <\infty, \ a.s.
  \]
Using the method that we use in the proof of \autoref{lemma3}, we have that 
\[
b^{-1}(N)N^{ 2\alpha_{\gamma} - 3\alpha_h - 1}\log^{-r}(N)\Vert V_{2,N}\Vert_2 ^2 =o(1), \ a.s.
\]
So $\Vert V_{2,N}\Vert^2 _2= o\left( b(N)N^{-2\alpha_{\gamma} + 3\alpha_h + 1}\log^r(N)\right)$ a.s. holds. 
\end{proof}

\subsubsection{Proof of \autoref{theorem2}}

Define 
$
 d=2\alpha_\gamma-3\alpha_h-1>0,
  \delta_0=2\alpha_\gamma-\alpha_h-1,
  \delta_\star=\alpha_\gamma-\alpha_h,
  \delta_{m+1}=\min\{\delta_\star,\delta_m+d\}.
$ 
The preliminary bounds in \autoref{lemma3} and \autoref{lemma5} give the
exponent $\delta_0$.  If at some stage
$\Vert\Delta\widehat\theta_N\Vert_2^2=O_{\mathrm{a.s.}}(N^{-\delta_m}
\operatorname{polylog}N)$, \autoref{lemma7} improves the nonlinear martingale
by $d$, while \autoref{lemma5} transfers that improvement back to the iterate.
Because $d>0$, after finitely many repetitions the recursion reaches
$\delta_\star$.  If the final exponent step hits $\delta_\star$ exactly, the
bound can still carry an arbitrary fixed polylogarithmic factor.  Apply
\autoref{lemma7} once more to that bound.  Its factor $N^{-d}$ makes the
nonlinear martingale polynomially smaller than $N^{-\delta_\star}$, so
\autoref{lemma6}, \autoref{lemma5}, and the deterministic bias bound yield
\[
 \Vert\Delta\widehat\theta_N\Vert_2^2=O_{\rm a.s.}(b_N),
 \qquad
 b_N=N^{-\delta_\star}\log N
     =N^{-\alpha_\gamma+\alpha_h}\log N.
\]
A final application of \autoref{lemma7} yields
$\Vert V_{2,N}\Vert_2^2
=o_{\mathrm{a.s.}}(b_NN^{-d}(\log N)^r)$ for every $r>1$; in particular,
$\Vert V_{2,N}\Vert_2=o_{\mathrm{a.s.}}(N^{-\delta_\star/2})$ because
$d>0$.  The deterministic
linearization remainder from \autoref{lemma5} satisfies
\[
 \Vert\mathcal A_N\Vert_2
 =O_{\mathrm{a.s.}}(b_N\vee N^{-s_{\mathcal K}\alpha_h})
 =o_{\mathrm{a.s.}}(N^{-\delta_\star/2}),
\]
where the last equality follows from \autoref{condition4}.  Hence
\[
 \Delta\widehat\theta_N=V_{1,N}
 +o_{\mathrm{a.s.}}(N^{-\delta_\star/2}).
\]
By the coordinatewise LIL in  \autoref{lemma6} and the fact that $p$ is fixed,
\[
 \Vert V_{1,N}\Vert_2
 =O_{\mathrm{a.s.}}\!\left(
 N^{-\delta_\star/2}\sqrt{\log N}\right).
\]
Together with the preceding display, this proves the displayed raw-iterate
rate.

For the linear recursion, commutativity of the factors gives
\[
 \mathcal P_{\mathbb H_0}A_NV_{1,N}
 =C_{k_0-1}+\sum_{k=k_0}^{N}
 \gamma_k\mathcal P_{\mathbb H_0}A_k\zeta_{0,k},
\]
where $C_{k_0-1}$ is a fixed finite prefix.   \autoref{lemma6} proves from
 \autoref{condition1}--\autoref{condition4} that each $s_{N,j}$ diverges
and that the independent triangular array satisfies the scalar LIL.  Thus the
prefix is negligible coordinatewise and \eqref{lil_theta} follows.  The same
scalar variance estimates give
\[
 \frac{\|e_j^{\top}\mathcal P_{\mathbb H_0}A_N\|_2}{s_{N,j}}
 =O(N^{\delta_\star/2}),
\]
so the $o_{\rm a.s.}(N^{-\delta_\star/2})$ nonlinear and deterministic
remainders are negligible under every coordinate LIL normalization.

\subsection{Proof of \autoref{theorem1}}

The raw kernel-based updates lead to $\widehat\theta_{k} = \widehat\theta_{k-1} + \gamma_k \Phi_k(\widehat\theta_{k-1}, \mathcal W_k)$, then we can decompose the update as follows
\[
\widehat\theta_{k} = \widehat\theta_{k-1} + \gamma_k\Phi(\widehat\theta_{k-1}) + \gamma_k(\Phi_k(\widehat\theta_{k-1}) - \Phi(\widehat\theta_{k-1})) + \gamma_k(\Phi_k(\widehat\theta_{k-1}, \mathcal  W_k)- \Phi_k(\widehat\theta_{k-1})). 
\]
Simple calculation leads to 
\begin{align}
\Vert\Delta\widehat\theta_{k}\Vert^2_2 & = \Vert\Delta\widehat\theta_{k-1} + \gamma_k\Phi(\widehat\theta_{k-1})\Vert^2_2   + \gamma_k^2 \Vert \Phi_k(\widehat\theta_{k-1}) - \Phi(\widehat\theta_{k-1})\Vert^2_2+ \gamma_k^2 \Vert \Phi_k(\widehat\theta_{k-1}, \mathcal W_k)- \Phi_k(\widehat\theta_{k-1})\Vert_2 ^2\nonumber\\
&  + 2\gamma_k (\Delta\widehat\theta_{k-1} + \gamma_k\Phi(\widehat\theta_{k-1}))^{\top}(\Phi_k(\widehat\theta_{k-1}) - \Phi(\widehat\theta_{k-1}))\nonumber \\
& + 2\gamma_k (\Delta\widehat\theta_{k-1} + \gamma_k\Phi(\widehat\theta_{k-1}))^{\top}(\Phi_k(\widehat\theta_{k-1}, \mathcal  W_k)- \Phi_k(\widehat\theta_{k-1}))\nonumber\\
&  + 2\gamma_k^2(\Phi_k(\widehat\theta_{k-1}) - \Phi(\widehat\theta_{k-1}))^{\top}(\Phi_k(\widehat\theta_{k-1},\mathcal  W_k)- \Phi_k(\widehat\theta_{k-1})) \label{dynamics}. 
\end{align}
Recall that we use $\mathbb E_{k-1}$ to denote the expectation conditioned on the first $k-1$ batches (and let $\mathbb E_0$ denote the unconditional expectation). We verify the conditional expectation for all the terms on the right-hand side of (\ref{dynamics}).  The population Hessian is uniformly bounded above under \autoref{condition1}; hence choose a deterministic $k_0$ such that
$\gamma_k\sup_{\theta,\tau}\lambda_{\max}\{\mathbb H(\theta,\tau)\}\leq1$
for $k\geq k_0$.  All contraction displays below are asserted for
$k\geq k_0$; the finite prefix changes only constants. Since $\Phi(\theta) = -\int_{0}^{1} \mathbb{H}(\theta,\tau)d\tau \Delta\theta$ for any $\theta$, we have that 
$
\Phi(\widehat\theta_{k-1}) = -\int_{0}^1 \mathbb{H}(\widehat\theta_{k-1}, \tau)d\tau \Delta\widehat\theta_{k-1}
    $. So \autoref{lemma2} leads to 
    \begin{align*}
        \mathbb{E}_{k-1}\Vert\Delta\widehat\theta_{k-1} + \gamma_k\Phi(\widehat\theta_{k-1})\Vert^2_2 & = \Delta\widehat\theta_{k-1}^{\top}\left(\mathbb I_p - \gamma_k \int_{0}^1 \mathbb{H}(\widehat\theta_{k-1}, \tau)d\tau\right)^2 \Delta\widehat\theta_{k-1} \\
        & \leq \left(1 -  C\gamma_k\left(\frac{\overline{z} - \underline{z}}{12(1 + \Vert\Delta\widehat\theta_{k-1}\Vert_2 + \Vert\theta_0\Vert_2)}\wedge r_X\right)^{2p+2} \right)\Vert \Delta\widehat\theta_{k-1}\Vert_2^2.  
    \end{align*}
 \autoref{lemma1}(2) leads to $\mathbb E_{k-1} \gamma_k^2 \Vert\Phi_k(\widehat\theta_{k-1}) -\Phi(\widehat\theta_{k-1}) \Vert_2^2\leq C\gamma_k^2h_k^{2s_{\mathcal K}}$, and  \autoref{lemma1}(3) leads to  $\gamma_k^2 \mathbb E_{k-1} \Vert \Phi_k(\widehat\theta_{k-1}, W_k)- \Phi_k(\widehat\theta_{k-1})\Vert _2^2\leq C\gamma_k^2h_k^{-1}$. Moreover, we have that 
\begin{align*}
& \left| \mathbb{E}_{k-1}\left[2\gamma_k (\Delta\widehat\theta_{k-1} + \gamma_k\Phi(\widehat\theta_{k-1}))^{\top}(\Phi_k(\widehat\theta_{k-1}) - \Phi(\widehat\theta_{k-1}))\right]\right|\\
& \leq  Ch_k^{2s_{\mathcal K}} \left(1 -  C\gamma_k\left(\frac{\overline{z} - \underline{z}}{12(1 + \Vert\Delta\widehat\theta_{k-1}\Vert_2 + \Vert\theta_0\Vert_2)}\wedge r_X\right)^{2p+2} \right)\Vert \Delta\widehat\theta_{k-1}\Vert^2_2 + C\gamma_k^2,
\end{align*}
\[
\mathbb{E}_{k-1}\left[2\gamma_k (\Delta\widehat\theta_{k-1} + \gamma_k\Phi(\widehat\theta_{k-1}))^{\top}(\Phi_k(\widehat\theta_{k-1}, W_k)- \Phi_k(\widehat\theta_{k-1}))\right] = 0,
\]
and 
\begin{align*}
\mathbb{E}_{k-1}\left[2\gamma_k^2(\Phi_k(\widehat\theta_{k-1}) - \Phi(\widehat\theta_{k-1}))^{\top}(\Phi_k(\widehat\theta_{k-1}, W_k)- \Phi_k(\widehat\theta_{k-1}))\right] = 0.
\end{align*}
  The above together leads to 
  \begin{align*}
 \mathbb E_{k-1} \Vert\Delta\widehat\theta_{k} \Vert^2_2 & \leq \left(1 + Ch_k^{2s_{\mathcal K}}\right)\left\Vert\Delta\widehat\theta_{k-1}\right\Vert^2_2  + C\gamma_k^2h_k^{-1} 
  \\
  & -C\gamma_k  \left(\frac{\overline{z} - \underline{z}}{12(1 + \Vert\Delta\widehat\theta_{k-1}\Vert_2 + \Vert\theta_0\Vert_2)}\wedge r_X\right)^{2p+2}\Vert\Delta\widehat\theta_{k-1}\Vert^2_2.
  \end{align*}
  Then since $\sum_{k=1}^{\infty}h_k^{2s_{\mathcal K}}<\infty$ and $\sum_{k=1}^{\infty} \gamma_{k}^{2}h_k^{-1}<\infty$, 
  according to Theorem 1 of \citet{robbins1971convergence}, we have that $\Vert\Delta\widehat\theta_{k-1}\Vert^2_2$ converges a.s. to a finite random variable, and that 
  \[
  \sum_{k=1}^{\infty}\gamma_k \left(\frac{\overline{z} - \underline{z}}{12(1 + \Vert\Delta\widehat\theta_{k-1}\Vert + \Vert\theta_0\Vert)}\wedge r_X\right)^{2p+2}\Vert\Delta\widehat\theta_{k-1}\Vert^2<\infty, \qquad \text{a.s.} 
  \]
  Since $\sum_{k=1}^{\infty}\gamma_k = \infty$, if $\Vert\Delta\widehat\theta_{k-1}\Vert^2_2$ converges but not to zero, then \[
  \sum_{k=1}^{\infty}\gamma_k \left(\frac{\overline{z} - \underline{z}}{12(1 + \Vert\Delta\widehat\theta_{k-1}\Vert_2 + \Vert\theta_0\Vert_2)}\wedge r_X\right)^{2p+2}\Vert\Delta\widehat\theta_{k-1}\Vert^2_2  = + \infty,
  \]
  which leads to a contradiction. This implies that $\Vert\Delta\widehat\theta_k\Vert^2_2\rightarrow0$ must a.s. hold. This proves the convergence of $\widehat\theta_k$ and $\overline{\theta}_k$.

We further decompose the PR average. Note that 
\begin{align*}
\Delta\widehat\theta_N & = \Delta\widehat\theta_{N-1} +  \gamma_N \Phi_N(\widehat\theta_{N-1}, \mathcal W_N)\\
& =  \Delta\widehat\theta_{N-1} - \gamma_N\mathbb{H}_0 \Delta\widehat\theta_{N-1} + \gamma_N\left[\delta_{1,N} +\delta_{2,N} + \Phi_N(\widehat\theta_{N-1}, \mathcal W_N)-\Phi_N(\widehat\theta_{N-1})\right],
\end{align*}
so
\[
\mathbb{H}_0 \Delta\widehat\theta_{N-1} = \gamma_N^{-1}(\Delta\widehat\theta_{N-1} - \Delta\widehat\theta_N)  +  \left[ \delta_{1,N} + \delta_{2,N} + \Phi_{N}(\widehat\theta_{N-1},\mathcal W_N) -\Phi_{N}(\widehat\theta_{N-1})\right], 
\]
where recall that $\delta_{1,N} = \Phi_{N}(\widehat\theta_{N-1}) -\Phi(\widehat\theta_{N-1})$  and $\delta_{2,N} = \int_0^1(\mathbb{H}(\theta_0, \tau) - \mathbb{H}(\widehat\theta_{N-1}, \tau))d\tau\Delta\widehat\theta_{N-1}$. 
Then 
\[
\sum_{k=1}^N \mathbb{H}_0 \Delta\widehat\theta_{k-1} = \sum_{k=1}^N \gamma_{k}^{-1}(\Delta\widehat\theta_{k-1} - \Delta\widehat\theta_k)  + \sum_{k=1}^N  \left[\delta_{1,N} + \delta_{2,N} +\Phi_{k}(\widehat\theta_{k-1}, \mathcal W_k) -\Phi_{k}(\widehat\theta_{k-1}) \right]
\]
Because $\overline\theta_N=N^{-1}\sum_{k=1}^N\widehat\theta_k$, 
\[
 N\mathbb H_0\Delta\overline\theta_N
 =\sum_{k=1}^N\mathbb H_0\Delta\widehat\theta_{k-1}
   +\mathbb H_0(\Delta\widehat\theta_N-\Delta\widehat\theta_0).
\]
The last boundary term is $O(1)$ a.s. after consistency and hence is
$o(N^{(1+\alpha_h)/2})$ a.s.
For the first term, we have that 
\begin{align*}
\sum_{k=1}^N \gamma_{k}^{-1}(\Delta\widehat\theta_{k-1} - \Delta\widehat\theta_k) & = \sum_{k=1}^N \gamma_{k}^{-1}\Delta\widehat\theta_{k-1} - \sum_{k=1}^N \gamma_{k}^{-1}\Delta\widehat\theta_k= \sum_{k=0}^{N-1} \gamma_{k+1}^{-1}\Delta\widehat\theta_{k} - \sum_{k=1}^N \gamma_{k}^{-1}\Delta\widehat\theta_k\\
&  = \gamma_1^{-1}\Delta\widehat\theta_0 + \sum_{k=1}^{N-1}( \gamma_{k+1}^{-1}-\gamma_k^{-1})\Delta\widehat\theta_{k} - \gamma_N^{-1}\Delta\widehat\theta_N  
\end{align*}
Under \autoref{condition4}, we have that $\gamma_{k+1}^{-1} - \gamma_k^{-1} = \gamma_0^{-1}((k+1)^{\alpha_{\gamma}} -k^{\alpha_{\gamma}})\leq Ck^{\alpha_{\gamma} - 1}$.  So 
\[
\left\Vert \sum_{k=1}^N \gamma_{k}^{-1}(\Delta\widehat\theta_{k-1} - \Delta\widehat\theta_k) \right\Vert_2 \leq C + C\sum_{k=1}^{N-1}k^{\alpha_{\gamma}-1}\left\Vert \Delta\widehat\theta_{k} \right\Vert_2 + \gamma_0^{-1}N^{\alpha_{\gamma}}\left\Vert \Delta\widehat\theta_{N} \right\Vert_2
\]
According to \autoref{theorem2}, we have that \[\sum_{k=1}^{N-1}k^{\alpha_{\gamma}-1}\left\Vert \Delta\widehat\theta_{k} \right\Vert_2\leq C\sum_{k=1}^{N-1}k^{\frac{\alpha_{\gamma} + \alpha_{h}}{2}-1}\sqrt{\log k}, \ a.s.\]  and that \[\gamma_0^{-1}N^{\alpha_{\gamma}}\left\Vert \Delta\widehat\theta_{N}\right\Vert \leq CN^{\frac{\alpha_{\gamma}+\alpha_h}{2}}\sqrt{\log N}, \qquad  \text{a.s.}\] This implies that 
\[
\left\Vert \sum_{k=1}^N \gamma_{k}^{-1}(\Delta\widehat\theta_{k-1} - \Delta\widehat\theta_k) \right\Vert_2 = O\left( N^{\frac{\alpha_{\gamma} + \alpha_{h}}{2}}\sqrt{\log N}\right), \qquad \text{a.s.},
\]
and as a result, 
\[
\left( N^{1+\alpha_{h}} \right)^{-\frac{1}{2}}\left\Vert \sum_{k=1}^N \gamma_{k}^{-1}(\Delta\widehat\theta_{k-1} - \Delta\widehat\theta_k) \right\Vert _2\rightarrow 0, \qquad \text{a.s.}
\]
 On the other side, we have that $\Vert \delta_{1,N} \Vert_2 \leq CN^{-s_{\mathcal K}\alpha_h}$ and $\Vert \delta_{2,N}\Vert \leq C\Vert\Delta\widehat\theta_{N-1}\Vert^2_2$ . So $\Vert \sum_{k=1}^N \delta_{2,N} \Vert_2 = O(N^{-\alpha_{\gamma}+\alpha_h +1}\log N)$ a.s. and, by the exact power-sum bound,
\[
 \left\Vert\sum_{k=1}^N \delta_{1,N} \right\Vert_2
 \leq C\sum_{k=1}^Nk^{-s_{\mathcal K}\alpha_h}
 =O\!\left(1+N^{(1-s_{\mathcal K}\alpha_h)_+}\log N\right).
\]
The latter is $o(N^{(1+\alpha_h)/2})$ because
$s_{\mathcal K}\alpha_h>1/2$ under  \autoref{condition4}; the former has
that order because $\alpha_\gamma>(1+\alpha_h)/2$.  Thus both terms are
$o(\sqrt{N^{1+\alpha_h}})$ a.s.\@

We finally look at $\sum_{k=1}^N (\Phi_{k}(\widehat\theta_{k-1}, \mathcal W_k) -\Phi_{k}(\widehat\theta_{k-1}))$. Obviously, according to our previous decomposition, we have that 
\begin{align*}
\sum_{k=1}^N \left(\Phi_{k}(\widehat\theta_{k-1}, \mathcal W_k) -\Phi_{k}(\widehat\theta_{k-1})\right) &  = \sum_{k=1}^N \left(\Phi_k(\theta_{0}, \mathcal W_k) - \Phi_k(\theta_{0})\right)  \\
& +  \sum_{k=1}^N\int_{0}^{1} \nabla_{\theta}\left(\Phi_k(\theta_{0} + \tau\Delta\widehat\theta_{k-1}, \mathcal  W_k) - \Phi_k(\theta_{0} + \tau\Delta\widehat\theta_{k-1})\right)d\tau \Delta\widehat\theta_{k-1}. 
\end{align*}
We next show that 
\[
\sum_{k=1}^N\int_{0}^{1} \nabla_{\theta}\left(\Phi_k(\theta_{0} + \tau\Delta\widehat\theta_{k-1}, \mathcal  W_k) - \Phi_k(\theta_{0} + \tau\Delta\widehat\theta_{k-1})\right)d\tau \Delta\widehat\theta_{k-1} = o(\sqrt{ N^{1+\alpha_{h}}}),  \qquad  \text{a.s.}
\]
Let
\[
 M_N=\sum_{k=1}^N\int_0^1
 \nabla_\theta\!
 \left\{\Phi_k(\theta_0+\tau\Delta\widehat\theta_{k-1}, \mathcal W_k)
       -\Phi_k(\theta_0+\tau\Delta\widehat\theta_{k-1})\right\}d\tau
 \Delta\widehat\theta_{k-1}.
\]
We have that 
\[
 \mathbb E_{N-1}\Vert  M_N\Vert_2^2
 \leq  \Vert  M_{N-1}\Vert_2^2 +  Ch_N^{-3}\Vert\Delta\widehat\theta_{N-1}\Vert_2^2 \leq \Vert  M_{N-1}\Vert_2^2  
 +CN^{-\alpha_\gamma+4\alpha_h}\log N, \qquad \text{a.s.}
\]
where the last equality uses the raw-iterate rate from
 \autoref{theorem2}. Then using the previous proofs leads to $\Vert M_N\Vert_2 = O(\frac{1+4\alpha_h - \alpha_{\gamma}}{2}\sqrt{\log^{1+r} (N)})$ a.s. for any $r>1$. 
This is $o_{\rm a.s.}(N^{(1+\alpha_h)/2})$ because
$\alpha_\gamma>3\alpha_h$, which follows from
$2\alpha_\gamma-3\alpha_h>1$ and $\alpha_\gamma<1$.

The above analysis implies that 
\[
\left\Vert N\left(\frac{2\mathcal{V}_{\mathcal{K},0}}{B(B-1)}\right)^{-\frac{1}{2}}\mathbb{H}_0\Delta\overline\theta_N  - \left(\frac{2\mathcal{V}_{\mathcal{K},0}}{B(B-1)}\right)^{-\frac{1}{2}}\sum_{k=1}^N \left(\Phi_k(\theta_{0}, \mathcal W_k) - \Phi_k(\theta_{0})\right)\right\Vert_2 = o\left(\sqrt{N^{1+\alpha_{h}}}\right), \ a.s.
\]
Note that 
\[
\text{var}\left[\left(\frac{2\mathcal{V}_{\mathcal{K},0}}{B(B-1)}\right)^{-\frac{1}{2}}(\Phi_k(\theta_0, \mathcal W_k) - \Phi_k(\theta_0))\right] = h_k^{-1}\mathbb{I}_p + O(1) = h_0^{-1}k^{\alpha_h}\mathbb{I}_p + O(1),  
\]
and 
\[
\sum_{k=1}^N \text{var}\left[\left(\frac{2\mathcal{V}_{\mathcal{K},0}}{B(B-1)}\right)^{-\frac{1}{2}}(\Phi_k(\theta_0, \mathcal W_k) - \Phi_k(\theta_0))\right] = \frac{N^{1+\alpha_h} \mathbb{I}_p}{h_0( 1 + \alpha_h)} + o(N^{1+\alpha_h}).
\]
For a fixed coordinate $j$, write
\[
  \zeta_{k,j}=e_j^{\top}\left(\frac{2\mathcal V_{\mathcal K,0}}
 {B(B-1)}\right)^{-1/2}
 \{\Phi_k(\theta_0,\mathcal W_k)-\Phi_k(\theta_0)\},
 \qquad s_{N,j}^2=\sum_{k\leq N}\mathbb E \zeta_{k,j}^2.
\]
The preceding variance calculation gives
$s_{N,j}^2\asymp N^{1+\alpha_h}$, while  \autoref{lemma1}(5) gives
$\mathbb E|\zeta_{k,j}|^4\leq C h_k^{-3}\leq Ck^{3\alpha_h}$.  Therefore
\[
 \sum_{N\geq3}
 \frac{\{\log\log(s_{N,j}^2)\}^2\mathbb E|\zeta_{N,j}|^4}
      {s_{N,j}^4}
 \leq C\sum_{N\geq3}N^{-2+\alpha_h}(\log\log N)^2<\infty.
\]
Here $\alpha_h<1/3$ follows from
$2\alpha_\gamma-3\alpha_h>1$ and $\alpha_\gamma<1$.  The scalar
Lindeberg LIL of \citet{tomkins1983lindeberg} applies and yields the two
coordinatewise limits in  \autoref{theorem1}. 

To prove the CLT result, part~(5) of  \autoref{lemma1} gives
\[\sum_{k=1}^N \mathbb{E}\left\Vert\Phi_k\left(\theta_0, W_k\right) - \Phi_k\left(\theta_0\right) \right\Vert^4_2 \leq C \sum_{k=1}^N h_k^{-3}\leq CN^{1+3\alpha_h}.\]
so 
\[
\frac{\sum_{k=1}^N \mathbb{E}\left\Vert\Phi_k\left(\theta_0, W_k\right) - \Phi_k\left(\theta_0\right) \right\Vert^4_2}{N^{2+2\alpha_h}}\rightarrow 0.
\]
The multivariate Lyapunov central limit theorem therefore gives the stated result.

\end{document}